\def\RSthmtxt{theorem~}\newref{thm}{name = \RSthmtxt}}
\def\RSlemtxt{lemma~}\newref{lem}{name = \RSlemtxt}}
\theoremstyle{plain}
\newtheorem{thm}{\protect\theoremname}
\theoremstyle{remark}
\newtheorem{notation}[thm]{\protect\notationname}
\theoremstyle{definition}
\newtheorem{defn}[thm]{\protect\definitionname}
\theoremstyle{remark}
\newtheorem{rem}[thm]{\protect\remarkname}
\theoremstyle{plain}
\newtheorem{fact}[thm]{\protect\factname}
\theoremstyle{plain}
\newtheorem{lem}[thm]{\protect\lemmaname}
\theoremstyle{plain}
\newtheorem{cor}[thm]{\protect\corollaryname}
\theoremstyle{plain}
\newtheorem{prop}[thm]{\protect\propositionname}
\theoremstyle{plain}
\newtheorem{lyxalgorithm}[thm]{\protect\algorithmname}
\theoremstyle{definition}
\newtheorem{example}[thm]{\protect\examplename}
\theoremstyle{remark}
\newtheorem{note}[thm]{\protect\notename}
\theoremstyle{remark}
\newtheorem{claim}[thm]{\protect\claimname}
\theoremstyle{plain}
\newtheorem*{prop*}{\protect\propositionname}
\setlist[itemize]{label=$\cdot$}
\setlist[itemize,1]{label=\textbullet}
\setlist[itemize,2]{label=--}
\setlist[itemize,3]{label=*}
\setlist[itemize,4]{label=$\circ$}
\setlist[itemize,5]{label=$\square$}
\definecolor{purple}{RGB}{120,20,120}
\newcommand\branchcolor[2]{{\color{#1} #2}}
\providecommand{\algorithmname}{Algorithm}
\providecommand{\claimname}{Claim}
\providecommand{\corollaryname}{Corollary}
\providecommand{\definitionname}{Definition}
\providecommand{\examplename}{Example}
\providecommand{\factname}{Fact}
\providecommand{\lemmaname}{Lemma}
\providecommand{\notationname}{Notation}
\providecommand{\notename}{Note}
\providecommand{\propositionname}{Proposition}
\providecommand{\remarkname}{Remark}
\providecommand{\theoremname}{Theorem}
\begin{document}
\title{Oracle separations of hybrid quantum-classical circuits}
\date{January 2022\textsc{\small{}}}
\author{Atul Singh Arora\thanks{Corresponding author (atul.singh.arora@gmail.com; asarora@caltech.edu).
Institute for Quantum Information and Matter, California Institute
of Technology; Department of Computing and Mathematical Sciences,
California Institute of Technology.}, Alexandru Gheorghiu\thanks{Institute for Theoretical Studies, ETH Z{\"u}rich},
Uttam Singh\thanks{Centre for Theoretical Physics, Polish Academy of Sciences}}

\maketitle
\global\long\def\polylog{{\rm poly}({\rm log}(n))}%
\global\long\def\poly{{\rm poly}(n)}%
\global\long\def\dQC{{\rm QC}_{d}}%
\global\long\def\dCQ{{\rm CQ}_{d}}%
\global\long\def\tr{{\rm tr}}%
\global\long\def\perm#1#2{^{#1}\!P_{#2}}%
\global\long\def\comb#1#2{^{#1}C_{#2}}%
\global\long\def\paths{{\rm paths}}%
\global\long\def\parts{{\rm parts}}%
\global\long\def\mat{{\rm mat}}%
\global\long\def\td{{\rm TD}}%
\global\long\def\f{\mathcal{F}}%
\global\long\def\g{\mathcal{G}}%
\global\long\def\negl{{\rm negl}(n)}%
\global\long\def\CQd{\mathsf{CQ_{d}}}%
\global\long\def\QCd{\mathsf{QC_{d}}}%
\global\long\def\BQP{\mathsf{BQP}}%
\global\long\def\BPP{\mathsf{BPP}}%
\global\long\def\QNC{\mathsf{QNC}}%
\global\long\def\CQ#1{\mathsf{CQ_{#1}}}%
\global\long\def\QC#1{\mathsf{QC_{#1}}}%
\global\long\def\CQdp{\mathsf{CQ_{d'}}}%
\global\long\def\QCdp{\mathsf{QC_{d'}}}%

An important theoretical problem in the study of quantum computation,
that is also practically relevant in the context of near-term quantum
devices, is to understand the computational power of \emph{hybrid
models}, that combine polynomial-time classical computation with short-depth
quantum computation. Here, we consider two such models: $\CQd$ which
captures the scenario of a polynomial-time classical algorithm that
queries a $d$-depth quantum computer many times; and $\QCd$ which
is more analogous to \emph{measurement-based quantum computation}
and captures the scenario of a $d$-depth quantum computer with the
ability to change the sequence of gates being applied depending on
measurement outcomes processed by a classical computation. Chia, Chung
and Lai (STOC 2020) and Coudron and Menda (STOC 2020) showed that
these models (with $d=\log^{\mathcal{O}(1)}(n)$) are strictly weaker
than $\BQP$ (the class of problems solvable by polynomial-time quantum
computation), relative to an oracle, disproving a conjecture of Jozsa
in the relativised world.

In this paper, we show that, despite the similarities between $\CQd$
and $\QCd$, the two models are incomparable, i.e. $\CQd\not\subseteq\QCd$
and $\QCd\not\subseteq\CQd$ relative to an oracle. In other words,
we show that there exist problems that one model can solve but not
the other and vice versa. We do this by considering new oracle problems
that capture the distinctions between the two models and by introducing
the notion of an \emph{intrinsically stochastic oracle}, an oracle
whose responses are inherently randomised, which is used for our second
result. While we leave showing the second separation relative to a
standard oracle as an open problem, we believe the notion of stochastic
oracles could be of independent interest for studying complexity classes
which have resisted separation in the standard oracle model. Our constructions
also yield simpler oracle separations between the hybrid models and
$\BQP$, compared to earlier works.

\pagebreak{}

\tableofcontents{}

\newpage{}

\section{Introduction}

One of the major goals of modern complexity theory is understanding
the relations between various models of efficient quantum computation
and classical computation. It is widely believed that the set of problems
solvable by polynomial-time quantum algorithms, denoted $\BQP$, is
strictly larger than that solvable by polynomial-time classical algorithms,
denoted $\BPP$ \cite{bernstein1997quantum,Shor1997,Arute2019}. This,
so-called Deutsch-Church-Turing thesis, is motivated by the existence
of a number of problems which seem to be classically intractable,
but which admit efficient (polynomial-time) quantum algorithms. Notable
examples include period finding \cite{Simon1997}, integer factorisation
\cite{Shor1997}, the discrete-logarithm problem \cite{Shor1997},
solving Pell's equation \cite{Hallgren02}, approximating the Jones
polynomial \cite{Aharonov2006} and others \cite{Jordan}. In a seminal
paper of Cleve and Watrous \cite{Cleve2000}, it was shown that many
of these problems (period finding, factoring and discrete-logarithm)
can also be solved by quantum circuits of logarithmic-depth together
with classical pre- and post-processing. In the language of complexity
theory, we say that these problems are contained in the complexity
class $\BPP^{\QNC}$. This motivated Jozsa to conjecture that $\BPP^{\QNC}=\BQP$
\cite{Jozsa06}. In other words, it was conjectured that the computational
power of general polynomial-size quantum circuits is fully captured
by quantum circuits of polylogarithmic depth (the class $\QNC$) together
with classical pre- and post-processing.

In fact, Jozsa's conjecture was more broad than simply asserting that
$\BPP^{\QNC}=\BQP$. To explain why, one needs to distinguish between
two \emph{hybrid models} of classical computation and short-depth
quantum computation. These two models capture the idea of interleaving
polynomial-time classical computations with $d$-depth quantum computations.
The first model, denoted $\CQd$, refers to polynomial-time classical
circuits that can call $d$-depth quantum circuits, as a subroutine,
and use their output (see \Figref{CQd}). This is analogous to $\BPP^{\QNC}$,
except the quantum circuits are of depth $d$ rather than $\log^{O(1)}(n)$
(where $n$ is the size of the input). The second model, denoted $\QCd$,
refers to $d$-depth quantum circuits that can, at each circuit layer,
call polynomial-time classical circuits, as a subroutine and use their
output (see \Figref{QCd}). Importantly, the classical circuits \emph{cannot
be invoked coherently}---part of the quantum state produced by the
circuit is measured and the classical subroutine is invoked on the
classical outcome of that measurement.

Though $\CQd$ and $\QCd$ circuits seem similar, they both capture
different aspects of short-depth quantum computation together with
polynomial-time classical computation. $\CQd$ captures the scenario
of a classical algorithm that queries a short-depth quantum computer
many times, while $\QCd$ is more analogous to \emph{measurement-based
quantum computation} in that it captures a short-depth quantum computer
with the ability to change the sequence of gates being applied depending
on measurement outcomes processed by a classical computation.

Jozsa's conjecture is the assertion that the power of $\BQP$ is fully
captured by logarithmic-depth quantum computation interspersed with
polynomial-depth classical computation. This can be interpreted either
as $\BQP=\CQd$ or $\BQP=\QCd$ with $d=\log^{O(1)}(n)$. Aaronson
later proposed as one of his ``ten semi-grand challenges in quantum
complexity theory'' to find an oracle separation between $\BQP$
and the two hybrid models \cite{Aaronson05}. That is, to provide
an oracle $O$ and a problem defined relative to this oracle which
is contained in $\BQP^{O}$ but not in $\QCd^{O}$ or $\CQd^{O}$.
This was resolved in the landmark works of Chia, Chung and Lai \cite{CCL2020}
and, independently, Coudron and Menda \cite{CM2020}. They showed
that indeed such an oracle exists and also proved a hierarchy theorem
for the two hybrid models showing that $\QCd^{O}\subsetneq\QCdp^{O}$
and $\CQd^{O}\subsetneq\CQdp^{O}$, whenever $d'=2d+1$. However,
both works left it as an open problem whether there exists an oracle
separation between $\QCd$ and $\CQd$.

\subsection{Contributions}

In our work, we resolve this question by showing that, indeed, there
exist oracles $O_{1}$ and $O_{2}$ such that $\QCd^{O_{1}}\not\supseteq\CQd^{O_{1}}$
and $\CQd^{O_{2}}\not\supseteq\QCd^{O_{2}}$. In other words, the
two hybrid models are incomparable---there exist problems that one
can solve but not the other and vice versa. As corollaries, we also
obtain simpler oracle constructions for separating the two hybrid
models from $\BQP$. It's important to note that for our results $O_{1}$
is a \emph{standard oracle}, i.e. one that performs the mapping $\ket{x}\ket{z}\overset{O_{1}}{\to}\ket{x}\ket{z\oplus f(x)}$,
for some function $f:X\to Z$. However, the oracle $O_{2}$ is one
which we refer to as an \emph{intrinsically stochastic oracle}, which
is defined for a function $f:X\times Y\to Z$ and a probability distribution
$\mathbb{F}_{Y}$ over $Y$. Classically, for each query $x\in X$,
it samples a new $y\sim\mathbb{F}_{Y}$ and returns $f(x,y)$. Quantumly,
for every application, it samples a new $y\sim\mathbb{F}_{Y}$ and
performs the mapping $\ket{x}\ket{z}\overset{O_{2}}{\to}\ket{x}\ket{z\oplus f(x,y)}$.

\begin{figure}
\begin{centering}
\subfloat[${\rm QNC}_{d}$ scheme; $U_{i}$ are single depth unitaries; $\Pi$
is a projector in the computational basis.\label{fig:QNCd}]{\begin{centering}
\includegraphics[width=3.5cm]{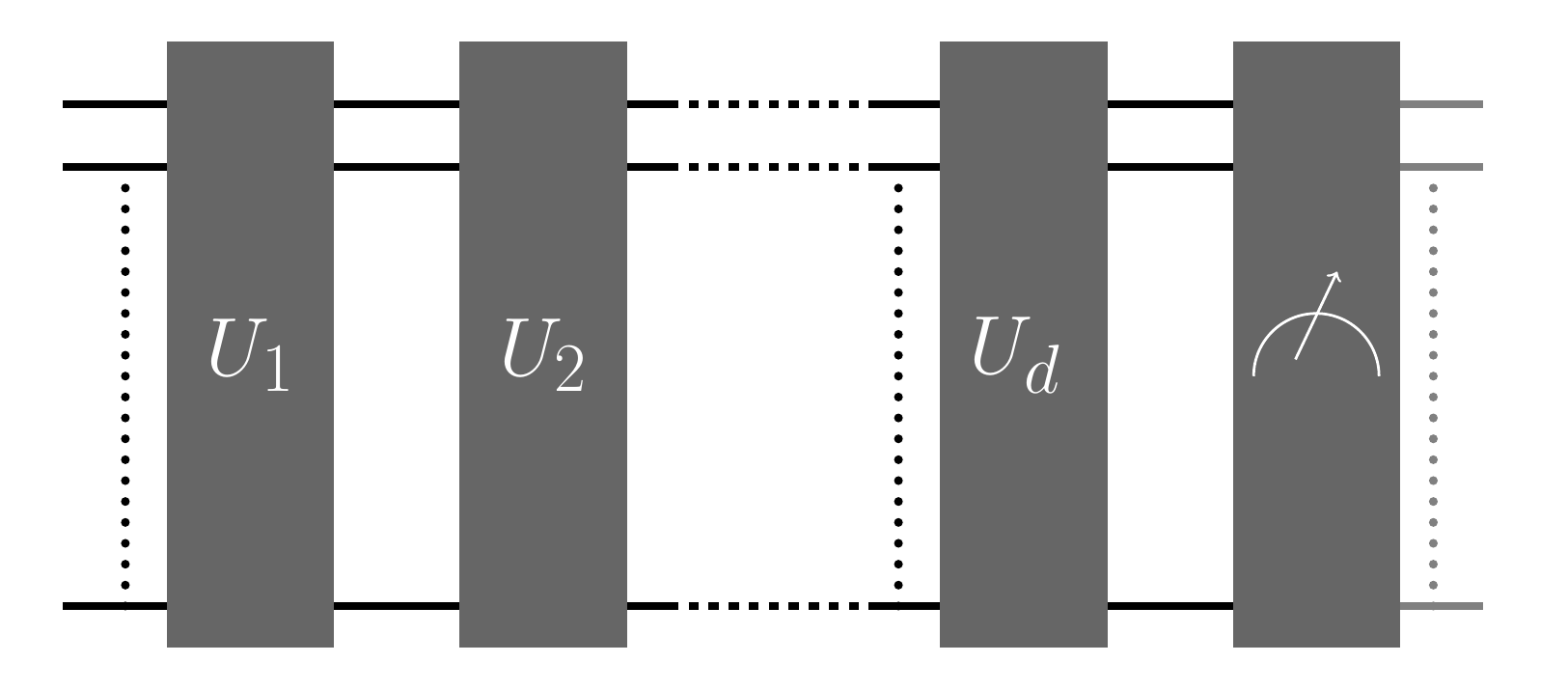}
\par\end{centering}
}
\par\end{centering}
\begin{centering}
\subfloat[${\rm QC}_{d}$ circuit; $U_{i}$ are single layered unitaries, $\mathcal{A}_{c,i}$
are classical poly-sized circuits (in the figure, the subscript for
$\mathcal{A}_{c}$ has been dropped) and the measurements are in the
computational basis. Dark lines denote qubits.\label{fig:QCd}]{\begin{centering}
\includegraphics[width=8cm]{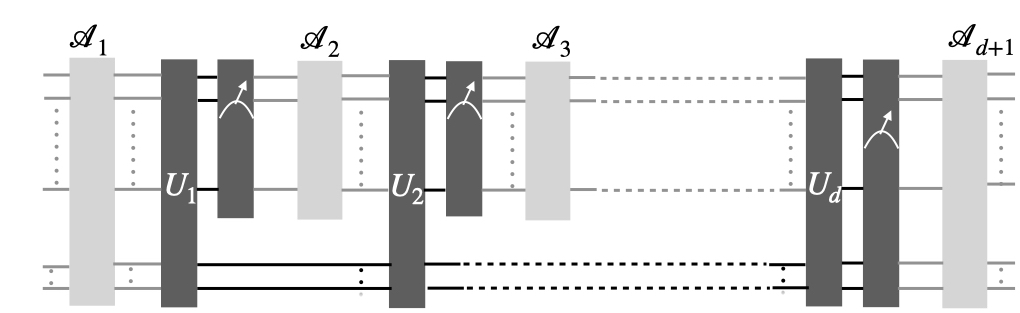}
\par\end{centering}
}\enskip{}\subfloat[${\rm CQ}_{d}$ circuit; for clarity, we dropped the indices in $\mathcal{A}_{c}$
and the second indices in $U_{1,i},U_{2,i}\dots U_{d,i}$. \label{fig:CQd}]{\centering{}\includegraphics[width=8cm]{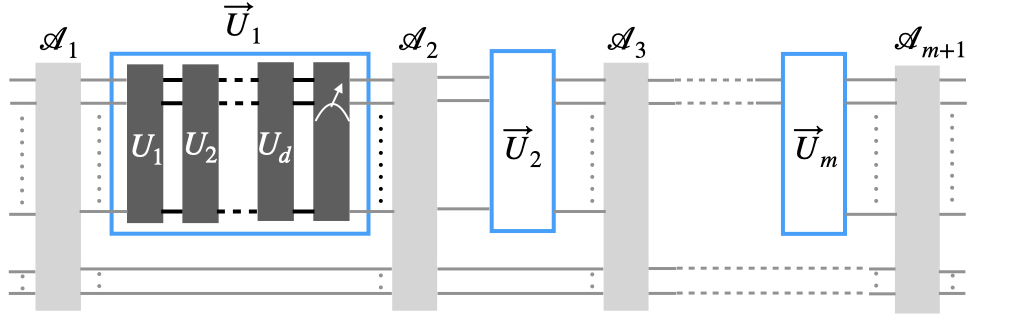}}
\par\end{centering}
\caption{The three circuit models we consider.}
\end{figure}

\begin{table}[h]
    \centering
    \begin{tabu}{ c c c c c } 
        
     \hline
     ~          & $\QCd$ &           $\CQd$    & Oracle Model & ~ \\ 
     \hline
     $d'$-SeS & $d'+1\le d \le 2d'+2 $  &~~~~~$d\le 1$    & Standard & This work \\ 
     $d'$-SCS &             ~~~~$d \le 4$       & $d'+1\le d\le d'+5$ & Stochastic & This work \\ 
     $d'$-SSP & $d'+1\le d \le 2d'+2$   & $d'+1 \le d\le 2d'+1$        & Standard & CCL \\
     \hline
    \end{tabu}
    \caption{Summary of our results: bounds on the smallest depth, $d$, needed to solve $d'$-Serial Simon's ($d'$-SeS) Problem  and $d'$-Shuffled Collisions-to-Simon's ($d'$-SCS) Problem in the two hybrid models of computation compared to those of $d'$-Shuffling Simon's Problem ($d'$-SSP) introduced by CCL. \label{tab:main}}
\end{table}

We summarise our main results in \Tabref{main}. In particular,
we obtain our first result by introducing an oracle problem that we
call the \emph{$d$-Serial Simon's} ($d$-SeS) Problem, which lets
us show the following:
\begin{thm}[informal]
There exists a standard oracle, $O$, such that for all $d>0$, $\QCd^{O}\not\supseteq\CQd^{O}.$

\emph{Intuitively, since $\CQd$ can perform polynomially-many calls
to a $d$-depth quantum circuit, we\textquoteright d like to define
a problem which forces one to sequentially solve $d$ Simon\textquoteright s
problems}.\footnote{Recall that Simon's problem is the following: given oracle access
to a $2$-to-$1$ function $f:\{0,1\}^{n}\to\{0,1\}^{n}$, such that
there exists an $s\in\{0,1\}^{n},s\neq0^{n}$, for which $f(x)=f(y)\iff x=y\oplus s$,
find $s$. The function $f$ is referred to as a Simon function.}\emph{ In other words, the period of each Simon's function can be
viewed as a key which unlocks in the oracle a different Simon function.
The $d$-SeS problem is to find the period of the $d$'th function.
This problem can be solved in $\mathsf{CQ}_{1}$, since the poly-time
algorithm can invoke $d$ separate constant-depth quantum circuits
to solve each Simon problem. However, due to the serial nature of
the problem, we show that no $\QCd$ algorithm can succeed.}
\end{thm}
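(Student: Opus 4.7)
The plan is to exhibit an oracle problem, the $d$-Serial Simon's ($d$-SeS) problem, that is easy for $\CQd$ but provably hard for $\QCd$. The oracle $O$ encodes a chain of independent Simon functions $f_1,\ldots,f_d:\{0,1\}^n\to\{0,1\}^n$ with periods $s_1,\ldots,s_d$, structured so that a query of the form $(i,x,k)$ returns $f_i(x)$ precisely when $k=s_{i-1}$ (with $s_0$ a fixed public value) and an abort symbol otherwise. The problem is to output $s_d$. By design, accessing $f_i$ in any useful way requires having already uncovered $s_{i-1}$, so the $d$ Simon subproblems must be solved in order.

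For the upper bound, I would exhibit a simple $\CQ{c}$ algorithm for a small constant $c$. The classical routine walks through the levels $i=1,\ldots,d$; at level $i$ it invokes polynomially many independent constant-depth quantum subroutines, each performing one Simon measurement against $O$ with the key $s_{i-1}$ obtained from the previous iteration. Standard Gaussian elimination on the $O(n)$ resulting measurement outcomes recovers $s_i$ with overwhelming probability, and after $d$ iterations the algorithm outputs $s_d$. This gives membership in $\CQ{c}\subseteq\CQd$ for every $d\ge c$.

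For the lower bound, which is the technical heart of the argument, I plan a layer-by-layer hybrid analysis---using a BBBV-style adversary method, or equivalently Zhandry's compressed-oracle framework---to control the amplitude that any $\QCd$ circuit places on \emph{active} queries, namely those whose submitted key equals the true $s_{i-1}$. The central \emph{progress lemma} asserts that, after the $j$-th quantum layer, the total query magnitude on active queries at level $i>j$ is negligible. The base case holds trivially since all secret keys are information-theoretically hidden before any query is made. For the inductive step, generating an active query at level $i$ requires nontrivial amplitude on the correct key $s_{i-1}$, which in turn requires active queries at level $i-1$ in a previous layer; the interleaved classical subroutines, acting only on classical measurement outcomes, cannot manufacture information about $s_{i-1}$ beyond what the quantum layers have already extracted. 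Iterating the lemma $d$ times, and combining it with a standard Simon-style bound that extracting $s_d$ requires nontrivial amplitude on level-$d$ queries together with post-query quantum processing, shows that the output distribution of a $\QCd$ circuit is statistically close to independent of $s_d$, giving success probability at most negligibly above $2^{-n}$.

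The main obstacle will be making the inductive step of the progress lemma quantitatively tight. Three ingredients must be combined with care: a BBBV-style amplitude bound on how much a single oracle layer can shift weight onto query registers containing a previously hidden key; a Simon-style lower bound showing that learning $s_{i-1}$ from active level-$(i-1)$ queries alone requires polynomially many such queries; and an accounting that correctly handles the measurement-and-classical-post-processing interface between successive $\QCd$ layers, so as to preclude strategies that use classical processing to ``skip levels'' in the chain. Once this progress lemma is in place, the theorem follows by combining it with the $\CQd$ upper bound.
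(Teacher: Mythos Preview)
Your plan matches the paper's: define $d$-SeS, solve it in $\CQ{1}$ by iterating Simon's algorithm level by level, and prove $\QCd$-hardness via a layer-by-layer hybrid showing that after $j$ quantum layers the circuit has negligible weight on ``active'' queries at any level $>j$. The paper formalises this with the O2H (one-way-to-hiding) lemma and explicit shadow oracles $\mathcal{M}_j$ that agree with the true oracle on levels $0,\ldots,j-1$ and output $\perp$ on levels $j,\ldots,d$; the state built from $\mathcal{M}_1,\ldots,\mathcal{M}_{i-1}$ is information-theoretically independent of $s_{i-1},\ldots,s_{d-1}$, so the ``find'' probability at layer $i$ is $O(\mathrm{poly}(n)\cdot 2^{-n})$.

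Two places where your outline should be tightened. First, the ingredient you list as ``a Simon-style lower bound showing that learning $s_{i-1}$ from active level-$(i-1)$ queries requires polynomially many such queries'' is never invoked: the hardness argument uses only that $s_{i-1}$ is \emph{uniform} given the shadows up through step $i-1$, so any query at layer $i$ hits the correct key with probability $2^{-n}$; Simon hardness plays no role in the lower bound. Second, in the $\QCd$ model the interleaved classical subroutines have \emph{oracle access}, not just access to measurement outcomes, so they can attempt to guess keys and query higher levels directly. The paper handles this by conditioning on the overwhelming-probability event that every such classical query to a level $\ge i$ returns $\perp$, after which the conditioned distribution of $s_{i-1},\ldots,s_{d-1}$ is still essentially uniform over the unqueried values. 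You should make that conditioning step explicit, since without it the independence premise of your inductive step does not hold.
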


For our second separation, we consider a problem we call the $d$-Shuffled
Collisions-to-Simon\textquoteright s (\ensuremath{d}-SCS) Problem
and show that:
\begin{thm}[informal]
There exists an intrinsically stochastic oracle, $O$, such that
for all $d\ge4$, $\CQd^{O}\not\supseteq\QCd^{O}$.
\end{thm}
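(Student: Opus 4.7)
The plan is to exhibit a single problem, the $d'$-Shuffled Collisions-to-Simon's ($d'$-SCS) problem, and to prove two things about it: that it lies in $\QC{4}^{O}$ (hence in $\QCd^{O}$ for every $d\geq 4$), and that it lies outside $\CQd^{O}$ whenever $d\leq d'$. Choosing $d'\geq d$ then yields a problem witnessing $\CQd^{O}\not\supseteq\QCd^{O}$. Guided by the row for $d'$-SCS in \Tabref{main} and by the problem's name, I would design the problem so that solving it requires two coupled tasks: first, exposing a collision in a 2-to-1 function provided by the oracle, and second, using that collision as a ``key'' to unlock a Simon function whose period must be recovered, with the stochastic oracle reshuffling its hidden permutation on every query so that repeated classical queries cannot on their own accumulate the cross-query structure needed to chain collision-finding into period-finding.

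For the upper bound, I plan to construct a $\QC{4}$ algorithm that exploits the preservation of quantum coherence across the classical layers of $\QCd$. The first quantum layer creates a uniform superposition over inputs and issues one coherent query to the stochastic oracle; measuring the output register collapses the input register onto a uniform superposition over a colliding pair $\{x_{0},x_{1}\}$. An intermediate classical layer reads this output and computes the branch of the oracle that has been ``unlocked''. The remaining two or three quantum layers then execute a Simon subroutine --- Hadamard, oracle query restricted to the unlocked branch, Hadamard --- producing a Fourier sample from which the Simon period can be recovered classically. The decisive feature is that the coherent superposition prepared in the first layer is never collapsed between layers, so the Simon subroutine operates on a quantum state that a $\CQd$ machine is structurally unable to maintain.

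For the lower bound I would adapt the $\CQd$ analysis of CCL, which treats a $\CQd$ circuit as a polynomial-time classical algorithm whose queries are the outputs of depth-$d$ quantum subroutines, and combine it with a hybrid argument across classical-quantum boundaries. The intuition is that the stochastic oracle delivers fresh, independent randomness on every query, so the information needed to turn a collision into a useful Simon sample cannot be stitched together across two separate $d$-depth subroutines --- each subroutine must perform the entire stitching inside a single coherent computation. Concretely, I would argue that each depth-$d$ quantum subroutine can advance the collision-to-Simon chain by at most $\mathcal{O}(d)$ stages, so that no polynomial number of depth-$d$ subroutines with $d\leq d'$ can reach the final Simon function that hides the answer. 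The natural technical vehicle is a compressed-oracle-style argument \`a la Zhandry or a BBBV-style adversary bound, adapted to operate with an oracle that has no fixed input-output table.

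The main obstacle I anticipate is precisely this adaptation: lifting standard query-complexity machinery to the intrinsically stochastic oracle model. Ordinary lower bounds rely on the oracle implementing a fixed (possibly random) function, so that a compressed database of queries, or a hybrid over entry reassignments, is meaningful; here there is no fixed function to record, only an ongoing source of independent samples from $\mathbb{F}_{Y}$. I would address this by purifying the oracle's randomness into an environment register and bounding how much useful information about the hidden structure can be transferred out of that environment per quantum layer of depth $d$. A secondary concern is ruling out that the stochastic oracle accidentally \emph{helps} the $\CQd$ algorithm; the shuffles must provably destroy, not reveal, the cross-query structure on which the chain depends. The fact that the authors leave the analogous standard-oracle separation as an open problem strongly suggests that this reliance on stochasticity is load-bearing rather than cosmetic, and it is this step at which I would expect the bulk of the proof effort to be concentrated.
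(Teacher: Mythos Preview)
Your proposal misses the central structural ingredient of the $d$-SCS problem: the $d$-Shuffler. In the paper, the Simon function $g$ is never accessed directly; instead one is given a stochastic oracle $\mathcal{S}$ to a random 2-to-1 function $f$, an ``encrypted'' bijection $p'$ that maps colliding pairs of $f$ to colliding pairs of $g$ \emph{only when supplied with the key $h(f(x))$}, and access to $h$ solely through a $d$-Shuffler $\Xi$. The depth barrier has nothing to do with a chain of $\mathcal{O}(d)$ stages as you suggest; it comes from the fact that evaluating $h$ through $\Xi$ provably requires $d{+}1$ sequential oracle calls. A $\QC 4$ circuit succeeds because after one coherent query to $\mathcal{S}$ it measures $y=f(x_0)=f(x_1)$, then uses its \emph{polynomial-depth classical} layer to walk through the Shuffler and compute $h(y)$, and finally applies $p'$ and $p'_{\mathrm{inv}}$ coherently in three more quantum layers. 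A $\CQd$ circuit fails because its $d$-depth quantum subroutines cannot evaluate $h$ through $\Xi$, so they cannot apply the bijection coherently; and its classical subroutines, while able to compute $h$, only ever receive random non-colliding pre-images from the stochastic oracle, so the values of $p$ they learn carry no information about the period $s$.

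Consequently, both your upper and lower bound sketches are off-target. Your $\QC 4$ algorithm omits the crucial classical Shuffler walk and instead speaks of ``unlocking a branch'' and running a Simon subroutine---but there is no Simon oracle to query; the only route to $g$'s collisions is through $p'$, which is inert without $h(y)$. Your lower bound plan---a compressed-oracle or BBBV argument showing each subroutine advances a chain by $\mathcal{O}(d)$ stages---does not match the problem's geometry. The paper's argument instead combines (i) the O2H lemma and shadow oracles to show a depth-$d$ quantum circuit cannot distinguish $\Xi$ and $p'$ from shadows that hide $h$ and the bijection at fresh $y$'s, (ii) a pre-sampling argument to control the correlations introduced by repeated quantum advice across the polynomially many $\CQd$ rounds, and (iii) the stochasticity of $\mathcal{S}$ to ensure the classical parts never see a collision and hence never learn anything about $s$. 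The role of the stochastic oracle is not to ``reshuffle a hidden permutation'' but precisely to strip the classical algorithm of any choice over which pre-images of $f$ it sees, reducing its view to independent random samples; this is the load-bearing simplification that the paper cannot currently replace with a standard oracle.
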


This separation is perhaps more surprising because $\QCd$ has only
$d$ quantum layers while $\CQd$ has polynomially-many quantum layers
(as it can invoke $d$-depth quantum circuits polynomially-many times).
Importantly, however, in $\CQd$ the quantum states prepared when
invoking a $d$-depth quantum circuit are measured entirely. In contrast,
in $\QCd$ some qubits are measured and the outcomes are sent to a
classical subroutine, while other qubits remain unmeasured and maintain
their coherence in between the layers of classical computation. This
observation is the basis for defining $d$-SCS, which can still be
viewed as a version of Simon's problem, in which one has to find a
secret period $s$. To explain the idea, we briefly recap the textbook
quantum algorithm for Simon's problem. One first queries the oracle
to the Simon function in superposition to obtain the state $\sum_{x}\ket{x}\ket{g(x)}$,
where $g:\{0,1\}^{n}\to\{0,1\}^{n}$ is the Simon function, with period
$s\in\{0,1\}^{n},s\neq0^{n}$. The second register is then measured,
yielding 
\begin{equation}
\frac{1}{\sqrt{2}}(\ket{x}+\ket{x\oplus s})\ket{y}\label{eq:superpos}
\end{equation}
with $y=g(x)=g(x\oplus s)$. Finally, by applying Hadamard gates and
measuring the first register, one obtains a string $w\in\{0,1\}^{n}$,
such that $w\cdot s=0$. Repeating this in parallel $O(n)$ times
yields a system of linear equations from which $s$ can be uniquely
recovered. 

Note that in Simon's algorithm, the main component of the quantum
algorithm is to generate superpositions over colliding pre-images
(i.e. pre-images that map to the same image). For $d$-SCS, a first
difference with respect to the standard Simon's problem is that the
oracle doesn\textquoteright t provide direct access to the Simon function
$g$. Instead, consider three functions: a random 2-to-1 function,
$f$, a function $p$ which maps colliding pairs of $f$ to those
of $g$ and the inverse of $p$. Since $f$ and $g$ are both $2$-to-1,
$p$ is a bijection and its inverse is well defined. If one is given
access to all three functions, then evidently, the problem is equivalent
to that of Simon's: the quantum algorithm would create superpositions
over colliding pairs of $f$ and then \emph{coherently} evaluate the
bijection together with its inverse, to obtain superpositions over
colliding pairs of $g$. To obtain the desired separation---a problem
that $\QCd$ can solve but $\CQd$ cannot---we only allow restricted
access to the bijection $p$. This restriction, denoted $p'$, may
be seen as a form of encrypted access to $p$. Specifically, to evaluate
the bijection (or its inverse) $p$ on a colliding pair $(x_{0},x_{1})$
of $f$, i.e. $f(x_{0})=f(x_{1})=y$, $p'$ additionally takes a ``key''
$h(y)$ as input. Access to the function $h,$which produces the key
$h(y)$, is also restricted. It is given via a \emph{shuffler} $\Xi$---an
encoding which ensures that at least depth $d$ is required to evaluate
$h$. 

So far, given access to $f$, $p'$ (which encodes $p$) and $\Xi$
(which encodes $h$), the goal is to find $s$ (the period of $g$).
It is easy to see that a ${\rm QC}_{4}$ algorithm can solve this
problem. The algorithm first creates equal superpositions of colliding
pairs for $f$ as in Simon's algorithm. The measured image register,
which now contains the classical value $f(x_{0})=f(x_{1})=y$, can
be used to evaluate $h(y)$ via $\Xi$ by expending $d$ classical
depth. The ${\rm QC}_{4}$ circuit can perform this computation---it
can perform poly depth classical computation while maintaining quantum
coherence. With $h(y)$ known, in a second quantum layer, the bijection
(and its inverse) $p$ is evaluated, via $p'$, on the superposition
of colliding pairs $\left|x_{0}\right\rangle +\left|x_{1}\right\rangle $
to obtain $\left|p(x_{0})\right\rangle +\left|p(x_{1})\right\rangle $
(up to normalisation). Recalling that $(p(x_{0}),p(x_{1}))$ are pre-images
of the Simon function $g$, it only remains to make Hadamard measurements
to obtain the system of linear equations which yield the secret $s$.

Why can't a $\CQd$ algorithm also solve this problem? By making the
oracle provide access to a generic $2$-to-$1$ function, $f$, instead
of the Simon function, we made it so that superpositions over colliding
pairs are no longer related by the period $s$. Indeed, the only way
to obtain any information about $s$ is to query the bijection. But
in the case of $\CQd$ the quantum subroutines must measure their
states completely before invoking the classical subroutines. This
means that the quantum subroutines essentially obtain no information
about $s$. We also know that only the classical subroutines can obtain
access to the bijection as the shuffler can only be invoked by a circuit
of depth at least $d$. In essence, we've made it so that if a $\CQd$
algorithm could solve the problem with a polynomial number of oracle
queries, then Simon's problem could also be solved classically with
a polynomial number of queries, which we know is impossible.

Formalising the above intuition turns out to be surprisingly involved.
This is due to the following subtlety: the classical subroutines of
$\CQd$ can obtain some information about $g$ by querying the bijection
(essentially obtaining evaluations of $g$) and one would have to
show that this information is insufficient for recovering $s$ even
when invoking short depth quantum subroutines. We overcome this barrier
by using a stochastic oracle in our proof but nevertheless conjecture
that the result should also hold in the standard oracle setting. Thus,
our final modification to Simon's problem, leading to the definition
of $d$-SCS is to make it so that the oracle does not even provide
direct access to $f$ but instead, given a single bit as input, performs
the following mappings:
\begin{equation}
0\stackrel{\mathcal{S}}{\to}(x_{0},f(x_{0})),\quad1\stackrel{\mathcal{S}}{\to}(x_{1},f(x_{1}))\label{eq:S}
\end{equation}
where $(x_{0},x_{1})$ is a randomly chosen colliding pair of $g$.
In other words, the oracle picks a random colliding pair for $g$
and outputs either the first element in the pair or the second, depending
on whether the input was $0$ or $1$. In the quantum setting, this
translates to

\begin{equation}
\ket{0}_{\mathsf{{B}}}\ket{0}_{\mathsf{{X}}}\ket{0}_{\mathsf{{Y}}}\overset{\mathcal{S}}{\to}\ket{0}_{\mathsf{{B}}}\ket{x_{0}}_{\mathsf{{X}}}\ket{f(x_{0})}_{\mathsf{{Y}}}\label{eq:scosoracle1}
\end{equation}

\begin{equation}
\ket{1}_{\mathsf{{B}}}\ket{0}_{\mathsf{{X}}}\ket{0}_{\mathsf{{Y}}}\overset{\mathcal{S}}{\to}\ket{1}_{\mathsf{{B}}}\ket{x_{1}}_{\mathsf{{X}}}\ket{f(x_{1})}_{\mathsf{{Y}}}\label{eq:scosoracle2}
\end{equation}
We can therefore see that if the input qubit is in an equal superposition
of $\ket{0}$and $\ket{1}$ (while all other registers are initialised
as $\ket{0}$) we would obtain the state

\begin{equation}
\frac{1}{\sqrt{2}}(\ket{0}_{\mathsf{{B}}}\ket{x_{0}}_{\mathsf{{X}}}+\ket{1}_{\mathsf{{B}}}\ket{x_{1}}_{\mathsf{{X}}})\ket{f(x_{0})=f(x_{1})}_{\mathsf{{Y}}}\label{eq:superpos-2}
\end{equation}
This is the same type of state as that of Equation \ref{eq:superpos},
except the superposition is over pre-images of $f$ rather than the
Simon function $g$. The oracle will still provide access to the bijection
and the shuffler in the manner explained above, so that the stochastic
access to $f$ does not change the fact that a $\mathsf{{QC}_{4}}$
algorithm is still able to solve the problem (one is still able to
map these states to superpositions over colliding pairs of $g$).
But this stochastic access further restricts what a $\CQd$ algorithm
can do, as now even if the classical subroutines are used to obtain
evaluations of $g$, the stochastic nature of the oracle guarantees
that these evaluations are uniformly random points. With this restriction
in place, one can then show that $\CQd$ cannot solve the problem
unless Simon's problem can be solved classically with polynomially-many
queries. 

\subsection{Overview of the techniques }

We begin with sketching the idea behind the hardness of $d$-SeS for
$\QCd$ circuits. To this end, we first describe the $d$-SeS problem
in some more detail. 

\paragraph{The $d$-Serial Simon's Problem (informal).}

Sample $d+1$ random Simon's functions $\{f_{i}\}_{i=0}^{d}$ with
periods $\{s_{i}\}_{i=0}^{d}$. The problem is to find the period,
$s_{d}$, of the last Simon's function. However, only access to $f_{0}$
is given directly. Access to $f_{i}$, for $i\ge1$, is given via
a function $L_{f_{i}}$ which outputs $f_{i}(x)$ if the input is
$(s_{i-1},x)$ and $\perp$ otherwise, i.e. to access the $i$th Simon's
function, one needs the period of the $(i-1)$th Simon's function
(see \Defref{cSerialSimonsOracle} and \Defref{cSerialSimonsProblem}
for details). 

\paragraph{The lower bound technique.}

\begin{figure}
\begin{centering}
\includegraphics[width=9cm]{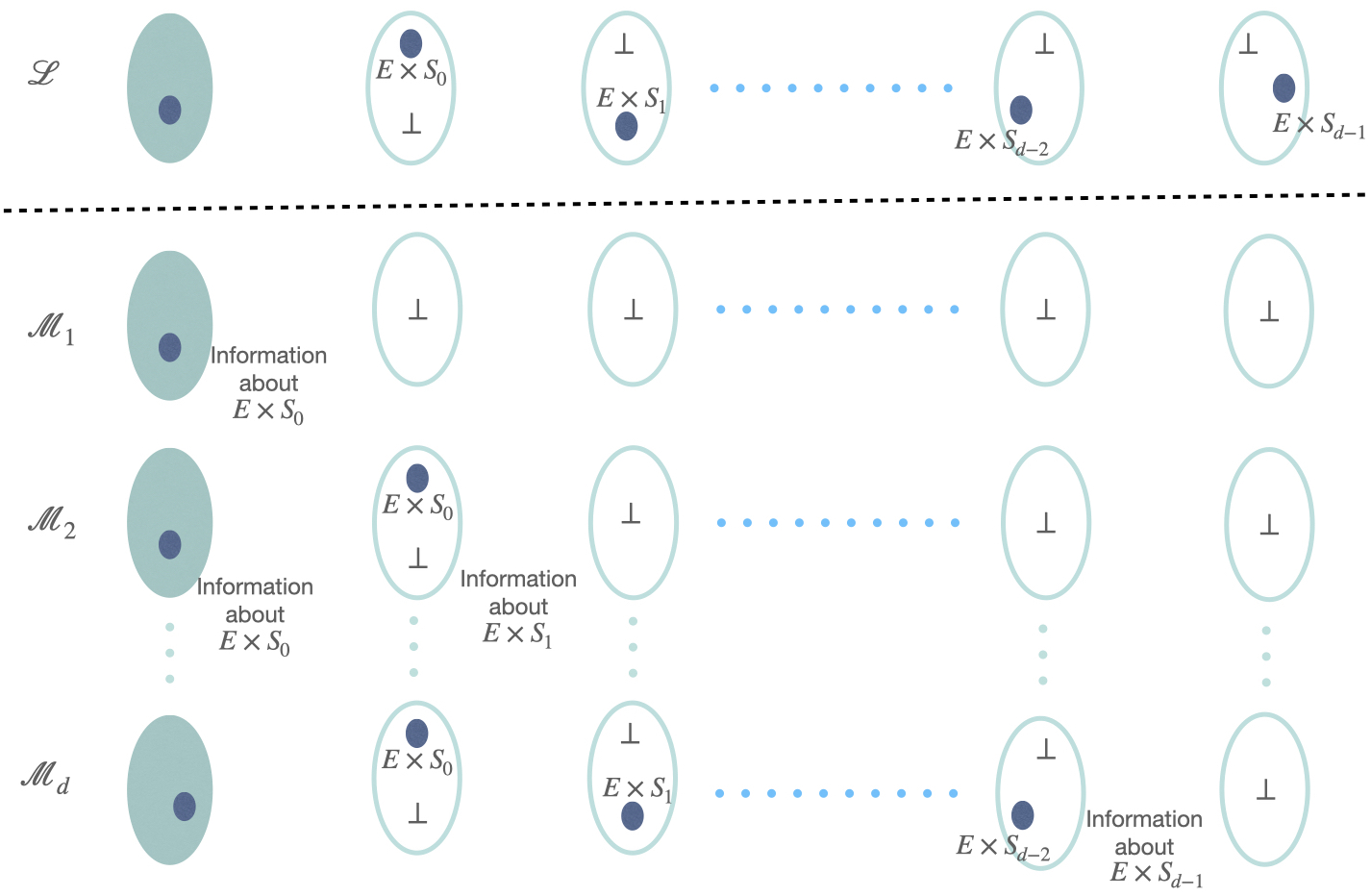}
\par\end{centering}
\caption{\label{fig:ShadowsQNCd}Shadows for ${\rm QNC}_{d}$. The ovals represent
the query domains of the various sub-oracles. Shaded regions denote
a non-$\perp$ response. Let $\bar{S}_{i}$ denote the query domain
at which $\mathcal{L}$ and $\mathcal{M}_{i}$ differ. The construction
ensures that $\mathcal{M}_{1},\dots\mathcal{M}_{i-1}$ contain no
information about $\bar{S}_{i}$. Convention: The left-most sub-oracles
are numbered zero (and they respond with non-$\perp$ on $\{0,1\}^{n}$).}
\end{figure}

We now sketch the idea behind the proof that $\QCd$ circuits solve
$d$-SeS with at most negligible probability (see \Secref{-Serial-Simon's-Problem}).
Denote the initial set of oracles by $\mathcal{L}$. Let $\mathcal{M}_{i}$,
for $1\le i\le d$, be identical to $\mathcal{L}$, except the oracles
$i,\dots d$ output $\perp$ at all inputs (see \Figref{ShadowsQNCd}).
Using a hybrid argument, we show that the $d$-depth quantum circuit
$\mathcal{C}:=U_{d+1}\circ\mathcal{L}\circ U_{d}\dots\mathcal{L}\circ U_{1}$
behaves essentially like $U_{d+1}\circ\mathcal{M}_{d}\circ U_{d}\dots\mathcal{M}_{1}U_{1}$.
The intuition is that since all $s_{i}$ are unknown before the oracles
are invoked, the domain at which the oracles respond non-trivially
(i.e. $\neq\perp$) is hard to find. After $\mathcal{M}_{i-1}$ is
invoked, the ``non-trivial domain'' of the $i$th oracle can be
learnt. It is therefore exposed in $\mathcal{M}_{i}$. Evidently,
since $\mathcal{M}_{1},\dots\mathcal{M}_{d}$ do not contain any information
about $f_{d}$, we conclude the $d$-depth quantum circuit $\mathcal{C}$
cannot solve $d$-SeS with non-negligible probability. The conclusion
continues to hold even when efficient classical computation is allowed
between the unitaries. At a high level, this is because one can condition
on the classical queries yielding $\perp$ and show that this happens
with high probability. Since only polynomially many queries are possible,
the aforementioned analysis goes through largely unchanged. This proof
is quite straightforward and consequently also serves as a considerable
simplification of the proof of the $\QCd$ depth hierarchy theorem
proved by CCL.

\paragraph{~}

We now turn to the ideas behind the hardness of the $d$-SCS problem
for $\CQd$ circuits. We describe the $d$-SCS problem in some detail
and to this end, first introduce the $d$-Shuffler construction\footnote{This is a close variant of the $d$-Shuffling technique introduced
by CCL.} which is used for enforcing statements like ``$k$ is a function
which requires depth larger than $d$ to evaluate''.

\paragraph{The $d$-Shuffler (informal).}

Intuitively, a $d$-Shuffler is an oracle which encodes a function,
say $f:\{0,1\}^{n}\to\{0,1\}^{n}$, in such a way that one needs to
make $d$ sequential calls to it, to access $f$. Consider $d$ random
permutations, $f'_{0},f'_{1}\dots f'_{d-1}$, from $\{0,1\}^{2n}\to\{0,1\}^{2n}$.
Define $f'_{d}$ to be such that $f'_{d}(f'_{d-1}\dots f'_{0}(x))\dots)=f(x)$
for $x\in\{0,1\}^{n}$. Define $f_{i}$, for $0\le i\le d$, to be
$f'_{i}(f'_{i-1}(\dots f'_{0}(x)\dots))$ for $x\in\{0,1\}^{n}$ and
$\perp$ otherwise. We say $(f_{i})_{i=0}^{d}$ is the \emph{$d$-Shuffler}
encoding $f$. We also use $\Xi$ to abstractly refer to it. Using
an argument, similar to the one above, one can show that no $d$-depth
quantum circuit can access $f$ via a $d$-Shuffler, with non-negligible
probability (see \Defref{dShuffler}).

\paragraph{The $d$-Shuffled Collisions-to-Simon's Problem (informal).}

Consider the following functions on $n$-bit functions. Uniformly
sample $f$ from all 2-to-1 functions, $g$ from all Simon's functions,
and $h$ from all $1$-to-$1$ functions. Let $p$ be some canonical
bijection which maps colliding pairs of $f$ to those of $g$ (and
$p_{{\rm inv}}$ be the inverse). 

Let $p'$ be such that $p'(h(f(x)),x)=p(x)$ and $\perp$ if the input
is not of that form ($p'_{{\rm inv}}$ is similarly defined). Let
$\Xi$ be a $d$-Shuffler encoding $h$. The problem is, given access
to $\mathcal{S}$ (a stochastic oracle, encoding $f$ as in \Eqref{S}),
$p'$, $p'_{{\rm inv}}$ and $\Xi$, find the period $s$ of $g$. 

\paragraph{The lower bound technique.}

We briefly outline the broad argument for why $\CQd$ circuits can
solve $d$-SCS with at most negligible probability (see \Secref{ShuffledCollisionsToSimons}).
Consider the first classical circuit: It has vanishing probability
of learning a collision from $\mathcal{S}$ (since $y$ is chosen
stochastically). So, with overwhelming probability, it can learn at
most polynomially many values of $p,p_{{\rm inv}}$, $h$ and non
colliding values of $f$. 

For the subsequent quantum circuit, we expose those values in the
shadows of $p',p'_{{\rm inv}}$ and $\Xi$ (a $d$-Shuffler which
encodes $h$). Here shadows are the analogues of $\mathcal{M}_{i}$
introduced above. For $\mathcal{S}$, we condition on never seeing
the $y$s which already appeared (this will happen with overwhelming
probability). Without access to $h$ for the new $y$ values (since
access to $h$ is via $\Xi$ which requires at least depth $d$),
the circuit cannot distinguish the shadows of $p',p'_{{\rm inv}}$
from the originals; the shadows contain no information about $s$
(informally, they could contain $z$ or $z\oplus s$ but not both).
Finally, since $f$ is a random two-to-one function, and since it
is known that finding collisions even when given direct oracle access
to $f$ is hard, the quantum output cannot contain collisions (with
non-negligible probability). 

The next classical circuit, did not learn of any collisions from the
quantum output. Further, it did not learn anything about the function,
other than those which evaluate to $y$s which appeared in the previous
step. Conditioned on the same $y$s not appearing (which happens with
overwhelming probability), we can repeat the reasoning from the first
step, polynomially many times. 

In the proof, we first show that one can replace the oracles with
their shadows, without changing the outputs of the circuits noticeably.
We then show that if no collisions are revealed, the shadows contain
no information about $s$ and that collisions are revealed with negligible
probability. For the full details of the proofs we refer the reader
to \Secref{ShuffledCollisionsToSimons}.

\subsection{Organisation}

We begin with introducing the precise models of computation we consider
in \Secref{Models-of-Computation}, followed by brief statements of
the basic results in query complexity which we use, in \Secref{Prerequisite-Technical-Results}.
Our first main result---that $d$-Serial Simon's Problem can be solved
using $\CQ 1$ while it is hard for $\QCd$---is established in \Secref{-Serial-Simon's-Problem}.
The second main result, however, requires more ground work. We first
study the simpler aspects of the $d$-Shuffled Simon's problem in
\Secref{warm-up-dSS} where we introduce the $d$-Shuffler (a close
variant of the $d$-Shuffling technique introduced by CCL). In \Secref{Technical-Results-II}
we generalise the so-called ``sampling argument'' as formulated
by CCL, which may be of independent interest. It is a technique used
for establishing the continued utility of $d$-Shufflers despite repeated
applications. We give a new proof of hardness of $d$-SS for $\CQd$
circuits in \Secref{dSS_hard_2} and use this as an intermediate step
for the proof of our second main result which is delineated in \Secref{ShuffledCollisionsToSimons}.
This final section begins with formalising intrinsically stochastic
oracles, defines the $d$-SCS problem precisely, gives the algorithms
for establishing the upper bounds and concludes with the proof hardness
for $\CQd$.

\subsection{Relation to Prior Work}

Four connections with \parencite{CCL2020} are noteworthy. 

First, as was already noted, we recover the depth hierarchy for $\QCd$
with less effort. This was done by circumventing the use of the $d$-Shuffling
and the ``Russian nesting doll'' technique employed by CCL. Our
result is, in fact, slightly stronger too since if we allow Hadamard
measurements in the $\QCd$ model, our hierarchy theorem becomes optimal,
i.e. it separates $d$ and $d+1$ depth. This was one of their open
questions, as their separation was between $d$ and $2d+1$, even
if we allow Hadamard measurements.\footnote{Without the Hadamard measurements, we believe their upper bound becomes
$2d+2$.}

Second, we give a new proof of hardness of $d$-SS for $\CQd$. This
proof is arguably simpler in that while it uses ideas related to $d$-Shuffling,
it does not rely on the ``Russian nesting doll'' technique.\footnote{We say arguably because in proving the basic properties of the $d$-Shuffler,
one needs more notation. The advantage is that, once done, this does
not interfere with how it is used.} Aside from simplicity, the information required to specify the oracles
in the proof by CCL scaled exponentially with $d$ while in our proof,
there is no dependence on $d$. 

Third, the connection between the ``sampling argument'', the $d$-Shuffler
and $d$-Shuffling. We show that the ``sampling argument'' as used
by CCL holds quite generally and in doing so, obtain a proof which
relies on very simple properties of the underlying distribution. We
also show a composition result which allows us to repeatedly use the
argument. We use this generalisation to establish the utility of the
$d$-Shuffler.

Fourth, we also get a hierarchy theorem for $\CQd$ based on the hardness
of the $d$-SCS problem. While this hierarchy theorem holds in the
stochastic oracle setting and is thus not as strong as that obtained
using $d$-SS (or $d$-Shuffling Simon's problem) which holds in the
standard oracle setting, nevertheless, it yields a finer separation---it
separates $d$ from $d+5$.

\subsection{Conclusion and Outlook}

In addition to resolving the open problems left by the works of~\textcite{CCL2020}
and~\textcite{CM2020}, our results provide additional insights into
the relations between different models of hybrid classical-quantum
computation and sharpen our understanding of these models. Specifically,
we find that $\QCd$ and $\CQd$ solve incomparable sets of problems.
Our results relativise and so one should obtain a similar conclusion
about a hierarchy of analogous classes ($\mathsf{QCQC..QC_{d}}$,
$\mathsf{CQC..CQ_{d}}$ etc). While an important open problem is whether
the stochastic oracle in our second construction can be replaced with
a standard oracle, we speculate that other classes which have resisted
separation in the standard oracle model, may also admit separations
via stochastic oracles.

A potentially interesting application that is hinted by our work,
assuming the oracles can be instantiated with cryptographic primitives,
is that of \emph{tests of coherent quantum control}. In other words,
the idea is to design a task (or a test) which can be solved by a
quantum device having coherent control (the ability to adapt the gates
it will perform based on the measurement results) but which cannot
be solved by a device without coherent control (one which performs
a measurement of all its qubits when providing a response). This is
related to the recent notion of \emph{proofs of quantumness}~\parencite{brakerski2018,Brakerski2020}.
This is a test which can be passed by a $\BQP$ machine but not by
a $\BPP$ one, assuming the intractability of some cryptographic task.
In a test of coherent control, the tester (or verifier) has the ability
to check for a more fine-grained notion of ``quantumness'' on the
part of the quantum device (prover) and not just its ability to solve,
say, $\BQP$ problems. 

\subsection{Acknowledgements}

AG is supported by Dr. Max Rössler, the Walter Haefner Foundation
and the ETH Zürich Foundation. ASA acknowledges funding provided by
the Institute for Quantum Information and Matter. We are thankful
to Thomas Vidick and Jérémie Roland for numerous fruitful discussions.
We are also grateful to Ulysse Chabaud for his helpful feedback on
an early draft of this work.

\section{Models of Computation\label{sec:Models-of-Computation}}

\branchcolor{black}{We begin with defining the computational models we study in this work. }
\begin{notation}
A \emph{single layer unitary}, is defined by a set of one and two-qubit
gates which act on disjoint qubits (so that they can all act parallelly
in a single step). The number of layers in a circuit defines its \emph{depth}.
\end{notation}

\begin{notation}
A promise problem $\mathcal{P}$ is denoted by a tuple $(\mathcal{P}_{0},\mathcal{P}_{1})$
where $\mathcal{P}_{0}$ and $\mathcal{P}_{1}$ are subsets of $\{0,1\}^{*}$
satisfying $\mathcal{P}_{0}\cap\mathcal{P}_{1}=\emptyset$. It is
not necessary that $\mathcal{P}_{0}\cup\mathcal{P}_{1}=\{0,1\}^{*}$. 
\end{notation}

\begin{defn}[${\rm QNC}_{d}$ circuits and ${\rm BQNC}_{d}$ languages]
\label{def:QNCd} Denote by ${\rm QNC}_{d}$ the set of $d$-depth
quantum circuits (see \Figref{QNCd}). 

Define ${\rm BQNC}_{d}$ to be the set of all promise problems $\mathcal{P}=(\mathcal{P}_{0},\mathcal{P}_{1})$
which satisfy the following: for each problem $\mathcal{P}\in{\rm BQNC}_{d}$,
there is a circuit family $\{\mathcal{C}_{n}:\mathcal{C}_{n}\in{\rm QNC}_{d}\text{ and acts on }\poly\text{ qubits}\}$
and for all $x\in\mathcal{P}_{1}$, the circuit $\mathcal{C}_{|x|}$
accepts with probability at least $2/3$rds and for all $x\in\mathcal{P}_{0}$,
the circuit $\mathcal{C}_{|x|}$ accepts with probability at most
$1/3$rd. 
\end{defn}

\begin{defn}[${\rm QC}_{d}$ circuits and ${\rm BQNC}_{d}^{{\rm BPP}}$ languages]
\label{def:dQC} Denote by ${\rm QC}_{d}$ the set of all circuits
which, for each $n\in\mathbb{N}$, act on $\poly$ qubits and bits
and can be specified by
\begin{itemize}
\item $d$ single layered unitaries, $U_{1},U_{2}\dots U_{d}$, 
\item $\poly$ sized classical circuits $\mathcal{A}_{c,1}\dots\mathcal{A}_{c,d},\mathcal{A}_{c,d+1}$,
and 
\item $d$ computational basis measurements 
\end{itemize}
that are connected as in \Figref{QCd}. 

Define ${\rm BQNC}_{d}^{{\rm BPP}}$ to be the set of all promise
problems $\mathcal{P}=(\mathcal{P}_{0},\mathcal{P}_{1})$ which satisfy
the following: for each problem $\mathcal{P}\in{\rm BQNC}_{d}^{{\rm BPP}}$,
there exists a circuit family $\{\mathcal{C}_{n}:\mathcal{C}_{n}\in{\rm QC}_{d}\text{ and acts on }\poly\text{ qubits and bits}$\}
and for all $x\in\mathcal{P}_{1}$, the circuit $\mathcal{C}_{\left|x\right|}$
accepts with probability at least $2/3$rds and for all $x\in\mathcal{P}_{0}$,
the circuit $\mathcal{C}_{\left|x\right|}$ accepts with probability
at most $1/3$rd. 

Finally, for $d=\polylog$, denote the set of languages by ${\rm BQNC}^{{\rm BPP}}$.\\
\end{defn}

\begin{defn}[${\rm CQ}_{d}$ circuits and ${\rm BPP}^{{\rm QNC}_{d}}$ languages]
\label{def:dCQ} Denote by ${\rm CQ}_{d}$ the set of all circuits
which, for each $n\in\mathbb{N}$ and $m=\poly$, act on $\poly$
qubits and bits and can be specified by 
\begin{itemize}
\item $m$ tuples of $d$ single layered unitaries $(U_{1,i},U_{2,i}\dots U_{d,i})_{i=1}^{m}$,
\item $m+1$, $\poly$ sized classical circuits $\mathcal{A}_{c,1}\dots\mathcal{A}_{c,m},\mathcal{A}_{c,m+1}$,
and 
\item $m$ computational basis measurements
\end{itemize}
that are connected as in \Figref{CQd}. 

Define, as above, ${\rm BPP}^{{\rm BQNC}_{d}}$ to be the set of all
promise problems $\mathcal{P}=(\mathcal{P}_{0},\mathcal{P}_{1})$
which satisfy the following: for each problem $\mathcal{P}\in{\rm BPP}^{{\rm BQNC}_{d}}$,
there exists a circuit family $\left\{ \mathcal{C}_{n}:\mathcal{C}_{n}\in{\rm CQ}_{d}\text{ and acts on }\poly\text{ qubits and bits}\right\} $
and for all $x\in\mathcal{P}_{1}$, the circuit $\mathcal{C}_{\left|x\right|}$
accepts with probability at least $2/3$rds and for all $x\in\mathcal{P}_{0}$,
the circuit $\mathcal{C}_{\left|x\right|}$ accepts with probability
at most $1/3$rd. 

Finally, for $d=\polylog$, denote the set of languages by ${\rm BPP}^{{\rm QNC}}$.

\end{defn}

\begin{rem}
Connection with the more standard notation: ${\rm QNC}_{d}$ has depth
$d$ and ${\rm QNC}^{m}$ has depth $\log^{m}(n)$, i.e. ${\rm QNC}^{m}={\rm QNC}_{\log^{m}(n)}$. 
\end{rem}

\branchcolor{black}{Later, it would be useful to symbolically represent these three circuit
models but we mention them here for ease of reference. }
\begin{notation}
\label{nota:CompositionNotation}We use the following notation for
probabilities, ${\rm QNC}_{d}$, ${\rm QC}_{d}$ and ${\rm CQ}_{d}$
circuits.
\begin{itemize}
\item Probability: The probability of an event $E$ occurring, when process
$P$ happens, is denoted by $\Pr[E:P]$. In our context, the probability
of a random variable $X$ taking the value $x$ when process $Y$
takes place is denoted by $\Pr[x\leftarrow X:Y]$. When the process
$Y$ is just a sampling of $X$, we drop the $Y$ and use $\Pr[x\leftarrow X]$.
\item ${\rm QNC}_{d}$: We denote a $d$-depth quantum circuit (see \Defref{dQC}
and \Figref{QNCd}) by $\mathcal{A}=U_{d}\circ\dots\circ U_{1}$ and
(by a slight abuse of notation) the probability that running the algorithm
on all zero inputs yields $x$, by $\Pr[x\leftarrow\mathcal{A}]$
while that on some input state $\rho$ by $\Pr[x\leftarrow\mathcal{A}(\rho)]$.
\item ${\rm QC}_{d}$: We denote a ${\rm QC}_{d}$ circuit (see \Defref{dQC}
and \Figref{QCd}) by $\mathcal{B}=\mathcal{A}_{c,d+1}\circ\mathcal{B}_{d}\circ\mathcal{B}_{d-1}\dots\circ\mathcal{B}_{1}$
where $\mathcal{B}_{i}:=\Pi_{i}\circ U_{i}\circ\mathcal{A}_{c,i}$
and ``$\circ$'' implicitly denotes the composition as shown in
\Figref{QCd}. As above, the probability of running the circuit $\mathcal{A}$
on all zero inputs and obtaining output $x$ is denoted by $\Pr[x\leftarrow\mathcal{B}]$
while that on some input state $\rho$ by $\Pr[x\leftarrow\mathcal{B}(\rho)]$. 
\item ${\rm CQ}_{d}$: We denote a ${\rm CQ}_{d}$ circuit (see \Defref{dCQ}
and \Figref{CQd}) by $\mathcal{C}=\mathcal{A}_{c,m+1}\circ\mathcal{C}_{m}\circ\dots\circ\mathcal{C}_{1}$
where $\mathcal{C}_{i}:=\Pi_{i}\circ U_{d,i}\circ\dots\circ U_{1,i}\circ\mathcal{A}_{c,i}$
and ``$\circ$'' implicitly denotes the composition as shown in
\Figref{CQd}. Again, the probability of running the circuit $\mathcal{C}$
on all zero inputs and obtaining output $x$ is denoted by $\Pr[x\leftarrow\mathcal{C}]$
while that on some input state $\rho$ by $\Pr[x\leftarrow\mathcal{C}(\rho)]$. 
\end{itemize}
\end{notation}

\subsection{The Oracle Versions}

\branchcolor{black}{We consider the standard Oracle/query model corresponding to functions---the
oracle returns the value of the function when invoked classically
and its action is extended by linearity when it is accessed quantumly.
The following also extends to stochastic oracles which are introduced
later in \Subsecref{Intrinsically-Stochastic-Oracle} when we need
them.}
\begin{notation}
An oracle $\mathcal{O}_{f}$ corresponding to a function $f$ is given
by its action on ``query'' and ``response'' registers as $\mathcal{O}_{f}\left|x\right\rangle _{Q}\left|a\right\rangle _{R}=\left|x\right\rangle _{Q}\left|a\oplus f(x)\right\rangle _{R}$.
An oracle $\mathcal{O}_{(f_{i})_{i=1}^{k}}$ corresponding to multiple
functions $f_{1},f_{2}\dots f_{k}$ is given by $\mathcal{O}_{(f_{i})_{i=1}^{k}}\left|x_{1},x_{2}\dots x_{k}\right\rangle _{Q}\left|a_{1},a_{2}\dots a_{k}\right\rangle _{R}=\left|x_{1},x_{2}\dots x_{k}\right\rangle _{Q}\left|a_{1}\oplus f_{1}(x_{1}),a_{2}\oplus f_{2}(x_{2}),\dots a_{k}\oplus f_{k}(x_{k})\right\rangle _{R}$.
\\
We overload the notation; when $\mathcal{O}_{f}$ is accessed classically,
we use $\mathcal{O}_{f}(x)$ to mean it returns $f(x)$. \label{nota:functionToOracle}
\end{notation}

\begin{rem}[${\rm QNC}_{d}^{\mathcal{O}}$, ${\rm QC}_{d}^{\mathcal{O}}$, ${\rm CQ}_{d}^{\mathcal{O}}$]
\label{rem:oracleVersionsQNC-CQ-QC} The oracle versions of ${\rm QNC}_{d}$,
${\rm QC}_{d}$ and ${\rm CQ}_{d}$ circuits are as shown in \Figref{QNCdOracle},
\Figref{dQC_oracle} and \Figref{dCQ_oracle}. We allow (polynomially
many) parallel uses of the oracle even though in the figures we represent
these using single oracles. We do make minor changes to the circuit
models, following \cite{CCL2020} when we consider ${\rm QNC}_{d}$
circuits and ${\rm CQ}_{d}$ circuits---an extra single layered unitary
is allowed to process the final oracle call. 
\end{rem}

We end by explicitly augmenting \Notaref{CompositionNotation} to
include oracles.
\begin{notation}
When oracles are introduced, we use the following notation.
\begin{itemize}
\item ${\rm QNC}_{d}^{\mathcal{O}}$: $\mathcal{A}^{\mathcal{O}}=U_{d+1}\circ\mathcal{O}\circ U_{d}\circ\dots\mathcal{O}\circ U_{1}$
(see \Figref{QNCdOracle})
\item ${\rm QC}_{d}^{\mathcal{O}}:$ $\mathcal{B}^{\mathcal{O}}=\ensuremath{\mathcal{A}_{c,d+1}^{\mathcal{O}}}\circ\text{\ensuremath{\mathcal{B}_{d}^{\mathcal{O}}}}\ensuremath{\circ}\dots\ensuremath{\mathcal{B}_{1}^{\mathcal{O}}}$
where $\mathcal{B}_{i}^{\mathcal{O}}=\Pi_{i}\circ\mathcal{O}\circ U_{i}\circ\mathcal{A}_{c,i}^{\mathcal{O}}$
and $\mathcal{A}_{c,i}^{\mathcal{O}}$ can access $\mathcal{O}$ classically
(see \Figref{dQC_oracle}). 
\item ${\rm CQ}_{d}^{\mathcal{O}}$: $\mathcal{C}^{\mathcal{O}}=\mathcal{A}_{m+1}^{\mathcal{O}}\circ\mathcal{C}_{m}^{\mathcal{O}}\circ\dots\mathcal{C}_{1}^{\mathcal{O}}$
where $\mathcal{C}_{i}^{\mathcal{O}}:=\Pi_{i}\circ U_{d+1,i}\circ\mathcal{O}\circ U_{d,i}\circ\dots\circ\mathcal{O}\circ U_{1,i}\circ\mathcal{A}_{c,i}^{\mathcal{O}}$
where $\mathcal{A}_{c,i}^{\mathcal{O}}$ can access $\mathcal{O}$
classically (see \Figref{dCQ_oracle}).
\end{itemize}
The classes $\left({\rm BQNC}_{d}^{{\rm BPP}}\right)^{\mathcal{O}}$
and $\left({\rm BPP}^{{\rm BQNC}_{d}}\right)^{\mathcal{O}}$ are implicitly
defined to be the query analogues of ${\rm BQNC}_{d}^{{\rm BPP}}$
and ${\rm BPP}^{{\rm BQNC}_{d}}$ (resp.), i.e. class of promise problems
solved by ${\rm QC}_{d}^{\mathcal{O}}$ and ${\rm CQ}_{d}^{\mathcal{O}}$
circuits (resp.). 
\end{notation}

\begin{figure}
\begin{centering}
\subfloat[A ${\rm QNC}_{d}$ circuit with access to oracle $\mathcal{O}$. Following
\cite{CCL2020}, in the oracle version of ${\rm QNC}_{d}$, we allow
it to perform one extra single layered unitary to process the output.
\label{fig:QNCdOracle}]{\begin{centering}
\includegraphics[width=5cm]{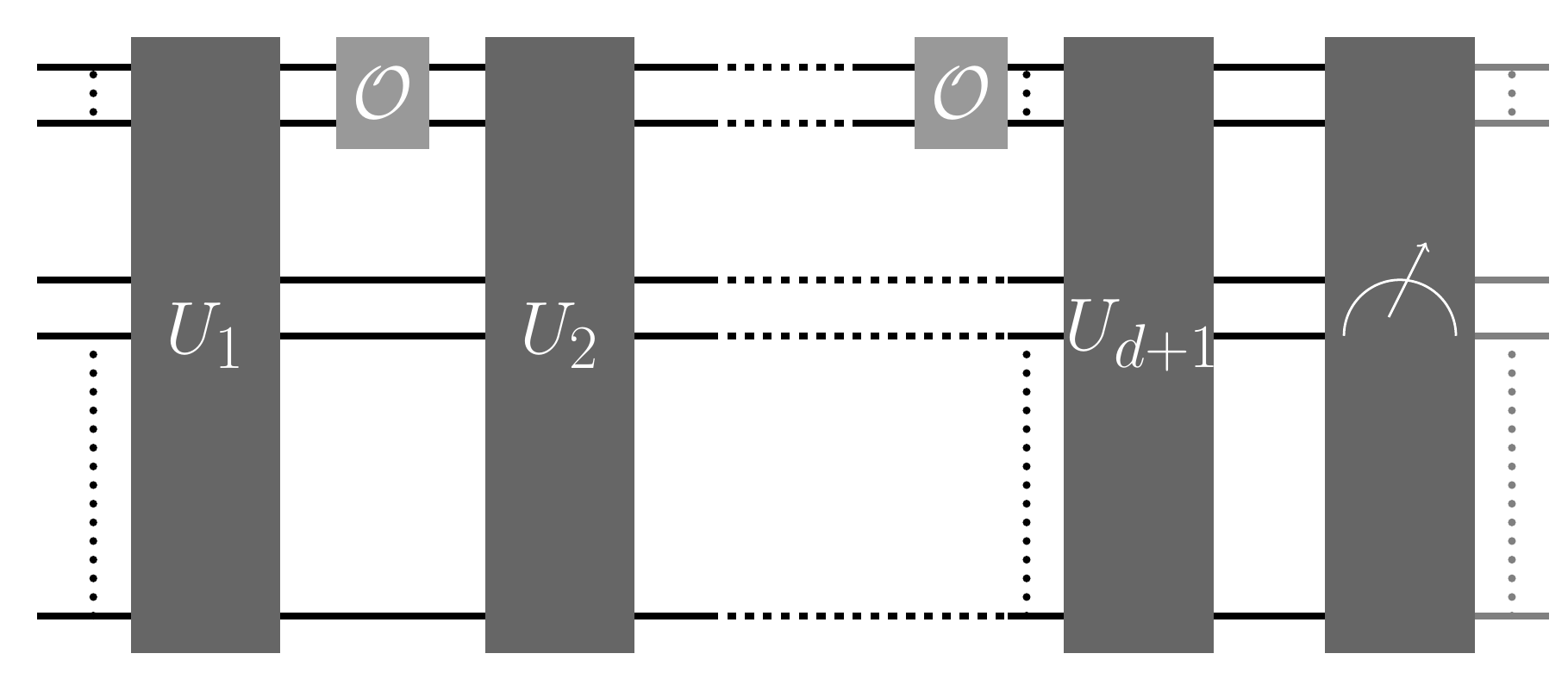}
\par\end{centering}
}
\par\end{centering}
\begin{centering}
\subfloat[A ${\rm QC}_{d}$ circuit with access to an oracle $\mathcal{O}$.
There is no ``extra'' single layered unitary in this model. \label{fig:dQC_oracle}]{\begin{centering}
\includegraphics[width=10cm]{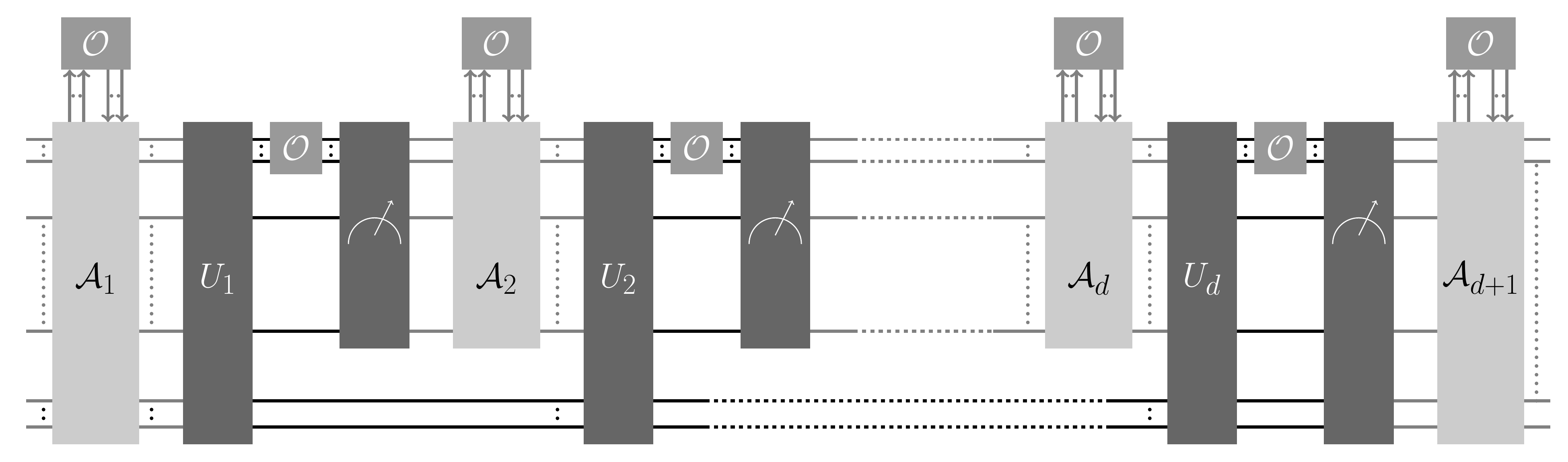}
\par\end{centering}
}\enskip{}\subfloat[A ${\rm CQ}_{d}$ circuit with access to an oracle $\mathcal{O}$.
Again, following \parencite{CCL2020}, we allow an extra single layer
unitary to process the result of the last oracle call.\label{fig:dCQ_oracle}]{\begin{centering}
\includegraphics[width=8cm]{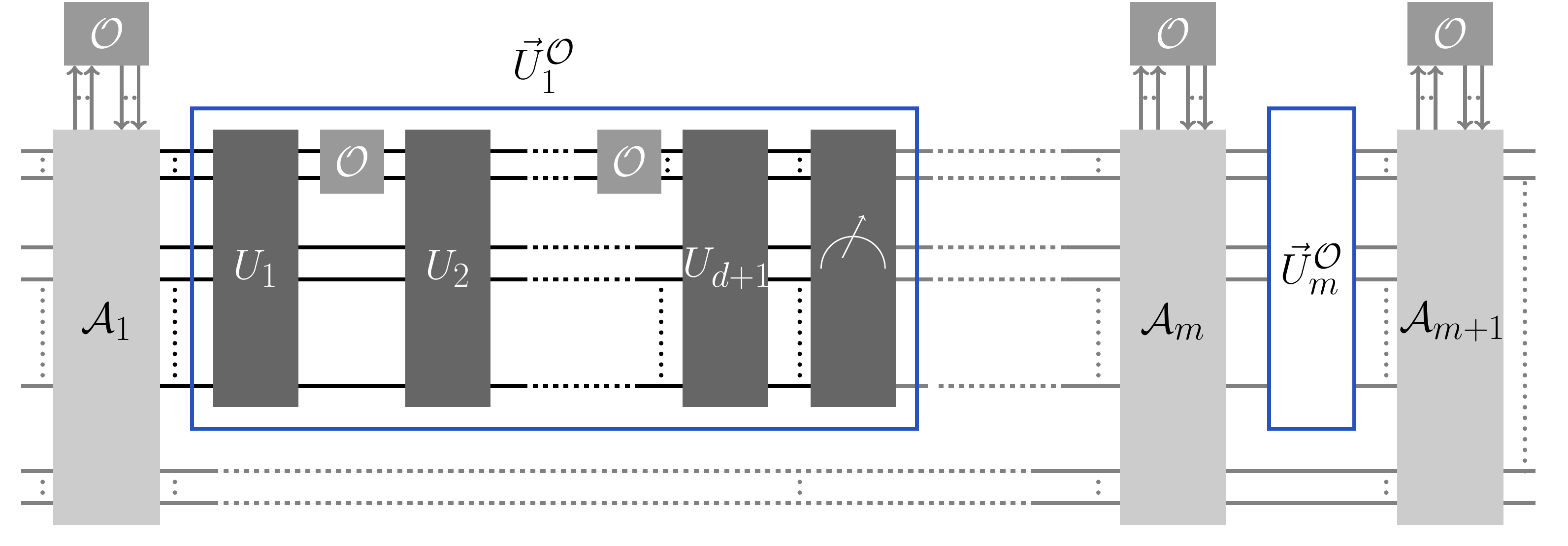}
\par\end{centering}
}
\par\end{centering}
\caption{The same three circuit models, but with oracle access.}

\end{figure}

\section{(Known) Technical Results I \label{sec:Prerequisite-Technical-Results}}

\branchcolor{black}{The basic result we use, following \textcite{CCL2020}, is a simplified
version of the so-called ``one-way to hiding'', O2H lemma, introduced
by \textcite{ambainis_quantum_2018}. Informally, the lemma says the
following: suppose there are two oracles $\mathcal{O}$ and $\mathcal{Q}$
which behave identically on all inputs except some subset $S$ of
their input domain. Let $\mathcal{A}^{\mathcal{O}}$ and $\mathcal{A}^{\mathcal{Q}}$
be identical quantum algorithms, except for their oracle access, which
is to $\mathcal{O}$ and $\mathcal{Q}$ respectively. Then, the probability
that the result of $\mathcal{A}^{\mathcal{O}}$ and $\mathcal{A}^{\mathcal{Q}}$
will be distinct, is bounded by the probability of finding the set
$S$. We suppress the details of the general finding procedure and
only focus on the case of interest for us here.}

\subsection{Standard notions of distances}

\branchcolor{black}{We quickly recall some notions of distances that appear here.}
\begin{defn}
Let $\rho,\rho'$ be two mixed states. Then we define
\begin{itemize}
\item Fidelity: ${\rm F}(\rho,\rho'):=\tr(\sqrt{\sqrt{\rho}\rho'\sqrt{\rho}})$
\item Trace Distance: ${\rm TD}(\rho,\rho'):=\frac{1}{2}\tr\left|\rho-\rho'\right|$
and
\item Bures Distance: ${\rm B}(\rho,\rho'):=\sqrt{2-2F(\rho,\rho')}$.
\end{itemize}
\end{defn}

\begin{fact}
For any string $s$, any two mixed states, $\rho$ and $\rho'$, and
any quantum algorithm $\mathcal{A}$, we have 
\[
\left|\Pr[s\leftarrow\mathcal{A}(\rho)]-\Pr[s\leftarrow\mathcal{A}(\rho')]\right|\le B(\rho,\rho').
\]
\end{fact}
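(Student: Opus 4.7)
The plan is to split the bound into two standard pieces: first pass from the output probability gap to the trace distance of the states themselves via the data-processing (monotonicity) property of trace distance, and then relate trace distance to the Bures distance via the Fuchs--van de Graaf inequalities.

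For the first step, I would model the event ``$\mathcal{A}$ outputs $s$'' as the outcome of a two-outcome POVM $\{P_s, I - P_s\}$ obtained by composing $\mathcal{A}$ (viewed as a CPTP map producing a classical output register) with the projector onto the basis vector $\ket{s}$ in that register. The variational characterisation of trace distance gives $|\mathrm{tr}(P_s \rho) - \mathrm{tr}(P_s \rho')| \le \mathrm{TD}(\mathcal{A}(\rho), \mathcal{A}(\rho'))$, and then monotonicity of trace distance under quantum channels yields $\mathrm{TD}(\mathcal{A}(\rho), \mathcal{A}(\rho')) \le \mathrm{TD}(\rho, \rho')$. Thus
\[
\bigl|\Pr[s \leftarrow \mathcal{A}(\rho)] - \Pr[s \leftarrow \mathcal{A}(\rho')]\bigr| \le \mathrm{TD}(\rho, \rho').
\]

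For the second step, I would invoke the Fuchs--van de Graaf inequality $\mathrm{TD}(\rho, \rho') \le \sqrt{1 - \mathrm{F}(\rho, \rho')^2}$. Writing $F := \mathrm{F}(\rho, \rho') \in [0,1]$, we have
\[
1 - F^2 = (1-F)(1+F) \le 2(1-F),
\]
so $\mathrm{TD}(\rho, \rho') \le \sqrt{2(1-F)} = \mathrm{B}(\rho, \rho')$. Chaining the two bounds gives the claim.

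The proof is routine; the only thing to be slightly careful about is justifying the first inequality uniformly for mixed input states and for a general quantum algorithm, which is immediate once $\mathcal{A}$ is cast as a CPTP map with a classical output register. No step presents a real obstacle, and the argument does not rely on any property of $\mathcal{A}$ beyond being a physically realisable quantum operation.
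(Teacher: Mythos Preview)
Your proof is correct and is the standard argument; the paper states this as a \emph{Fact} without proof, so there is no alternative approach to compare against. Your two-step decomposition (probability gap $\le$ trace distance via monotonicity, then trace distance $\le$ Bures via Fuchs--van de Graaf and the elementary bound $1-F^2 \le 2(1-F)$) is exactly how one justifies this inequality.
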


\subsection{The O2H lemma}
\begin{notation}
In the following, we treat $W$ as the workspace register for our
algorithm which is left untouched by the Oracle and recall that $Q$
represents the query register and $R$ represents the response register.

In fact, we wish to allow parallel access to $\mathcal{Q}$ and we
represent this by allowing queries to a tuple of inputs, $\boldsymbol{q}=(q_{1},q_{2}\dots q_{\poly})$
simultaneously. We use boldface to represent such tuples. The query
register in this case is denoted by $\boldsymbol{Q}$. 
\end{notation}

\begin{defn}[$U^{\mathcal{L}\backslash S}$]
\label{def:ULS}Suppose $U$ acts on $\boldsymbol{Q}RW$, $\mathcal{L}$
is an oracle that acts on $\boldsymbol{QR}$ and $S$ is a subset
of the query domain of $\mathcal{L}$. We define 
\[
U^{\mathcal{L}\backslash S}\left|\psi\right\rangle _{\boldsymbol{QR}W}\left|0\right\rangle _{B}:=\mathcal{L}U_{S}U\left|\psi\right\rangle _{\boldsymbol{QR}W}\left|0\right\rangle _{B}
\]
where $B$ is a qubit register, and $U_{S}$ flips qubit $B$ if any
query is made inside the set $S$, i.e. 
\[
U_{S}\left|\boldsymbol{q}\right\rangle _{\boldsymbol{Q}}\left|b\right\rangle _{B}:=\begin{cases}
U_{S}\left|\boldsymbol{q}\right\rangle _{\boldsymbol{Q}}\left|b\right\rangle _{B} & \text{if }\boldsymbol{q}\cap S=\emptyset\\
U_{S}\left|\boldsymbol{q}\right\rangle _{\boldsymbol{Q}}\left|b\oplus1\right\rangle _{B} & \text{otherwise.}
\end{cases}
\]
Here we treat $\boldsymbol{q}$ as a set when we write $\boldsymbol{q}\cap S$. 
\end{defn}

For notational simplicity, in the following, we drop the boldface
for the query and response registers as they do not play an active
role in the discussion.
\begin{defn}[{$\Pr[{\rm find}:U^{\mathcal{L}\backslash S},\rho]$}]
 \label{def:prFind}Let $U^{\mathcal{L}\backslash S}$ be as above
and suppose $\rho\in{\rm D}(QRWB)$. We define 
\[
\Pr[{\rm find}:U^{\mathcal{L}\backslash S},\rho]:={\rm tr}[\mathbb{I}_{QRW}\otimes\left|1\right\rangle \left\langle 1\right|_{B}U^{\mathcal{L}\backslash S}\circ\rho].
\]
This will depend on $\mathcal{L}$ and $S$. When $\mathcal{L}$ and
$S$ are random variables, we additionally take expectation over them.
\end{defn}

\begin{rem}
\label{rem:psiphi0phi1}Let $U^{\mathcal{L}\backslash S}$ be as in
\Defref{ULS} and let $\left|\psi\right\rangle \in QRW$. Note that
we can always write
\[
\mathcal{L}U\left|\psi\right\rangle _{QRW}=\left|\phi_{0}\right\rangle _{QRW}+\left|\phi_{1}\right\rangle _{QRW}
\]
where $\left|\phi_{0}\right\rangle $ and $\left|\phi_{1}\right\rangle $
contains queries outside $S$ and inside $S$ respectively, i.e. $\left\langle \phi_{0}|\phi_{1}\right\rangle =0$.
Further, we can write
\[
U^{\mathcal{L}\backslash S}\left|\psi\right\rangle _{QRW}\left|0\right\rangle _{B}=\left|\phi_{0}\right\rangle _{QRW}\left|0\right\rangle _{B}+\left|\phi_{1}\right\rangle _{QRW}\left|1\right\rangle _{B}.
\]
\end{rem}

\begin{lem}[\parencite{CCL2020,ambainis_quantum_2018} O2H]
 \label{lem:O2H}Let 
\begin{itemize}
\item $\mathcal{L}$ be an oracle which acts on $QR$ and $S$ be a subset
of the query domain of $\mathcal{L}$,
\item $\mathcal{G}$ be a shadow of $\mathcal{L}$ with respect to $S$,
i.e. $\mathcal{G}$ and $\mathcal{L}$ behave identically for all
queries outside $S$,
\item further, suppose that within $S$, $\mathcal{G}$ responds with $\perp$
while (again within $S$), $\mathcal{L}$ does not respond with $\perp$.
Finally, let $\Pi_{t}$ be a measurement in the computational basis,
corresponding to the string $t$. 
\end{itemize}
Then 
\begin{align*}
\left|{\rm tr}[\Pi_{t}\mathcal{L}\circ U\circ\rho]-{\rm tr}[\Pi_{t}\mathcal{G}\circ U\circ\rho]\right| & \le B(\mathcal{L}\circ U\circ\rho,\mathcal{G}\circ U\circ\rho)\\
 & \le\sqrt{2\Pr[\text{find }:U^{\mathcal{L}\backslash S},\rho]}.
\end{align*}
If $\mathcal{L}$ and $S$ are random variables with a joint distribution,
we take the expectation over them in the RHS (see \Defref{prFind}). 
\end{lem}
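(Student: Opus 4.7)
The plan is to establish the two inequalities separately. For the first, the difference $\left|\tr[\Pi_{t}\sigma]-\tr[\Pi_{t}\sigma']\right|$ is bounded by the trace distance $\td(\sigma,\sigma')$ (an elementary consequence of the characterisation of trace distance as the maximum distinguishing advantage over all two-outcome POVMs), and the trace distance is in turn bounded by the Bures distance via the Fuchs--van de Graaf estimate $\td(\sigma,\sigma')\le\sqrt{1-F(\sigma,\sigma')^{2}}=\sqrt{(1-F)(1+F)}\le\sqrt{2(1-F)}=B(\sigma,\sigma')$. Instantiating with $\sigma=\mathcal{L}\circ U\circ\rho$ and $\sigma'=\mathcal{G}\circ U\circ\rho$ gives the first inequality.

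For the second inequality, I would first treat the pure-state case $\rho=\left|\psi\right\rangle \left\langle \psi\right|$ using the decomposition recorded in the remark preceding the lemma: write $\mathcal{L}U\left|\psi\right\rangle =\left|\phi_{0}\right\rangle +\left|\phi_{1}\right\rangle $, with $\left|\phi_{0}\right\rangle $ supported on query-register tuples $\boldsymbol{q}$ satisfying $\boldsymbol{q}\cap S=\emptyset$ and $\left|\phi_{1}\right\rangle $ supported on the complementary set, so that $\left\langle \phi_{0}|\phi_{1}\right\rangle =0$ by orthogonality in the $\boldsymbol{Q}$-register basis. Correspondingly $\mathcal{G}U\left|\psi\right\rangle =\left|\phi_{0}\right\rangle +\left|\phi_{1}'\right\rangle $, where $\left|\phi_{1}'\right\rangle $ has the same query and workspace components as $\left|\phi_{1}\right\rangle $ but with each response coordinate $a\oplus\mathcal{L}(q)$ replaced by $a\oplus\perp$. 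The hypothesis that $\mathcal{L}$ never returns $\perp$ inside $S$ while $\mathcal{G}$ always does forces the response registers of $\left|\phi_{1}\right\rangle $ and $\left|\phi_{1}'\right\rangle $ to be orthogonal on any coordinate lying in $S$, so $\left\langle \phi_{1}|\phi_{1}'\right\rangle =0$. The fidelity then collapses to $F=\left\langle \phi_{0}|\phi_{0}\right\rangle =1-\left\Vert \left|\phi_{1}\right\rangle \right\Vert ^{2}$, and by the very definition of $U^{\mathcal{L}\backslash S}$ the amplitude that flips the flag qubit $B$ to $\left|1\right\rangle $ is exactly $\left|\phi_{1}\right\rangle $, giving $\Pr[{\rm find}:U^{\mathcal{L}\backslash S},\left|\psi\right\rangle \left\langle \psi\right|]=\left\Vert \left|\phi_{1}\right\rangle \right\Vert ^{2}$. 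Hence $B^{2}=2(1-F)=2\Pr[{\rm find}]$, which is actually an equality in the pure-state case.

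To pass from pure states to a general mixed $\rho$, I would invoke joint concavity of fidelity (equivalently convexity of Bures distance): purify $\rho$ on an auxiliary register $E$ as $\left|\Psi\right\rangle _{\boldsymbol{QR}WE}$, apply the pure-state bound to $\left|\Psi\right\rangle $ (the oracles and $U$ act trivially on $E$), and observe that tracing out $E$ can only decrease the Bures distance while leaving the find probability unchanged. The randomness over $(\mathcal{L},S)$ is handled by absorbing the joint distribution into a classical register appended to $\rho$; the per-sample bound then yields the claimed inequality, with the expectation brought inside the square root by Jensen's inequality applied to the concave map $t\mapsto\sqrt{t}$. The main subtlety to verify carefully is the parallel-query version: one must check that even a single coordinate $q_{i}\in S$ in the tuple $\boldsymbol{q}$ is enough to render the $i$-th slot of the response register in $\left|\phi_{1}\right\rangle $ orthogonal to the corresponding slot in $\left|\phi_{1}'\right\rangle $, so that the argument is uniform in the query parallelism. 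This is the place where the non-$\perp$ hypothesis on $\mathcal{L}$ inside $S$ is essential, and it is the only step that requires a moment's care.
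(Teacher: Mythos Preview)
Your proof is essentially the paper's argument: purify $\rho$, decompose $\mathcal{L}U\left|\psi\right\rangle=\left|\phi_0\right\rangle+\left|\phi_1\right\rangle$ versus $\mathcal{G}U\left|\psi\right\rangle=\left|\phi_0\right\rangle+\left|\phi_1'\right\rangle$ with $\left\langle\phi_1|\phi_1'\right\rangle=0$ from the $\perp$/non-$\perp$ hypothesis, identify $\left\Vert\left|\phi_1\right\rangle\right\Vert^2=P_{\mathrm{find}}$, and for random $(\mathcal{L},S)$ pass through joint concavity of fidelity (your ``absorb the randomness into a classical register'' is exactly the paper's monotonicity-of-fidelity step). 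One small correction: the expectation lands inside the square root because, on the cq-extension, $B^2=2(1-F)=2\,\mathbb{E}_\omega[1-F_\omega]=2\,\mathbb{E}[P_{\mathrm{find}}]$ \emph{directly}, not via Jensen on $t\mapsto\sqrt{t}$---invoking Jensen would presuppose $B(\rho_L,\rho_G)\le\mathbb{E}_\omega[B_\omega]$, which is false in general since the Bures distance is not jointly convex.
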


\branchcolor{black}{\begin{proof}
We begin by assuming that $\mathcal{L}$ and $S$ are fixed (and so
is $\mathcal{G}$). In that case, we can assume $\rho$ is pure. If
not, we can purify it and absorb it in the work register. (The general
case should follow from concavity). From \Remref{psiphi0phi1}, we
have 
\begin{align*}
\left|\psi_{L}\right\rangle  & :=\mathcal{L}U\left|\psi\right\rangle _{Q'}\overset{\prettyref{rem:psiphi0phi1}}{=}\left|\phi_{0}\right\rangle _{Q'}+\left|\phi_{1}\right\rangle _{Q'}.\\
\mathcal{L}U_{S}U\left|\psi\right\rangle _{Q'}\left|0\right\rangle _{B} & =\left|\phi_{0}\right\rangle _{Q'}\left|0\right\rangle _{B}+\left|\phi_{1}\right\rangle _{Q'}\left|1\right\rangle _{B}
\end{align*}
where $Q'$ is a shorthand for $QRW$. Similarly let 
\begin{align*}
\left|\psi_{G}\right\rangle  & :=\mathcal{G}U\left|\psi\right\rangle _{Q'}=\left|\phi_{0}\right\rangle _{Q'}+\left|\phi_{1}^{\perp}\right\rangle _{Q'}
\end{align*}
where note that 
\begin{equation}
\left\langle \phi_{1}|\phi_{1}^{\perp}\right\rangle _{QRW}=0\label{eq:phisPerp}
\end{equation}
 because $\left|\phi_{1}\right\rangle $ and $\left|\phi_{1}^{\perp}\right\rangle $
are the states where the queries were made on $S$, and on $S$ $\mathcal{G}$
responds with $\perp$ while $\mathcal{L}$ does not. Further, we
analogously have 
\[
\mathcal{G}U_{S}U\left|\psi\right\rangle _{Q'}\left|0\right\rangle _{B}=\left|\phi_{0}\right\rangle _{Q'}\left|0\right\rangle _{B}+\left|\phi_{1}^{\perp}\right\rangle _{Q'}\left|1\right\rangle _{B}.
\]
 We show that the difference between $\left|\psi_{L}\right\rangle $
and $\left|\psi_{G}\right\rangle $ is bounded by $P_{{\rm find}}(\mathcal{L},S):=\Pr[{\rm find}:U^{\mathcal{L}\backslash S},\rho]$,
which in turn can be used to bound the quantity in the statement of
the lemma.
\begin{align*}
\left\Vert \left|\psi_{L}\right\rangle -\left|\psi_{G}\right\rangle \right\Vert ^{2} & =\left\Vert \left|\phi_{1}\right\rangle -\left|\phi_{1}^{\perp}\right\rangle \right\Vert ^{2}\\
 & \overset{\prettyref{eq:phisPerp}}{=}\left\Vert \left|\phi_{1}\right\rangle \right\Vert ^{2}+\left\Vert \left|\phi_{1}^{\perp}\right\rangle \right\Vert ^{2}\\
 & =2\left\Vert \left|\phi_{1}\right\rangle \right\Vert ^{2} & \because\left\Vert \left|\phi_{1}\right\rangle \right\Vert ^{2}=\left\Vert \left|\phi_{1}^{\perp}\right\rangle \right\Vert ^{2}=1-\left\Vert \left|\phi_{0}\right\rangle \right\Vert ^{2}\\
 & =2P_{{\rm find}}(\mathcal{L},S).
\end{align*}
If $\mathcal{L}$ and $S$ are random variables drawn from a (possibly)
joint distribution $\Pr(\mathcal{L},S)$, the analysis can be generalised
as follows. Let 
\begin{align*}
\rho_{L} & :=\sum_{\mathcal{L},S}\Pr(\mathcal{L},S)\left|\psi_{L}\right\rangle \left\langle \psi_{L}\right|\\
\rho_{G} & :=\sum_{\mathcal{L},S}\Pr(\mathcal{L},S)\left|\psi_{G}\right\rangle \left\langle \psi_{G}\right|
\end{align*}
where $\left|\psi_{G}\right\rangle $ is fixed by $\mathcal{L}$ and
$S$ because $G$ itself is fixed once $\mathcal{L}$ and $S$ is
fixed (by assumption). One can then use  monotonicity of fidelity
to obtain 
\begin{align*}
F(\rho_{L},\rho_{G}) & \ge\sum_{L,S}\Pr(\mathcal{L},S)F(\left|\psi_{L}\right\rangle ,\left|\psi_{G}\right\rangle )\\
 & \ge1-\frac{1}{2}.\sum_{L,S}\Pr(\mathcal{L},S)\left\Vert \left|\psi_{L}\right\rangle -\left|\psi_{G}\right\rangle \right\Vert ^{2} & \because1-\frac{1}{2}F(\left|a\right\rangle ,\left|b\right\rangle )\ge\left\Vert \left|a\right\rangle -\left|b\right\rangle \right\Vert ^{2}\\
 & \ge1-\cancel{\frac{1}{2}}\sum_{L,S}\Pr(\mathcal{L},S)\cancel{2}P_{{\rm find}}(\mathcal{L},S)\\
 & =1-P_{{\rm find}}
\end{align*}
 where $P_{{\rm find}}$ is the expectation of $P_{{\rm find}}(\mathcal{L},S)$
over $\mathcal{L}$ and $S$. It is known that the trace distance
bounds the LHS of the Lemma and the trace distance itself is bounded
by $\sqrt{2-2F}$.

\end{proof}
}

In the following, we resume the use of boldface for the query and
response registers as they do play an active role in the discussion.
\begin{lem}[{\parencite{CCL2020,ambainis_quantum_2018} Bounding $\Pr[{\rm find}:U^{\mathcal{L}\backslash S},\rho]$}]
 \label{lem:boundPfind}Suppose $S$ is a random variable and $\Pr[x\in S]\le p$
for some $p$. Further, assume that $U$ and $\rho$ are uncorrelated\footnote{i.e. the distribution from which $S$ is sampled is uncorrelated to
the distribution from which $U$ and $\rho$ are sampled,} to $S$. Then, (see \Defref{ULS}) 
\[
\Pr[{\rm find}:U^{\mathcal{L}\backslash S},\rho]\le\bar{q}\cdot p
\]
 where $\bar{q}$ is the total number of queries $U$ makes to $\mathcal{L}$.
\end{lem}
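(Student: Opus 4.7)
The plan is to reduce the bound to an elementary classical union bound on the measurement-outcome distribution of the parallel query register $\boldsymbol{Q}$. The two facts that make this reduction possible are (i) the $B$-flip operator $U_S$ is diagonal in the computational basis of $\boldsymbol{Q}$ (it only reads $\boldsymbol{q}$ and XORs into $B$), and (ii) the oracle $\mathcal{L}$ appearing after $U_S$ in Definition~\ref{def:ULS} acts on $\boldsymbol{Q}R$ but not on $B$, so it commutes with the final projector $\mathbb{I}\otimes|1\rangle\langle 1|_B$ and can be discarded when computing $\Pr[\text{find}]$.

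First I would unfold Definition~\ref{def:prFind}: for any fixed $\mathcal{L},S$,
\[
\Pr[\text{find}:U^{\mathcal{L}\backslash S},\rho]
=\mathrm{tr}\bigl[(\mathbb{I}\otimes|1\rangle\langle 1|_B)\,U_S\bigl(U\rho U^\dagger\otimes|0\rangle\langle 0|_B\bigr)U_S^\dagger\bigr].
\]
Since $U_S$ is diagonal on $\boldsymbol{Q}$, this quantity depends only on the diagonal of $U\rho U^\dagger$ in the $\boldsymbol{Q}$ basis. Define the classical distribution $\mu(\boldsymbol{q}):=\langle\boldsymbol{q}|\mathrm{tr}_{RW}(U\rho U^\dagger)|\boldsymbol{q}\rangle$ on tuples $\boldsymbol{q}=(q_1,\dots,q_{\bar q})$; then
\[
\Pr[\text{find}:U^{\mathcal{L}\backslash S},\rho]=\Pr_{\boldsymbol{q}\sim\mu}[\boldsymbol{q}\cap S\neq\emptyset].
\]
By the hypothesis that $U$ and $\rho$ are uncorrelated with $S$, the distribution $\mu$ is independent of $S$.

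Next I would apply a union bound over the $\bar q$ coordinates of $\boldsymbol{q}$:
\[
\Pr_{\boldsymbol{q}\sim\mu}[\boldsymbol{q}\cap S\neq\emptyset]\le\sum_{i=1}^{\bar q}\Pr_{\boldsymbol{q}\sim\mu}[q_i\in S].
\]
Taking expectation over $S$ and swapping sums (Fubini, justified since $\mu_i$ is independent of $S$),
\[
\mathbb{E}_S\Pr_{\boldsymbol{q}\sim\mu}[q_i\in S]
=\sum_x \mu_i(x)\Pr_S[x\in S]\le p\sum_x\mu_i(x)=p,
\]
where $\mu_i$ is the marginal of $\mu$ on coordinate $i$. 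Summing over $i=1,\dots,\bar q$ yields the desired bound $\bar q\cdot p$.

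The argument is short and the main (mild) obstacle is bookkeeping: one must verify that the trailing $\mathcal{L}$ and any $S$-correlations in $\mathcal{L}$ are irrelevant to the $B$-measurement, and that $U_S$'s diagonality in $\boldsymbol{Q}$ kills all off-diagonal coherences of $U\rho U^\dagger$. Once those observations are in place, everything reduces to the above classical union bound.
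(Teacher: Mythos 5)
Your proposal is correct and follows essentially the same route as the paper's proof: both reduce $\Pr[{\rm find}]$ to a classical probability over the diagonal of $U\rho U^\dagger$ in the query basis (using that $U_S$ only reads $\boldsymbol{Q}$ and that the trailing $\mathcal{L}$ commutes with the $B$-projector), then apply the union bound over the $\bar q$ query slots and take the expectation over $S$ using the uncorrelatedness hypothesis. The only cosmetic difference is that you perform the union bound coordinate-wise on the marginal $\mu_i$ before averaging over $S$, whereas the paper bounds $\mathbb{E}_S[\chi_S(\boldsymbol{q})]$ for fixed $\boldsymbol{q}$ first; these are interchangeable.
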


\branchcolor{black}{\begin{proof}
Let us begin with the case where the oracle is applied only once,
i.e. $\boldsymbol{Q}$ is a single query register $Q$. Since the
$RW$ registers don't play any significant role, we denote it by $L$.
Let 
\begin{align*}
U\left|\psi\right\rangle  & =\sum_{q,l}\psi(q,l)\left|q,l,0\right\rangle _{QLS}\\
\implies U_{S}U\left|\psi\right\rangle  & =\sum_{q\notin S}\left(\sum_{r,l}\psi(q,l)\left|q,l\right\rangle _{QL}\right)\left|0\right\rangle _{S}+\sum_{q\in S}\left(\sum_{r,l}\psi(q,l)\left|q,l\right\rangle _{QL}\right)\left|1\right\rangle _{S}.
\end{align*}
Since $\mathcal{L}$ leaves registers $QS$ unchanged, 
\begin{align*}
\tr[\mathbb{I}_{QL}\otimes\left|1\right\rangle \left\langle 1\right|_{B}\left(\mathcal{L}\circ U_{S}\circ U\circ\left|\psi\right\rangle \left\langle \psi\right|\right)] & =\tr[\mathbb{I}_{QL}\otimes\left|1\right\rangle \left\langle 1\right|_{B}\left(U_{S}\circ U\circ\left|\psi\right\rangle \left\langle \psi\right|\right)]\\
 & =\sum_{q}\psi^{2}(q)\chi_{S}(q)
\end{align*}
where $\psi(q)=\sum_{l}\psi(q,l)$ and $\chi_{S}$ is the characteristic
function for $S$, i.e. 
\[
\chi_{S}(q)=\begin{cases}
1 & q\in S\\
0 & q\notin S.
\end{cases}
\]
We are yet to average over the random variable $S$. Clearly, $\mathbb{E}(\chi_{S}(q))=\Pr[x\in S]\le p$,
yielding 
\[
\Pr[{\rm find}:U^{\mathcal{L}\backslash S},\rho]\le p.
\]
In the general case, everything goes through unchanged except the
string $q$ is now a set of strings $\boldsymbol{q}$ and 
\[
\chi_{S}(\boldsymbol{q})=\begin{cases}
1 & \boldsymbol{q}\cap S\neq\emptyset\\
0 & \boldsymbol{q}\cap S=\emptyset.
\end{cases}
\]
Consequently, one evaluates $\mathbb{E}(\chi_{S}(\boldsymbol{q}))=\Pr[\boldsymbol{q}\cap S\neq\emptyset]\le\left|\boldsymbol{q}\right|\cdot p=\bar{q}\cdot p$,
by the union bound, yielding 
\[
\Pr[{\rm find}:U^{\mathcal{L}\backslash S},\rho]\le\bar{q}\cdot p.
\]
\end{proof}
}

We now generalise the statement slightly to facilitate the use of
conditional random variables. These become useful for the proof of
hardness for ${\rm QC}_{d}$ circuits.
\begin{cor}
\label{cor:Conditionals}Let $D$ be the query domain of $\mathcal{L}$.
Suppose a set $S\subseteq D$, a quantum state $\rho$ and a unitary
$U$ are drawn from a joint distribution (which may be correlated
with $\mathcal{L}$). Let the set $T\subseteq D$ be another random
variable (again, possibly arbitrarily correlated) and $F$ be the
event that\footnote{One could take $F$ be to a general event as well but for our purposes,
this suffices.} $S\cap T=\emptyset$. Define the random variables $\mathcal{N}:=\mathcal{L}|F$,
$R:=S|F$, $\sigma:=\rho|F$ and $V:=U|F$ and assume that $R$, $\sigma$
and $V$ are uncorrelated.\footnote{i.e. the joint probability distribution of $S|T$, $\rho|T$ and $U|T$
is a product of their individual probability distributions.} Suppose for all $x\in D$, $\Pr[x\in R]\le p$ for some $p.$ Then,
(see \Defref{ULS}) 
\[
\Pr[{\rm find}:V^{\mathcal{\mathcal{N}}\backslash R},\sigma]\le\bar{q}\cdot p
\]
 where $\bar{q}$ is the total number of queries $V$ makes to $\mathcal{L}$. 
\end{cor}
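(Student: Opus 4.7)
The plan is to observe that this corollary is, in essence, a restatement of \Lemref{boundPfind} in the conditional probability space determined by the event $F$. After conditioning, the random variables $R=S\mid F$, $\sigma=\rho\mid F$ and $V=U\mid F$ play exactly the roles that $S$, $\rho$ and $U$ played in \Lemref{boundPfind}: they are uncorrelated (by hypothesis), and the uniform bound $\Pr[x\in R]\le p$ is precisely the tail bound assumed there. Thus the strategy is to rerun the proof of \Lemref{boundPfind} word for word, replacing each expectation over $(S,\rho,U)$ with the conditional expectation given $F$.

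Concretely, I would begin by fixing the event $F$ and writing, as in the proof of \Lemref{boundPfind},
\[
V\,\ket{\psi_{\sigma}} \;=\; \sum_{\boldsymbol{q},l}\psi_{V}(\boldsymbol{q},l)\,\ket{\boldsymbol{q},l,0}_{\boldsymbol{Q}LB},
\]
so that after applying $V_{R}$ and then $\mathcal{N}$ the $B$-register is measured to be $1$ exactly when the queries intersect $R$. Using that $\mathcal{N}$ leaves the $\boldsymbol{Q}B$ registers untouched, the find-probability reduces to
\[
\Pr[\text{find}:V^{\mathcal{N}\backslash R},\sigma]
\;=\;\mathbb{E}\!\left[\sum_{\boldsymbol{q}} |\psi_{V}(\boldsymbol{q})|^{2}\,\chi_{R}(\boldsymbol{q})\,\Big|\,F\right],
\]
where $\chi_R(\boldsymbol{q})=1$ iff $\boldsymbol{q}\cap R\neq\emptyset$ and $|\psi_V(\boldsymbol{q})|^2$ depends on $V,\sigma$ only. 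The uncorrelation assumption on $R$ with $V,\sigma$ (inside the conditional space) lets me factor the expectation, giving $\mathbb{E}_{V,\sigma\mid F}\!\left[\sum_{\boldsymbol{q}}|\psi_{V}(\boldsymbol{q})|^{2}\,\Pr[\boldsymbol{q}\cap R\neq\emptyset\mid F]\right]$. A union bound yields $\Pr[\boldsymbol{q}\cap R\neq\emptyset\mid F]\le|\boldsymbol{q}|\cdot p\le\bar q\cdot p$, and normalisation of $\ket{\psi_{\sigma}}$ gives the desired $\bar q\cdot p$ bound.

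The only thing that really needs checking is that nothing in the previous proof implicitly used the \emph{unconditional} distribution of $\mathcal{L}$: the find probability depends on the oracle only through the predicate ``did a query fall into the bad set?'', so changing $\mathcal{L}$ to $\mathcal{N}=\mathcal{L}\mid F$ is harmless. The only potential obstacle I anticipate is the bookkeeping around parallel queries and the union bound under the conditional distribution — but since the hypothesis $\Pr[x\in R]\le p$ is stated directly for the conditional variable $R$, the union bound is available verbatim and no further work is needed.
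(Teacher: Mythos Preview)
Your proposal is correct and matches the paper's intent: the paper states this as an immediate corollary of \Lemref{boundPfind} without giving a separate proof, and your argument---rerunning that lemma's proof verbatim in the conditional probability space determined by $F$, using the uncorrelation hypothesis to factor the expectation and the bound $\Pr[x\in R]\le p$ for the union bound---is exactly the intended justification.
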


\section{$d$-Serial Simon's Problem « Main Result 1\label{sec:-Serial-Simon's-Problem}}

\branchcolor{black}{In this section, we introduce the $d$-Serial Simon's Problem. To
get some familiarity, we first state the upper bounds---we see that
it can be solved by a ${\rm QC}_{d+1}$ circuit and also by a ${\rm CQ}_{1}$
circuit. As for the lower bound, we begin by showing that no ${\rm QNC}_{d}$
circuit can solve the problem. We then extend this proof to show that
no ${\rm QC}_{d}$ circuit can solve the problem either. }

\subsection{Oracles and Distributions | Simon's and $c$-Serial\label{subsec:Oracles-and-Distributions}}

\branchcolor{black}{We begin with Simon's problem. As will be the case for all problems
we discuss, we consider both search and decision variants. The latter
will usually take the form of distinguishing, say, a Simon's oracle
from a one-to-one oracle. We thus, first define the two oracles, and
then use them to define both variants of the problem.}
\begin{defn}[Distribution for one-to-one functions]
 Let $F:=\left\{ f:\{0,1\}^{n}\to\{0,1\}^{n}|f\text{ is one-to-one}\right\} $
be the set of all one-to-one functions acting on $n$-bit strings,
and let $\mathcal{O}_{F}:=\{\mathcal{O}_{f}:f\in F\}$ be the set
of all oracles associated with these functions. Define the \emph{distribution
over one-to-one functions}, $\mathbb{F}_{R}(n)$, to be the uniform
distribution over $F$ and the \emph{distribution for one-to-one function
oracles}, $\mathbb{O}_{{\rm R}}(n)$, to be the uniform distribution
over $\mathcal{O}_{F}$.\label{def:RandomOneOneFunctions}
\end{defn}

\begin{defn}[Distribution for Simon's function]
\label{def:SimonsFunctionDistr} Let 
\[
F:=\{(f,s)|f:\{0,1\}^{n}\to\{0,1\}^{n}\text{ is two-to-one and }f(x\oplus s)=f(x)\}
\]
be the set of all Simon functions acting on $n$-bit strings and let
$\mathcal{O}_{F}:=\{(\mathcal{O}_{f},s):(f,s)\in F\}$ be the set
of all oracles associated with these functions. Define the \emph{distribution
over Simon's functions}, $\mathbb{F}_{S}(n)$, to be the uniform distribution
over $F$ and the \emph{distribution over Simon's Oracles}, $\mathbb{O}_{S}(n)$
to be the uniform distribution over $\mathcal{O}_{F}$. 
\end{defn}

\begin{defn}[Simon's Problem]
 Fix an integer $n>0$ to denote the problem size. Let $\mathcal{R}\sim\mathbb{O}_{R}(n)$
be an oracle for a randomly chosen one-to-one function and $(\mathcal{S},s)\sim\mathbb{O}_{S}(n)$
be an oracle for a randomly chosen Simon's function which has period
$s$. 
\begin{itemize}
\item Search version: Given $\mathcal{S}$, find $s$.
\item Decision version: Given $\mathcal{O}\in_{R}\{\mathcal{R},\mathcal{S}\}$,
i.e. one of the oracles with equal probability, determine which oracle
was given.
\end{itemize}
\end{defn}

\branchcolor{black}{While we focus on $d$-Serial Simon's Problem, we define the problem
more generally as a $d$-Serial Generic Oracle Problem with respect
a ``generic oracle problem''. To this end, we briefly formalise
the latter.}
\begin{defn}[Generic Oracle Problem]
 Let $(\mathcal{O},r)\sim\mathbb{O}(n)$ where $\mathbb{O}$ is some
fixed distribution over oracles and the corresponding expected answers,
and $n$ is the problem size. Let $\mathcal{R}\sim\cancel{\mathbb{O}}(n)$
where suppose that $\cancel{\mathbb{O}}(n)$ is the distribution over
oracles against which the decision version is defined. The generic
oracle problem is:\label{def:genericOracleProblem}
\begin{itemize}
\item Search version: Given $\mathcal{O},$ find $r$.
\item Decision version: Given $\mathcal{Q}\in_{R}\{\mathcal{O},\mathcal{R}\}$,
determine which oracle was given.
\end{itemize}
\end{defn}

\branchcolor{black}{The $d$-Serial Generic Oracle problem is based on the following idea:
there is a sequence of $d+1$ oracles (indexed $0,1\dots d$), of
which the first $d$ encode a Simon's problems. The zeroth oracle
can be accessed directly. To access the first oracle, however, a secret
key is needed. This secret is the period of the zeroth Simon's function.
The first oracle, once unlocked, behaves as a Simon's oracle whose
period unlocks the second oracle and so on. The Generic oracle is
unlocked by the period of the $(d-1)$th Simon's problem. The problem
is to solve the ``Generic Oracle problem'' using the aforementioned
$d+1$ oracles.

The intuition is quite simple. Consider the $d$-Serial Simon's Oracle
Problem, i.e. where the generic oracle problem is a Simon's problem.
Then, observe that, naïvely, a ${\rm QC}_{d}$ scheme would have to
use all its $d$ depth to solve the $d$ Simon's problems to access
the last oracle and have no quantum depth left for solving the last
Simon's problem. However, with one more depth, i.e. with ${\rm QC}_{d+1}$,
the problem can be solved. Further, note that a ${\rm CQ}_{1}$ scheme
too can solve the problem. We will revisit these statements shortly
but first, we formally define the problem.}
\begin{defn}[$c$-Serial Generic Oracle]
 \label{def:cSerialSimonsOracle}Suppose the Generic Oracle is sampled
from the distributions $\mathbb{O}(n)$ and $\cancel{\mathbb{O}}(n)$
as in \Defref{genericOracleProblem} above where the oracles' domain
is assumed to be $\{0,1\}^{n}$. We define the \emph{$c$-Serial Generic
Function distribution} $\mathbb{F}_{{\rm Serial}}(c,n,\mathbb{O},\cancel{\mathbb{O}})$
and the \emph{$c$-Serial Generic Oracle distribution} $\mathbb{O}_{{\rm Serial}}(c,n,\mathbb{O},\cancel{\mathbb{O}})$
by specifying its sampling procedure. 
\begin{itemize}
\item Sampling step:
\begin{itemize}
\item For $i\in\{0,1,2,\dots c-1\}$, $(f_{i},s_{i})\sim\mathbb{F}_{S}(n)$,
i.e. sample $c$ Simon's functions from $\mathbb{F}_{S}(n)$.
\item Sample $(\mathcal{O},r)\sim\mathbb{O}(n)$ and $\mathcal{R}\sim\cancel{\mathbb{O}}(n)$.
For the search version, $\mathcal{Q}:=\mathcal{O}$ while for the
decision version, $\mathcal{Q}\in_{R}\{\mathcal{O},\mathcal{R}\}$.
\end{itemize}
\item Let $L_{f_{i}}:\{0,1\}^{n}\times\{0,1\}^{n}\to\{0,1\}^{n}\cup\{\perp\}$
for $i\in\{0,1,2,\dots c\}$ be defined as 
\begin{itemize}
\item $L_{f_{0}}(x,z)=f_{0}(x)$ when $i=0$ and 
\item $L_{f_{i}}(x,z):=\begin{cases}
f_{i}(x) & z=s_{i-1}\\
\perp & z\neq s_{i-1}\text{ when }i\in\{1,2,3\dots c-1\}.
\end{cases}$
\item $L_{f_{c}}(x,z):=\begin{cases}
\mathcal{Q}(x) & z=s_{c-1}\\
\perp & z\neq s_{c-1}.
\end{cases}$
\end{itemize}
\item Let $\mathcal{L}$ be the oracle associated with $(L_{f_{i}})_{i=0}^{c}$.
\item Returns
\begin{itemize}
\item $\mathbb{O}_{{\rm Serial}}$
\begin{itemize}
\item Search: When $\mathbb{O}_{{\rm Serial}}(c,n,\mathbb{O})$ is sampled,
consider the search version of the sampling step above and return
$(\mathcal{L},r)$. 
\item Decision: When $\mathbb{O}_{{\rm Serial}}(c,n,\mathbb{O},\cancel{\mathbb{O}})$
is sampled, consider the decision version of the sampling step above
and return $(\mathcal{L},l)$ where $l=0$ if $\mathcal{Q}=\mathcal{O}$
and $l=1$ if $\mathcal{Q}=\mathcal{R}$. 
\end{itemize}
\item $\mathbb{F}_{{\rm Serial}}$
\begin{itemize}
\item When $\mathbb{F}_{{\rm Serial}}(c,n,\mathbb{O})$ is sampled, consider
the search version of the sampling step above and return $(L_{f_{0}},s_{0},L_{f_{1}},s_{1}\dots L_{f_{c-1}},s_{c-1},L_{f_{c}},r)$,
where recall that $(\mathcal{O},r)$ were sampled from $\mathbb{O}(n)$
and $\mathcal{Q}=\mathcal{O}$ (for the search version of the sampling
step).
\end{itemize}
\end{itemize}
\end{itemize}
\end{defn}

\begin{defn}[$c$--Serial Generic Oracle Problem]
 \label{def:cSerialSimonsProblem}As before, we define two variants
of the $c$-Serial Generic Oracle Problem:
\begin{itemize}
\item Search Variant: Let $(\mathcal{L},r)\sim\mathbb{O}_{{\rm Serial}}(c,n,\mathbb{O})$.
Given $\mathcal{L}$, find $r$. 
\item Decision Variant: Let $(\mathcal{L},l)\sim\mathbb{O}_{{\rm Serial}}(c,n,\mathbb{O},\cancel{\mathbb{O}})$.
Given $\mathcal{L}$, determine whether $l=0$ or $1$.
\end{itemize}
\end{defn}

\subsection{Depth upper bounds for $d$-Serial Simon's Problem}

\branchcolor{black}{One can use Simon's algorithm for easily obtaining the following depth
upper bounds for solving $d$-Serial Simon's Problem. Note that while
we only need $d+1$ queries to the oracle, we need to apply Hadamards
before and after, for Simon's algorithm to work. Thus, for ${\rm QC}$
circuits (which by definition allow only one layer of unitary before
an oracle call, not after), we need depth $2d+2$ while for ${\rm CQ}$
circuits (which allow a layer of unitary before and after), we only
need depth $1$.}
\begin{prop}
The decision variant of the $d$-Serial Simon's Problem is in ${\rm BQNC}_{2d+2}^{{\rm BPP}}$
and the search version can be solved using ${\rm QC}_{2d+2}$ circuits.
\end{prop}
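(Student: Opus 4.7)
The plan is to chain $d+1$ standard executions of Simon's algorithm, each using the period recovered at the previous stage as the classical key that unlocks the next oracle $L_{f_i}$. A single run of Simon's algorithm fits into exactly two QC depth layers, so the overall depth will be $2(d+1) = 2d+2$.

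Concretely, for $i = 0, 1, \ldots, d$, the pair of QC layers indexed $(2i+1, 2i+2)$ is dedicated to recovering the $i$-th period $s_i$. The preceding classical circuit $\mathcal{A}_{c, 2i+1}$ writes the already-known $s_{i-1}$ (vacuous when $i=0$) into the key slot on $\poly$ many fresh register pairs $(Q, R, Z)$; then $U_{2i+1}$ applies $H^{\otimes n}$ to each query register $Q$; the oracle call evaluates $L_{f_i}(\cdot, s_{i-1})$ on all copies simultaneously, which by construction does not hit $\perp$; and the measurement reads only the image registers $R$, collapsing each query register to $\tfrac{1}{\sqrt{2}}(\ket{x_0} + \ket{x_0 \oplus s_i})$. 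In layer $2i+2$, $U_{2i+2}$ applies $H^{\otimes n}$ to those query registers (the model's mandatory oracle call is made on a throwaway input and its output discarded), and the measurement yields a batch of samples $\{w_j\}$ all satisfying $w_j \cdot s_i = 0$. The intervening classical circuit $\mathcal{A}_{c, 2i+3}$ performs Gaussian elimination to recover $s_i$ with probability $1 - \negl$, and passes it forward as the key for the next stage.

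For the decision variant, one runs the same algorithm but at stage $d$ applies a rank test to the $O(n)$ parallel samples rather than attempting to extract a period: for a Simon's final oracle the samples lie in an $(n-1)$-dimensional subspace, while for a one-to-one final oracle they are uniform in $\{0,1\}^n$ and have full rank with overwhelming probability. A classical rank check in $\mathcal{A}_{c, 2d+3}$ then distinguishes the two cases with probability at least $2/3$.

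The only non-trivial point to check is that the per-stage failure probability stays negligible so that a union bound over the $d+1 = \poly$ stages still keeps the overall error negligible (and in particular below $1/3$ for the decision version). This is a routine instance of the standard analysis of Simon's algorithm: taking $\omega(\log n)$ parallel copies per stage suffices, and the bound composes. I do not anticipate any genuine obstacle beyond this accounting, which is why the proof will essentially be a careful bookkeeping of depth rather than a new argument.
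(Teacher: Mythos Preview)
Your proposal is correct and matches the paper's own (brief, informal) justification: each Simon stage costs two $\QCd$ layers because the model places the single unitary layer \emph{before} the oracle call, forcing the post-oracle Hadamard into a separate layer with a dummy oracle query, and $d+1$ sequential stages therefore cost $2d+2$ layers. One small slip: you write that ``$\omega(\log n)$ parallel copies per stage suffices,'' but you need $\Omega(n)$ samples even to have a chance of spanning the orthogonal complement of $s_i$; presumably you meant $n+\omega(\log n)$ copies (consistent with your earlier ``$\poly$ many fresh register pairs''), which indeed makes each stage succeed with probability $1-\negl$ and lets the union bound over $d+1\le\poly$ stages go through.
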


\begin{prop}
The decision variant of the $d$-Serial Simon's Problem is in ${\rm BPP}^{{\rm BQNC}_{1}}$
and the search version can be solved using ${\rm CQ}_{1}$ circuits.
\end{prop}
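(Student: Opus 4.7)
The plan is an inductive/sequential procedure that extracts the periods $s_{0}, s_{1}, \ldots, s_{d-1}$ one at a time by running Simon's algorithm $O(n)$ times on each unlocked layer, and then either extracts $s_{d}$ (search version) or distinguishes the terminal oracle (decision version). Since each invocation of Simon's algorithm on the already-unlocked layer uses only a single oracle call flanked by Hadamard layers, each quantum subcircuit has exactly the shape $U_{2,i} \circ \mathcal{O} \circ U_{1,i}$ followed by a computational basis measurement---precisely the ${\rm CQ}_{1}$ template from \Defref{dCQ} and \Notaref{CompositionNotation}, taking advantage of the ``extra'' single-layered unitary permitted by \Remref{oracleVersionsQNC-CQ-QC}.

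More concretely, I would proceed as follows. The classical controller maintains the currently known period $z_{i-1}$ (initially irrelevant, since $L_{f_{0}}$ ignores its second input). In the $i$'th round-group, it invokes $O(n)$ ${\rm CQ}_{1}$ quantum subroutines; each prepares $H^{\otimes n}\ket{0}^{\otimes n}\ket{z_{i-1}}\ket{0}^{\otimes n}$ on the query/key/response registers, calls $\mathcal{L}$ so that the partial evaluation behaves as the oracle $x \mapsto f_{i}(x)$, applies $H^{\otimes n}$ on the query register, and measures. Each such run returns a uniformly random $w \in \{0,1\}^{n}$ with $w \cdot s_{i} = 0$, and with probability $\geq 1 - 2^{-\Omega(n)}$ a polynomial number of such samples yields a rank-$(n-1)$ linear system whose unique nontrivial solution is $s_{i}$, which the classical circuit $\mathcal{A}_{c, \cdot}$ recovers by Gaussian elimination. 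The recovered $s_{i}$ is then written into the key register of the next round-group. Since the total number of round-groups is $d+1 = O(d)$ and each uses $O(n)$ rounds, the total number $m$ of classical/quantum alternations is $\operatorname{poly}(n)$, well within the ${\rm CQ}_{1}$ specification.

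For the search version, the final round-group unlocks $L_{f_{d}} = \mathcal{Q}$ and extracts $r$ directly via Simon's algorithm. For the decision version, the controller runs Simon's algorithm on $\mathcal{Q}$ and checks whether the collected linear equations admit a nontrivial joint solution: under $\mathcal{Q} = \mathcal{O}$ this happens with certainty, while under $\mathcal{Q} = \mathcal{R}$ (a random one-to-one function) $O(n)$ measurement outcomes are essentially uniform in $\{0,1\}^{n}$ and span the whole space with probability $1 - 2^{-\Omega(n)}$, giving the desired bounded-error separation. Correctness overall follows from a union bound across the $d+1$ Simon-subproblem solves, each of which succeeds with probability exponentially close to $1$ given a polynomial number of samples.

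There is no real obstacle here; the statement is essentially a packaging exercise. The only point that deserves care is the depth accounting: one must observe that a single Simon's iteration fits in the ``one unitary, one oracle call, one unitary, measurement'' shape of a ${\rm CQ}_{1}$ quantum phase---so the Hadamard layer \emph{after} the oracle is exactly the extra final-layer unitary permitted by the oracle version of ${\rm CQ}_{d}$ in \Remref{oracleVersionsQNC-CQ-QC}---and to note that the classical subroutines handle both the key-passing and the Gaussian elimination in polynomial time. Standard amplification then gives the $2/3$-vs-$1/3$ thresholds required in \Defref{dCQ}.
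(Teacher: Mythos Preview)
Your proposal is correct and matches the paper's approach exactly: the paper does not give a formal proof of this proposition, but the preceding discussion makes precisely your point---that a single Simon iteration fits the ${\rm CQ}_{1}$ template because one may place a Hadamard layer before and after the oracle call, and the classical subroutines handle the key-passing between successive layers. Your treatment is simply a more explicit unpacking of the same idea, including the decision-variant distinguishing test, and there is nothing to add or correct.
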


\subsection{Depth lower bounds for $c$-Serial Simon's problem}

\branchcolor{black}{Following \textcite{CCL2020}, it would be useful to define ``shadows''
of oracles for establishing lower bounds. The idea is simple. Given
an oracle and a subset $S$ of the query domain thereof, the shadow
behaves exactly like the oracle when queried outside $S$ and outputs
$\perp$ when queried inside $S$. Since there are multiple functions
involved, for notational ease, we formalise this notion for a sequence
of sets.}
\begin{defn}[$c$--Serial Generic Shadow Oracle]
 \label{def:cSerialShadowOracle}Given 
\begin{itemize}
\item $(L_{f_{i}},s_{i})_{i=0}^{c}$ from the sample space of $\mathbb{F}_{{\rm Serial}}(c,n,\mathbb{O})$,
and 
\item a tuple of subsets $\bar{S}=(S_{1},S_{2},\dots S_{c})$ where $S_{i}\subseteq\{0,1\}^{n}\times\{0,1\}^{n}$,
\end{itemize}
let $\mathcal{L}$ be the oracle associated with $(L_{f_{i}})_{i=1}^{c}$.
Then, the \emph{$c$-Serial Generic Shadow Oracle} (or simply the
shadow) $\mathcal{H}$ for the oracle $\mathcal{L}$ with respect
to $\bar{S}$ is defined to be the oracle associated with $(L_{f_{0}},L_{f_{1}}',L_{f_{2}}'\dots L_{f_{c}}')$
where 
\[
L'_{f_{i}}:=\begin{cases}
L_{f_{i}}(x,z) & (x,z)\in\{0,1\}^{2n}\backslash S_{i}\\
\perp & (x,z)\in S_{i}
\end{cases}
\]
for $i\in\{1,2\dots c\}$.
\end{defn}

\begin{rem}
For the moment, we exclusively consider the case where the ``Generic
Oracle'' (drawn from $\mathbb{O}$) is also a Simon's oracle (which
are drawn from $\mathbb{O}_{S}$) and define the \emph{$c$-Serial
Simon's Oracle/Problem/Shadow Oracle} implicitly.
\end{rem}

\branchcolor{black}{As will become evident shortly, it would be useful to have shadows
$\{\mathcal{M}_{i}\}_{i=1}^{d}$ such that $\mathcal{M}_{j}$ behaves
like the $d$-Serial Simon's oracle $\mathcal{L}$ at all sub-oracles
from $0$ to $j-1$ but outputs $\perp$ for all subsequent sub-oracles
(see \Figref{QNCd}).}
\begin{lyxalgorithm}[$\bar{S}_{j}$ for ${\rm QNC}_{d}$ exclusion]
\label{alg:SforQNCd}~\\
Input: 
\begin{itemize}
\item $1\le j\le d$ and
\item $(L_{f_{i}},s_{i})_{i=0}^{d}$ from the sample space of $\mathbb{F}_{{\rm Serial}}(c,n,\mathbb{O}_{S})$, 
\end{itemize}
Output: $\bar{S}_{j}$, a tuple of $d$ subsets defined as
\[
\bar{S}_{j}:=\begin{cases}
(\emptyset,\emptyset,\dots\emptyset,E\times s_{j-1},E\times s_{j}\dots,E\times s_{d-1}) & \text{for }j>1\\
(E\times s_{0},E\times s_{1}\dots,E\times s_{d-1}) & \text{for }j=1
\end{cases}
\]
where $E=\{0,1\}^{n}$ and $\times$ is the Cartesian product.
\end{lyxalgorithm}

\subsubsection{Warm up | $d$-Serial Simon's Problem is hard for ${\rm QNC}_{d}$
\label{subsec:QNCd}}

\begin{thm}
\label{thm:QNCd-dSerialSimons}Let $(\mathcal{L},s)\sim\mathbb{O}_{{\rm Serial}}(d,n,\mathbb{O}_{S})$,
i.e. let $\mathcal{L}$ be an oracle for a random $d$-Serial Simon's
Problem of size $n$ and period $s$. Let $\mathcal{A}^{\mathcal{L}}$
be any $d$ depth quantum circuit (see \Defref{QNCd} and \Remref{oracleVersionsQNC-CQ-QC})
acting on $\mathcal{O}(n)$ qubits, with query access to $\mathcal{L}$.
Then $\Pr[s\leftarrow\mathcal{A}^{\mathcal{L}}]\le\negl$, i.e. the
probability that the algorithm finds the period is exponentially small. 
\end{thm}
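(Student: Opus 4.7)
The plan is a hybrid argument over the $d$ oracle layers of $\mathcal{A}^{\mathcal{L}}=U_{d+1}\circ\mathcal{L}\circ U_d\circ\dots\circ\mathcal{L}\circ U_1$, replacing each copy of $\mathcal{L}$ one at a time by the shadow $\mathcal{M}_i$ associated with $\bar S_i$ from \Algref{SforQNCd}. The structural fact driving the argument is that $\mathcal{M}_j$ agrees with $\mathcal{L}$ on sub-oracles $0,\dots,j-1$ and returns $\perp$ on sub-oracles $j,\dots,d$; hence $\mathcal{M}_j$ encodes only $f_0,\dots,f_{j-1}$ and carries no information about $s_{j-1},\dots,s_{d-1}$ nor about the generic oracle $\mathcal{Q}$ or its target $s$.

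Define, for $0\le i\le d$,
\[
\mathcal{H}_i := U_{d+1}\circ\mathcal{L}\circ U_d\circ\dots\circ\mathcal{L}\circ U_{i+1}\circ\mathcal{M}_i\circ U_i\circ\mathcal{M}_{i-1}\circ\dots\circ\mathcal{M}_1\circ U_1,
\]
so $\mathcal{H}_0=\mathcal{A}^{\mathcal{L}}$ and $\mathcal{H}_d$ uses only shadows. Between $\mathcal{H}_{i-1}$ and $\mathcal{H}_i$ the only change is at the $i$-th oracle call, where $\mathcal{L}$ and $\mathcal{M}_i$ differ precisely on $\bar S_i$, with $\mathcal{L}$ producing non-$\perp$ responses and $\mathcal{M}_i$ returning $\perp$. \Lemref{O2H} therefore yields
\[
\bigl|\Pr[s\leftarrow\mathcal{H}_{i-1}]-\Pr[s\leftarrow\mathcal{H}_i]\bigr|\le\sqrt{2\,\Pr[\mathrm{find}:U_i^{\mathcal{L}\backslash\bar S_i},\rho_{i-1}]},
\]
where $\rho_{i-1}$ is the state on the query/work registers immediately before the $i$-th oracle call in $\mathcal{H}_{i-1}$ and the remaining operations $U_{d+1}\circ\mathcal{L}\circ\dots\circ U_{i+1}$ are absorbed into the final measurement POVM.

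To invoke \Lemref{boundPfind}, observe that $\rho_{i-1}$ is generated using only $\mathcal{M}_1,\dots,\mathcal{M}_{i-1}$ and hence depends only on $f_0,\dots,f_{i-2}$; since the periods $s_{i-1},\dots,s_{d-1}$ and the generic oracle $\mathcal{Q}$ are sampled independently under $\mathbb{F}_{\rm Serial}$, the set $\bar S_i$ is uncorrelated with $\rho_{i-1}$ and $U_i$. For any fixed query $(x,z)$ directed at sub-oracle $j\ge i$, the probability that $z=s_{j-1}$ is at most $1/(2^n-1)$; a union bound over the $\le d$ non-empty components of $\bar S_i$ and the $\bar q=\poly$ parallel queries in the $i$-th layer gives $\Pr[\mathrm{find}]\le\bar q\cdot d/(2^n-1)=\negl$. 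Summing over the $d=\poly$ hybrid steps yields $\bigl|\Pr[s\leftarrow\mathcal{A}^{\mathcal{L}}]-\Pr[s\leftarrow\mathcal{H}_d]\bigr|=\negl$.

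The final hybrid $\mathcal{H}_d$ has a transcript that is a function only of $f_0,\dots,f_{d-1}$, and under $\mathbb{F}_{\rm Serial}$ these are independent of $(\mathcal{Q},s)$. Hence $s$ is information-theoretically uniform over $\{0,1\}^n\setminus\{0^n\}$ given the transcript, so $\Pr[s\leftarrow\mathcal{H}_d]\le 1/(2^n-1)$, and combining with the hybrid bound gives $\Pr[s\leftarrow\mathcal{A}^{\mathcal{L}}]\le\negl$. The main delicate point is verifying the uncorrelated-ness hypothesis of \Lemref{boundPfind} at each step; this is precisely what the design of $\bar S_i$ (with its first $i-1$ entries empty) secures, since any information already revealed by $\mathcal{M}_1,\dots,\mathcal{M}_{i-1}$ concerns only $f_0,\dots,f_{i-2}$ and, by independence of the periods under $\mathbb{F}_{\rm Serial}$, the values $s_{i-1},\dots,s_{d-1}$ defining $\bar S_i$ remain fresh at the moment of the $i$-th query.
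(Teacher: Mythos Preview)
Your proof is correct and follows essentially the same approach as the paper: the same hybrid over the $d$ oracle layers, the same shadows $\mathcal{M}_i$ built from $\bar S_i$ via \Algref{SforQNCd}, the same application of \Lemref{O2H} and \Lemref{boundPfind}, and the same uncorrelatedness argument (that $\mathcal{M}_1,\dots,\mathcal{M}_{i-1}$ expose only $f_0,\dots,f_{i-2}$ so that $s_{i-1},\dots,s_{d-1}$ remain fresh). The only cosmetic differences are that you name the hybrids $\mathcal{H}_i$ explicitly and use $1/(2^n-1)$ rather than $1/2^n$ for the blind-guess bound.
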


\branchcolor{black}{\begin{proof}
Suppose\footnote{This proof closely follows techniques from \textcite{CCL2020} but
establishes a new result.} $(L_{f_{i}},s_{i})_{i=0}^{d}\sim\mathbb{F}_{{\rm Serial}}(d,n,\mathbb{O}_{S})$
(see \Defref{cSerialSimonsOracle}) and let $\mathcal{L}$ be the
oracle associated with $(L_{f_{i}})_{i=0}^{d}$. For notational consistency,
let $s=s_{d}$. Denote an arbitrary ${\rm QNC}_{d}^{\mathcal{L}}$
circuit, $\mathcal{A}^{\mathcal{L}}$, by 
\[
\mathcal{A}^{\mathcal{L}}:=\Pi\circ U_{d+1}\circ\mathcal{L}\circ U_{d}\dots\mathcal{L}\circ U_{2}\circ\mathcal{L}\circ U_{1}
\]
and suppose $\Pi$ corresponds to the algorithm outputting the string
$s$. For each $i\in\{1,\dots d\}$, construct the tuples $\bar{S}_{i}$
using \Algref{SforQNCd}. Let $\mathcal{M}_{i}$ be the shadow of
$\mathcal{L}$ with respect to $\bar{S}_{i}$ (see \Defref{cSerialShadowOracle}).
Define 
\[
\mathcal{A}^{\mathcal{M}}:=\Pi\circ U_{d+1}\circ\mathcal{M}_{d}\circ U_{d}\dots\mathcal{M}_{2}\circ U_{2}\circ\mathcal{M}_{1}\circ U_{1}.
\]
Note that $\Pr[s\leftarrow\mathcal{A}^{\mathcal{M}}]\le\frac{1}{2^{n}}$
because no $\mathcal{M}_{i}$ contains any information about $f_{d}$,
the last Simon's function, whose period, $s$, is the required solution.
Thus, no algorithm can do better than making a random guess. We now
show that the output distributions of $\mathcal{A}^{\mathcal{L}}$
and $\mathcal{A}^{\mathcal{M}}$ cannot be noticeably different using
the O2H lemma (see \Lemref{O2H}). 

To apply the lemma, one can use the hybrid method as follows (we drop
the $\circ$ symbol for brevity): 
\begin{align*}
 & \left|\Pr[s\leftarrow\mathcal{A}^{\mathcal{L}}]-\Pr[s\leftarrow\mathcal{A}^{\mathcal{M}}]\right|\\
= & \left|\tr[\Pi U_{d+1}\mathcal{L}U_{d}\dots\mathcal{L}U_{2}\mathcal{L}U_{1}\rho_{0}-\Pi U_{d+1}\mathcal{M}_{d}U_{d}\dots\mathcal{M}_{2}U_{2}\mathcal{M}_{1}U_{1}\rho_{0}]\right|\\
\le & \left|\tr[\Pi U_{d+1}\mathcal{L}U_{d}\dots\mathcal{L}U_{2}\underbrace{\mathcal{L}U_{1}\rho_{0}}-\Pi U_{d+1}\mathcal{L}U_{d}\dots\mathcal{L}U_{2}\underbrace{\mathcal{M}_{1}U_{1}\rho_{0}}]\right|+\\
 & \left|\tr[\Pi U_{d+1}\mathcal{L}U_{d}\dots U_{3}\underbrace{\mathcal{L}U_{2}\mathcal{M}_{1}U_{1}\rho_{0}}-\Pi U_{d+1}\mathcal{L}U_{d}\dots\mathcal{L}U_{3}\underbrace{\mathcal{M}_{2}U_{2}\mathcal{M}_{1}U_{1}\rho_{0}}]\right|+\\
 & \vdots\\
 & \left|\tr[\Pi U_{d+1}\underbrace{\mathcal{L}U_{d}\mathcal{M}_{d-1}U_{d-1}\dots U_{3}\mathcal{M}_{2}U_{2}\mathcal{M}_{1}U_{1}\rho_{0}}-\Pi U_{d+1}\underbrace{\mathcal{M}_{d}U_{d}\mathcal{M}_{d-1}\dots U_{3}\mathcal{M}_{2}U_{2}\mathcal{M}_{1}U_{1}\rho_{0}}]\right|\\
\le & {\rm B}(\mathcal{L}\circ U_{1}(\rho_{0}),\mathcal{M}_{1}\circ U_{1}(\rho_{0}))+\\
 & {\rm B}(\mathcal{L}\circ U_{2}(\rho_{1}),\mathcal{M}_{2}\circ U_{2}(\rho_{1}))+\\
 & \vdots\\
 & {\rm B}(\mathcal{L}\circ U_{d}(\rho_{d-1}),\mathcal{M}_{d}\circ U_{d}(\rho_{d-1}))\\
\le & \sum_{i=1}^{d}\sqrt{2\Pr[{\rm find}:U_{i}^{\mathcal{L}\backslash\bar{S}_{i}},\rho_{i-1}]}
\end{align*}
where $\rho_{0}=\left|0\dots0\right\rangle \left\langle 0\dots0\right|$
and $\rho_{i}=\mathcal{M}_{i}\circ U_{i}\circ\dots\mathcal{M}_{1}\circ U_{1}(\rho_{0})$
for $i>0$. To bound the last expression, we apply \Lemref{boundPfind}.
To apply the lemma, however, we must ensure that the subset of queries
at which $\mathcal{L}$ and $\mathcal{M}_{i}$ differ, i.e. $\bar{S}_{i}=(\emptyset,\dots\emptyset,E\times s_{i-1},E\times s_{i}\dots E\times s_{d-1})$,
(recall $E=\{0,1\}^{n}$) is uncorrelated to $U_{i}$ and $\rho_{i-1}$.
Observe that $\rho_{i-1}$ is completely uncorrelated to $s_{i-1},s_{i},s_{i+1}\dots s_{d}$.
At a high level, this is because $\rho_{i-1}$ can at most access
$\mathcal{M}_{1},\dots\mathcal{M}_{i-1}$ and these in turn contain
no information about $\bar{S}_{i}$. To see this, note that even though
to define $\mathcal{M}_{1}$, we used $\bar{S}_{1}=(E\times s_{0},E\times s_{1}\dots,E\times s_{d-1})$,
still $\mathcal{M}_{1}$ contains no information about $s_{1},\dots s_{d}$
because (other than the zeroth sub-oracle which can reveal $s_{0}$)
it always outputs $\perp$. Similarly, $\mathcal{M}_{2}$ contains
information about $s_{0},s_{1}$ but not about $s_{2},s_{3}\dots s_{d}$.
Analogously for $\mathcal{M}_{3}$ and so on. Since the definition
of $\rho_{i-1}$ only involves $\mathcal{M}_{1},\dots\mathcal{M}_{i-1}$,
it contains no information about $s_{i-1},s_{i},s_{i+1}\dots s_{d}$.
As for $U_{i}$, that is uncorrelated to all $s_{i}$s by construction. 

Finally, to apply \Lemref{boundPfind}, we need to bound the probability
that a fixed query, $x$, lands in the set $\bar{S}_{i}$. To this
end, observe that $\Pr[x_{i}\in E\times s_{i}]=\frac{1}{2^{n}}$ when
$s_{i}\in_{R}\{0,1\}^{n}$ and that the union bound readily bounds
the desired probability, i.e. $\Pr[x\in\bar{S}_{i}]\le d\cdot2^{-n}$.
Since $U$ acts on $\poly$ many qubits, $q$ in the lemma can be
set to $\poly$. Thus, we can bound the last inequality by $d\cdot\poly/2^{n}$.
Using the triangle inequality, we get 
\[
\Pr[s\leftarrow\mathcal{A}^{\mathcal{L}}]\le\frac{\poly}{2^{n}}.
\]
\end{proof}
}

\subsubsection{$d$-Serial Simon's Problem is hard for ${\rm QC}_{d}$ }

\branchcolor{black}{To prove our first main result---the same statement for ${\rm QC}_{d}$---we
need to account for the possibility that the classical algorithm can
make ${\rm poly}(n)$ many queries and process these before applying
the next quantum layer. The high-level intuition is quite simple.
We follow essentially the same strategy as in the ${\rm QNC}_{d}$
case, except that we successively condition the distribution over
$\mathcal{L}$ to exclude the cases where the classical algorithm
obtains a non-$\perp$ output. 

To be slightly more precise, fix a particular $\mathcal{L}$. When
the classical algorithm queries locations $T$, it could either get
all $\perp$ responses or not get all $\perp$ responses (that is
some responses may be non-$\perp$). We treat the latter case as though
the classical algorithm solved the problem. For the former case, we
conclude that the classical algorithm ruled out certain values of
$s_{i}$s. We condition on this event and proceed similarly with the
remaining analysis. 

Since $\mathcal{L}$ is actually a random variable, notice that the
probability that $\mathcal{L}$ responds with non-$\perp$ for a given
$T$, is at most $\mathcal{O}(\poly\cdot2^{-n})$. Thus, the probability
that $\mathcal{L}$ responds with $\perp$ for all queries in $T$
is essentially $1$. Since we want an upper bound on the winning probability,
we treat this conditional probability as $1$ and in the subsequent
analysis, use $\mathcal{L}|T$ where $T$ is s.t. $\mathcal{L}(T)=\perp$.
Since the shadows were defined using $\mathcal{L}$, they also get
conditioned and the remaining analysis, essentially goes through unchanged.
The only difference is that when $\Pr[{\rm find}:U^{\mathcal{L}\backslash S},\rho]$
is evaluated, because of the conditioning, the probabilities change
by polynomial factors but these we have anyway been absorbing so the
result remains unchanged.}

\begin{thm}
\label{thm:QCd_hardness_dSerialSimons}Let $(\mathcal{L},s)\sim\mathbb{O}_{{\rm Serial}}(d,n,\mathbb{O}_{S})$,
i.e. let $\mathcal{L}$ be an oracle for a random $d$-Serial Simon's
Problem of size $n$ and period $s$. Let $\mathcal{B}^{\mathcal{L}}$
be any ${\rm QC}_{d}$ circuit (see \Defref{dQC} and \Remref{oracleVersionsQNC-CQ-QC})
with query access to $\mathcal{L}$. Then, $\Pr[s\leftarrow\mathcal{B}^{\mathcal{L}}]\le\negl$,
i.e. the probability that the algorithm finds the period is exponentially
small. 
\end{thm}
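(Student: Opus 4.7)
The plan is to adapt the hybrid argument from the proof of \Thmref{QNCd-dSerialSimons}, carefully accounting for the polynomially-many classical queries the subroutines $\mathcal{A}_{c,i}^{\mathcal{L}}$ make to $\mathcal{L}$ between consecutive quantum layers. Writing $\mathcal{B}^{\mathcal{L}} = \mathcal{A}_{c,d+1}^{\mathcal{L}} \circ \mathcal{B}_d^{\mathcal{L}} \circ \cdots \circ \mathcal{B}_1^{\mathcal{L}}$ with $\mathcal{B}_i^{\mathcal{L}} = \Pi_i \circ \mathcal{L} \circ U_i \circ \mathcal{A}_{c,i}^{\mathcal{L}}$, I would let $T$ denote the (random) set of all classical queries made over the course of the computation, and define the ``good'' event $F$ that every such classical query returns $\perp$. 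Over the random periods $s_0,\dots,s_{d-1}$, each query is non-$\perp$ with probability at most $d/2^n$ (union bound over the unlocking keys), so $\Pr[\neg F] \le \poly/2^n = \negl$; hence it suffices to upper bound $\Pr[s \leftarrow \mathcal{B}^{\mathcal{L}} \mid F]$ up to a negligible additive term.

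Next I would define shadows $\mathcal{M}_i$ exactly as in \Algref{SforQNCd} and \Defref{cSerialShadowOracle}, but for the conditioned distribution $\mathcal{L}\mid F$, and run the same telescoping hybrid as in \Thmref{QNCd-dSerialSimons}: layer by layer, replace $\mathcal{L}$ inside $\mathcal{B}_i$ by $\mathcal{M}_i$. Each telescoping step is controlled by the O2H lemma (\Lemref{O2H}) together with the conditional version \Corref{Conditionals}, reducing the error at layer $i$ to $\sqrt{2\,\Pr[{\rm find}:U_i^{\mathcal{L}\backslash \bar{S}_i},\rho_{i-1}]}$ evaluated under the conditional distribution. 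Under $F$, the residual entropy of each still-unknown $s_j$ is $n - \mathcal{O}(\log \poly)$ bits, so for any fixed query $x$ one has $\Pr[x \in \bar{S}_i \mid F] \le d \cdot \poly / 2^n$. Summing the $\sqrt{\cdot}$ terms over the $d$ hybrids then yields a total deviation bounded by $\poly/2^{n/2}$, and the all-shadow circuit never touches $f_d$ at all, so it outputs $s = s_d$ with probability at most $2^{-n}$; combining these closes the argument.

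The hard part will be verifying the independence hypotheses of \Corref{Conditionals} in the presence of the classical subroutines. Concretely, after conditioning on $F$, the state $\rho_{i-1}$ and the unitary $U_i$ both depend on the classical transcript, which is itself a function of $\mathcal{L}\mid F$, while the set $\bar{S}_i$ is determined by $(s_{i-1},\dots,s_{d-1})\mid F$. I plan to show that $\mathcal{M}_1,\dots,\mathcal{M}_{i-1}$ leak no information about $s_{i-1},\dots,s_{d-1}$ (since, by construction, each $\mathcal{M}_j$ returns $\perp$ on every sub-oracle from index $j$ onwards), and that the all-$\perp$ classical transcript under $F$ only ``rules out'' polynomially many candidate periods, so conditioning multiplies the relevant uniform probabilities by at most a $\poly$ factor. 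This extra factor is absorbed into the overall $\poly/2^n$ bound, so the ${\rm QNC}_d$ analysis of \Thmref{QNCd-dSerialSimons} goes through essentially unchanged.
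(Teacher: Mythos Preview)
Your proposal has a genuine gap at the very first step: the claim $\Pr[\neg F]\le\poly/2^n$ is false in general. You define $F$ as the event that every classical query made during the execution of the \emph{real} circuit $\mathcal{B}^{\mathcal{L}}$ returns $\perp$, and bound $\Pr[\neg F]$ by a union bound that implicitly assumes each classical query is independent of the periods $s_0,\dots,s_{d-1}$. But in $\mathcal{B}^{\mathcal{L}}$ the input to $\mathcal{A}_{c,i}$ is produced by the preceding quantum layers, which query the \emph{real} oracle $\mathcal{L}$. Concretely, two quantum layers already suffice to run one round of Simon's algorithm on $f_0$ (Hadamard in $U_1$, query sub-oracle~$0$, measure the image register in $\Pi_1$, Hadamard in $U_2$, a throw-away second oracle call, measure in $\Pi_2$); with $\mathcal{O}(n)$ parallel copies $\mathcal{A}_{c,3}$ can then solve the resulting linear system for $s_0$ and query sub-oracle~$1$ at $(x,s_0)$, obtaining $f_1(x)\neq\perp$ with probability essentially~$1$. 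Hence for any $d\ge 2$ there are ${\rm QC}_d$ circuits for which $\Pr[\neg F]$ is close to~$1$, and your decomposition $\Pr[s\leftarrow\mathcal{B}^{\mathcal{L}}]\le\Pr[s\leftarrow\mathcal{B}^{\mathcal{L}}\mid F]+\Pr[\neg F]$ gives no useful bound.

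The paper avoids this by never conditioning on the transcript of the real execution. It first sets up the telescoping hybrid so that the $i$-th term involves $\Pr[{\rm find}:U_i^{\mathcal{L}\backslash\bar S_i},\mathcal{A}_{c,i}^{\mathcal{L}}\circ\rho_{i-1}]$ where $\rho_{i-1}$ is the state produced by the \emph{shadow} quantum layers $\mathcal{M}_1,\dots,\mathcal{M}_{i-1}$, and only then conditions---incrementally, term by term---on events $F_j:=\bar S_j\cap\bar T_j=\emptyset$, where $\bar T_j$ are the queries $\mathcal{A}_{c,j}$ makes on the shadow state $\rho_{j-1}$. Because $\mathcal{M}_1,\dots,\mathcal{M}_{i-1}$ reveal nothing about $s_{i-1},\dots,s_{d-1}$, those queries really are uncorrelated with $\bar S_i$, so each $\Pr[\neg F_j]$ is genuinely negligible. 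Note also that the paper's $F_j$ only constrains queries at sub-oracles $j,\dots,d$; it \emph{permits} the classical part to obtain non-$\perp$ answers at sub-oracles $1,\dots,j-1$ once earlier periods have been learned. Your global, one-shot conditioning is both taken with respect to the wrong execution and demands too much; repairing it essentially forces the paper's layer-by-layer approach.
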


\branchcolor{black}{\begin{proof}
The initial part of the proof is almost identical to that of the ${\rm QNC}_{d}$
case. Suppose $(L_{f_{i}},s_{i})_{i=0}^{d}\sim\mathbb{F}_{{\rm Serial}}(d,n,\mathbb{O}_{S})$
(see \Defref{cSerialSimonsOracle}) and let $\mathcal{L}$ be the
oracle associated with $(L_{f_{i}})_{i=0}^{d}$. For notational consistency,
let $s=s_{d}$. Recall that we denoted an arbitrary ${\rm QC}_{d}^{\mathcal{L}}$
circuit (see \Notaref{CompositionNotation}) with oracle access to
$\mathcal{L}$, 
\[
\mathcal{B}^{\mathcal{L}}:=\Pi\circ\mathcal{A}_{c,d+1}^{\mathcal{L}}\circ\mathcal{B}_{d}^{\mathcal{L}}\circ\mathcal{B}_{d-1}^{\mathcal{L}}\dots\circ\mathcal{B}_{1}^{\mathcal{L}}\circ\rho_{0}
\]
where $\mathcal{B}_{i}^{\mathcal{L}}:=\Pi_{i}\circ\mathcal{L}\circ U_{i}\circ\mathcal{A}_{c,i}^{\mathcal{L}}$,
$\rho_{0}=\left|0\dots0\right\rangle \left\langle 0\dots0\right|$
and $\Pi$ corresponds to the algorithm outputting $s$. For each
$i\in\{1,\dots d\}$, construct the tuples $\bar{S}_{i}$ using \Algref{SforQNCd}.
Let $\mathcal{M}_{i}$ be the shadow of $\mathcal{L}$ with respect
to $\bar{S}_{i}$ (see \Defref{cSerialShadowOracle}). Define 
\[
\mathcal{B}^{\mathcal{M}}:=\Pi\circ\mathcal{A}_{c,d+1}^{\mathcal{L}}\circ\mathcal{B}_{d}^{\mathcal{M}}\circ\mathcal{B}_{d-1}^{\mathcal{M}}\dots\circ\mathcal{B}_{1}^{\mathcal{M}}.
\]
where $\mathcal{B}_{i}^{\mathcal{M}}:=\Pi_{i}\circ\mathcal{M}_{i}\circ U_{i}\circ\mathcal{A}_{c,i}^{\mathcal{L}}$.
We have 
\begin{align*}
 & \left|\Pr[s\leftarrow\mathcal{B}^{\mathcal{L}}]-\Pr[s\leftarrow\mathcal{B}^{\mathcal{M}}]\right|\\
= & \left|\tr[\Pi\mathcal{A}_{c,d+1}^{\mathcal{L}}\mathcal{B}_{d}^{\mathcal{L}}\mathcal{B}_{d-1}^{\mathcal{L}}\dots\mathcal{B}_{1}^{\mathcal{L}}\rho_{0}]-\tr[\Pi\mathcal{A}_{c,d+1}^{\mathcal{L}}\mathcal{B}_{d}^{\mathcal{M}}\mathcal{B}_{d-1}^{\mathcal{M}}\dots\mathcal{B}_{1}^{\mathcal{M}}\rho_{0}]\right| & \text{we dropped }\circ\text{ for brevity}\\
\le & {\rm B}(\mathcal{B}_{1}^{\mathcal{L}}(\rho_{0}),\mathcal{B}_{1}^{\mathcal{M}}(\rho_{0}))+\\
 & {\rm B}(\mathcal{B}_{2}^{\mathcal{L}}(\rho_{1}),\mathcal{B}_{2}^{\mathcal{M}}(\rho_{1}))+\\
 & \vdots\\
 & {\rm B}(\mathcal{B}_{d}^{\mathcal{L}}(\rho_{d-1}),\mathcal{B}_{2}^{\mathcal{M}}(\rho_{d-1}))\\
= & \sum_{i=1}^{d}\sqrt{2\Pr[{\rm find}:U_{i}^{\mathcal{L}\backslash\bar{S}_{i}},\mathcal{A}_{c,i}^{\mathcal{L}}\circ\rho_{i-1}]}.
\end{align*}
where for $i\in\{1,2\dots d-1\}$, $\rho_{i}:=\mathcal{B}_{i}^{\mathcal{M}}\circ\dots\circ\mathcal{B}_{1}^{\mathcal{M}}\circ\rho_{0}$.

So far, everything was essentially the same as in the ${\rm QNC}_{d}$
case. The difference arises because of the classical algorithm. We
begin with the first term, $\Pr[{\rm find}:U_{1}^{\mathcal{L}\backslash\bar{S}_{1}},\mathcal{A}_{c,1}^{\mathcal{L}}\circ\rho_{0}]$
and denote by $\bar{T}_{1}=(T_{1,1},T_{1,2}\dots T_{1,d})$ the tuple
of subsets queried by $\mathcal{A}_{c,1}$. There are two possibilities:
either all queries in $\bar{T}_{1}$ result in $\perp$ or at least
one query yields a non-$\perp$ result. We treat the second event
as though the algorithm was able to ``find'' the solution. Denote
the first event by $F_{1}:=\bar{S}_{1}\cap\bar{T}_{1}=\emptyset$
and the second event as $\neg F_{1}:=\bar{S}_{1}\cap\bar{T}_{1}\neq\emptyset$
where the intersection is component-wise. Note that the random variable
of interest here is $\mathcal{L}$ and those derived using it, i.e.
$\mathcal{M}_{i}s$ and $\bar{S}_{i}$s. While the notation is cumbersome,
we can use $\Pr[E]=\Pr[E|F]\Pr[F]+\Pr[E|\neg F]\Pr[\neg F]\le\Pr[E|F]+\Pr[\neg F]$,
to write 
\begin{equation}
\Pr[{\rm find}:U_{1}^{\mathcal{L}\backslash\bar{S}_{1}},\mathcal{A}_{c,1}^{\mathcal{L}}\circ\rho_{0}]\le\Pr[{\rm find}:V_{1}^{\mathcal{N}_{1}\backslash\bar{R}_{1}},\sigma_{0}]+\Pr[\neg F_{1}]\label{eq:QCd_analysis_first}
\end{equation}
where $\sigma_{0}:=\mathcal{A}_{c,1}^{\mathcal{L}}\circ\rho_{0}|F_{1}$,
$\bar{R}_{1}:=\bar{S}_{1}|F_{1}$, $\mathcal{N}_{1}=\mathcal{L}|F_{1}$
and $V_{1}:=U_{1}|F_{1}$. First, it is clear that $\Pr[\neg F_{1}]\le\mathcal{O}(\poly2^{-n})$
by the union bound as $\Pr[x\in E\times s_{i}]\le2^{-n}$ for all
$i\in\{1,2\dots d\}$. Second, $\mathcal{N}_{1}$ is uncorrelated
to $\sigma_{0}$, and $V_{1}$ because we have restricted the sample
space to the cases where the correlation is absent by conditioning
on $F_{1}$ (i.e. $\bar{T}_{1}$ has been effectively removed from
this part of the analysis). In more detail, the algorithm $\mathcal{A}_{c,1}$
(conditioned on event $F_{1}$), ruled out a polynomial number of
locations where the various $E\times s_{i}$s are not. In the remaining
query domain, $E\times s_{i}$s are not restricted, i.e. $\mathcal{N}_{1}$
is uncorrelated with $\sigma_{0}$. This also means that $\bar{R}_{1}$
is uncorrelated with $\sigma_{0}$. Finally, note that $\Pr[x\in\bar{R}_{1}]\le\mathcal{O}\left(\frac{1}{2^{n}-\poly}\right)\le\mathcal{O}(\poly2^{-n})$.
We can therefore apply \Corref{Conditionals} to obtain the bound
the first term in \Eqref{QCd_analysis_first} with $\mathcal{O}(\poly2^{-n})$.
This yields 
\[
\Pr[{\rm find}:U_{1}^{\mathcal{L}\backslash\bar{S}_{1}},\mathcal{A}_{c,1}^{\mathcal{L}}\circ\rho_{0}]\le\mathcal{O}(\poly2^{-n})
\]

We now apply this reasoning to the second term,\footnote{It may help to explicitly state the convention: as we condition, we
introduce more letters, leaving the indices unchanged.} 
\[
\Pr[{\rm find}:U_{2}^{\mathcal{L}\backslash\bar{S}_{2}},\mathcal{A}_{c,2}^{\mathcal{L}}\circ\underbrace{\Pi_{1}\circ\mathcal{M}_{1}\circ U_{1}\circ\mathcal{A}_{c,1}^{\mathcal{L}}\circ\rho_{0}}_{\rho_{1}}].
\]
It is clear,\footnote{It may help to recall that, by construction, $\mathcal{M}_{1}\dots\mathcal{M}_{i-1}$
contain no information about $\bar{S}_{i}$, i.e. the query domain
on which $\mathcal{L}$ and $\mathcal{M}_{i}$ differ.} that $\rho_{1}$ can contain information about $s_{0}$ (as it has
access to $\mathcal{M}_{1}$) and therefore $\bar{S}_{2}$ was chosen
so that $\mathcal{M}_{2}$ and $\mathcal{L}$ behave identically when
their zeroth sub-oracle is queried (whose ``access requires $s_{0}$
and secret is $s_{1}$'', see \Figref{ShadowsQNCd}). Additionally,
$\mathcal{A}_{c,1}^{\mathcal{L}}$ may have partially learnt something
about the remaining $s_{i}$s as well and this would prevent us from
applying \Lemref{boundPfind}. We condition as in the case above to
obtain 
\begin{equation}
\Pr[{\rm find}:U_{2}^{\mathcal{L}\backslash\bar{S}_{2}},\mathcal{A}_{c,2}^{\mathcal{L}}\circ\rho_{1}]\le\Pr[{\rm find}:V_{2}^{\mathcal{N}_{2}\backslash\bar{R}_{2}},\mathcal{A}_{c,2}^{\mathcal{N}_{2}}\circ\sigma_{1}]+\Pr[\neg F_{1}]\label{eq:QCd_2_1}
\end{equation}
where $\sigma_{1}:=\rho_{1}|F_{1}$, $\mathcal{N}_{2}:=\mathcal{L}|F_{1}$,
$V_{2}:=U_{2}|F_{1}$ and $\bar{R}_{2}:=\bar{S}_{2}|F_{1}$. The first
term restricts the sample space (of $\mathcal{L}$) such that $\sigma_{1}$
is uncorrelated with $\bar{R}_{2}$.\footnote{Note that $\sigma_{1}$ is potentially correlated with $\mathcal{N}_{2}$
(because it had access to $\mathcal{M}_{1}$ (see see \Figref{ShadowsQNCd})
which contains information about $s_{0}$ and note that $\mathcal{N}_{2}$
also has dependence on $s_{0}$; conditioning only excludes those
$\mathcal{L}$s where $\bar{T}_{1}$ is not assigned $\perp$)} We are therefore essentially in the same situation as the starting
point of the case above (see \Eqref{QCd_analysis_first}). Let $\bar{T}_{2}$
be the set where $\mathcal{A}_{c,2}^{\mathcal{N}_{2}}$ queries and
define the event $F_{2}:=\bar{R}_{2}\cap\bar{T}_{2}=\emptyset$. Conditioning
on $F_{2}$, we can bound the first term as 
\begin{equation}
\Pr[{\rm find}:V_{2}^{\mathcal{N}_{2}\backslash\bar{R}_{2}},\mathcal{A}_{c,2}^{\mathcal{N}_{2}}\circ\sigma_{1}]\le\Pr[{\rm find}:W_{2}^{\mathcal{O}_{2}\backslash\bar{Q}_{2}},\tau_{2}]+\Pr[\neg F_{2}]\label{eq:QCd_2_2}
\end{equation}
where\footnote{The index of the state, $i$, is incremented to reflect that $\mathcal{M}_{1}\dots\mathcal{M}_{i}$
have been queried by the state (and not $\mathcal{M}_{i+1}\dots\mathcal{M}_{d}$).} $\tau_{1}:=\mathcal{A}_{c,2}^{\mathcal{N}_{2}}\circ\sigma_{1}|F_{2}$,
$\mathcal{O}_{2}:=\mathcal{N}_{2}|F_{1}$, $W_{2}:=V_{2}|F_{2}$ and
$\bar{Q}_{2}:=\bar{R}_{2}|F_{2}$. Now, $\bar{Q}_{2}$ is uncorrelated
to $\tau_{1}$ because we restricted to the part of the sample space
where the effect $\mathcal{A}_{c,2}^{\mathcal{N}_{2}}$ has been accounted
for. To see this, observe that after the conditioning, the reasoning
is analogous to why $\rho_{1}$ is uncorrelated to $\bar{S}_{2}$.
We can now apply \Corref{Conditionals} with $\Pr[x\in\bar{Q}_{2}]\le\mathcal{O}(\poly2^{-n})$,
$\Pr[\neg F_{2}]\le\mathcal{O}(\poly2^{-n})$ in \Eqref{QCd_2_2}
and combine it with \Eqref{QCd_2_1} to obtain 
\[
\Pr[{\rm find}:U_{2}^{\mathcal{L}\backslash\bar{S}_{2}},\mathcal{A}_{c,2}^{\mathcal{L}}\circ\rho_{1}]\le\mathcal{O}(\poly2^{-n}).
\]
Proceeding similarly, one obtains the final bound.
\end{proof}
}

\section{Warm-up | $d$-Shuffled Simon's Problem ($d$-SS)\label{sec:warm-up-dSS}}

\branchcolor{black}{In the previous section, we established the first main result of this
work---for each $d$, we saw that $d$-Serial Simon's problem is
easy for a ${\rm CQ}_{1}$ circuit but hard for ${\rm QC}_{d}$. Our
second main goal is to show that for each $d$, there is a problem
which is easy for ${\rm QC}_{c}$ (for some constant $c>0$, wrt $d$
and $n$) but hard for ${\rm CQ}_{d}$. However, showing depth lower
bounds for ${\rm CQ}_{d}$ circuits can get quite involved. We, therefore,
first re-derive a result due to \textcite{CCL2020}---for each $d$,
there is a problem which is easy for ${\rm CQ}_{2d+1}$ but hard for
${\rm CQ}_{d}$. While we use essentially the same problem (more on
this momentarily), our proof is different and simpler. We build upon
these techniques for proving our second main result in \Secref{ShuffledCollisionsToSimons}.

The problem we consider in this section is called $d$-Shuffl\emph{ed}
Simon's Problem which is essentially the same as the $d$-Shuffl\emph{ing}
Simon's Problem introduced by \citeauthor{CCL2020}. The difference
arises in the way we construct the Shuffler and consequently in its
analysis.}

\subsection{Oracles and Distributions}

\subsubsection{$d$-Shuffler}
\begin{notation}
We identify binary strings with their associated integer values implicitly.
E.g. $f:\{0,1\}^{2n}\to\{0,1\}^{2n}$ then we may use $f(2^{n}-1)$
to denote $f(00\dots0\underbrace{11\dots1}_{n\text{ many}})$.\label{nota:IntegerString}
\end{notation}

\branchcolor{black}{We begin with defining a $d$-Shuffler for a function $f$. We give
two equivalent definitions of a $d$-Shuffler. The first is more intuitive
but the second is easier to analyse. 

A $d$-Shuffler for $f:\{0,1\}^{n}\to\{0,1\}^{n}$ is simply a sequence
of $d$ permutations $f_{0},f_{1}\dots f_{d-1}$ on a larger space,
which uses $2n$ length strings and a function $f_{d}$ such that
$f_{d}\circ f_{d-1}\dots f_{0}(x)=f(x)$ for all $x$ in the domain
of $f$ and for all points untouched by these ``paths'' originating
from the domain of $x$, the permutations are modified to output $\perp$.

This definition makes counting arguments slightly convoluted. We therefore
observe that a sequence of permutations may equivalently be specified
by a sequence of tuples.\footnote{To prevent confusion with the standard notation for permutations,
we denote these tuples by columns.} For instance two permutations over four elements may be expressed
as $[0,1,2,3]^{T}\mapsto[1,3,0,2]^{T}\mapsto[2,0,3,1]^{T}$ but the
advantage here is that restricting the permutation to a subset (as
we did above by defining the permutations to be $\perp$ outside the
``paths''), corresponds to simply dropping some elements from the
tuple: $[0,1]^{T}\mapsto[1,3]^{T}\mapsto[2,0]^{T}$. See \Figref{Sequence-of-tuples}.}

\begin{figure}
\begin{centering}
\includegraphics[width=6cm]{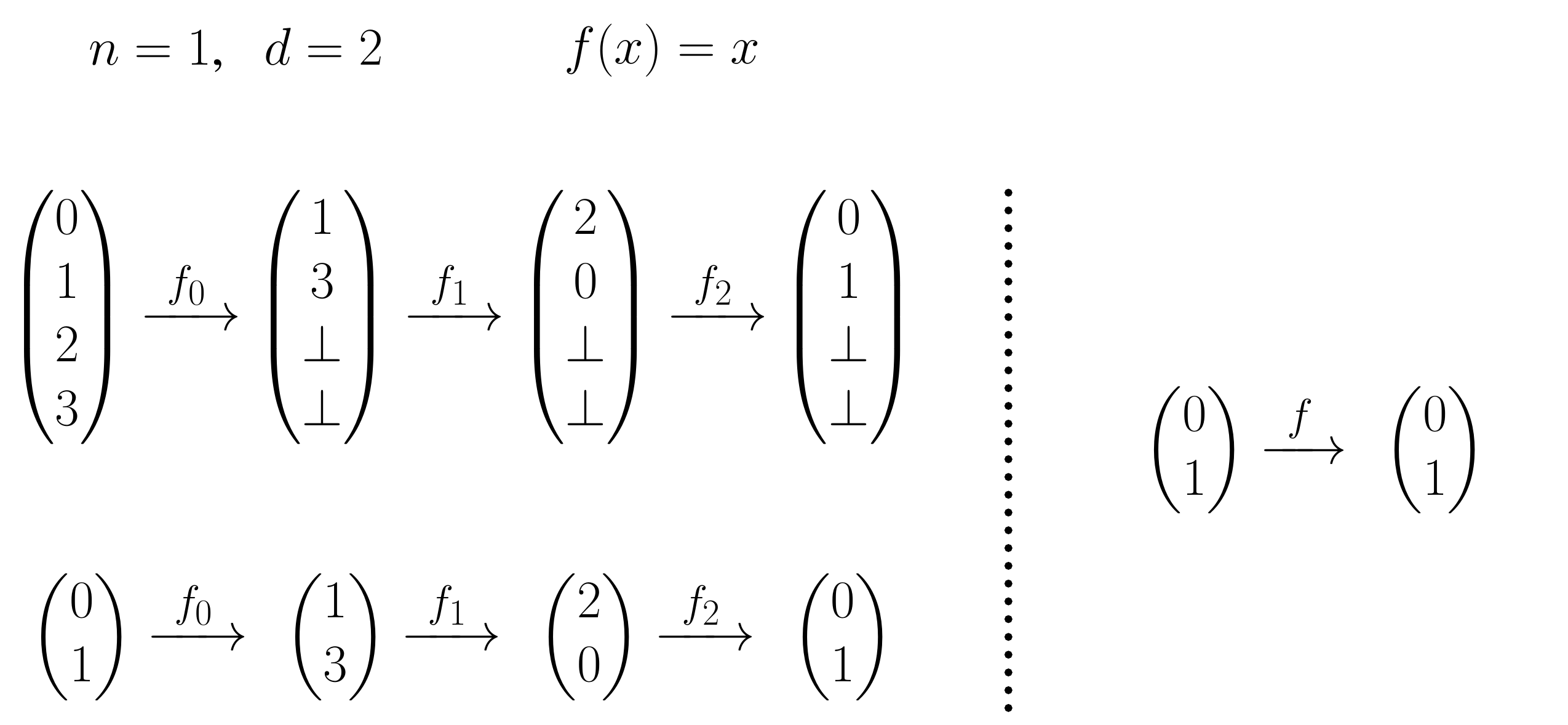}
\par\end{centering}
\caption{Sequence of tuples instead of restricted permutation functions.\label{fig:Sequence-of-tuples} }

\end{figure}

\begin{defn}[Uniform $d$-Shuffler for $f$]
 \label{def:dShuffler}A \emph{uniform $d$-Shuffler} for a function
$f:\{0,1\}^{n}\to\{0,1\}^{n}$ is a sequence of random functions $(f_{0},f_{1},\dots f_{d})$,
where $f_{0},\dots f_{d}:\{0,1\}^{2n}\cup\{\perp\}\to\{0,1\}^{2n}\cup\{\perp\}$
sampled from a distribution $\mathbb{F}_{{\rm shuff}}(d,n,f)$ (to
be defined) such that $f_{d}\circ f_{d-1}\dots\circ f_{0}(x)=f(x)$
for all $x\in Z\times\{0,1\}^{n}$ where $Z:=\{(0,\dots0)\}$ (containing
an $n$-bit zero string). Let $\mathcal{L}$ be the oracle associated
with $(f_{i})_{i=0}^{d}$ and define $\mathbb{O}_{{\rm shuff}}(d,n,f)$
to be the corresponding distribution. 

The sampling process is defined in two equivalent ways. \\
\textbf{First definition: }
\begin{itemize}
\item Sample $f'_{0},\dots f'_{d-1}:\{0,1\}^{2n}\to\{0,1\}^{2n}$ from a
uniform distribution of permutation functions acting on strings of
length $2n$. 
\item Define $f_{d}$ to be such that $f_{d}\circ f'_{d-1}\circ\dots\circ f'_{0}(x)=f(x)$
for all $x\in Z\times\{0,1\}^{n}$ and $f_{d}\circ f'_{d-1}\circ\dots f'_{0}(x)=\perp$
otherwise.
\item For each $i\in\{0,1\dots d-1\}$, 
\begin{itemize}
\item define $X_{i+1}:=f_{i}'\circ\dots f_{0}'(X_{0})$ where $X_{0}=Z\times\{0,1\}^{n}$
and
\item define 
\[
f_{i}(x):=\begin{cases}
f'_{i}(x) & \forall x\in X_{i-1}\\
\perp & \forall x\notin X_{i-1}.
\end{cases}
\]
\end{itemize}
\end{itemize}
\textbf{Second equivalent definition:}
\begin{itemize}
\item Let $t_{0},t_{1}\dots t_{d-1}$ each be a tuple $(x_{1},x_{2}\dots x_{N})$
of size $N:=2^{n}$, sampled uniformly from the collection of all
size $N$ tuples containing distinct elements $x_{i}\in\{0,1\}^{2n}$.
Let $t_{-1}:=(0,1,\dots N)$ and $t_{d}:=(f(0),f(1)\dots f(N-1))$
(see \Notaref{IntegerString}).
\item $f_{0},\dots f_{d}$ | For each $i\in\{0,\dots,d-1\}$, define $f_{i}$
as follows. 
\begin{itemize}
\item If $x\notin t_{i-1}$, then define $f_{i}(x)=\perp$. 
\item Otherwise, suppose $x$ is the $j$th element in $t_{i-1}$ and $y$
is the $j$th element in $t_{i}$. Then, define $f_{i}(x)=y$. 
\end{itemize}
\end{itemize}
Given the second definition, a $d$-Shuffler may equivalently be defined
as a sequence $(t_{0},\dots t_{d-1},t_{d})$ of tuples sampled as
described above. We overload the notation and let $\mathbb{F}_{{\rm shuff}}(d,n,f)$
return $(t_{i})_{i=0}^{d}$ and similarly let $\mathbb{O}_{{\rm shuff}}(d,n,f)$
return the oracle associated with $(t_{i})_{i=0}^{d}$.

Finally, if $f$ is omitted when $\mathbb{F}_{{\rm shuff}}$ or $\mathbb{O}_{{\rm shuff}}$
are invoked, then it is assumed that $f\sim\mathbb{F}_{R}(n)$.
\end{defn}

\begin{notation}
In this section, we drop the word ``uniform'' for conciseness. 
\end{notation}

\subsubsection{The $d$-SS Problem}
\begin{defn}[$d$-Shuffled Simon's Distribution and Oracle]
\label{def:dShuffledSimonsDistr}We define the $d$-Shuffled Simon's
Function distribution, $\mathbb{F}_{{\rm SS}}(d,n)$ and the corresponding
$d$-Shuffled Simon's Oracle distribution $\mathbb{O}_{{\rm SS}}(d,n)$
by specifying its sampling procedure.
\begin{itemize}
\item Sample a random Simon's function $(f,s)\sim\mathbb{F}_{{\rm S}}(n)$
(see \Defref{SimonsFunctionDistr}).
\item Sample a $d$-Shuffler $(f_{i})_{i=0}^{d}\sim\mathbb{F}_{{\rm Shuff}}(d,n,f)$
where $f_{i}$ are functions (see \Defref{dShuffler}). 
\end{itemize}
Return $((f_{i})_{i=0}^{d},s)$ when $\mathbb{F}_{{\rm SS}}(d,n)$
is sampled and $(\mathcal{F},s)$ when $\mathbb{O}_{{\rm SS}}(d,n)$
is sampled where $\mathcal{F}$ is the oracle associated with $(f_{i})_{i=0}^{d}$. 
\end{defn}

\begin{defn}[$d$-Shuffled Simon's Problem]
 Let $(\mathcal{F},s)\sim\mathbb{O}_{{\rm SS}}(d,n)$ (see \Defref{dShuffledSimonsDistr})
be sampled from the $d$-Shuffled Simon's Oracle distribution and
$\mathcal{Q}\sim\mathbb{O}_{{\rm Shuff}}(d,n)$ be a random $d$-Shuffler
(see \Defref{dShuffler}). The $d$-Shuffled Simon's Problem is,
\begin{itemize}
\item Search version: Given $\mathcal{F}$, find $s$. 
\item Decision version: Given either $\mathcal{F}$ or $\mathcal{Q}$ with
equal probability, output $1$ if $\mathcal{F}$ was given and $0$
otherwise.
\end{itemize}
\end{defn}

\subsection{Shadow Boilerplate}

\branchcolor{black}{For the analysis, we need to consider shadow oracles associated with
the $d$-Shuffled Simon's Oracle. The definition is somewhat redundant---it
is the direct analogue of \Defref{cSerialShadowOracle}. See \Figref{Shadows-for-d-ShuffledSimons}.}
\begin{defn}[$d$-Shuffled Simon's Shadow Oracle]
\label{def:d-ShuffledSimonShadow} Given 
\begin{itemize}
\item $((f_{i})_{i=0}^{d},s)$ from the sample space of $\mathbb{F}_{{\rm SS}}(d,n)$
and 
\item a tuple of subsets $\bar{S}=(S_{1},S_{2}\dots S_{d})$ where each
$S_{i}\subseteq\{0,1\}^{2n}$,
\end{itemize}
let $\mathcal{F}$ be the oracle associated with $(f_{i})_{i=0}^{d}$.
Then, the $d$-Shuffled Simon's Shadow Oracle (or simply the shadow)
$\mathcal{G}$for the oracle $\mathcal{F}$ with respect to $\bar{S}$
is defined to be the oracle associated with $(f_{0},f_{1}'\dots f_{d}')$
where 
\[
f'_{i}:=\begin{cases}
f_{i}(x) & x\in\{0,1\}^{2n}\backslash S_{i}\\
\perp & x\in S_{i}.
\end{cases}
\]
for all $i\in\{1,2\dots d\}$. 
\end{defn}

\begin{figure}
\begin{centering}
\includegraphics[width=10cm]{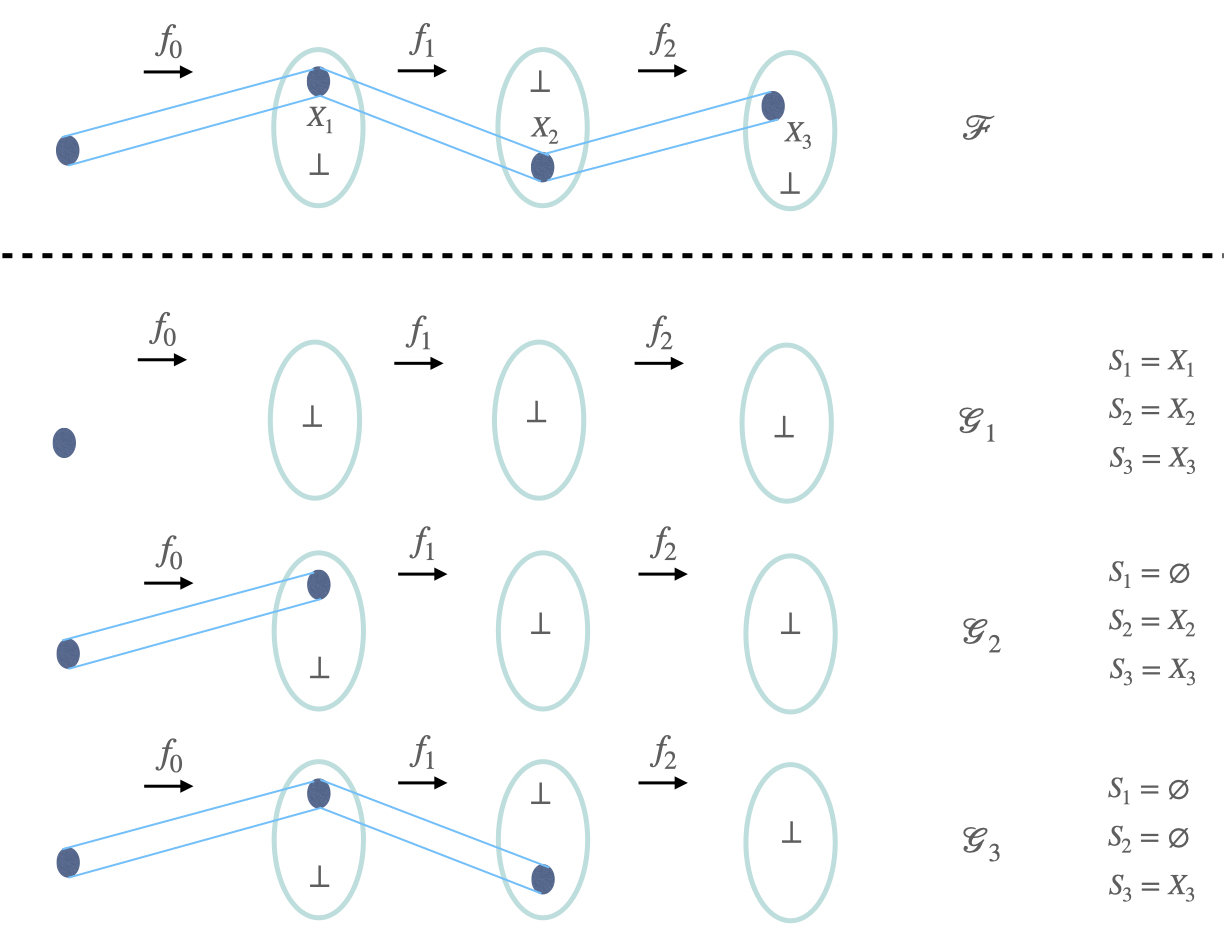}
\par\end{centering}
\caption{\label{fig:Shadows-for-d-ShuffledSimons}Shadows for $d$-Shuffled
Simon's Oracle. Ignore the arrows below the function labels $f_{0},\dots f_{2}$
at first. Suppose $d=3$. Consider the row for $\mathcal{F}$. As
before, the ovals represent the query domains of functions $f_{0},f_{1},f_{2}$
and $f_{3}$. Similarly for the shadows $\mathcal{G}_{1},\dots\mathcal{G}_{3}$.
The shaded regions denote a non-$\perp$ response. Let $\bar{S}_{i}$
be the query domain at which $\mathcal{F}$ and $\mathcal{G}_{i}$
differ. The construction, as before, ensures that $\mathcal{G}_{1}\dots\mathcal{G}_{i-1}$
do not contain any information about $\bar{S}_{i}$.}

\end{figure}

\branchcolor{black}{The analysis here is a straightforward adaptation of the ${\rm QNC}_{d}$
analysis for $d$-Serial Simon's. Thus, we use the analogue of \Algref{SforQNCd_using_dSS}.
The only difference is in the numbering convention. The domain of
the $i$th oracle was determined by $s_{i-1}$ in \Algref{SforQNCd_using_dSS}
while here, the domain of the $i$th oracle is determined by $X_{i}$.}
\begin{lyxalgorithm}[$\bar{S}_{j}$ for ${\rm QNC}_{d}$ exclusion using $d$-SS]
\label{alg:SforQNCd_using_dSS}~\\
Input:
\begin{itemize}
\item $1\le j\le d$ and 
\item $((f_{i})_{i=0}^{d},s)$ from the sample space of $\mathbb{F}_{{\rm SS}}(d,n)$.
\end{itemize}
Output: $\bar{S}_{j}$, a tuple of $d$ subsets, defined as 
\[
\bar{S}_{j}:=\begin{cases}
(\emptyset,\dots\emptyset,X_{j},\dots X_{d}) & j>1\\
(X_{1},\dots X_{d}) & j=1
\end{cases}
\]
where for each $i\in\{1,\dots d\}$, $X_{i}=f_{i-1}(X_{i-1})$ with
$X_{0}=\{0,1\}^{n}$. 

\end{lyxalgorithm}

\begin{prop}
\label{prop:Shadow_dSS}For all $2\le i\le d$, let $\mathcal{G}_{1}\dots\mathcal{G}_{i-1}$
denote the shadows of $(\mathcal{F},s)\in\mathbb{O}_{{\rm SS}}(d,n)$
(see \Defref{d-ShuffledSimonShadow}) with respect to $\bar{S}_{1}\dots\bar{S}_{i-1}$
(constructed using \Algref{SforQNCd_using_dSS} with the index and
$\mathcal{F}$ as inputs). Then, $\mathcal{G}_{1}\dots\mathcal{G}_{i-1}$
contain no information about $\bar{S}_{i}$.
\end{prop}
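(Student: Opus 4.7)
The plan is to reduce the statement to a clean independence fact about the random tuples underlying the $d$-Shuffler, using the equivalent second (tuple-based) definition in \Defref{dShuffler}.

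First, I would observe that, for each $1 \le j \le i-1$, the shadow $\mathcal{G}_j$ is a deterministic function of $(f_0, f_1, \ldots, f_{j-1})$ alone. Indeed, by the construction in \Defref{d-ShuffledSimonShadow} combined with \Algref{SforQNCd_using_dSS}, the sub-oracles at positions $j, j+1, \ldots, d$ of $\mathcal{G}_j$ output $\perp$ \emph{everywhere}: inside $X_j, X_{j+1}, \ldots, X_d$ by the shadow modification, and outside those sets by the fact that the original $f_k$ already outputs $\perp$ there (since $f_k$ is non-$\perp$ only on $X_k$). Hence the joint random variable $(\mathcal{G}_1, \ldots, \mathcal{G}_{i-1})$ is a deterministic function of $(f_0, f_1, \ldots, f_{i-2})$.

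Next, I would translate through the tuple description of \Defref{dShuffler}: the sub-oracle $f_k$ is fully determined by the pair of tuples $(t_{k-1}, t_k)$, so $(f_0, \ldots, f_{i-2})$ is a deterministic function of $(t_{-1}, t_0, \ldots, t_{i-2})$, where $t_{-1}$ is a fixed constant. On the other hand, $\bar{S}_i = (\emptyset, \ldots, \emptyset, X_i, X_{i+1}, \ldots, X_d)$, and since $X_k$ (as a set) equals the set of distinct entries of $t_{k-1}$ for each $1 \le k \le d$, the tuple $\bar{S}_i$ is a deterministic function of $(t_{i-1}, t_i, \ldots, t_{d-1})$ alone; in particular, it depends neither on $t_d$ nor on the underlying Simon function $f$.

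The final step is the independence statement. In the sampling procedure for $\mathbb{F}_{\mathrm{SS}}(d,n)$, the tuples $t_0, t_1, \ldots, t_{d-1}$ are drawn independently and uniformly, so $(t_0, \ldots, t_{i-2})$ is independent of $(t_{i-1}, \ldots, t_{d-1})$. Consequently $(\mathcal{G}_1, \ldots, \mathcal{G}_{i-1})$ and $\bar{S}_i$ are independent as random variables, which is precisely the sense in which the shadows contain no information about $\bar{S}_i$. The only real obstacle is the bookkeeping, namely tracing which tuples govern which shadow: the first (permutation-based) definition of the $d$-Shuffler obscures the independence because the permutations $f'_k$ act globally on $\{0,1\}^{2n}$, while switching at the outset to the second (tuple-based) definition, in which the relevant random variables are manifestly independent, makes the argument essentially immediate.
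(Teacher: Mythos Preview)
Your proof is correct and follows essentially the same reasoning as the paper: both argue that $\mathcal{G}_j$ is determined by the first $j$ sub-oracles (since the remaining ones output $\perp$ everywhere), while $\bar{S}_i$ depends only on $X_i,\dots,X_d$. Your use of the tuple-based second definition to make the independence of $(t_0,\dots,t_{i-2})$ and $(t_{i-1},\dots,t_{d-1})$ explicit is a cleaner formalisation of what the paper states informally in a single paragraph.
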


\branchcolor{black}{\begin{proof}
To see this, note that even though to define $\mathcal{G}_{1}$, we
used $\bar{S}_{1}=(X_{1},X_{2}\dots X_{d}),$ still $\mathcal{G}_{1}$
contains no information about $X_{2},\dots X_{d}$ (and so no information
about $\bar{S}_{2}$ either) because it always outputs $\perp$ (except
for the zeroth oracle which can reveal $X_{1}$). Similarly, $\mathcal{G}_{2}$
contains information about $X_{1},X_{2}$ but not about $X_{3}\dots X_{d}$
(thus $\bar{S}_{3}$). Analogously for $\mathcal{G}_{3}$ and so on. 
\end{proof}
}
\begin{prop}
\label{prop:x_in_S_dSS_shadow}Let $\bar{S}_{i}$ be the output of
\Algref{SforQNCd_using_dSS} with $i$ and $(\mathcal{F},s)$ as inputs
where $(\mathcal{F},s)\sim\mathbb{O}_{{\rm SS}}(d,n)$ (see \Defref{d-ShuffledSimonShadow}).
Let $x$ be some fixed query (in the query domain of $\mathcal{F}$).
Then, $\Pr[x\in\bar{S}_{i}]\le\mathcal{O}(d\cdot2^{-n})$. 
\end{prop}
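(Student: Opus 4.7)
The plan is to exploit the second equivalent definition of the $d$-Shuffler, which makes the marginal distribution of each $X_j$ completely transparent. Recall that in that formulation the shuffler is described by a sequence of tuples $t_0, t_1, \ldots, t_{d-1}$, each drawn uniformly from the collection of length-$N$ tuples (with $N = 2^n$) of pairwise distinct elements of $\{0,1\}^{2n}$. Under this sampling, the set $X_j$ produced by \Algref{SforQNCd_using_dSS} is exactly the (unordered) support of $t_{j-1}$, hence is a uniformly random $2^n$-subset of $\{0,1\}^{2n}$. In particular, for any fixed $y \in \{0,1\}^{2n}$ and any $j \in \{1, 2, \ldots, d\}$,
\[
\Pr[y \in X_j] \;=\; \frac{2^n}{2^{2n}} \;=\; 2^{-n},
\]
independently of how the other $X_k$'s are distributed or correlated with $X_j$.

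Next, I would unpack what ``$x \in \bar{S}_i$'' means. A query $x$ to $\mathcal{F}$ specifies a sub-oracle index $j$ together with an input $y \in \{0,1\}^{2n}$; the tuple $\bar{S}_i = (\emptyset, \ldots, \emptyset, X_i, X_{i+1}, \ldots, X_d)$ then has $x \in \bar{S}_i$ if and only if $j \ge i$ and $y \in X_j$. To match the usage in the proof of \Thmref{QNCd-dSerialSimons} (where a query may be compared against every sub-oracle's forbidden set), I treat $\bar{S}_i$ as the union of its components across sub-oracles. A union bound over the (at most) $d$ non-empty components then gives
\[
\Pr[x \in \bar{S}_i] \;\le\; \sum_{j=i}^{d} \Pr[x \in X_j] \;\le\; d \cdot 2^{-n},
\]
which is the claimed $\mathcal{O}(d \cdot 2^{-n})$ bound.

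There is essentially no obstacle: the only subtle point is verifying that the ``second definition'' of the shuffler really does make each $X_j$ uniform on size-$2^n$ subsets. This follows from the fact that $X_0$ (the domain of interest padded into $\{0,1\}^{2n}$) is fixed, and $t_0, t_1, \ldots$ are, by construction, uniform over distinct-element tuples, so their supports are uniform on size-$N$ subsets. Had we instead worked with the ``first definition'' via random permutations $f'_0, \ldots, f'_{d-1}$, we would need the standard fact that the image of a fixed $2^n$-set under a uniformly random permutation of $\{0,1\}^{2n}$ is uniform on $2^n$-subsets, and then the same union bound applies. Either route yields the bound with minimal computation.
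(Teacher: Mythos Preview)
Your proof is correct and follows essentially the same approach as the paper: both identify each $X_j$ as a uniformly random size-$2^n$ subset of $\{0,1\}^{2n}$ via the tuple-based definition of the $d$-Shuffler, compute $\Pr[x\in X_j]=2^{-n}$, and finish with a union bound over the at most $d$ non-empty components. The paper's proof is just more terse, deferring the $\Pr[x\in X_j]=N/M$ computation to \Remref{Pr_x_in_X_or_t}.
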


\branchcolor{black}{\begin{proof}
For any fixed $x_{i}\in\{0,1\}^{2n}$, $\Pr[x_{i}\in X_{i}]=\frac{1}{2^{n}}$
(see \Remref{Pr_x_in_X_or_t} using $N=2^{n}$ and $M=2^{2n}$). The
union bound, then, readily bounds the desired probability.
\end{proof}
}

\subsection{Depth Lower bounds for $d$-Shuffled Simon's Problem ($d$-SS)}

\branchcolor{black}{Again, as in the analysis for $d$-Serial Simon's problem, we begin
with briefly demonstrating the depth lower bound for ${\rm QNC}_{d}$
(with classical post-processing) and then generalise it to ${\rm CQ}_{d}$. }

\subsubsection{$d$-Shuffled Simon's Problem is hard for ${\rm QNC}_{d}$}

\begin{thm}[$d$-SS is hard for ${\rm QNC}_{d}$]
\label{thm:d-SS-is-hard-for-QNC-d}Let $(\mathcal{F},s)\sim\mathbb{O}_{{\rm SS}}(d,n)$,
i.e. let $\mathcal{F}$be an oracle for a random $d$-Shuffled Simon's
problem of size n and period $s$. Let $\mathcal{A}^{\mathcal{F}}$
be any $d$ depth quantum circuit (see \Defref{QNCd} and \Remref{oracleVersionsQNC-CQ-QC})
acting on $\mathcal{O}(n)$ qubits, with query access to $\mathcal{F}$.
Then $\Pr[s\leftarrow\mathcal{A}^{\mathcal{F}}]\le\negl$, i.e. the
probability that the algorithm finds the period is exponentially small. 
\end{thm}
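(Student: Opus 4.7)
The plan is to mirror the hybrid/shadow argument used for \Thmref{QNCd-dSerialSimons}, with the only changes being that (i) the shadows are the $d$-Shuffled Simon's shadows of \Defref{d-ShuffledSimonShadow}, constructed with $\bar{S}_i$ supplied by \Algref{SforQNCd_using_dSS}, and (ii) the uncorrelation and counting ingredients come from \Propref{Shadow_dSS} and \Propref{x_in_S_dSS_shadow} instead of their Serial-Simon's analogues. Concretely, write an arbitrary ${\rm QNC}_d^{\mathcal{F}}$ circuit as $\mathcal{A}^{\mathcal{F}} = \Pi\circ U_{d+1}\circ\mathcal{F}\circ U_d\circ\cdots\circ\mathcal{F}\circ U_1$, where $\Pi$ corresponds to outputting $s$. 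For each $i\in\{1,\dots,d\}$, let $\mathcal{G}_i$ be the shadow of $\mathcal{F}$ with respect to $\bar{S}_i$, and define
\[
\mathcal{A}^{\mathcal{G}} := \Pi\circ U_{d+1}\circ\mathcal{G}_d\circ U_d\circ\cdots\circ\mathcal{G}_1\circ U_1.
\]
Since $\bar{S}_d=(\emptyset,\dots,\emptyset,X_d)$ forces $f_d$ to return $\perp$ on the entirety of its live domain $X_d$, no execution of $\mathcal{A}^{\mathcal{G}}$ ever learns a single value of the underlying Simon's function $f$; the remaining exposed sub-oracles $f_0,\dots,f_{d-1}$ are permutations sampled independently of $(f,s)$. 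Hence $\Pr[s\leftarrow\mathcal{A}^{\mathcal{G}}]\le\mathcal{O}(2^{-n})$ by a random-guess bound on $s\in\{0,1\}^n\setminus\{0^n\}$.

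Next, I would run the hybrid argument exactly as in \Thmref{QNCd-dSerialSimons}: telescoping the difference between $\mathcal{A}^{\mathcal{F}}$ and $\mathcal{A}^{\mathcal{G}}$ layer by layer and applying \Lemref{O2H} to each consecutive pair gives
\[
\bigl|\Pr[s\leftarrow\mathcal{A}^{\mathcal{F}}]-\Pr[s\leftarrow\mathcal{A}^{\mathcal{G}}]\bigr|\;\le\;\sum_{i=1}^d\sqrt{2\Pr[\mathrm{find}:U_i^{\mathcal{F}\backslash\bar{S}_i},\rho_{i-1}]},
\]
with $\rho_0=\left|0\dots0\right\rangle\left\langle 0\dots0\right|$ and $\rho_i=\mathcal{G}_i\circ U_i\circ\cdots\circ\mathcal{G}_1\circ U_1(\rho_0)$. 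To bound each find-probability via \Lemref{boundPfind} I need two ingredients: (a) a per-point bound $\Pr[x\in\bar{S}_i]\le\mathcal{O}(d\cdot 2^{-n})$ for any fixed query $x$, which is exactly \Propref{x_in_S_dSS_shadow}; and (b) that $U_i$ and $\rho_{i-1}$ are uncorrelated with $\bar{S}_i$. The $U_i$ are independent of the Shuffler randomness by construction, while $\rho_{i-1}$ depends only on $\mathcal{G}_1,\dots,\mathcal{G}_{i-1}$, which by \Propref{Shadow_dSS} carry no information about $\bar{S}_i=(\emptyset,\dots,\emptyset,X_i,\dots,X_d)$. Together these yield $\Pr[\mathrm{find}]\le\poly\cdot d\cdot 2^{-n}$ per layer, so the sum is $\negl$; combining with $\Pr[s\leftarrow\mathcal{A}^{\mathcal{G}}]\le\mathcal{O}(2^{-n})$ gives the claim.

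The one step in which the Shuffler structure genuinely enters—and the main point to double-check—is the \Propref{Shadow_dSS}-style uncorrelation. The cleanest way to see it is through the second (tuple-based) definition of the $d$-Shuffler in \Defref{dShuffler}: since $(t_0,\dots,t_{d-1})$ are sampled independently and uniformly over all size-$2^n$ distinct tuples of $\{0,1\}^{2n}$, the marginal of $(t_i,\dots,t_{d-1})$ conditioned on any view of $\mathcal{G}_1,\dots,\mathcal{G}_{i-1}$ (which is a function of $t_0,\dots,t_{i-1}$ alone, since $\mathcal{G}_j$ answers $\perp$ on every sub-oracle $f_{j},\dots,f_d$) remains uniform. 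In particular $X_i,\dots,X_d$, and hence $\bar{S}_i$, are independent of $\rho_{i-1}$. Once this tuple-level independence is in hand, the rest of the proof is a line-by-line adaptation of the $d$-Serial Simon's argument and requires no genuinely new techniques.
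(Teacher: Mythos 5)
Your proposal is correct and follows essentially the same route as the paper's proof: the same shadows $\mathcal{G}_i$ built from \Algref{SforQNCd_using_dSS}, the same hybrid/O2H telescoping, and the same appeals to \Propref{Shadow_dSS} and \Propref{x_in_S_dSS_shadow} for the uncorrelation and per-query bounds. Your tuple-level independence explanation is a valid unpacking of \Propref{Shadow_dSS} (modulo a harmless off-by-one: the view of $\mathcal{G}_1,\dots,\mathcal{G}_{i-1}$ is determined by $t_0,\dots,t_{i-2}$, which is exactly why $X_i=\mathrm{set}(t_{i-1})$ is fresh).
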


This follows from the proof of \Thmref{QNCd-dSerialSimons} with no
essential change. The main difference is in the details of how the
shadow is constructed---earlier it was comprised of $E\times s_{i}$s,
now $X_{i}$s comprise it. However, that doesn't change the relevant
properties (see \Propref{Shadow_dSS} and \Propref{x_in_S_dSS_shadow})
for the main argument (which is given explicitly in \Subsecref{Hardness-of-dSS-for-QNCd}
for completeness).

We expect the ${\rm QNC}_{d}$ hardness of $d$-SS to readily generalise
to hardness for ${\rm QC}_{d}$ (following the $d$-Serial Simon's
hardness proof) but we do not prove it here.

\subsubsection{$d$-Shuffled Simon's Problem is hard for ${\rm CQ}_{d}$ | idea\label{subsec:idea_dSShardforCQd}}

\branchcolor{black}{The generalisation to ${\rm CQ}_{d}$ takes some work. We first describe
the basic idea behind our approach and then formalise these. These
draw from the insights of \textcite{CCL2020} but differ substantially
in their implementation. Let $((f_{i})_{i=0}^{d},s)\sim\mathbb{F}_{{\rm SS}}(d,n)$.
For our discussion, we require three key concepts. 
\begin{enumerate}
\item The first, already introduced in the proof of \Thmref{QCd_hardness_dSerialSimons},
was the notion of conditioning the oracle distribution (and the quantities
that depend on it) based on a ``transcript'' as the algorithm proceeds
and analysing the conditioned cases. In the proof of \Thmref{QCd_hardness_dSerialSimons},
the locations queried by the classical algorithm constituted this
transcript but it could, and as we shall see it will, more generally
contain other correlated variables.
\item The second, is the notion of an almost uniform $d$-Shuffler. Neglect,
for the moment, the last function $f_{d}$ which is used to define
the $d$-Shuffler, $(f_{i})_{i=0}^{d}$. Then, informally, suppose
that an algorithm (which can even be computationally unbounded) is
given access to a uniform $d$-Shuffler and it produces a $\poly$
sized advice (a string that is correlated with the Shuffler). One
can show that for advice strings which appear with non-vanishing probability,
the $d$-Shuffler conditioned on the advice string, continues to stay
almost uniform. 
\item The third, is basically the bootstrapping of the proof of \Thmref{d-SS-is-hard-for-QNC-d},
by letting it now play the role of the algorithm mentioned above.
This ties the loose end left above, the question of correlation with
$f_{d}$. A generic ${\rm QNC}_{d}$ circuit may reveal some information
about $(f_{0},\dots f_{d-1})$ and $f_{d}$ in the poly-length string
it outputs, however, the proof of \Thmref{d-SS-is-hard-for-QNC-d}
guarantees that the behaviour of this algorithm cannot be very different
from one which has no access to $f_{d}$. 
\end{enumerate}
In the next section, we formalise step two and in the one that follows,
we stitch everything together to establish $d$-SS is hard for ${\rm CQ}_{d}$.
We later use parts of this proof for establishing our second main
result.}

\section{Technical Results II\label{sec:Technical-Results-II}}

\branchcolor{black}{The description in \Subsecref{idea_dSShardforCQd} above, used the
notion of an almost uniform $d$-Shuffler. In this section, we make
this notion precise. }

\subsection{Sampling argument for Uniformly Distributed Permutations\label{subsec:Sampling-argument-for}}

\branchcolor{black}{The basic idea used in this section is called a ``pre-sampling argument''
which we have adapted from the work of \textcite{coretti_random_2017,CCL2020}.
It was considered earlier in cryptographic contexts by \textcite{unruh_random_2007}
and for communication complexity by \textcite{goos_rectangles_2015,kothari_approximating_2017}.
For our purposes, we need to generalise their result.

We first describe the idea in its simplest form (that considered by
\textcite{coretti_random_2017}). Let $N=2^{n}$ and fix some small
$\delta>0$. Suppose there is an $N$-bit random string $X$ which
is uniformly distributed. Suppose an arbitrary function of $X$, $f(X)$
is known. Then, given we focus on values of $f(X)$ which occur with
probability above a threshold, say $\gamma$, i.e. for $r$ such that
$\Pr[f(X)=r]\ge\gamma=2^{-m}$, the main result, informally, is that
$X|(f(X)=r)$ may be viewed as a ``convex combination of random variables''
which are ``$\delta$ far from'' $X$'s with a small number of bits
(scaling as $-\log\gamma/\delta=m/\delta$) fixed. We justify and
reify the phrases in quotes shortly. In our setting, the random variable
$X$ is replaced by the $d$-Shuffler and the function $f$ would
encode the advice generated by a ${\rm QNC}_{d}$ circuit gives after
acting upon the $d$-Shuffler.\footnote{The term ``pre-sampling argument'' arose from the cryptographic
application in the context of random oracles. There, the adversary
is allowed to arbitrarily interact with the random oracle (or pre-sample
it) before initiating the protocol but only allowed to keep a poly
sized advice from that interaction.}

While \textcite{CCL2020} already generalised this argument to the
case of a sequence of permutations over $N$ elements for their analysis,
we show here that the idea itself can be applied quite generally---first,
one needn't restrict to uniform distributions, second, a rather limited
structure on the random variable suffices, i.e. one needn't restrict
to strings or permutations. Using these, one can also show a composition
result, where repeated advice are given. This is pivotal to the analysis
of ${\rm CQ}_{d}$ circuits where ${\rm QNC}_{d}$ circuits repeatedly
give advice.

In the following, for clarity, we present our results for a single
uniformly distributed permutation but do not use that fact in our
derivation. The results thus directly lift to the $d$-Shuffler with
minor tweaks to the notation.}

\subsubsection{Convex Combination of Random Variables}

We first make the notion of ``convex combination of random variables''
precise. Consider a function $f$ which acts on a random permutation,
say $t$, to produce an output, i.e. $f(t)=r$ where $r$ is an element
in the range of $f$.\footnote{The function will later be interpreted as an algorithm and the random
permutation accessed via an oracle.} This range can be arbitrary. We say a \emph{convex combination} $\sum_{i}p_{i}t_{i}$
of random variables $t_{i}$ is \emph{equivalent} to $t$ if for all
functions $f$, and all outputs $s$ in its range, $\sum_{i}p_{i}\Pr[f(t_{i})=s]=\Pr[f(t)=s]$.
This relation is denoted by $\sum_{i}p_{i}t_{i}\equiv t$. 

\subsubsection{The ``parts'' notation}

While permutations are readily defined as an ordered set of distinct
elements, it would nonetheless be useful to introduce what we call
the ``parts'' notation which allows one to specify parts of the
permutation.
\begin{notation}
\label{nota:permPaths}Consider a permutation $t$ over $N$ elements,
labelled $\{0,1\dots N-1\}$. 
\begin{itemize}
\item \emph{Parts:} Let $S=\{(x_{i},y_{i})\}_{i=1}^{M}$ denote the mapping
of $M\le N$ elements under some permutation, i.e. there is some permutation
$t$, such that $t(x_{i})=y_{i}$. Call any such set $S$ a ``part''
and its constituents ``paths''. 
\begin{itemize}
\item Denote by $\Omega_{{\rm parts}}(N)$ the set of all such ``parts''. 
\item Call two parts $S=\{(x_{i},y_{i})\}_{i}$ and $S'=\{(x'_{i'},y'_{i'})\}_{i'}$
\emph{distinct} if for all $i,i'$ (a) $x_{i}\neq x_{i'}$, and (b)
there is a permutation $t$ such that $t(x_{i})=y_{i}$ and $t(x_{i'})=y_{i'}$.
\item Denote by $\Omega_{{\rm parts}}(N,S)$ the set of all parts $S'\in\Omega_{{\rm parts}}(N)$
such that $S'$ is distinct from $S$.
\end{itemize}
\item \emph{Parts in $t$:} The probability that $t$ maps the elements
as described by $S$ may be expressed as $\Pr[\land_{i=1}^{M}(t(x_{i})=y_{i})]=\Pr[S\subseteq{\rm parts}(t)]$
where ${\rm parts}(t):=\{(x,t(x))\}_{x=0}^{N-1}$. 
\item \emph{Conditioning $t$ based on parts:} Finally, use the notation
$t_{S}$ to denote the random variable $t$ conditioned on $S\subseteq{\rm parts}(t)$.
\end{itemize}
\branchcolor{black}{To clarify the notation, consider the following simple example.}
\end{notation}

\begin{example}
Let $N=2$. Then $\Omega_{{\rm parts}}(N)=\{\{(0,0)\},\{(1,1)\},\{(0,0),(1,1)\},\{(0,1)\},\{(1,0)\},\{(0,1),(1,0)\}\}$
and there are only two permutations, $t(x)=x$ and $t'(x)=x\oplus1$
for all $x\in\{0,1\}$. An example of a part $S$ is $S=\{(0,0)\}$.
A part (in fact the only part) distinct from $S$ is $(1,1)$, i.e.
$\Omega_{{\rm parts}}(N,S)=\{(1,1)\}$. 
\end{example}

\subsubsection{$\delta$ non-uniform distributions}

\branchcolor{black}{Using the ``parts'' notation, we define uniform distributions over
permutations and a notion of being $\delta$ non-uniform---distributions
which are at most $\delta$ ``far from'' being being uniform.\footnote{Clarification to a possible conflict in terms: We use the word uniform
in the sense of probabilities---a uniformly distributed random variable---and
not quite in the complexity theoretic sense---produced by some Turing
Machine without advice. }}
\begin{defn}[uniform and $\delta$ non-uniform distributions]
 \label{def:uniformDistr}Consider the set, $\Omega(N)$, of all
possible permutations of $N$ objects labelled $\{0,1,2\dots N-1\}$.
Let $\mathbb{F}$ be a distribution over $\Omega$. Call $\mathbb{F}$
a \emph{uniform distribution} if for $u\sim\mathbb{F}$, $\Pr[S\subseteq\parts(u)]=\frac{\left(N-M\right)!}{N!}$
for all parts $S\in\Omega_{\parts}(N)$.

An arbitrary distribution $\mathbb{F}^{\delta}$ over $\Omega$ is
$\delta$ non-uniform if it satisfies for $t\sim\mathbb{F}^{\delta}$
\[
\Pr[S\subseteq\parts(t)]\le2^{\delta\left|S\right|}\cdot\Pr[S\subseteq\parts(u)]
\]
for all parts $S\in\Omega_{\parts}(N)$.

Finally, $\mathbb{F}^{p,\delta}$ over $\Omega$ is $(p,\delta)$
non-uniform if there is a subset of parts $S$ of size $|S|\le p$
such that the distribution conditioned on $S$ specifying a part of
the permutation, becomes $\delta$ non-uniform over parts distinct
from $S$. Formally, let $t'\sim\mathbb{F}^{p,\delta}$. Then $t'$
is $(p,\delta)$ non-uniformly distributed if $t'_{S}$ is $\delta$
non-uniformly distributed over all $S'\in\Omega_{\parts}(N,S)$ (see
\Notaref{permPaths}), i.e. 
\begin{equation}
\Pr[S'\subseteq\parts(t')|S\subseteq\parts(t')]\le2^{\delta|S'|}\cdot\Pr[S'\subseteq\parts(u)|S\subseteq\parts(u)].\label{eq:p_delta_uniform}
\end{equation}
\end{defn}

\branchcolor{black}{In \Eqref{p_delta_uniform}, we are conditioning a uniform distribution
using the ``parts'' notation which may be confusing. The following
should serve as a clarification.}
\begin{note}
Let $u\sim\mathbb{F}$ as above. Then, we have $\Pr[S'\subseteq\parts(u)|S\subseteq\parts(u)]=\frac{\left(N-\left|S\right|-\left|S'\right|\right)!}{\left(N-\left|S\right|\right)!}$
where $S'\in\Omega_{\parts}(N,S)$ and $S\in\Omega_{\parts}(N)$.
Let $S=\{(x_{i},y_{i})\}_{i=1}^{|S|}$. Then, the conditioning essentially
specifies that the $|S|$ elements in $X=(x_{i})_{i=1}^{|S|}$ must
be mapped to $Y=(y_{i})_{i=1}^{|S|}$ by $u$, i.e. $u(x_{i})=y_{i}$,
but the remaining elements $\{0,1\dots N-1\}\backslash X$ are mapped
uniformly at random to $\{0,1\dots N-1\}\backslash Y$.
\end{note}

\branchcolor{black}{Clearly, for $\delta=0$, the $\delta$ non-uniform distribution becomes
a uniform distribution. However, this can be achieved by relaxing
the uniformity condition in many ways. The $\delta$ non-uniform distribution
is defined the way it is to have the following property. Notice that
$|S|$ appears in a form such that the product of two probabilities,
$\Pr[S_{1}\subseteq\parts(t)]$ and $\Pr[S_{2}\subseteq\parts(t)]$
yields $|S_{1}|+|S_{2}|$, e.g. $(1+\delta)^{|S|}$ instead of $2^{\delta|S|}$
would also have worked.\footnote{The former was chosen by \textcite{CCL2020} while the latter by \textcite{coretti_random_2017}
and possibly others.} This property plays a key role in establishing that in the main decomposition
(as described informally in \Subsecref{Sampling-argument-for}), the
number of ``paths'' (in the informal discussion it was bits) fixed
is small. We chose the prefactor $2^{|S|}$ for convenience---unlikely
events in our analysis are those which are exponentially suppressed,
and we therefore take the threshold parameter to be $\gamma=2^{-m}$.
These choices result in a simple relation between $|S|$ and $m$. }
\begin{notation}
\label{nota:DeltaUniformlyDistributed}To avoid double negation, we
use the phrase ``$t$ is more than $\delta$ non-uniform'' to mean
that $t$ is not $\delta$ non-uniform. Similarly, we use the phrase
``$t$ is at most $\delta$ non-uniform'' to mean that $t$ is $\delta$
non-uniform. 
\end{notation}

\branchcolor{black}{As shall become evident, the only property of a uniform distribution
we use in proving the main proposition of this section, is the following.
It not only holds for all distributions over permutations, but also
for $d$-Shuffler. We revisit this later.}
\begin{note}
Let $t$ be a permutation sampled from an arbitrary distribution $\mathbb{F}'$
over $\Omega(N)$. Let $S,S'\subseteq\Omega_{\parts}(N)$ be \emph{distinct}
parts (see \Notaref{permPaths}). Then, 
\[
\Pr[S\subseteq\parts(t)\land S'\subseteq\parts(t)]=\Pr[S\cup S'\subseteq\parts(t)].
\]
If the parts are not distinct, then both expressions vanish. 
\end{note}

\subsubsection{Advice on uniform yields $\delta$ non-uniform}

\branchcolor{black}{We are now ready to state and prove the simplest variant of the main
proposition of this section.}
\begin{prop}[$\mathbb{F}|r'\equiv{\rm conv}(\mathbb{F}^{p,\delta})$]
\label{prop:sumOfDeltaNonUni_perm} Premise: 
\begin{itemize}
\item Let $u\sim\mathbb{F}$ where $\mathbb{F}$ is a uniform distribution
over all permutations, $\Omega$, on $\{0,1\dots N-1\}$, as in \Defref{uniformDistr}
with $N=2^{n}$.
\item Let $r$ be a random variable which is arbitrarily correlated to $u$,
i.e. let $r=g(u)$ where $g$ is an arbitrary function.
\item Fix any $\delta>0$, $\gamma=2^{-m}>0$ ($m$ may be a function of
$n$) and some string $r'$.
\item Suppose 
\begin{equation}
\Pr[r=r']\ge\gamma.\label{eq:conditionOn_gamma_and_l}
\end{equation}
\item Let $t$ denote the variable $u$ conditioned on $r=r'$, i.e. let
$t=u|(g(u)=r')$.
\end{itemize}
Then, $t$ is ``$\gamma$-close'' to a convex combination of finitely
many $(p,\delta)$ non-uniform distributions, i.e. 
\[
t\equiv\sum_{i}\alpha_{i}t_{i}+\gamma't'
\]
 where $t_{i}\sim\mathbb{F}_{i}^{p,\delta}$ and $\mathbb{F}_{i}^{p,\delta}$
is $(p,\delta)$ non-uniform with $p=\frac{2m}{\delta}$. The permutation
$t'$ is sampled from an arbitrary (but normalised) distribution over
$\Omega$ and $\gamma'\le\gamma$.
\end{prop}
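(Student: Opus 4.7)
The plan is to produce the decomposition by an iterative, tree-based construction. The root represents $t = u \mid E$ carrying an empty ``positive-part'' set $C_0 = \emptyset$ and conditioning event $\mathcal{E}_0 = E$. At each internal node $n$ I form $v_n := u \mid \mathcal{E}_n$ and test whether $v_n$ is $\delta$ non-uniform over $\Omega_{\parts}(N, C_n)$. If it is, declare $n$ a \emph{good leaf}: this is exactly the statement that $v_n$ is $(|C_n|, \delta)$ non-uniform with distinguished subset $C_n$, so $v_n$ enters the decomposition as one of the $t_i$. Otherwise, pick a part $S^*_n \in \Omega_{\parts}(N, C_n)$ witnessing the violation and branch into two children --- a \emph{positive} child carrying $\mathcal{E}_n \wedge (S^*_n \subseteq \parts(u))$ and augmented positive set $C_n \cup S^*_n$, and a \emph{negative} child carrying $\mathcal{E}_n \wedge (S^*_n \not\subseteq \parts(u))$ with unchanged $C_n$. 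The total-probability identity $v_n \equiv \alpha_n v_n^+ + (1-\alpha_n) v_n^-$ with $\alpha_n = \Pr[S^*_n \subseteq \parts(v_n)]$ unfolds recursively to give $t \equiv \sum_\ell w_\ell v_\ell$ summed over leaves.

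The key quantitative control comes from tracking the ratio $\rho_n := \Pr[\mathcal{E}_n] / \Pr[C_n \subseteq \parts(u)]$ along each root-to-leaf path. At the root, $\rho_0 = \Pr[E] \geq \gamma = 2^{-m}$. A direct conditional-probability computation gives $\rho_{n^+} = \rho_n \cdot \Pr[S^*_n \subseteq \parts(v_n)] / \Pr[S^*_n \subseteq \parts(u_{C_n})]$, and the violation inequality upgrades this to $\rho_{n^+} > 2^{\delta |S^*_n|} \rho_n$; negative branches trivially satisfy $\rho_{n^-} \leq \rho_n$. Combined with the fact that $\rho_n \leq 1$ at every node, summing multiplicative gains along positive branches controls $|C_n|$: the total positive growth along any path saturates at roughly $m/\delta$, and a single final positive branch starting from a node where the ratio has been shrunk by preceding negative branches costs at most an additional $m/\delta$, so $|C_\ell| \leq 2m/\delta = p$ at every good leaf. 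This is precisely the origin of the factor of two in $p$.

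For the residual $\gamma' t'$ term, termination of the tree follows from the finiteness of the refinement poset of $\mathcal{E}$; I would declare any leaf where the bookkeeping would force $|C_\ell| > p$ to be \emph{bad} and absorb its contribution into $\gamma' t'$. The mass of bad leaves is bounded by the observation that such leaves can only be reached via paths along which the negative-branch weights $\prod_j (1 - \alpha_j)$ fall below $\gamma$, so a Markov-type bound yields $\gamma' \leq \gamma$. The main technical obstacle is precisely this last accounting: negative branches leave $|C|$ unchanged but shrink $\rho$ arbitrarily, so $\rho$ is non-monotone along paths and the naive argument ``$\rho_n \leq 1$ forces $|C_n| \leq m/\delta$'' must be supplemented by the observation that small negative-factor paths carry proportionally small probability. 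This exchange --- trading a worse bound on $|C|$ for a small-probability ``bad event'' --- is what simultaneously produces the factor of two in $p = 2m/\delta$ and allows $\gamma'$ to be controlled by $\gamma$.
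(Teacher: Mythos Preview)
Your ratio bookkeeping is correct: with $\rho_n=\Pr[\mathcal E_n]/\Pr[C_n\subseteq\parts(u)]$ one indeed gets $\rho_{n^+}>2^{\delta|S^*_n|}\rho_n$ at positive branches and $\rho_{n^-}=(1-\alpha_n)\rho_n$ at negative ones, so $\rho_\ell\le 1$ forces $|C_\ell|<2m/\delta$ whenever the negative-branch product along the path to $\ell$ is at least $\gamma$. That is the right source of the factor~$2$.

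The gap is the sentence ``a Markov-type bound yields $\gamma'\le\gamma$.'' Your observation that small negative-factor paths carry small probability only tells you that \emph{each} bad leaf has weight $w_\ell=(\text{pos-product})\cdot(\text{neg-product})<\gamma$; it says nothing about the \emph{sum} over bad leaves. Along the random walk down your tree the inverse negative-product is a sub\-martingale (its conditional expectation multiplies by $1+\alpha_n$ at every step), so no Ville/Doob-type maximal inequality is available in the direction you need. Without further structure the bad mass can accumulate well beyond~$\gamma$, and nothing in your sketch rules this out.

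The paper closes this with one idea you are missing: at each node pick the \emph{maximal} violating part $S^*_n$. Then the positive child is automatically a good leaf --- if some $S'\in\Omega_{\parts}(N,C_n\cup S^*_n)$ still violated there, $S^*_n\cup S'$ would violate at the parent, contradicting maximality. Your binary tree therefore collapses to a right spine: one peels off a single $(p,\delta)$ piece $t_i$ and continues with the negative remainder $t'_i$, whose weight $\alpha'_i$ is a \emph{single} monotonically decreasing number. One stops the first time $\alpha'_i\le\gamma$, and that remainder is the whole of $\gamma' t'$; no summation over bad leaves is needed. A side effect is that the fixed set attached to each $t_i$ is just the fresh $S_{i-1}$ found at that step, not the cumulative union your $C_\ell$ records --- with maximality there is only one positive branch per path, so the distinction disappears.
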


\branchcolor{black}{\begin{proof}
Suppose that $t$ is more than $\delta$ non-uniformly distributed
(see \Defref{uniformDistr} and \Notaref{DeltaUniformlyDistributed}),
otherwise then there is nothing to prove (set $\alpha_{1}$ to $1$,
and $t_{i}$ to $t$, remaining $\alpha_{i}$s and $\gamma'$ to zero).
Recall $\Omega_{\parts}(N)$ is the set of all parts (see \Notaref{permPaths}).
Let the subset $S\in\Omega_{\parts}(N)$ be the maximal subset of
parts (i.e. subset with the largest size) such that 
\begin{equation}
\Pr[S\subseteq\parts(t)]>2^{\delta\cdot|S|}\cdot\Pr[S\subseteq\parts(u)].\label{eq:S_not-delta-non-uniform}
\end{equation}
\begin{claim}
\label{claim:easyConditionDeltaUniform}Let $S$ and $t$ be as described
above. The random variable $t$ conditioned on being consistent with
the paths in $S\in\Omega_{\parts}(N)$, i.e. $t_{S}$, is $\delta$
non-uniformly distributed over $S'\subseteq\Omega_{\parts}(N,S)$,
is $\delta$ non-uniformly distributed.
\end{claim}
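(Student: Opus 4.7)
My plan is to exploit the maximality of $S$ directly. By hypothesis, $S$ is the \emph{largest} part in $\Omega_{\parts}(N)$ for which the inequality in \eqref{eq:S_not-delta-non-uniform} is violated. So for any $S' \in \Omega_{\parts}(N, S)$ with $|S'| \ge 1$, the union $S \cup S'$ is a valid part (by distinctness, the combined mapping is consistent with some permutation) of strictly larger size $|S| + |S'|$, hence $S \cup S'$ cannot itself violate the inequality. This will give me the upper bound I need on $\Pr[S \cup S' \subseteq \parts(t)]$.

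The actual computation is then a one-line manipulation. I would write
\[
\Pr[S' \subseteq \parts(t) \mid S \subseteq \parts(t)] \;=\; \frac{\Pr[S \cup S' \subseteq \parts(t)]}{\Pr[S \subseteq \parts(t)]},
\]
using the noted observation $\Pr[S \subseteq \parts(t) \wedge S' \subseteq \parts(t)] = \Pr[S \cup S' \subseteq \parts(t)]$ (which holds since $S$ and $S'$ are distinct). Then I bound the numerator from above by $2^{\delta(|S|+|S'|)} \Pr[S \cup S' \subseteq \parts(u)]$ by maximality of $S$, and bound the denominator from below by $2^{\delta |S|} \Pr[S \subseteq \parts(u)]$ using the defining property \eqref{eq:S_not-delta-non-uniform} of $S$. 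The $2^{\delta|S|}$ factors cancel, leaving
\[
\Pr[S' \subseteq \parts(t) \mid S \subseteq \parts(t)] \;\le\; 2^{\delta |S'|} \cdot \frac{\Pr[S \cup S' \subseteq \parts(u)]}{\Pr[S \subseteq \parts(u)]} \;=\; 2^{\delta |S'|} \cdot \Pr[S' \subseteq \parts(u) \mid S \subseteq \parts(u)],
\]
which is exactly the $\delta$ non-uniformity condition for $t_S$.

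There is essentially no obstacle here; the content of the claim is really just packaging the maximality condition in the right form. Two small things to handle carefully: first, one must verify that $S \cup S'$ is indeed an element of $\Omega_{\parts}(N)$ (this is where the distinctness hypothesis $S' \in \Omega_{\parts}(N,S)$ is used), and second, one should treat the edge case $|S| = 0$, where $S$ is the empty part: then the hypothesis \eqref{eq:S_not-delta-non-uniform} still makes sense with both sides equal to $1$, but in that case $t$ itself is more than $\delta$ non-uniform, so picking $S = \emptyset$ actually requires reinterpreting ``maximal''. I would either adopt the convention that a non-trivial $S$ exists (guaranteed by the assumption that $t$ is more than $\delta$ non-uniform, which forces some violating $S$ to exist) and work with a maximum over a nonempty family, or remark that the $|S| = 0$ case reduces to the unconditional statement that $t$ is $\delta$ non-uniform, contradicting our standing assumption and so not arising. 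The rest is straightforward.
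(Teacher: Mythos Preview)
Your proposal is correct and is essentially the same argument as the paper's, just organised as a direct proof (divide the maximality bound on $\Pr[S\cup S'\subseteq\parts(t)]$ by the defining lower bound on $\Pr[S\subseteq\parts(t)]$) rather than by contradiction (multiply the two violating inequalities to contradict maximality). The key step---that $S\cup S'\in\Omega_{\parts}(N)$ with $|S\cup S'|=|S|+|S'|$ by distinctness, so maximality of $S$ forces the $\delta$-bound to hold for the union---is identical in both.
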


We prove \Claimref{easyConditionDeltaUniform} by contradiction. Suppose
that $t_{S}$ is ``more than'' $\delta$ non-uniform. Then, there
exists some $S'\in\Omega_{\parts}(N,S)$ such that 
\begin{align}
\Pr[S'\subseteq\parts(t_{S})] & =\Pr[S'\subseteq\parts(t)|S\subseteq\parts(t)]>2^{\delta\cdot|S'|}\cdot\Pr[S'\subseteq\parts(u)|S\subseteq\parts(u)].\label{eq:_S'_not_delta}
\end{align}
Since $S'$ violates the $\delta$ non-uniformity condition for $t_{S}$,
the idea is to see if the union $S\cup S'$ violates the $\delta$
non-uniformity condition for $t$. If it does, we have a contradiction
because $S$ was by assumption the maximal subset satisfying this
property. Indeed,
\begin{align*}
\Pr[S\cup S'\subseteq\parts(t)] & =\Pr[S\subseteq\parts(t)\land S'\subseteq\parts(t)] & \text{\ensuremath{\because} \ensuremath{S} and \ensuremath{S'} are distinct}\\
 & =\Pr[S\subseteq\parts(t)]\Pr[S'\subseteq\parts(t)|S\subseteq\parts(t)] & \text{conditional probability}\\
 & >2^{\delta\cdot(|S|+|S'|)}\cdot\Pr[S\subseteq\parts(u)]\Pr[S'\subseteq\parts(u)|S\subseteq\parts(u)] & \text{\text{using \prettyref{eq:S_not-delta-non-uniform} and \prettyref{eq:_S'_not_delta}}}\\
 & =2^{\delta\cdot|S\cup S'|}\cdot\Pr[S\cup S'\subseteq\parts(u)] & \text{\ensuremath{\because} \ensuremath{S} and \ensuremath{S'} are disjoint}
\end{align*}
which completes the proof. 

\Claimref{easyConditionDeltaUniform} shows how to construct a $\delta$
non-uniform distribution after conditioning but we must also bound
$|S|$. This is related to how likely is the $r'$ we are conditioning
upon, i.e. the probability of $g(u)$ being $r'$. 
\begin{claim}
\label{claim:sizeS}One has 
\[
\left|S\right|<\frac{m}{\delta}.
\]
\end{claim}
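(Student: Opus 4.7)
The plan is to combine two elementary bounds on $\Pr[S\subseteq\parts(t)]$ to pin down $|S|$. The lower bound is immediate: by the very choice of $S$ in the previous claim, we have $\Pr[S\subseteq\parts(t)]>2^{\delta|S|}\cdot\Pr[S\subseteq\parts(u)]$. It remains to find a matching upper bound of the form $2^{m}\cdot\Pr[S\subseteq\parts(u)]$, after which comparing the two inequalities and cancelling $\Pr[S\subseteq\parts(u)]$ yields $2^{\delta|S|}<2^{m}$, i.e.\ $|S|<m/\delta$, as claimed.

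For the upper bound, the key observation is that $t$ is by definition $u$ conditioned on $g(u)=r'$, so $\Pr[S\subseteq\parts(t)]=\Pr[S\subseteq\parts(u)\mid g(u)=r']$. Applying Bayes' rule gives
\[
\Pr[S\subseteq\parts(t)]=\frac{\Pr[g(u)=r'\mid S\subseteq\parts(u)]\cdot\Pr[S\subseteq\parts(u)]}{\Pr[g(u)=r']}.
\]
Bounding the conditional probability in the numerator trivially by $1$ and invoking the hypothesis $\Pr[r=r']\ge\gamma=2^{-m}$ in the denominator then yields
\[
\Pr[S\subseteq\parts(t)]\le\frac{\Pr[S\subseteq\parts(u)]}{\gamma}=2^{m}\cdot\Pr[S\subseteq\parts(u)],
\]
which is exactly the upper bound needed.

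I do not anticipate any real obstacle here; the claim is essentially a one-line consequence of Bayes' rule together with the lower-bound hypothesis on $\gamma$. The only point worth checking carefully is that the $>$ in the defining inequality for $S$ is strict (so that combining with $\le$ gives the strict inequality $|S|<m/\delta$ rather than $\le$), but this strictness is inherited directly from the maximality stipulation used to single out $S$ in the preceding claim. It is also worth noting, in view of the remark after \Defref{uniformDistr}, that nothing in this argument actually uses the uniformity of $u$ or the permutation structure — only the elementary Bayes manipulation and the $\gamma$ hypothesis — which is why the same reasoning will transfer cleanly when the random permutation $u$ is later replaced by a $d$-Shuffler in \Secref{ShuffledCollisionsToSimons}.
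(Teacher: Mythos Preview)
Your proof is correct and follows essentially the same route as the paper: combine the defining strict lower bound $\Pr[S\subseteq\parts(t)]>2^{\delta|S|}\Pr[S\subseteq\parts(u)]$ with the upper bound $\Pr[S\subseteq\parts(t)]\le\Pr[S\subseteq\parts(u)]/\gamma$ obtained by unwinding the conditioning and using $\Pr[g(u)=r']\ge\gamma$. The only cosmetic difference is that you phrase the upper bound via Bayes' rule (bounding $\Pr[g(u)=r'\mid S\subseteq\parts(u)]\le1$) whereas the paper writes it as $\Pr[A\wedge B]/\Pr[B]\le\Pr[A]/\Pr[B]$; these are the same step.
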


While \Eqref{S_not-delta-non-uniform} lower bounds $\Pr[S\subseteq\parts(t)]$,
the upper bound is given by
\begin{align}
\Pr[S\subseteq\parts(t)] & =\Pr[S\subseteq\parts(u)|(g(u)=r')]\nonumber \\
 & =\Pr[S\subseteq\parts(u)\land g(u)=r']/\Pr[g(u)=r']\nonumber \\
 & \le\Pr[S\subseteq\parts(u)\land g(u)=r']\cdot\gamma^{-1}\nonumber \\
 & \le\Pr[S\subseteq\parts(u)]\cdot\gamma^{-1}.\label{eq:S_upperBound}
\end{align}
Combining these, we have $2^{\delta\cdot|S|}<2^{m}$, i.e., $|S|<\frac{m}{\delta}$.

Using Bayes rule on the event that $S\subseteq\parts(t)$ we conclude
that 
\[
t\equiv\alpha_{1}t_{1}+\alpha'_{1}t'_{1}
\]
 where $\alpha_{1}=\Pr[S\subseteq\parts(t)]$, $t_{1}=t_{S}$, i.e.
$t$ conditioned on $S\subseteq\parts(t)$, $\alpha'_{1}=1-\alpha_{1}$
and $t'_{1}$ is $t$ conditioned on $S\nsubseteq\parts(t)$. Further,
while $t_{1}$ is $(p,\delta)$ non-uniform (from \Claimref{easyConditionDeltaUniform}
and \Claimref{sizeS}), $t_{1}'$ may not be. Proceeding as we did
for $t$, if $t_{1}'$ is itself $\delta$ non-uniform, there is nothing
left to prove (we set $\alpha_{2}=\alpha_{1}'$ and $t_{2}=t_{1}'$
and the remaining $\alpha_{i}$s and $\gamma'$ to zero). Also assume
that $\alpha_{1}'>\gamma$ because otherwise, again, there is nothing
to prove. 

Therefore, suppose that $t'_{1}$ is not $\delta$ non-uniform. Note
that the proof of \Claimref{easyConditionDeltaUniform} goes through
for any permutation which is not $\delta$ non-uniform. Thus, the
claim also applies to $t_{1}'$ where we denote the maximal set of
parts by $S_{1}$. Let $t_{2}$ be $t_{1}'$ conditioned on $S_{1}\subseteq\parts(t_{1}')$
and $t_{2}'$ be $t_{1}'$ conditioned on $S_{1}\nsubseteq\parts(t_{1}')$.
Using Bayes rule as before, we have 
\[
t\equiv\alpha_{1}t_{1}+\alpha_{2}t_{2}+\alpha_{2}'t_{2}'.
\]
Adapting the statement of \Claimref{easyConditionDeltaUniform} (with
$t_{1}'$ playing the role of $t$ and $S_{1}$ playing the role of
$S$) to this case, we conclude that $t_{2}$ is $\delta$ non-uniform
but we still need to show that $|S_{1}|\le p$. We need the analogue
of \Claimref{sizeS} which we assert is essentially unchanged.
\begin{claim}
\label{claim:S_k_bound_general}One has
\begin{equation}
\left|S_{i}\right|<\frac{2m}{\delta}.\label{eq:BoundS1}
\end{equation}
\end{claim}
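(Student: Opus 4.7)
The plan is to repeat the proof of the earlier bound on $|S|$ almost verbatim, but carry along the extra conditioning event that defines $t_i'$ and pay a single additional factor of $\gamma^{-1}$ for the probability of that event. The induction sets up exactly as in the paper: at step $i$ we have $t_i' = t\mid E_i$ where $E_i := \bigwedge_{j<i}(S_j\not\subseteq\parts(t))$, and by the inductive construction we may assume $\alpha_i' := \Pr[E_i] > \gamma$ (otherwise, as the preceding discussion notes, we can absorb the remainder into the $\gamma' t'$ term and are done). By the definition of $S_i$ as the maximal part witnessing that $t_i'$ fails to be $\delta$ non-uniform, we have the lower bound
\[
\Pr[S_i\subseteq \parts(t_i')] \;>\; 2^{\delta|S_i|}\cdot\Pr[S_i\subseteq \parts(u)].
\]

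For the matching upper bound, I would combine two observations. First, unfolding the conditioning on $E_i$,
\[
\Pr[S_i\subseteq\parts(t_i')] \;=\; \frac{\Pr[S_i\subseteq\parts(t)\wedge E_i]}{\Pr[E_i]} \;\le\; \frac{\Pr[S_i\subseteq\parts(t)]}{\alpha_i'}.
\]
Second, by the same Bayes-rule calculation used in the bound on $|S|$ (see \Eqref{S_upperBound}),
\[
\Pr[S_i\subseteq\parts(t)] \;=\; \Pr[S_i\subseteq\parts(u)\mid g(u)=r'] \;\le\; \Pr[S_i\subseteq\parts(u)]\cdot\gamma^{-1}.
\]
Using $\alpha_i' > \gamma$ then yields
\[
\Pr[S_i\subseteq\parts(t_i')] \;\le\; \Pr[S_i\subseteq\parts(u)]\cdot\gamma^{-2}.
\]

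Chaining the two bounds and cancelling the common factor $\Pr[S_i\subseteq\parts(u)]$ gives $2^{\delta|S_i|} < \gamma^{-2} = 2^{2m}$, so $|S_i| < 2m/\delta$ as desired. Note that the bound does not worsen with $i$: each new conditioning loses only the single factor $\gamma^{-1}$ coming from $1/\alpha_i'$, while the other factor $\gamma^{-1}$ already comes from the original conditioning on $g(u)=r'$, so no matter how many iterations we have performed the total loss relative to $\Pr[S_i\subseteq\parts(u)]$ stays at $\gamma^{-2}$. The only delicate point, and what I'd treat carefully in the writeup, is justifying the ``otherwise there is nothing to prove'' clause — i.e.\ formally stopping the recursion when $\alpha_i' \le \gamma$ and collecting the residual mass into the $\gamma' t'$ term of the decomposition — so that we may always invoke the assumption $\alpha_i' > \gamma$ when the bound on $|S_i|$ is needed.
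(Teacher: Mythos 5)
Your proposal is correct and follows essentially the same route as the paper's deferred proof: both pair the maximality lower bound $\Pr[S_i\subseteq\parts(t_i')]>2^{\delta|S_i|}\Pr[S_i\subseteq\parts(u)]$ with an upper bound of $\Pr[S_i\subseteq\parts(u)]\cdot\gamma^{-2}$, where one factor of $\gamma^{-1}$ comes from $\Pr[g(u)=r']\ge\gamma$ and the other from $\alpha_i'>\gamma$ (the paper performs the two steps in a single Bayes computation, you split them, but the content is identical). Your closing remark about formally terminating the recursion when $\alpha_i'\le\gamma$ matches the paper's ``else there is nothing to prove'' caveat.
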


The proof is deferred to \Subsecref{tech_res_non-uniform}. The factor
of two appears because for the general case, we use both $\alpha_{i}'>\gamma$
and $\Pr[g(u)=r']>\gamma$. One can iterate the argument above. Suppose
\begin{equation}
t\equiv\alpha_{1}t_{1}+\dots\alpha_{j}t_{j}+\alpha_{j}'t'_{j}\label{eq:generalSumT}
\end{equation}
 where $t_{1},\dots t_{j}$ are $(p,\delta)$ non uniformly distributed
while $t'_{j}$ is not and $\alpha_{j}':=\Pr[S\nsubseteq\parts(t)\land\dots\land S_{j-1}\nsubseteq\parts(t)]>\gamma$
(else one need not iterate). Let $S_{j}$ be the maximal set such
that $t_{j+1}:=t'_{j}|S_{j}\subseteq\parts(t_{j}')$ is $\delta$
non-uniform (which must exist from \Claimref{easyConditionDeltaUniform})
and let $t_{j+1}':=t_{j}'|S_{j}\nsubseteq\parts(t_{j}')$. Let $\alpha'_{j+1}:=\Pr[S_{j}\nsubseteq\parts(t'_{j})]$
which equals $\Pr[S\nsubseteq\parts(t)\land\dots\land S_{j}\nsubseteq\parts(t)]$.
From \Claimref{S_k_bound_general}, $\left|S_{j}\right|<2m/\delta\le p$
therefore $t_{j+1}$ is $(p,\delta)$ non-uniform. 

We now argue that the sum in \Claimref{S_k_bound_general} contains
finitely many terms. At every iteration, $\alpha'_{i}$ strictly decreases
because at each step, more constraints are added; $S_{i}\neq S_{j}$
for all $i\neq j$ (otherwise conditioning on $S_{j}$ (if $j\ge i$)
as in \Claimref{easyConditionDeltaUniform} could not have any effect).
Since $\Omega_{\parts}(N)$ is finite, the decreasing sequence $\alpha'_{1}\dots\alpha'_{i}$
must, for some integer $i$, satisfy $\alpha_{i}\le\gamma$ after
finitely many iterations.
\end{proof}
}

\subsubsection{Iterating advice and conditioning on uniform distributions | $\delta$
non-$\beta$-uniform distributions}

\branchcolor{black}{Once generalised to the $d$-Shuffler (which, as we shall, see is
surprisingly simple), recall that the way we intend to use the above
result is to repeatedly get advice from a quantum circuit, a role
played by $g$ in the previous discussion. However, the way it is
currently stated, one starts with a uniformly distributed permutation
$u$ for which some advice $g(u)$ is given but one ends up with $(p,\delta)$
non-uniform distributions. We want the result to apply even when we
start with a $(p,\delta)$ non-uniform distribution. 

As should become evident shortly, the right generalisation of \Propref{sumOfDeltaNonUni_perm}
for our purposes is as follows. Assume that the advice being conditioned
occurs with probability at least $\gamma=2^{-m}$ and think of $m$
as being polynomial in $n$; $\delta>0$ is some constant and $p=2m/\delta$. 
\begin{itemize}
\item Step 1: Let $t\sim\mathbb{F}^{\delta'}$ be $\delta'$ non-uniform\footnote{Notation: When I say $t$ is $\delta$ non-uniform, it is implied
that $t$ is sampled from a $\delta$ non-uniform distribution.} and $s\sim\mathbb{F}^{\delta'}|r$ be $t|(g(t)=r)$. Then it is straightforward
to show that $s\equiv\sum_{i}\alpha_{i}s_{i}$ where $s_{i}$ are
$(p,\delta+\delta')$ non-uniform, which we succinctly write as 
\[
\mathbb{F}^{\delta'}|r\equiv{\rm conv}(\mathbb{F}^{p,\delta+\delta'}).
\]
\end{itemize}
Observation: If $t\sim\mathbb{F}^{p,\delta}$ is $(p,\delta)$ non-uniform,
then there is some $S$ of size at most $p$ such that $t\sim\mathbb{F}^{\delta|\beta}$
is $\delta$ non-$\beta$-uniform where\footnote{The conditioning is in superscript because it is non-standard; standard
would be $S\subseteq\parts(t)$ which is too long.} $\beta:=(S)$. A $\beta$-uniform distribution is simply a uniform
distribution conditioned on having $S$ as parts. This amounts to
basically making the conditioning explicit. Having this control will
be of benefit later.
\begin{itemize}
\item Step 2: It is not hard to show that Step 1 goes through unchanged
if non-uniform is replaced with non-$\beta$-uniform for an arbitrary
$\beta$.
\end{itemize}
These combine to yield the following. Let $t\sim\mathbb{F}^{\delta'|\beta}$
be a $\delta'$ non-$\beta$-uniform distribution and $s\sim\mathbb{F}^{\delta'|\beta}|r$
be $t|(g(t)=r)$. Then $t\equiv\sum_{i}\alpha_{i}s_{i}$ where $s_{i}\sim\mathbb{F}^{p,\delta+\delta'|\beta}$
are $(p,\delta+\delta')$ non-$\beta$-uniform,\footnote{The last term with $\alpha_{k}<\gamma$ is suppressed for clarity
in this informal discussion.} which we briefly express as 
\[
\mathbb{F}^{\delta'|\beta}|r\equiv{\rm conv}(\mathbb{F}^{p,\delta+\delta'|\beta}).
\]
 Observe that this composes well, 
\begin{equation}
\mathbb{F}^{p,\delta+\delta'|\beta}|r\equiv{\rm conv}(\mathbb{F}^{2p,2\delta+\delta'|\beta}).\label{eq:comp}
\end{equation}
To see this, consider the following:
\begin{itemize}
\item For some $S_{i}$, $s_{i}$ (as defined in the statement above) is
$\delta'':=\delta+\delta'$ non-$\beta'$-uniform where $\beta':=(S\cup S_{i})$
if $\beta=(S)$. 
\item With $t$ set to $s_{i}$, $\beta$ set to $\beta'$, one can apply
the above to get $s_{i}|(h(s_{i})=r')\equiv\sum_{i}\alpha'_{i}q_{i}$
where $q_{i}$ are $(p,\delta+\delta'')$ non-$\beta'$-uniform. 
\item Note that $q_{i}$ are also $(2p,2\delta+\delta')$ non-$\beta$-uniform;
which we succinctly denoted as $\mathbb{F}^{2p,2\delta+\delta'|\beta}$.
\end{itemize}
Clearly, if this procedure is repeated $\tilde{n}\le\poly$ times,
starting from $\delta=0$ and $\beta=(\emptyset)$, then the final
convex combination would be over $\mathbb{F}^{\tilde{n}p,\tilde{n}\delta}$.
As we shall see, for our use, it suffices to ensure that $\tilde{n}\delta$
is a small constant and that $\tilde{n}p=\frac{\tilde{n}m}{\delta}\le\poly$.
Choosing $\delta=\Delta/\tilde{n}$ for some small fixed $\Delta>0$
yields $\tilde{n}\delta=\Delta$ and $\tilde{n}p=\frac{\tilde{n}^{2}m}{\Delta}$
which is indeed bounded by $\poly$ (recall $m$ and $\tilde{n}$
are bounded by $\poly$).}

\branchcolor{black}{One can define a notion of closeness to any arbitrary distribution,
as we did for closeness to uniform. To this end, first consider the
following.}
\begin{defn}[$\delta$ non-$\mathbb{G}$ distributions---$\mathbb{G}^{\delta}$]
 \label{def:nonG}Let $s$ be sampled from an arbitrary distribution,
$\mathbb{G}$, over the set of all permutations $\Omega(N)$ of $N$
objects and fix any $\delta>0$. 

Then, a distribution $\mathbb{G}^{\delta}$ is\emph{ $\delta$ non-$\mathbb{G}$}
if $s'\sim\mathbb{G}^{\delta}$ satisfies 
\[
\Pr[S\subseteq\parts(s')]\le2^{\delta|S|}\cdot\Pr[S\subseteq\parts(s)]
\]
 for all $S\in\Omega_{\parts}(N)$. 

Similarly, a distribution $\mathbb{G}^{p,\delta}$ is $(p,\delta)$
non-$\mathbb{G}$ if there is a subset $S'\in\Omega_{\parts}(N)$
of size at most $\left|S'\right|\le p$ such that conditioned on $S'\subseteq\parts(s)$,
$s''\sim\mathbb{G}^{p,\delta}$ satisfies 
\[
\Pr[S\subseteq\parts(s'')|S'\subseteq\parts(s'')]\le2^{\delta|S'|}\cdot[S\subseteq\parts(s)|S'\subseteq\parts(s)]
\]
 for all $S\in\Omega_{\parts}(N,S')$, i.e. conditioned on $S'$ is
a part of both $s$ and $s''$, $s''$ is $\delta$ non-$\mathbb{G}$.
\end{defn}

\branchcolor{black}{We now define $\beta$-uniform as motivated above and using the previous
definition, define $\delta$ non-$\beta$-uniform.}
\begin{defn}[$\beta$-uniform and $\delta$ non-$\beta$-uniform distributions---$\mathbb{F}^{|\beta}$
and $\mathbb{F}^{\delta|\beta}$]
\label{def:beta-uniform}Let $u\sim\mathbb{F}(N)$ be sampled from
a uniform distribution over all permutations, $\Omega(N)$, of $\{0,1\dots N-1\}$
as in \Notaref{DeltaUniformlyDistributed}. A permutation $s\sim\mathbb{F}^{|\beta}(N)$
sampled from a $\beta$-uniform distribution is $s=u|(S\subseteq\parts(u))$
where\footnote{As alluded to earlier, we define $\beta$ to be a redundant-looking
``one-tuple'' $(S)$ here but this is because later when we generalise
to $d$-Shufflers, we set $\beta=(S,T)$ where $T$ encodes paths
not in $u$.} $\beta=:(S)$ and $S\in\Omega_{\parts}(N)$. 

A distribution $\mathbb{F}^{\delta|\beta}$ is $\delta$ non-$\beta$-uniform
if it is $\delta$ non-$\mathbb{G}$ with $\mathbb{G}$ set to a $\beta$-uniform
distribution (see \Defref{nonG}, above). Similarly, a distribution
$\mathbb{F}^{p,\delta|\beta}$ is $(p,\delta)$ non-$\beta$-uniform
if it is $(p,\delta)$ non-$\mathbb{G}$ with $\mathbb{G}$, again,
set to a $\beta$-uniform distribution.
\end{defn}

\branchcolor{black}{We now state the general version of \Propref{sumOfDeltaNonUni_perm}.}
\begin{prop}[$\mathbb{F}^{\delta'|\beta}|r'={\rm conv}(\mathbb{F}^{(p,\delta+\delta')|\beta})$]
\label{prop:composableP_Delta_non_beta_uniform}Let $t\sim\mathbb{F}^{\delta'|\beta}(N)$
be sampled from a $\delta'$ non-$\beta$-uniform distribution with
$N=2^{n}$. Fix any $\delta>0$ and let $\gamma=2^{-m}$ be some function
of $n$. Let $s\sim\mathbb{F}^{\delta'|\beta}|r$, i.e. $s=t|(h(t)=r)$
and suppose $\Pr[h(t)=r]\ge\gamma$ where $h$ is an arbitrary function
and $r$ some string in its range. Then $s$ is ``$\gamma$-close''
to a convex combination of finitely many $(p,\delta+\delta')$ non-$\beta$-uniform
distributions, i.e. 
\[
s\equiv\sum_{i}\alpha_{i}s_{i}+\gamma's'
\]
where $s_{i}\sim\mathbb{F}_{i}^{p,\delta+\delta'|\beta}$ with $p=2m/\delta$.
The permutation $s'$ may have an arbitrary distribution (over $\Omega(2^{n})$)
but $\gamma'\le\gamma$.
\end{prop}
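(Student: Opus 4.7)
The plan is to follow the template of \Propref{sumOfDeltaNonUni_perm} almost verbatim, with two cosmetic but important changes: the baseline ``uniform'' distribution is replaced everywhere by a $\beta$-uniform distribution, and the $\delta$ that appears in the non-uniformity exponent is replaced by $\delta+\delta'$, with the extra $\delta'$ absorbing the non-$\beta$-uniformity already present in $t$. Concretely, I would assume (otherwise there is nothing to show) that $s$ is not $(\delta+\delta')$ non-$\beta$-uniform, pick $S\in\Omega_{\parts}(N)$ of maximal size with
\[
\Pr[S\subseteq\parts(s)]\;>\;2^{(\delta+\delta')|S|}\cdot\Pr[S\subseteq\parts(u_{\beta})],
\]
where $u_{\beta}$ denotes a draw from the $\beta$-uniform distribution, and then decompose $s\equiv\alpha_1 s_1+\alpha_1'\,s_1'$ via Bayes' rule, with $s_1:=s\,|\,(S\subseteq\parts(s))$ and $s_1':=s\,|\,(S\nsubseteq\parts(s))$.

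The two key claims I need to carry through are direct analogues of \Claimref{easyConditionDeltaUniform} and \Claimref{sizeS}. For the first, I would argue that $s_1$ is $(\delta+\delta')$ non-$\beta'$-uniform with $\beta'$ encoding the enlarged conditioning by $S$: if $S'\in\Omega_{\parts}(N,S\cup\beta)$ violated the bound, then the union $S\cup S'$ (which is a legitimate ``part'' because $S,S'$ are distinct) would violate maximality of $S$ for $s$, exactly as in the uniform case, since conditional probabilities of distinct parts multiply under the $\beta$-uniform baseline too. By \Defref{beta-uniform}, this makes $s_1$ exactly $(|S|,\delta+\delta')$ non-$\beta$-uniform. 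For the size bound, I combine the lower bound on $\Pr[S\subseteq\parts(s)]$ from maximality with the upper bound
\[
\Pr[S\subseteq\parts(s)]\;\le\;\gamma^{-1}\,\Pr[S\subseteq\parts(t)]\;\le\;2^{m}\cdot 2^{\delta'|S|}\,\Pr[S\subseteq\parts(u_{\beta})],
\]
where the first inequality uses $\Pr[h(t)=r]\ge\gamma$ and the second uses the assumed $\delta'$ non-$\beta$-uniformity of $t$. Cancelling the baseline probability on both sides and the factor $2^{\delta'|S|}$ gives $2^{\delta|S|}<2^m$, i.e.\ $|S|<m/\delta$, which is even stronger than needed.

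I would then iterate on $s_1'$: as long as $\alpha_j':=\Pr[S\nsubseteq\parts(s)\wedge\cdots\wedge S_{j-1}\nsubseteq\parts(s)]>\gamma$, the same two claims apply to $s_j':=s\,|\,(S\nsubseteq\parts(s)\wedge\cdots\wedge S_{j-1}\nsubseteq\parts(s))$, producing a new maximal part $S_j$. The only change is that in the size-bound step one now upper-bounds $\Pr[S_j\subseteq\parts(s_j')]$ via both $\alpha_j'>\gamma$ and $\Pr[h(t)=r]\ge\gamma$, yielding an extra factor of $\gamma^{-1}$ and hence the slightly weaker bound $|S_j|<2m/\delta=p$, as in \Claimref{S_k_bound_general}. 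Termination in finitely many steps follows exactly as before: the sets $S_1,S_2,\dots$ are pairwise distinct (any repetition would contradict the conditioning introduced in a previous iteration), $\Omega_{\parts}(N)$ is finite, and the $\alpha_j'$ are strictly decreasing, so after finitely many steps $\alpha_j'\le\gamma$ and the remaining tail is absorbed into the $\gamma' s'$ term.

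The main obstacle I anticipate is purely bookkeeping: ensuring that the $\beta$-conditioning is preserved along the iteration, i.e.\ that at every stage the new constraint ``$S_j\subseteq\parts(\cdot)$'' composes cleanly with the original $\beta$ and with all previously imposed ``not-a-part'' conditions. Because $\beta$-uniform conditional probabilities of disjoint, distinct parts factor just as uniform ones do, this composition is formally the same computation as in \Propref{sumOfDeltaNonUni_perm}; the only genuine care is to verify at each step that the $S_j$ chosen lies in $\Omega_{\parts}(N,\beta\cup S_1\cup\cdots\cup S_{j-1})$, which is automatic from how $s_j'$ was obtained. With that in place, the rest of the argument is a transcription of the proof of \Propref{sumOfDeltaNonUni_perm} with uniform replaced by $\beta$-uniform throughout.
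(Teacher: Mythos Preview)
Your proposal is correct and follows essentially the same route as the paper's own proof: replace the uniform baseline by the $\beta$-uniform one, use $\delta+\delta'$ in the violating exponent, bound $|S|$ by combining the lower bound from maximality with the upper bound coming from $\Pr[h(t)=r]\ge\gamma$ and the $\delta'$ non-$\beta$-uniformity of $t$, and then iterate exactly as in \Propref{sumOfDeltaNonUni_perm} (picking up the extra $\gamma^{-1}$ from $\alpha_j'>\gamma$ in later rounds to get $|S_j|<2m/\delta$). The only minor slip is in your closing paragraph: in the iteration the previously imposed constraints on $S_1,\dots,S_{j-1}$ are ``not-a-part'' conditions, so $S_j$ need not lie in $\Omega_{\parts}(N,\beta\cup S_1\cup\cdots\cup S_{j-1})$; it just needs to be a part compatible with $\beta$, and pairwise distinctness of the $S_j$'s (needed for termination) follows separately, as you note.
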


The proof follows from minor modifications to that of \Propref{sumOfDeltaNonUni_perm}
and is thus deferred to \Subsecref{tech_res_non-uniform}.

\subsection{Sampling argument for the $d$-Shuffler\label{subsec:Implications-for--Shuffler}}

\branchcolor{black}{Our objective in this section is to state the analogue of \Propref{composableP_Delta_non_beta_uniform}
for the $d$-Shuffler. To this end, we first define a more abstract
notation for the $d$-Shuffler which builds upon \Defref{dShuffler}.}
\begin{notation}[Abstract notation for $d$-Shufflers]
\label{nota:Abstract-d-Shuffler} Represent a uniform $d$-Shuffler
sampled from $\mathbb{F}_{{\rm shuff}}(d,n,f)$ (see \Defref{dShuffler})
abstractly as $\Xi$. Denote by 
\begin{itemize}
\item ${\rm func}_{i}(\Xi)$ the function $f_{i}$, for $i\in\{0,1\dots d\}$,
which correspond to the first definition, 
\item ${\rm tup}(\Xi)$ the tuples $(t_{0},\dots t_{d})$, which correspond
to the second definition,
\begin{itemize}
\item ${\rm dom}_{i}(\Xi)$ the unordered set corresponding to the tuple
$t_{i-1}$; ${\rm dom}(\Xi):=({\rm dom}_{i}(\Xi))_{i=0}^{d}$ and
${\rm dom}(\Xi^{*}):=({\rm dom}_{i}(\Xi))_{i=0}^{d-1}$,
\end{itemize}
\item ${\rm mat}(\Xi)$ the tuples $(t_{i})_{i=0}^{d}$ stacked as columns
in a matrix, 
\[
\left[\begin{array}{ccccc}
0 & t_{0}[0] & t_{1}[0] & \dots & t_{d}[0]\\
1 & t_{0}[1] & t_{1}[1] & \dots & t_{d}[1]\\
\vdots & \vdots & \vdots &  & \vdots\\
N-1 & t_{0}[N-1] & t_{1}[N-1] & \dots & t_{d}[N-1]
\end{array}\right]=\left[\begin{array}{ccccc}
0 & f_{0}(0) & f_{1}f_{0}(0) & \dots & f_{d}f_{d-1}\dots f_{0}[0]\\
1 & f_{0}(1) & f_{1}f_{0}(1) & \dots & f_{d}f_{d-1}\dots f_{0}[1]\\
\vdots & \vdots & \vdots &  & \vdots\\
N-1 & f_{0}(N-1) & f_{1}f_{0}(N-1) & \dots & f_{d}f_{d-1}\dots f_{0}[N-1]
\end{array}\right],
\]
of size $N\times(d+2)$ where $N=2^{n}$, ${\rm mat}(\Xi^{*})$ the
same matrix with the last column removed, 
\item $\paths(\Xi)$ the set of rows of $\mat(\Xi)$, i.e. $\paths(\Xi):=\{\mat(\Xi)[i]\}_{i=0}^{N-1}$,
and similarly $\paths(\Xi^{*})$ the set of rows of $\mat(\Xi^{*})$,
i.e. $\paths(\Xi^{*}):=\{\mat(\Xi^{*})[i]\}_{i=0}^{N-1}$,
\item $\parts(\Xi)$ the power set of rows of ${\rm mat}(\Xi)$, i.e. $\parts(\Xi):=\mathscr{P}[\paths(\Xi)]=\{{\rm mat}(\Xi)[i]\}_{i=0}^{N-1}$,
and similarly $\parts(\Xi^{*}):=\mathscr{P}[\paths(\Xi^{*})]=\mathscr{P}\{{\rm mat}(\Xi^{*})[i]\}_{i=0}^{N-1}$,
\item $\beta(\Xi)$ the ordered set $(\parts(\Xi^{*}),{\rm dom}(\Xi^{*}),\parts(\Xi),{\rm dom}(\Xi))$.
\end{itemize}
Call $\Xi^{*}$ an empty $d$-Shuffler because it doesn't contain
any information about $f$ (which is contained in $f_{d}$).
\end{notation}

\branchcolor{black}{We can now define the notion of paths and parts for an empty $d$-Shuffler,
the analogue of \Notaref{permPaths}. To keep the notation simple,
we overload the symbol $\Omega_{\parts}$. We use $\Omega_{\parts}(d,n)$.}
\begin{notation}
\label{nota:dShufflerPartsPaths}Further, denote by $\boldsymbol{x}=(x_{-1},x_{0},\dots x_{d-1})$
a tuple of $d+1$ elements. Use $\boldsymbol{x}[i]$ to denote the
$i$th element of the tuple, i.e. $x_{i}$.
\begin{itemize}
\item \emph{Parts}: Consider a set $S=\{\boldsymbol{x}_{j}\}_{j=0}^{M}$
with $M\le N$ containing mappings of $\{\boldsymbol{x}_{j}[-1]\}_{j}$
specified by some empty $d$-Shuffler, i.e. there is some $d$-Shuffler
$\Xi$ such that $S\subseteq\parts(\Xi^{*})$. Call any such set $S$
a ``part'' and its constituents ``paths''.
\begin{itemize}
\item Denote by $\Omega_{\parts}(d,n)$ the set of all such sets $S$, i.e.
the set of all ``parts''.
\item Call two parts $S$ and $S'$ \emph{distinct} if $S\cap S'=\emptyset$
and there is some $d$-Shuffler $\Xi$ such that $S\cup S'\subseteq\parts(\Xi^{*})$. 
\item Denote by $\Omega_{\parts}(d,n,S)$ the set of all parts $S'\in\Omega_{\parts}(d,n)$
such that $S'$ is distinct from $S$.
\end{itemize}
\item \emph{Parts in $\Xi^{*}$}: The probability that $\Xi^{*}$ maps the
elements as described by $S$ may be expressed as $\Pr[\land_{j}\boldsymbol{x}_{j}\in\parts(\Xi^{*})]=\Pr[S\subseteq\parts(\Xi^{*})]$. 
\end{itemize}
\end{notation}

\begin{example}
Let $N=2^{n}=2$ and $d=2$. Then an empty $d$-Shuffler $\Xi^{*}$
can take the following values
\[
\mat(\Xi^{*})\in\left\{ \left[\begin{array}{ccc}
0 & 0 & 0\\
1 & 1 & 1
\end{array}\right],\left[\begin{array}{ccc}
0 & 0 & 1\\
1 & 1 & 0
\end{array}\right],\left[\begin{array}{ccc}
0 & 1 & 0\\
1 & 0 & 1
\end{array}\right],\left[\begin{array}{ccc}
0 & 1 & 1\\
1 & 0 & 0
\end{array}\right],\left[\begin{array}{ccc}
0 & 0 & 0\\
1 & 2 & 2
\end{array}\right],\left[\begin{array}{ccc}
0 & 0 & 2\\
1 & 2 & 0
\end{array}\right]\dots\right\} .
\]
Let $(t_{0},t_{1}):={\rm tup}(\Xi^{*})$. Then the first of these
corresponds to $t_{0}=t_{1}=\mathbb{I}$ (identity permutation) and
$f_{0}=f_{1}=\mathbb{I}$, the second to $t_{0}=\mathbb{I}$, $t_{1}=\mathbb{X}$
(``swap'') and $f_{0}=\mathbb{I},f_{1}=\mathbb{X}$, the third to
$t_{0}=\mathbb{X}$, $t_{1}=\mathbb{I}$ and $f_{0}=\mathbb{X}$,
$f_{1}=\mathbb{X}$, the fourth to $t_{0}=\mathbb{X}$, $t_{1}=\mathbb{X}$
and $f_{0}=\mathbb{X}$, $f_{1}=\mathbb{I}$, and so on. The set of
parts is $\Omega_{\parts}(d,n)=\{\mathscr{P}\{(0,0,0),(1,1,1)\},\mathscr{P}\{(0,0,1),(1,1,0)\},\mathscr{P}\{(0,1,0),(1,0,1)\},\mathscr{P}\{(0,1,1),(1,0,0)\},\mathscr{P}\{(0,0,0),(1,2,2)\},\dots\}.$
An example of a part is $S=\{(0,0,0)\}$. Parts distinct from $S$,
i.e. elements of $\Omega_{\parts}(n,d,S)$, are $\{(1,1,1)\},\{(1,2,1)\},\{(1,1,2)\}\dots$. 
\end{example}

\branchcolor{black}{We can now define $\beta$-uniform distributions as uniform distributions
over $d$-Shufflers, given some information about the $d$-Shuffler.
Unlike the simpler case of permutations where $\beta$ was simply
specifying parts, we would now like to condition on both, certain
parts being present in ${\parts}(\Xi)$ and certain locations yielding
$\perp$, i.e. queries outside ${\rm dom}(\Xi)$. While this may look
involved, it is simply conditioning on certain aspects of the uniform
distribution over $d$-Shufflers.

The notion of $\delta$ non $\beta$-uniform distribution and $(p,\delta)$
non $\beta$-uniform distribution is defined analogous to the simpler
case of permutations. The only subtlety is that the conditioning is
only over the parts of the empty $d$-Shuffler, i.e. the information
about the function $f$ is not included. Why do we do this?

Suppose $d=0$ and $f$ is a Simon's function. Then having a $(p,\delta)$
non $\beta$-uniform distribution is pointless. The advice could just
be the period of $f$ and conditioning on even one path with collision,
would mean we restrict to distributions over $f$ with that same period
(for more details, see \Subsecref{Delta_with_Simons_not_smart}).

Thus, the notion of $(p,\delta)$ non $\beta$-uniform distributions
only makes sense when the underlying distribution is not severely
constrained upon some paths being revealed. This allows us to show
that the output distribution given the advice and the $(p,\delta)$
non $\beta$-uniform distributions is not very different from that
produced by being given access to only a $\beta$-uniform distribution.

More concretely, we show that the ``finding'' probability even with
$\Xi^{*}$, where $\Xi\sim\mathbb{F}_{{\rm shuff}}^{p,\delta|\beta}$,
is exponentially suppressed so long as $p$ and the conditions in
$\beta$ are poly sized. We return to this after stating the main
result.}
\begin{defn}[$\beta$-uniform and $(p,\delta)$ non $\beta$-uniform distributions
for $d$-Shufflers---$\mathbb{F}_{{\rm shuff}}^{|\beta}$ and $\mathbb{F}_{{\rm shuff}}^{p,\delta|\beta}$]
\label{def:d-Shuffler_p_delta_uniform} Let $\Xi\sim\mathbb{F}_{{\rm shuff}}(d,n,f)$
be a $d$-Shuffler sampled from the uniform distribution over $d$-Shufflers.
Let $\beta=(\beta_{{\rm incl}},\beta_{{\rm excl}})$ where $\beta_{{\rm incl}}$
specifies elements in $\beta(\Xi)$ and ${\rm \beta_{excl}}$ specifies
elements not in $\beta(\Xi)$ (see \Notaref{Abstract-d-Shuffler}).
A $d$-Shuffler $\Xi'\sim\mathbb{F}_{{\rm shuff}}^{|\beta}(d,n,f)$
sampled from a $\beta$-uniform distribution over $d$-Shufflers is
$\Xi'=\Xi|(\beta_{{\rm incl}}\in\beta(\Xi)\land\beta_{{\rm excl}}\notin\beta(\Xi))$. 

A distribution $\mathbb{F}_{{\rm shuff}}^{\delta|\beta}$ is $\delta$
non-$\beta$-uniform over $d$-Shufflers if it is $\delta$ non-$\mathbb{G}$
with $\mathbb{G}$ set to a $\beta$-uniform distribution over $d$-Shufflers
(see \Defref{nonG}) and the conditioning is over the empty $d$-Shuffler's
parts, i.e. $\parts(\Xi^{*})$ for $\Xi\sim\mathbb{F}_{{\rm shuff}}^{\delta|\beta}$.
Similarly, a distribution $\mathbb{F}_{{\rm shuff}}^{p,\delta|\beta}$
is $(p,\delta)$ non-$\beta$-uniform over $d$-Shufflers if it is
$(p,\delta)$ non-$\mathbb{G}$ with $\mathbb{G}$, again, set to
a $\beta$-uniform distribution over $d$-Shufflers (and the conditioning
is over the empty $d$-Shuffler's parts).
\end{defn}

\branchcolor{black}{As we already noted, the proof of \Defref{beta-uniform} does not
depend on the underlying distribution $\mathbb{F}$ and the conditioning
$\beta$. We can therefore lift that result directly for the case
of $d$-Shufflers. }
\begin{prop}[$\mathbb{F}_{{\rm shuff}}^{\delta'|\beta}|r'={\rm conv}(\mathbb{F}_{{\rm shuff}}^{(p,\delta+\delta')|\beta})$]
\label{prop:main_delta_non_uniform_d_shuffler} Let $\Xi^{t}\sim\mathbb{F}_{{\rm shuff}}^{\delta'|\beta}(d,n,f)$
be sampled from a $\delta'$ non $\beta$-uniform distribution over
all $d$-Shufflers. Let $N=2^{n}$. Fix any $\delta>0$ and let $\gamma=2^{-m}$
be some function of $n$. Let $\Xi^{s}\sim\mathbb{F}^{\delta'|\beta}|r$,
i.e. $\Xi^{s}:=\Xi^{t}|(h(\Xi^{t})=r)$ and suppose that $\Pr[h(\Xi^{t})=r]\ge\gamma$
where $h$ is an arbitrary function and $r$ some string in its range.
Then $\Xi^{s}$ is ``$\gamma$-close'' to a convex combination of
finitely many $(p,\delta+\delta')$ non-$\beta$-uniform distributions
over $d$-Shufflers, i.e. 
\begin{equation}
\Xi^{s}\equiv\sum_{i}\alpha_{i}\Xi_{i}^{s}+\gamma'\Xi^{s\prime}\label{eq:main_delta_non_uniform_d_shuffler}
\end{equation}
where $\Xi_{i}^{s}\sim\mathbb{F}_{{\rm shuff},i}^{(p,\delta+\delta')|\beta}$
with $p=2m/\delta$ (the $i$ in $\mathbb{F}_{{\rm shuff},i}^{(p,\delta+\delta')|\beta}$,
indicates that each $\Xi_{i}^{s}$ can come from a different distribution
which is still $(p,\delta+\delta')$ non-$\beta$-uniform; e.g. they
may be fixing different paths but their count is bounded by $p$).
The d-Shufflers $\Xi^{s\prime}$ may have arisen from an arbitrary
distribution over $d$-Shufflers but $\gamma'\le\gamma$.
\end{prop}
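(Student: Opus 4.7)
The plan is to port the proof of Proposition \ref{prop:composableP_Delta_non_beta_uniform} almost verbatim, by checking that only abstract structural features of the sample space were used in the permutation case and that those features hold equally for $d$-Shufflers with the parts notation of \Notaref{dShufflerPartsPaths}. Concretely, the previous proof relied on three facts: (i) for distinct parts $S, S'$ the factorisation $\Pr[S \cup S' \subseteq \parts(\cdot)] = \Pr[S \subseteq \parts(\cdot)]\cdot\Pr[S' \subseteq \parts(\cdot) \mid S \subseteq \parts(\cdot)]$ holds; (ii) $\Omega_{\parts}(d,n)$ is finite; and (iii) the $\beta$-uniform baseline enters the definition only through the ratio $\Pr[\cdot \mid S'' \subseteq \parts(\cdot)]/\Pr_{\beta\text{-uni}}[\cdot \mid S'' \subseteq \parts(\cdot)]$, so conditioning on additional parts leaves the non-uniformity bookkeeping intact. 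All three hold for $d$-Shufflers once we restrict the conditioning to $\parts(\Xi^{*})$, as stipulated in \Defref{d-Shuffler_p_delta_uniform}.

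With those facts in hand I would carry out the same iterative decomposition. Start from $\Xi^{s} = \Xi^{t} \mid (h(\Xi^{t}) = r)$. If $\Xi^{s}$ is already $(\delta+\delta')$ non-$\beta$-uniform, set $\alpha_{1}=1$, $\Xi_{1}^{s}=\Xi^{s}$ and stop. Otherwise, choose $S_{1}\in \Omega_{\parts}(d,n)$ to be a \emph{maximal} part violating the $(\delta+\delta')$ non-$\beta$-uniformity bound for $\Xi^{s}$. The analogue of \Claimref{easyConditionDeltaUniform} then gives that $\Xi^{s}$ conditioned on $S_{1}\subseteq \parts((\Xi^{s})^{*})$ is $(\delta+\delta')$ non-$\beta$-uniform; here the conditioning is packaged into $\beta_{\mathrm{incl}}$ and so remains a legal $\beta$. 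Using Bayes, write $\Xi^{s} \equiv \alpha_{1}\Xi_{1}^{s} + \alpha_{1}'\Xi_{1}'$, where $\Xi_{1}'$ is the $S_{1}\not\subseteq \parts((\Xi^{s})^{*})$ branch. If $\alpha_{1}'\leq \gamma$, absorb $\Xi_{1}'$ into the $\gamma'\Xi^{s\prime}$ remainder of \Eqref{main_delta_non_uniform_d_shuffler} and stop; otherwise recurse on $\Xi_{1}'$ with the same $\beta$.

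The size bound $|S_{i}|<2m/\delta = p$ follows from the two-sided estimate in the spirit of \Claimref{S_k_bound_general}: the lower bound on $\Pr[S_{i}\subseteq \parts((\Xi_{i-1}')^{*})]$ is the $(\delta+\delta')$ non-uniformity violation, while the upper bound combines (a) $\Pr[h(\Xi^{t})=r]\geq \gamma$ (used in the original conditioning) with (b) $\alpha_{i}'>\gamma$ (used to ensure we iterate only when the residual mass is non-negligible). Termination in finitely many steps is immediate because each iteration introduces a genuinely new $S_{i}$ and $\Omega_{\parts}(d,n)$ is finite; accumulated residual probability $\alpha_{j}'$ is strictly decreasing, so eventually $\alpha_{j}'\leq \gamma$ and we collect it as the $\gamma'\Xi^{s\prime}$ tail.

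The main obstacle I expect is purely bookkeeping rather than mathematical: one must verify that the conditioning $\beta=(\beta_{\mathrm{incl}},\beta_{\mathrm{excl}})$ transforms correctly at each recursive step. Specifically, adding $S_{i}$ to $\beta_{\mathrm{incl}}$ on the "kept" branch is straightforward, but one must resist the temptation to enlarge $\beta_{\mathrm{excl}}$ on the "else" branch: doing so would change the baseline distribution against which non-$\beta$-uniformity is measured and could violate factorisation. The correct move, as in the proof of \Propref{sumOfDeltaNonUni_perm}, is to keep $\beta$ fixed and let the decreasing $\alpha_{j}'$ do the work, confining all effect of the exclusion to a shrinking probability mass on the residual branch. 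Once this invariant is maintained, the argument closes exactly as in the permutation case.
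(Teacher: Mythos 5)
Your proposal is correct and follows essentially the same route as the paper: the paper itself proves the permutation version (\Propref{sumOfDeltaNonUni_perm} and its $\beta$-uniform generalisation in the appendix) and then observes that only abstract structural features of the sample space are used — precisely the factorisation over distinct parts, finiteness of $\Omega_{\parts}$, and the fact that the $\beta$-uniform baseline enters only through ratios — so the result lifts directly to $d$-Shufflers with conditioning restricted to $\parts(\Xi^{*})$. Your iterative decomposition via a maximal violating part, the two-sided size bound giving $|S_i|<2m/\delta$, and termination via the strictly decreasing residual mass all match the paper's argument.
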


\branchcolor{black}{To show that the ``finding'' probability remains small even with
$\Xi^{*}$ when $\Xi\sim\mathbb{F}_{{\rm Shuff}}^{(p,\delta)}$, the
key ingredient is the following lemma. To make the $\beta$ notation
easier to use, we introduce the following.}
\begin{notation}
\label{nota:ProperBeta}We say $\beta=(\beta_{{\rm incl}},\beta_{{\rm excl}})$
as introduced in \Defref{d-Shuffler_p_delta_uniform} is \emph{proper}
if the following holds. Let $\beta_{{\rm incl}}=:(\{\boldsymbol{x}_{i}\}_{i},(H_{j})_{j=1}^{d-1},\{\boldsymbol{y}_{i}\}_{i},(I_{j})_{j=1}^{d})$
(recall recall $\beta(\Xi)=(\parts(\Xi^{*}),{\rm dom}(\Xi^{*}),\parts(\Xi),{\rm dom}(\Xi))$).
We require $\boldsymbol{x}_{i}[j]\in H_{j}$ and $\boldsymbol{y}_{i}[j]\in I_{j}$,
i.e. for each path $\boldsymbol{x}_{i}$ (required to be in $\paths(\Xi^{*})$
when $\beta\in\beta(\Xi)$), the constituent points are required to
be in $H_{j}$ (viz. to be in the domain ${\rm dom}_{i}(\Xi^{*})$
when $\beta\in\beta(\Xi)$) and similarly for $\boldsymbol{y}_{i}$
and $I_{j}$. 
\end{notation}

\branchcolor{black}{This requirement is obviously redundant, i.e. if $\beta$ is not proper
and $\beta'$ is made proper by including the necessary elements in
$H_{j}$ and $I_{j}$, then $\Xi|\beta\in\beta(\Xi)$ is identical
to $\Xi|\beta'\in\beta(\Xi)$. However, it serves to simplifying the
notation. We use it below.}
\begin{lem}
\label{lem:delta_beta_output_non_perp}Let\footnote{If one wishes to use $\mathbb{F}_{{\rm Shuff}}^{(p,\delta)|\beta}$,
then one must first explicitly absorb the size $p$ paths fixed into
$\beta'$. Otherwise, one could always pick $x$ among those paths
and obtain $\Pr[x\in{\rm dom}_{i}(\Xi)]=1$.} $\Xi\sim\mathbb{F}_{{\rm Shuff}}^{\delta|\beta}(n,d,f)$ (see \Defref{d-Shuffler_p_delta_uniform})
and suppose that $\beta=(\beta_{{\rm incl}},\beta_{{\rm excl}})$
is proper (as in \Notaref{ProperBeta} above) and that it only specifies
$\poly$ many paths and answers to queries, i.e. let $\beta_{{\rm incl}}=:(\{\boldsymbol{x}_{i}\}_{i=1}^{K},(H_{j})_{j=1}^{d},\{\boldsymbol{y}_{i}\}_{i=1}^{L},(I_{j})_{j=1}^{d})$
then $K,L,|H_{j}|,|I_{j}|\le\poly$ (see \Notaref{Abstract-d-Shuffler})
and similarly for $\beta_{{\rm excl}}$. Fix any $x$ which is neither
specified by $\beta_{{\rm incl}}$, i.e. $x\notin\cup_{j}(H_{j}\cup I_{j})$,
nor by $\beta_{{\rm excl}}$. Then, for each $i\in\{1,2\dots d\}$,
\[
\Pr[x\in{\rm dom}_{i}(\Xi)]=\Pr[{\rm func}_{i}(\Xi)(x)\neq\perp]\le2^{\delta}\cdot\poly\cdot2^{-n}.
\]
\end{lem}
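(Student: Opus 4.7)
The plan is to reduce the bound for $\Xi \sim \mathbb{F}_{\mathrm{Shuff}}^{\delta|\beta}$ to the corresponding bound for the $\beta$-uniform reference distribution $\mathbb{F}_{\mathrm{Shuff}}^{|\beta}$ by applying the defining $\delta$ non-$\beta$-uniformity inequality path-by-path, and then to bound the $\beta$-uniform probability directly by counting.

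First I would rewrite the event of interest in the language of paths. By construction of the $d$-Shuffler (\Defref{dShuffler}), $\mathrm{func}_i(\Xi)(x) \neq \perp$ iff $x$ appears in the tuple $t_{i-1} = \mathrm{dom}_i(\Xi)$, which in turn is equivalent to saying there exists a full path $\boldsymbol{x} = (x_{-1},x_0,\dots,x_{d-1}) \in \mathrm{paths}(\Xi^*)$ whose $(i-1)$-component equals $x$. Since the tuple $t_{i-1}$ has distinct entries, at most one such path exists for any given realisation of $\Xi^*$, so the events $\{\boldsymbol{x} \in \mathrm{paths}(\Xi^*)\}$ indexed by distinct $\boldsymbol{x}$ with $\boldsymbol{x}[i-1]=x$ are pairwise disjoint.

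Next I would apply the $\delta$ non-$\beta$-uniform inequality from \Defref{d-Shuffler_p_delta_uniform} to each singleton part $S = \{\boldsymbol{x}\} \in \Omega_{\mathrm{parts}}(d,n)$ with $\boldsymbol{x}[i-1]=x$, obtaining $\Pr[\boldsymbol{x} \in \mathrm{paths}(\Xi^*)] \le 2^\delta \cdot \Pr[\boldsymbol{x} \in \mathrm{paths}(\Xi_{\mathrm{un}}^*)]$ where $\Xi_{\mathrm{un}} \sim \mathbb{F}_{\mathrm{Shuff}}^{|\beta}$. Summing over all such paths and using disjointness in both distributions gives
\[
\Pr[x \in \mathrm{dom}_i(\Xi)] \;=\; \sum_{\boldsymbol{x}:\boldsymbol{x}[i-1]=x} \Pr[\boldsymbol{x} \in \mathrm{paths}(\Xi^*)] \;\le\; 2^\delta \cdot \Pr[x \in \mathrm{dom}_i(\Xi_{\mathrm{un}})].
\]
So it remains to bound the $\beta$-uniform probability by $\mathrm{poly}(n)\cdot 2^{-n}$.

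Finally I would estimate $\Pr[x \in \mathrm{dom}_i(\Xi_{\mathrm{un}})]$ by direct counting. By the properness assumption and the polynomial bounds on $K, L, |H_j|, |I_j|$, conditioning on $\beta$ fixes at most $K$ of the $N$ slots of each tuple $t_{j}$, forbids at most $\mathrm{poly}(n)$ values from appearing in each tuple, and leaves the remaining $N - K$ slots distributed uniformly over orderings of $N-K$ distinct values drawn from $\{0,1\}^{2n}$ with at most $\mathrm{poly}(n)$ excluded values. Since $x$ is neither in $\beta_{\mathrm{incl}}$ nor in $\beta_{\mathrm{excl}}$ for column $i-1$, the marginal probability that $x$ occupies one of those unfilled slots is at most
\[
\frac{N - K}{2^{2n} - \mathrm{poly}(n)} \;\le\; \mathrm{poly}(n)\cdot 2^{-n},
\]
which combined with the previous display yields the claim.

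The main obstacle will be the bookkeeping in the last step: verifying that the $\beta$-uniform distribution really does factor column-by-column in the mild way claimed, and that each of the polynomial quantities encoded by $\beta$ (the parts, the inclusion sets $H_j$, the image sets $I_j$, and their ``excluded'' counterparts) only multiplies the naive marginal by a polynomial factor rather than collapsing it. Given $\beta$ is proper in the sense of \Notaref{ProperBeta}, this is a straightforward (if tedious) counting exercise; the earlier reduction to the $\beta$-uniform case is what makes the argument clean.
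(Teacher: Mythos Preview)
Your proposal is correct and follows essentially the same route as the paper: decompose the event $\{x\in\mathrm{dom}_i(\Xi)\}$ into a disjoint union over singleton paths through $x$, apply the $\delta$ non-$\beta$-uniform inequality with $|S|=1$ term-by-term, sum back up to obtain the factor $2^{\delta}$ against the $\beta$-uniform reference, and finish by the elementary count $(N-K)/(2^{2n}-\poly)\le\poly\cdot 2^{-n}$. The only organisational difference is that the paper first treats the unconditioned uniform case and then remarks that the $\beta$-conditioning costs only a polynomial factor, whereas you work directly in the $\beta$-conditioned setting throughout; the content is the same.
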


\branchcolor{black}{\begin{proof}
Let $\Xi^{u}\sim\mathbb{F}_{{\rm Shuff}}(n,d,f)$, $M=2^{2n}$, $N=2^{n}$.
Then, it is clear that $\Pr[x\in{\rm dom}_{i}(\Xi^{u})]=N/M=2^{-n}$
by looking at the permutation $t_{i-1}=:{\rm tup}_{i}(\Xi^{u})$.
Consider points with the following domains $y_{0}\in\{0,\dots N-1\},y_{1},\dots y_{d-1}\in\{0,\dots M-1\}$.
We express the aforementioned probability in terms of individual paths
as 
\begin{align*}
\Pr[x\in{\rm dom}_{i}(\Xi^{u})] & =\Pr[\lor_{\{y_{j}:j\neq i\}}\{(y_{0},y_{1}\dots y_{i-2},x,y_{i},\dots y_{d-1})\}\in\parts(\Xi^{u*})]\\
 & =\sum_{\{y_{j}:j\neq i\}}\Pr[\{(y_{0},y_{1}\dots y_{i-2},x,y_{i},\dots y_{d-1})\}\in\parts(\Xi^{u*})]
\end{align*}
because $\Pr[\{(y_{0}\dots,x,\dots y_{d-1})\}\in\parts(\Xi^{u*})\land\{(y_{0}',\dots x,\dots y_{d-1}')\}\in\parts(\Xi^{u*})]=0$
if for any $k$, s.t. $y_{k}\neq y'_{k}$. The logical AND and the
sum are over the domains of $y_{i}$. Suppose $\Xi'\sim\mathbb{F}_{{\rm Shuff}}^{\delta}(n,d,f)$.
Then, using the same reasoning, one has 
\begin{align*}
\Pr[x\in{\rm dom}_{i}(\Xi')] & =\sum_{\{y_{j}:j\neq i\}}\Pr[\{(y_{0},y_{1}\dots y_{i-2},x,y_{i},\dots y_{d-1})\}\in\parts(\Xi^{\prime*})]\\
 & \le2^{\delta}\sum_{\{y_{j}:j\neq i\}}\Pr[\{(y_{0},y_{1}\dots y_{i-2},x,y_{i},\dots y_{d-1})\}\in\parts(\Xi^{u*})]\\
 & =2^{\delta}\Pr[x\in{\rm dom}_{i}(\Xi^{u})]=2^{\delta}\cdot2^{-n}.
\end{align*}
To obtain the main result, it suffices to observe\footnote{One has $\Pr[x\in{\rm dom}_{i}(\Xi^{\beta'})]=N'/M'\le N/(M-\poly)=2^{-n}(1-\poly\cdot2^{-2n})^{-1}\le2^{-n}(1+\poly\cdot2^{-2n})\le\poly\cdot2^{-n}$
where $N'$ is $N$ minus some polynomial and $M'$ is $M$ minus
some polynomial.} that $\Pr[x\in{\rm dom}_{i}(\Xi^{\beta'})]\le\poly\cdot2^{-n}$ where
$\Xi^{\beta'}\sim\mathbb{F}_{{\rm Shuff}}^{|\beta'}(n,d,f)$ and $\beta'$
only specifies polynomially many constraints. 
\end{proof}
}

Final remark for this section, the distribution over oracles $\mathbb{O}_{{\rm Shuff}}^{(p,\delta)|\beta}$
is implicitly defined from $\mathbb{F}_{{\rm Shuff}}^{(p,\delta)|\beta}$
and similarly for others. 

\section{$d$-Shuffled Simon's Problem (cont.)\label{sec:dSS_hard_2}}

\branchcolor{black}{With the sampling argument for the $d$-Shuffler in place, we are
almost ready to return to establishing that $d$-SS is hard for ${\rm CQ}_{d}$
circuits; it remains to introduce (and mildly modify) some $d$-Shuffler
notation to facilitate its use in the context of $d$-SS. }
\begin{defn}[Extension of \Defref{dShuffledSimonsDistr}]
 Let $\mathbb{F}_{{\rm SS}}^{\delta|\beta}(d,n)$ be exactly the
same as $\mathbb{F}_{{\rm SS}}(d,n)$ except that $\mathbb{F}_{{\rm Shuff}}^{\delta|\beta}$
is used instead of $\mathbb{F}_{{\rm Shuff}}$. 
\end{defn}

\begin{notation}
\label{nota:dShufflerPathsParts2} In the previous section (see \Notaref{dShufflerPartsPaths})
we used $\Omega_{\parts}$ to refer to the set of $\parts(\Xi^{*})$
for all $\Xi$ $d$-Shufflers in the sample space of $\mathbb{F}_{{\rm Shuff}}(d,n,f)$. 
\begin{itemize}
\item $\Omega_{\parts}$, $\Omega_{\parts^{*}}$. \\
Now, instead we use $\Omega_{\parts^{*}}$ for the aforementioned
and $\Omega_{\parts}$ for the set of $\parts(\Xi)$ for all $\Xi$.
\item Convention for $\paths$ and $\paths^{*}$; $\boldsymbol{y}$ and
$\boldsymbol{x}$. \\
A part of $\Xi$, $Y\in\Omega_{\parts}(d,n)$, is denoted as $Y=\{\boldsymbol{y}_{k}\}_{k=1}^{|Y|}$
where each path $\boldsymbol{y}_{k}$ is a tuple of $d+2$ elements
indexed from $-1$ to $d$. Similarly a part of $\Xi^{*}$, $X\in\Omega_{\parts^{*}}(d,n)$
is denoted as $X=\{\boldsymbol{x}_{k}\}_{k=1}^{|X|}$ where each path
$\boldsymbol{x}_{k}$ is a tuple of $d+1$ elements indexed from $-1$
to $d-1$.
\end{itemize}
\end{notation}

\subsection{Shadow Boilerplate (cont.)}

\branchcolor{black}{We will need the following generalisation of \Algref{SforQNCd_using_dSS}
to prove our result. }

\begin{lyxalgorithm}[$\bar{S}_{j}$ for ${\rm CQ}_{d}$ exclusion using $d$-SS]
\label{alg:SforCQd_using_dSS}Fix $d$ and $n$.~\\
Input:
\begin{itemize}
\item $1\le j\le d$,
\item a part of a $d$-Shuffler (i.e. a set of paths) $Y=\{\boldsymbol{y}_{k}\}_{k=1}^{|Y|}\in\Omega_{\parts}(d,n)$
to ``expose'' (could be implicitly specified using $\beta$ as in
\Notaref{ProperBeta} and \Notaref{Abstract-d-Shuffler})
\item $((f_{i})_{i=0}^{d},s)$ from the sample space $\mathbb{F}_{{\rm SS}}(d,n)$.
\end{itemize}
Output: $\bar{S}_{j}$, a tuple of $d$ subsets, defined as 
\[
\bar{S}_{j}:=\begin{cases}
(\emptyset,\dots\emptyset,X_{j}\backslash Y_{j},\dots X_{d}\backslash Y_{d}) & j>1\\
(X_{1}\backslash Y_{1},\dots X_{d}\backslash Y_{d}) & j=1
\end{cases}
\]
where for each $i\in\{1,\dots d\}$, $X_{i}=f_{i-1}(X_{i-1})$ with
$X_{0}=\{0,1\}^{n}$ and $Y_{i}=\{\boldsymbol{y}_{k}[i]\}_{k=1}^{|Y|}$.
\end{lyxalgorithm}

\begin{prop}
\label{prop:Shadow_dSS_withFixedSets}Let 
\begin{itemize}
\item $((f_{i})_{i},s)\in\mathbb{F}_{SS}(d,n)$,
\item $Y=\{y_{k}\}_{k=1}^{|Y|}\in\Omega_{\parts}(d,n)$ be a set of paths
(see \Notaref{dShufflerPathsParts2}), 
\item For all $2\le i\le d$, let $\mathcal{G}_{1}\dots\mathcal{G}_{i-1}$
denote the shadows of $\mathcal{F}$ (oracle corresponding to $(f_{i})_{i}$)
with respect to $\bar{S}_{1}\dots\bar{S}_{i-1}$ (constructed using
\Algref{SforCQd_using_dSS} with the index, $Y$, and $((f_{i})_{i},s)$
as input).
\end{itemize}
Then, $\mathcal{G}_{1}\dots\mathcal{G}_{i-1}$ contain no information
about $\bar{S}_{i}$.
\end{prop}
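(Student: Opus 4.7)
The plan is to follow the strategy of \Propref{Shadow_dSS}, tracking what each shadow reveals about the random sets $X_1, \dots, X_d$ while accounting for the publicly known paths in $Y$. First, I analyse a single shadow $\mathcal{G}_j$. At positions $0, \dots, j-1$ it agrees with $\mathcal{F}$ (these entries of $\bar{S}_j$ are empty), so applying $f_0$ on the known $X_0 = \{0,1\}^n$ yields $X_1$, then $f_1$ on $X_1$ yields $X_2$, and so on, revealing $X_1, \dots, X_j$. At any position $k \in \{j, \dots, d\}$ the shadow outputs $\perp$ on $X_k \setminus Y_k$ and behaves normally on $Y_k$; assuming $Y$ is consistent with the shuffler so that $Y_k \subseteq X_k$, this only reveals $f_k(Y_k) = Y_{k+1}$, a quantity already determined by the input $Y$.

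Next, I union the information available across $\mathcal{G}_1, \dots, \mathcal{G}_{i-1}$. The most informative member is $\mathcal{G}_{i-1}$, whose unchanged positions extend the cascade the furthest and reveal $X_1, \dots, X_{i-1}$; shadows with smaller indices contribute only a strict subset of this $X$-information. The modified positions across all these shadows contribute only $Y_i, \dots, Y_d$, which are already fixed by the input. So, conditioned on $Y$, the only random information extractable from $\mathcal{G}_1, \dots, \mathcal{G}_{i-1}$ is $X_1, \dots, X_{i-1}$.

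Finally, I compare this against $\bar{S}_i = (\emptyset, \dots, \emptyset, X_i \setminus Y_i, \dots, X_d \setminus Y_d)$, whose only randomness lies in the sets $X_j \setminus Y_j$ for $j \ge i$. Since none of $X_i, \dots, X_d$ appears in the information extracted above, we conclude that $\mathcal{G}_1, \dots, \mathcal{G}_{i-1}$ and $\bar{S}_i$ are independent given $Y$. The one subtle point to check is that the $\perp$ answers at the modified positions do not themselves leak anything about $X_k \setminus Y_k$: but a $\perp$ reply at an input $x \notin Y_k$ occurs identically whether $x \in X_k \setminus Y_k$ or $x$ lies outside $X_k$ altogether, so the two possibilities remain indistinguishable to the querier, completing the argument.
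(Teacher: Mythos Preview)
Your proof is correct and follows essentially the same approach as the paper: both argue that $\mathcal{G}_j$ reveals only $X_1,\dots,X_j$ via the unchanged sub-oracles, while the modified sub-oracles (positions $j,\dots,d$) expose nothing beyond the already-known paths $Y$, so the union $\mathcal{G}_1,\dots,\mathcal{G}_{i-1}$ carries no information about $X_i\setminus Y_i,\dots,X_d\setminus Y_d$. Your treatment is in fact slightly more careful than the paper's, since you explicitly address why a $\perp$ response at $x\notin Y_k$ cannot distinguish $x\in X_k\setminus Y_k$ from $x\notin X_k$, a point the paper leaves implicit.
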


\branchcolor{black}{\begin{proof}
The argument is analogous to that given for \Propref{Shadow_dSS}.
$\mathcal{G}_{1}$ was defined using $\bar{S}_{1}=(X_{1}\backslash Y_{1},\dots X_{d}\backslash Y_{d})$
and yet $\mathcal{G}_{1}$ contains no information $X_{2}\backslash Y_{2},\dots X_{d}\backslash Y_{d}$.
To see this, observe that $\mathcal{G}_{1}$ contains information
about $f_{0}$ which in turn contains information about $X_{1}$,
i.e. $f_{0}(X_{0})=X_{1}$. However, the remaining sub-oracles, i.e.
from $1$ to $d$, output $\perp$ everywhere except for the paths
$Y=\{\boldsymbol{y}_{k}\}_{k=1}^{|Y|}$. Thus, they do not reveal
$X_{2}\backslash Y_{2},\dots X_{d}\backslash Y_{d}$. Similarly, $\mathcal{G}_{2}$
is defined using $\bar{S}_{2}=(\emptyset,X_{2}\backslash Y_{2},\dots X_{d}\backslash Y_{d})$
and while it contains information about $X_{1},X_{2}$ and the paths
$Y$, it contains no information about $X_{3}\backslash Y_{3},\dots X_{d}\backslash Y_{d}$
(thus $\bar{S}_{3}$). Analogously for $\mathcal{G}_{3}$ and so on.
\end{proof}
}

The following is a direct consequence of \Lemref{delta_beta_output_non_perp}
but expressed in a form more convenient for the following discussion.
\begin{prop}
\label{prop:x_in_S_dSS_shadow_withFixedSets}Let the premise be as
in \Propref{Shadow_dSS_withFixedSets}. Suppose the $d$-Shuffler
is sampled from a $\delta$ non-uniform distribution, i.e. $(\Xi,s)\sim\mathbb{F}_{{\rm SS}}^{\delta}$
and let $\mathcal{F}$ be the oracle associated with $\Xi$. Further,
let $\vec{I}=(I_{j})_{j=1}^{d}$ be the ``excluded domain'', i.e.
$I_{j}\cap{\rm dom}_{j}(\Xi)=\emptyset$, where $I_{j}\subseteq\{0,1\}^{2n}$,
such that $\left|I_{j}\right|\le\poly$. Also suppose that $Y$ satisfies
$Y\subseteq\parts(\Xi)$ and $|Y|\le\poly$.

Let $\boldsymbol{x}\in\Omega_{\paths^{*}}(d,n)$ be a query (in the
query domain of $\mathcal{F}$) such that the query is not in the
excluded set, i.e. $\boldsymbol{x}[j]\notin I_{j}$ for any $j\in\{1\dots d\}$
and the query does not intersect with any known paths, i.e. $\boldsymbol{x}\cap Y^{*}=\emptyset$
(let $Y^{*}$ be the set of paths in $Y$ with the last element removed). 

Then, the probability that any part of $\boldsymbol{x}$ lands on
$\bar{S}_{i}$---given that $\bar{S}_{i}$ is not at locations specified
by $\vec{I}$ and that the paths $Y$ are included in the $d$-Shuffler---is
vanishingly small, i.e. $\Pr[\boldsymbol{x}\cap\bar{S}_{i}\neq\emptyset|\vec{I}\land Y]\le\mathcal{O}(2^{\delta}\cdot\poly\cdot2^{-n})$. 
\end{prop}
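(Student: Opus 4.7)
The plan is to reduce the statement to a direct application of \Lemref{delta_beta_output_non_perp} after unpacking what $\bar{S}_{i}$ looks like under the given conditioning. Write out
\[
\bar{S}_{i}=(\underbrace{\emptyset,\dots,\emptyset}_{i-1\text{ entries}},X_{i}\setminus Y_{i},\dots,X_{d}\setminus Y_{d})
\]
as supplied by \Algref{SforCQd_using_dSS}, so that the event $\boldsymbol{x}\cap\bar{S}_{i}\neq\emptyset$ decomposes as the union, over $j\in\{i,\dots,d\}$, of the events $\boldsymbol{x}[j]\in X_{j}\setminus Y_{j}$. Because $X_{j}={\rm dom}_{j}(\Xi)$ by unrolling $X_{j}=f_{j-1}(X_{j-1})$ against \Notaref{Abstract-d-Shuffler}, and because the hypothesis $\boldsymbol{x}\cap Y^{*}=\emptyset$ rules out $\boldsymbol{x}[j]\in Y_{j}$, the event of interest is simply $\boldsymbol{x}[j]\in{\rm dom}_{j}(\Xi)$.

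Next I would repackage the conditioning $\vec{I}\land Y$ as a proper $\beta$-constraint in the sense of \Defref{d-Shuffler_p_delta_uniform} and \Notaref{ProperBeta}. Set $\beta_{{\rm incl}}$ to record the paths in $Y$ (and the associated points that must populate the various ${\rm dom}_{j}(\Xi)$), and $\beta_{{\rm excl}}$ to record that each $x\in I_{j}$ is absent from ${\rm dom}_{j}(\Xi)$; by hypothesis $|Y|,|I_{j}|\le\poly$, so $\beta$ is proper and polynomial-sized. The distribution $\Xi\,|\,(\vec{I}\land Y)$ is therefore $\delta$ non-$\beta$-uniform, i.e.\ $\Xi\sim\mathbb{F}_{{\rm Shuff}}^{\delta|\beta}(n,d,f)$: the underlying $\delta$ non-uniformity is preserved by adding conditioning, because the defining inequality for $\delta$ non-$\beta$-uniformity is identical to the original one with the probabilities of the ambient uniform distribution replaced by those of the $\beta$-conditioned uniform distribution, and the ratio bound $2^{\delta|S|}$ is unaffected.

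Now the hypotheses of \Lemref{delta_beta_output_non_perp} are in force for the query $x=\boldsymbol{x}[j]$: it lies outside $\cup_{k}(H_{k}\cup I_{k})$ (since $\boldsymbol{x}[j]\notin I_{j}$ by assumption, and $\boldsymbol{x}\cap Y^{*}=\emptyset$ eliminates the $H_{k}$-parts contributed by $Y$), and it is not flagged by $\beta_{{\rm excl}}$ either. The lemma yields
\[
\Pr[\boldsymbol{x}[j]\in{\rm dom}_{j}(\Xi)\mid\vec{I}\land Y]\le 2^{\delta}\cdot\poly\cdot 2^{-n}
\]
for each fixed $j$. A union bound over the at most $d\le\poly$ values of $j\in\{i,\dots,d\}$ absorbs the extra factor into the $\poly$ term and delivers the claimed bound $\mathcal{O}(2^{\delta}\cdot\poly\cdot 2^{-n})$.

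The only subtle step is the bookkeeping in the second paragraph: one has to verify that repackaging $\vec{I}\land Y$ as a proper $\beta$ really does preserve the $\delta$ non-uniformity guarantee on which \Lemref{delta_beta_output_non_perp} depends, and that $\boldsymbol{x}[j]$ is indeed a ``fresh'' query in the sense required by that lemma (this is exactly what the premises $\boldsymbol{x}[j]\notin I_{j}$ and $\boldsymbol{x}\cap Y^{*}=\emptyset$ give us). Once those bookkeeping steps are nailed down, the calculation is immediate.
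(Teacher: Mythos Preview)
Your proposal is correct and follows essentially the same approach as the paper: encode the conditioning on $\vec{I}$ and $Y$ as a proper $\beta=(\beta_{{\rm incl}},\beta_{{\rm excl}})$ and invoke \Lemref{delta_beta_output_non_perp}. The paper's own proof is only a one-line sketch specifying $\beta_{{\rm incl}}'=(\emptyset,\emptyset,Y,\emptyset)$ (then made proper) and $\beta_{{\rm excl}}=(\emptyset,\emptyset,\emptyset,\vec{I})$; you have simply filled in the decomposition of $\bar{S}_{i}$, the identification $X_{j}={\rm dom}_{j}(\Xi)$, the verification that $\boldsymbol{x}[j]$ is a fresh query, and the union bound over $j$, all of which the paper leaves implicit.
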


\branchcolor{black}{\begin{proof}[Proof sketch.]
 Apply \Lemref{delta_beta_output_non_perp} with $\beta=(\beta_{{\rm incl}},\beta_{{\rm excl}})$
where $\beta_{{\rm incl}}':=(\emptyset,\emptyset,Y,\emptyset)$, $\beta_{{\rm incl}}$
is the proper version of $\beta_{{\rm incl}}$ and $\beta_{{\rm excl}}:=(\emptyset,\emptyset,\emptyset,\vec{I})$
(see \Notaref{ProperBeta}). 
\end{proof}
}

\subsection{Depth lower bounds for $d$-Shuffled Simon's Problem ($d$-SS) (cont.)
\label{subsec:CQ-d-SS-hardness}}

\subsubsection{$d$-Shuffled Simon's is hard for ${\rm CQ}_{d}$}
\begin{thm}[$d$-SS is hard for ${\rm CQ}_{d}$]
\label{thm:d-SS-is-hard-for-CQ-d}Let $(\mathcal{F},s)\sim\mathbb{O}_{{\rm SS}}(d,n)$,
i.e. let $\mathcal{F}$ be an oracle for a (uniformly) random $d$-Shuffled
Simon's problem of size $n$ and period $s$. Let $\mathcal{C}^{\mathcal{F}}$
be any ${\rm CQ}_{d}$ circuit (see \Defref{dCQ} and \Remref{oracleVersionsQNC-CQ-QC})
acting on $\mathcal{O}(n)$ qubits, with query access to $\mathcal{F}$.
Then $\Pr[s\leftarrow\mathcal{C}^{\mathcal{F}}]\le\negl$, i.e. the
probability that the algorithm finds the period is negligible. 
\end{thm}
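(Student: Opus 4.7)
The plan is to lift the $\QNC_d$ hardness proof of \Thmref{d-SS-is-hard-for-QNC-d} to the $\CQ{d}$ setting by combining a hybrid argument over the $m = \poly(n)$ classical-quantum rounds of $\mathcal{C}^{\mathcal{F}}$ with the sampling technique of \Propref{main_delta_non_uniform_d_shuffler}, exactly along the lines of the informal sketch in \Subsecref{idea_dSShardforCQd}. Write $\mathcal{C}^{\mathcal{F}} = \mathcal{A}_{c,m+1}^{\mathcal{F}} \circ \mathcal{C}_m^{\mathcal{F}} \circ \dots \circ \mathcal{C}_1^{\mathcal{F}}$ and consider, for $k = 0, 1, \dots, m$, the ``transcript'' $\tau_k$ consisting of all classical oracle responses together with the polynomial-length outputs of the first $k$ quantum blocks. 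The invariant to maintain by induction on $k$ is that, conditioned on $\tau_k$, the $d$-Shuffler $\Xi$ is $\epsilon_k$-close to a convex combination of $(p_k, \delta_k)$ non-$\beta_k$-uniform distributions (in the sense of \Defref{d-Shuffler_p_delta_uniform}), where $\beta_k$ records the $\poly$-many paths and $\perp$-answers already exposed, and $p_k, \epsilon_k \le \poly$, $\sum_k \delta_k \le \Delta$ for a small fixed constant $\Delta > 0$.

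The inductive step has two sub-steps. For the classical layer $\mathcal{A}_{c,k+1}^{\mathcal{F}}$, invoke \Lemref{delta_beta_output_non_perp}: each fresh classical query lands in ${\rm dom}_i(\Xi)$ with probability at most $2^{\delta_k} \cdot \poly \cdot 2^{-n}$, so a union bound lets one condition on receiving only $\perp$ at fresh locations at the cost of $\mathcal{O}(\poly \cdot 2^{-n})$ error, and the few non-$\perp$ responses simply enlarge $\beta_{k+1}$. For the quantum layer $\mathcal{C}_{k+1}^{\mathcal{F}}$, apply the O2H-based hybrid argument of \Thmref{QNCd-dSerialSimons} to replace the in-block oracle $\mathcal{F}$ by its shadow $\mathcal{G}_{k+1}$ built from \Algref{SforCQd_using_dSS} with the exposed paths $Y_{k+1}$ drawn from $\beta_{k+1}$. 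By \Propref{Shadow_dSS_withFixedSets} the earlier shadows contain no information about $\bar{S}_i$, and by \Propref{x_in_S_dSS_shadow_withFixedSets} the probability that a fresh query lands in $\bar{S}_i$ is at most $\mathcal{O}(2^{\delta_k} \cdot \poly \cdot 2^{-n})$, so the accumulated O2H error at this layer is negligible. Finally, absorb the measurement outcomes of the quantum block into the transcript and apply \Propref{main_delta_non_uniform_d_shuffler} with $\gamma = 2^{-\poly(n)}$ to re-express the conditional distribution as a convex combination of $(p_{k+1}, \delta_{k+1})$ non-$\beta_{k+1}$-uniform distributions, where $p_{k+1} = p_k + 2m/\delta$ and $\delta_{k+1} = \delta_k + \delta$.

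Choosing $\delta = \Delta/m$ as motivated at the end of \Subsecref{Sampling-argument-for} keeps $p_m \le \poly$, $\sum_k \delta_k = \Delta$, and $\epsilon_m \le \negl$. After the final round, the combined oracle view of $\mathcal{A}_{c,m+1}^{\mathcal{F}}$ is statistically close to one in which every quantum block sees only a shadow and the classical layers only learn $\poly(n)$ uniformly random pre-image/image pairs of the Simon function $f$ (i.e., evaluations at $\poly$ points of a random Simon function independent of all unrevealed structure). A standard Simon lower bound then gives $\Pr[s \leftarrow \mathcal{C}^{\mathcal{F}}] \le \negl$. The main obstacle, as anticipated in \Subsecref{idea_dSShardforCQd}, is the bookkeeping: one must check that the shadow substitution and the sampling update compose coherently (so that the non-$\beta$-uniformity survives both operations), and in particular that the $\poly(n)$ bits of quantum output can legitimately be treated as ``advice'' fed into \Propref{main_delta_non_uniform_d_shuffler}---this is where the \Thmref{d-SS-is-hard-for-QNC-d} bootstrap enters, ensuring that the quantum output is statistically close to one produced under a shadow (hence $s$-independent) and so its length, not its content, is what governs the growth of $p_k$.
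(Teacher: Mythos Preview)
Your proposal is correct and follows essentially the same approach as the paper: a hybrid over the $\tilde n$ classical--quantum rounds, with shadows built via \Algref{SforCQd_using_dSS}, O2H bounds supplied by \Propref{Shadow_dSS_withFixedSets} and \Propref{x_in_S_dSS_shadow_withFixedSets}, and the sampling argument \Propref{main_delta_non_uniform_d_shuffler} (with $\delta=\Delta/\tilde n$) to absorb each quantum block's output into the transcript before proceeding; the final step likewise reduces to the classical Simon lower bound. The paper's proof differs only in presentation---it walks through the $i=1,2$ cases explicitly before the general step rather than stating an inductive invariant, and it tracks non-$\perp$ classical responses as exposed paths $Y_j$ (granting the full path to the circuit) rather than conditioning them away, but these are cosmetic.
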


\branchcolor{black}{To prove the theorem, we first describe the main argument while asserting
some intermediate results, which are proven separately next.} 

We begin with setting up the basic notation we need for the proof. 
\begin{itemize}
\item Denote the initial state by $\sigma_{0}$ which is initialised to
all zeros (without loss of generality). 
\item Recall from \Notaref{CompositionNotation} that a ${\rm CQ}_{d}$
circuit can be represented as $\mathcal{C}=\mathcal{C}_{\tilde{n}}\circ\cdots\mathcal{C}_{2}\circ\mathcal{C}_{1}$
where\footnote{We dropped $\mathcal{A}_{c,\tilde{n}+1}$ without loss of generality
as it can be absorbed into another $\mathcal{C}_{\tilde{n}+1}$.} $\tilde{n}\le\poly$. Here, we write $\mathcal{C}_{i}:=\vec{U}_{i}\circ\mathcal{A}_{c,i}$
where $\vec{U}_{i}$ contains $d$ layers of unitaries, followed by
a measurement. We drop the subscript $c$ (which stood for ``classical'')
from $\mathcal{A}_{c,i}$ for brevity. 
\item We abuse the notation slightly. We drop the distinction between oracles
and functions and work solely with the latter. Let $(f,s)\sim\mathbb{F}_{{\rm Simon}}(n)$,
$((f_{i})_{i=0}^{d},s)\sim\mathbb{F}_{{\rm Shuff}}(n,d,f)$ and $\mathcal{F}$
correspond to $(f_{i})_{i=0}^{d}$ (see \Defref{SimonsFunctionDistr}
and \Defref{dShuffler}, \Notaref{Abstract-d-Shuffler}). In the following,
we use $\tilde{n}$ sets of shadow oracles each set is denoted by
$\vec{\mathcal{G}}_{i}=(\mathcal{G}_{d,i},\dots\mathcal{G}_{1,i})$,
one set for each $\mathcal{C}_{i}$. 
\begin{itemize}
\item We write $\vec{U}_{i}^{\vec{\mathcal{G}}_{i}}$ to denote $\Pi_{i}\circ U_{d+1,i}\circ\mathcal{G}_{d,i}\circ U_{d,i}\circ\dots\mathcal{G}_{1,i}\circ U_{1,i}$.
We drop the ``$\circ$'' for brevity when no confusion arises.
\item These shadows are constructed using \Algref{SforCQd_using_dSS} where
the paths are specified as we outline the main argument.
\end{itemize}
\item Note that, for each $i$, after $\mathcal{C}_{i}$ the state is classical
allowing use to consider ``transcripts''. Parameters: Set $\delta=\Delta/\tilde{n}$,
$\gamma=2^{-m}$ where $\Delta>0$ is some small constant and $m$
is such that $m-\tilde{m}\ge\Omega(n)$, where $\tilde{m}$ is the
length of the ``advice'', i.e. number of bits $\mathcal{A}_{i}$
sends to $\vec{U}_{i}$.
\end{itemize}
To simplify the notation, we assume that for each query made by a
classical circuit $\mathcal{A}_{c,i}$ that results in a non-$\perp$
query, the circuit learns the entire path associated with that query.
This can only improve the success probability of the ${\rm CQ}_{d}$
circuit $\mathcal{C}$ and thus it suffices to upper bound this quantity
instead. 

\branchcolor{black}{\begin{proof}
The main argument consists of two steps as usual. The first is that
the output of any ${\rm CQ}_{d}$ circuit with access to $\mathcal{F}$
differs from that of the same circuit with access to shadow oracles
$\{\vec{\mathcal{G}}_{i}\}_{i}$, with negligible probability, i.e.
\begin{equation}
{\rm B}\left[\mathcal{A}_{\tilde{n}+1}^{\mathcal{F}}\vec{U}_{\tilde{n}}^{\mathcal{F}}\mathcal{A}_{\tilde{n}}^{\mathcal{F}}\dots\vec{U}_{1}^{\mathcal{F}}\mathcal{A}_{1}^{\mathcal{F}}(\sigma_{0}),\quad\mathcal{A}_{\tilde{n}+1}^{\mathcal{F}}\vec{U}_{\tilde{n}}^{\vec{\mathcal{G}}_{\tilde{n}}}\mathcal{A}_{\tilde{n}}^{\mathcal{F}}\dots\vec{U}_{1}^{\vec{\mathcal{G}}_{1}}\mathcal{A}_{1}^{\mathcal{F}}(\sigma_{0})\right]\le\poly\cdot2^{-n}\label{eq:TD_CQ_d_step1_main}
\end{equation}
given $d\le\poly$. The second is that no ${\rm CQ}_{d}$ circuit
with access to shadow oracles can find the period with non-negligible
probability, i.e.
\begin{equation}
\Pr[s\leftarrow\mathcal{A}_{\tilde{n}+1}^{\mathcal{F}}\vec{U}_{\tilde{n}}^{\vec{\mathcal{G}}_{\tilde{n}}}\mathcal{A}_{\tilde{n}}^{\mathcal{F}}\dots\vec{U}_{1}^{\vec{\mathcal{G}}_{1}}\mathcal{A}_{1}^{\mathcal{F}}(\sigma_{0})]\le\mathcal{O}(2^{-n}).\label{eq:shadowOnlyUseless_CQ_d_dSS}
\end{equation}

\textbf{Step One | Using shadow oracles causes negligible change in
output}\\
We outline the proof of step one. Using the hybrid argument, one can
bound the LHS of \Eqref{TD_CQ_d_step1_main} as 
\begin{align}
\le & \sum_{i=1}^{\tilde{n}}{\rm B}\Big[\underbrace{\mathcal{A}_{\tilde{n}+1}^{\mathcal{F}}\vec{U}_{\tilde{m}}^{\mathcal{F}}\mathcal{A}_{\tilde{m}}^{\mathcal{F}}\dots\vec{U}_{i+1}^{\mathcal{F}}\mathcal{A}_{i+1}^{\mathcal{F}}}_{\text{can be dropped \ensuremath{\because} monotonicity B}}\ \ \ \vec{U}_{i}^{\mathcal{F}}\mathcal{A}_{i}^{\mathcal{F}}\vec{U}_{i-1}^{\vec{\mathcal{G}}_{i}}\mathcal{A}_{i-1}^{\mathcal{F}}\dots\vec{U}_{1}^{\vec{\mathcal{G}}_{1}}\mathcal{A}_{1}^{\mathcal{F}},\nonumber \\
 & \quad\quad\quad\overbrace{\mathcal{A}_{\tilde{n}+1}^{\mathcal{F}}\vec{U}_{\tilde{m}}^{\mathcal{F}}\mathcal{A}_{\tilde{m}}^{\mathcal{F}}\dots\vec{U}_{i+1}^{\mathcal{F}}\mathcal{A}_{i+1}^{\mathcal{F}}}\ \ \ \vec{U}_{i}^{\vec{\mathcal{G}}_{i}}\mathcal{A}_{i}^{\mathcal{F}}\vec{U}_{i-1}^{\vec{\mathcal{G}}_{i}}\mathcal{A}_{i-1}^{\mathcal{F}}\dots\vec{U}_{1}^{\vec{\mathcal{G}}_{1}}\mathcal{A}_{1}^{\mathcal{F}}\Big]\nonumber \\
\le & \sum_{i=1}^{\tilde{n}}{\rm B}\left[\vec{U}_{i}^{\mathcal{F}}\mathcal{A}_{i}^{\mathcal{F}}\vec{U}_{i-1}^{\vec{\mathcal{G}}_{i}}\mathcal{A}_{i-1}^{\mathcal{F}}\dots\vec{U}_{1}^{\vec{\mathcal{G}}_{1}}\mathcal{A}_{1}^{\mathcal{F}},\quad\vec{U}_{i}^{\vec{\mathcal{G}}_{i}}\mathcal{A}_{i}^{\mathcal{F}}\vec{U}_{i-1}^{\vec{\mathcal{G}}_{i}}\mathcal{A}_{i-1}^{\mathcal{F}}\dots\vec{U}_{1}^{\vec{\mathcal{G}}_{1}}\mathcal{A}_{1}^{\mathcal{F}}\right]\label{eq:sum_of_Bs_CQ_d}
\end{align}
where we suppressed $\sigma_{0}$ for brevity. 

\textbf{The $i=1$ case}\\
Begin with $i=1$. Let $\mathcal{A}_{1}^{\f}(\sigma_{0})=:\sigma_{1}$.
Then, one has 
\begin{align*}
{\rm B}(\vec{U}_{1}^{\mathcal{F}}(\sigma_{1}),\vec{U}_{1}^{\vec{\g}_{1}}(\sigma_{1})) & \le{\rm B}(\mathcal{F}U_{d,1}\dots\mathcal{F}U_{1,1}(\sigma_{1}),\quad\g_{d,1}U_{d,1}\dots\g_{1,1}U_{1,1}(\sigma_{1}))\\
 & \le\sum_{j=1}^{d}{\rm B}(\mathcal{F}U_{j,1}\ \ \ \underbrace{\g_{j-1,1}U_{j-1,1}\dots\g_{1,1}U_{1,1}(\sigma_{1})}_{:=\rho_{j-1,1}}, & \text{hybrid argument}\\
 & \quad\quad\quad\g_{j,1}U_{j,1}\ \ \ \overbrace{\g_{j-1,1}U_{j-1,1}\dots\g_{1,1}U_{1,1}(\sigma_{1})}).
\end{align*}
Using the O2H lemma (see \Lemref{O2H}), one can bound each term in
the sum as 
\begin{equation}
{\rm B}(\mathcal{F}U_{j,1}\rho_{j-1,1},\g_{j,1}U_{j,1}\rho_{j-1,1})\le\sqrt{\Pr[{\rm find}:U_{j,1}^{\mathcal{F}\backslash\bar{S}_{j,1}},\rho_{j-1,1}]}\label{eq:CQ_d_FU1}
\end{equation}
where $\bar{S}_{j,1}$ is used to define the shadow oracle $\mathcal{G}_{j,1}$
and $\bar{S}_{j,1}$ is defined momentarily. 

Let $Y_{1}=\{\boldsymbol{y}_{k,1}\}_{k=1}^{|Y_{1}|}$ denote the paths
uncovered by $\mathcal{A}_{1}^{\mathcal{F}}$ and $\vec{I}_{1}=(I_{l,1})_{l=1}^{d}$
denote the queries which yielded $\perp$, given the paths uncovered
were $Y_{1}$. Note that these are random variables (even if we take
$\mathcal{A}_{1}$ to be deterministic) correlated with $\mathcal{F}$.
Define the transcript $T(\sigma_{1}):=(\vec{I}_{1},Y_{1})$. We generalise
these as we proceed. 

With this in place, define for each $j\in\{1,\dots d\}$, $\bar{S}_{j,1}$
to be the output of \Algref{SforCQd_using_dSS} with $j$, $Y_{1}$
and $((f_{i})_{i=0}^{d},s)$ as inputs. To bound \Eqref{CQ_d_FU1},
condition its RHS squared, as 
\begin{align*}
 & \le\sum_{\vec{I}_{1},Y_{1}}\Pr[\vec{I}_{1}\land Y_{1}]\Pr[{\rm find}:U_{j,1}^{\mathcal{F}\backslash\bar{S}_{j,1}},\rho_{j-1,1}|\vec{I}_{1}\land Y_{1}]\\
 & \le\poly\cdot2^{-n}\cdot\sum_{\vec{I}_{1},Y_{1}}\Pr[\vec{I}_{1}\land Y_{1}]=\poly\cdot2^{-n}
\end{align*}
where the second inequality follows from \Lemref{boundPfind} which
bounds the probability of finding in terms of landing queries on $\bar{S}_{j,1}$
which in turn is bounded by \Propref{x_in_S_dSS_shadow_withFixedSets}
with parameters\footnote{The $\delta$ here refers to that in the proposition and not the one
we defined; excuse the minor the notational conflict.} $\delta\leftarrow0$, $\vec{I}\leftarrow\vec{I_{1}}$ and $Y\leftarrow Y_{1}$.
Note that the conditions to apply \Lemref{boundPfind} (via \Corref{Conditionals})
are satisfied once we condition all the variables (as in \Corref{Conditionals})
to $\vec{I}_{1}\land Y_{1}$ and use \Propref{Shadow_dSS_withFixedSets}
to show they are uncorrelated.\footnote{Once $Y_{1}$ are fixed, this may be viewed as a special case of the
proof of \Thmref{QCd_hardness_dSerialSimons} where the $\perp$s
are known right at the start.}

\textbf{The $i=2$ case}\\
For $i=2$ may be proved by following the $i=1$ case, with one key
difference---the use of $\delta$ non-uniform distributions over
the $d$-Shuffler. Let $\sigma_{2}:=\mathcal{A}_{2}^{\mathcal{F}}\vec{U}_{1}^{\vec{\mathcal{G}_{1}}}\mathcal{A}_{1}^{\mathcal{F}}(\sigma_{0})$
and $\rho_{j-1,2}:=\mathcal{G}_{j-1,2}U_{j,2}\dots\mathcal{G}_{1,2}U_{1,2}(\sigma_{2})$.
Proceeding as before, one can write the $i=2$ term of \Eqref{sum_of_Bs_CQ_d}
as
\begin{equation}
{\rm B}(\vec{U}_{2}^{\mathcal{F}}(\sigma_{2}),\vec{U}_{2}^{\vec{\mathcal{G}}_{2}}(\sigma_{2}))\le\sum_{j=1}^{d}\sqrt{\Pr[{\rm find}:U_{j,2}^{\mathcal{F}\backslash\bar{S}_{j,2}},\rho_{j-1,2}]}\label{eq:CQ_d_FU2}
\end{equation}
where, again, $\bar{S}_{j,2}$ is used to define the shadow oracle
$\mathcal{G}_{j,2}$ and $\bar{S}_{j,2}$ is defined after the transcript
for $\sigma_{2}$, $T(\sigma_{2})$, is defined. 

Let $T(\sigma_{2}):=(\vec{I}_{2},Y_{2},S_{1},s_{1},\vec{I}_{1},Y_{1})$
where $Y_{1}$ and $\vec{I}_{1}$ were defined for the $i=1$ case.
Here $s_{1}$ is the output produced by the first ${\rm QNC}_{d}$
circuit, conditioned on $\vec{I}_{1}\land Y_{1}$ (i.e. the information
learnt by the classical algorithm $\mathcal{A}_{1}$ being $\vec{I}_{1}$
and $Y_{1}$). This next step is the key difference between the two
cases and to this end, it may be helpful to recall that the main random
variable being conditioned is the $d$-Shuffler, $\mathcal{F}$ (which
is abstractly referred to as $\Xi$). It may also be helpful to look
at \Exaref{conditioningPlusGivingAccess} before proceeding. We use
\Propref{main_delta_non_uniform_d_shuffler} with the parameters $\gamma=2^{-m}$,
$\delta=\Delta/\tilde{n}$ (recall), $\delta'=0$ and $\beta$ encoding
$\vec{I}_{1}$ and $Y_{1}$. To prove our result, one may treat the
parts fixed by this conditioning process (i.e. the parts fixed in
each term of \Eqref{main_delta_non_uniform_d_shuffler}) as a random
variable which the circuit can access (as illustrated in \Exaref{conditioningPlusGivingAccess}).
Let $S_{1}^{*}$ be\footnote{The star, as usual, indicates that the last coordinate is not specified.}
the aforementioned random variable (given $s_{1},\vec{I}_{1},Y_{1}$)
and note that its size $|S_{1}^{*}|$ is at most $2m/\delta\le\poly$.
Now $\mathcal{A}_{2}^{\mathcal{F}}$ takes as input $(S_{1}^{*},s_{1},\vec{I}_{1},Y_{1})$
so it can learn the last coordinate of the paths in $S_{1}^{*}$;
let $S_{1}$ denote these complete paths corresponding to $S_{1}^{*}$.
Suppose it further learns paths $Y_{2}$ (given the transcript so
far) and $\vec{I}_{2}$ (given the transcript and $Y_{2}$). These
completely specify the transcript $T(\sigma_{2})$. 

We may now define, for each $j\in\{1,\dots d\}$, $\bar{S}_{j,2}$
to be the output of \Algref{SforCQd_using_dSS} with $j$, $Y_{1}\cup S_{1}\cup Y_{2}$
and $((f_{i})_{i},s)$ as inputs. To bound \Eqref{CQ_d_FU2}, one
may condition each the square of each term of the RHS, $\Pr[{\rm find}:U_{j,2}^{\mathcal{F}\backslash\bar{S}_{j,2}},\rho_{j-1,2}]$,
as
\begin{align*}
 & =\sum_{s_{1},\vec{I}_{1},Y_{1}}\Pr[s_{1}]\Pr[\vec{I}_{1}]\Pr[Y_{1}]\Pr[{\rm find}:U_{j,2}^{\mathcal{F}\backslash\bar{S}_{j,2}},\rho_{j-1,2}|Y_{1}\land\vec{I}_{1}\land s_{1}]\\
 & \le\sum_{\stackrel{\vec{I}_{2},Y_{2},S_{1},\vec{I}_{1}Y_{1}}{s_{1}:\Pr[s_{1}\ge2^{-m}]}}{\rm Pr}_{\delta}[\vec{I}_{2}]{\rm Pr}_{\delta}[Y_{2}]\Pr[S_{1}]\Pr[s_{1}]\Pr[\vec{I}_{1}]\Pr[Y_{1}]{\rm Pr}_{\delta}[{\rm find}:U_{j,2}^{\mathcal{F}\backslash\bar{S}_{j,2}},\rho_{j-1,2}|T(\sigma_{2})]+2^{-(m-\tilde{m})}\\
 & \le2^{\delta}\cdot\poly\cdot2^{-n}\cdot\sum_{\stackrel{\vec{I}_{2},Y_{2},S_{1},\vec{I}_{1}Y_{1}}{s_{1}:\Pr[s_{1}\ge2^{-m}]}}{\rm Pr}_{\delta}[\vec{I}_{2}]{\rm Pr}_{\delta}[Y_{2}]\Pr[S_{1}]\Pr[s_{1}]\Pr[\vec{I}_{1}]\Pr[Y_{1}]\ \ +\ \ 2^{-(m-\tilde{m})}\le\negl
\end{align*}
where the first step follows from the rules of conditional probabilities.
The second step follows from an application of \Propref{main_delta_non_uniform_d_shuffler}
where by $\Pr_{\delta}$ we mean that the random variable $\mathcal{F}$
is, instead of being sampled from $\mathbb{F}_{{\rm Shuff}}$, is
sampled from $\mathbb{F}_{{\rm Shuff}}^{\delta}$; we needn't use
$\beta$ here as the conditioning is explicitly stated. The third
step, as in the $i=1$ case, follows from the application of \Lemref{boundPfind}
(via \Corref{Conditionals}) by observing that after conditioning,
the variables are uncorrelated (using \Propref{Shadow_dSS_withFixedSets})
and by using the bound asserted by \Propref{x_in_S_dSS_shadow_withFixedSets}
with parameters $\delta\leftarrow\delta$, $\vec{I}\leftarrow\vec{I}_{1}\cup\vec{I}_{2}$
and $Y\leftarrow Y_{1}\cup Y_{2}\cup S_{1}$. 

\label{exa:conditioningPlusGivingAccess}Let $\mathcal{C}$ be a ${\rm CQ}_{d}$
circuit (for concreteness) with query access to a $d$-Shuffler $\mathcal{F}$,
hiding the period of a random Simon's function. Let the period by
$s$ (a random variable). Then, 
\begin{align*}
\max_{\mathcal{C}}\Pr[s\leftarrow\mathcal{C}^{\mathcal{F}}] & =\max_{\mathcal{C}}\sum_{S}\Pr[S]\cdot\Pr[s\leftarrow\mathcal{C}^{\mathcal{F}|S}]\\
 & \le\max_{\mathcal{C}}\sum_{S}\Pr[S]\cdot\Pr[a\leftarrow\mathcal{C}^{\mathcal{F}|S}(S)].
\end{align*}

\textbf{The general $i\in\{1,\dots\tilde{n}\}$ case}\\
This is a straightforward generalisation of the $i=2$ case and hence
only key steps are outlined. Let $\sigma_{i}:=\mathcal{A}_{i}^{\mathcal{F}}\vec{U}_{i-1}^{\vec{\mathcal{G}}_{i-1}}\dots\mathcal{A}_{2}^{\mathcal{F}}\vec{U}_{1}^{\vec{\mathcal{G}}_{1}}\mathcal{A}_{1}^{\mathcal{F}}(\sigma_{0})$
and the transcript $T(\sigma_{i})=:(\vec{I}_{i},Y_{i},S_{i-1},s_{i-1},\vec{I}_{i-1},Y_{i-1},\dots S_{1},s_{1},\vec{I}_{1},Y_{1})$
where $\vec{I}_{j}$ encodes the locations which yielded $\perp$
when queried by $\mathcal{A}_{j}$ (given the transcript before that),
$Y_{j}$ encodes the paths uncovered by $\mathcal{A}_{j}$ (given
the transcript before), $s_{j}$ denotes the output of the $j$th
quantum circuit (given the transcript before that) and $S_{j}$ denotes
the paths uncovered using \Propref{main_delta_non_uniform_d_shuffler}
and allowing the circuit access to it (as in \Exaref{conditioningPlusGivingAccess}).
Let $\rho_{j-1,i}:=\mathcal{G}_{j-1,i}U_{j-1,i}\dots\mathcal{G}_{1,i}U_{1,i}(\sigma_{i})$.
The $i$th term in \Eqref{sum_of_Bs_CQ_d} can be expressed as 
\begin{equation}
{\rm B}(\vec{U}_{i}^{\mathcal{F}}(\sigma_{i}),\vec{U}_{i}^{\vec{\mathcal{G}}_{i}}(\sigma_{i}))\le\sum_{j=1}^{d}\sqrt{\Pr[{\rm find}:U_{j,i}^{\mathcal{F}\backslash\bar{S}_{j,i}},\rho_{j-1,i}]}\label{eq:CQ_d_FU_i}
\end{equation}
using \Lemref{O2H}, where $\vec{\mathcal{G}}_{j,i}$ is the shadow
oracle defined using $\bar{S}_{j,i}$ which in turn is defined as
the output of \Algref{SforCQd_using_dSS} with $j$, $Y_{1}\cup S_{1}\cup\dots Y_{i-1}\cup S_{i-1}\cup Y_{i}$
and $((f_{i})_{i},s)$ as inputs. The square of the $j$th term of
the RHS of \Eqref{CQ_d_FU_i}, $\Pr[{\rm find}:U_{j,i}^{\mathcal{F}\backslash\bar{S}_{j,i}},\rho_{j-1,i}]$,
can be bounded as 
\begin{align}
 & \le\sum_{\stackrel{\vec{I}_{i},Y_{i},S_{i-1},\dots S_{1},\vec{I}_{1},Y_{1}}{s_{i-1}:\Pr[s_{i-1}\ge2^{-m}],\dots s_{1}:\Pr[s_{1}\ge2^{-m}]}}{\rm Pr}_{i\cdot\delta}[\vec{I}_{i}]{\rm Pr}_{i\cdot\delta}[Y_{i}]{\rm Pr}_{(i-1)\cdot\delta}[S_{i-1}]\dots{\rm Pr}[\vec{I}_{1}]\Pr[Y_{1}]{\rm Pr}_{i\cdot\delta}[{\rm find}:U_{j,i}^{\mathcal{F}\backslash\bar{S}_{j,i}},\rho_{j-1,i}|T(\sigma_{i})]+i\cdot2^{-(m-\tilde{m})}\label{eq:BoundPr_find_U_ji}\\
 & \le2^{\Delta}\cdot\poly\cdot2^{-n}+i\cdot2^{-(m-\tilde{m})}\le\negl\label{eq:BoundPr_find_U_ji_negl}
\end{align}
where ${\rm Pr}_{i\cdot\delta}$ is evaluated with $\mathcal{F}$
sampled from $\mathbb{F}_{{\rm Shuff}}^{i\cdot\delta}$ instead of
$\mathbb{F}_{{\rm Shuff}}$. This follows from repeated applications
of \Propref{main_delta_non_uniform_d_shuffler} (with, for the $k$th
application, $\gamma=2^{-m}$, $\delta=\Delta/\tilde{n}$, $\delta'=(k-1)\cdot\delta$
and $\beta$ encoding $\vec{I}:=\vec{I}_{1}\cup\vec{I}_{2}\dots\cup\vec{I}_{k}$
and $Y:=Y_{1}\cup Y_{2}\cup\dots Y_{k}\cup S_{1}\cup S_{2}\dots S_{k-1}$).
After conditioning on the transcript, the variables can be shown to
be uncorrelated using \Propref{Shadow_dSS_withFixedSets} allowing
the application of \Lemref{boundPfind} (via \Corref{Conditionals}).
The ``$p$'' in \Lemref{boundPfind} can be bounded using \Propref{x_in_S_dSS_shadow_withFixedSets}
with parameters $\delta\leftarrow i\cdot\delta$, $\vec{I}\leftarrow\vec{I}_{1}\cup\dots\vec{I}_{i}$
and $Y\leftarrow Y_{1}\cup\dots Y_{i}\cup S_{1}\cup\dots S_{i-1}$. 

Since each term in \Eqref{sum_of_Bs_CQ_d} is bounded by $\poly\cdot2^{-n}$,
and there are at most polynomially many terms, one obtains the bound
asserted by \Eqref{TD_CQ_d_step1_main}.

\textbf{Step Two | The problem is hard using only shadow oracles}

It remains to show that \Eqref{shadowOnlyUseless_CQ_d_dSS} holds.
This is easily seen by observing that for the poly time classical
algorithm, $\mathcal{A}_{j}^{\mathcal{F}}$, having access to the
Simon's function directly or having access via the $d$-Shuffler,
is equivalent. Further, the $d$-depth quantum circuits $\vec{U}_{j}^{\vec{\mathcal{G}}_{j}}$
cannot reveal any more information about the Simon's function than
what the classical algorithm already possessed. This is because the
shadows do not contain any more information than that already possessed
by the classical algorithm. Thus, the success probability of the classical
algorithm is bounded by the probability of a ${\rm BPP}$ machine
solving the Simon's problem.
\end{proof}
}

\section{$d$-Shuffled Collisions-to-Simon's ($d$-SCS) Problem « Main Result
2\label{sec:ShuffledCollisionsToSimons}}

\branchcolor{black}{We describe a problem, the $d$-Shuffled Collisions-to-Simon's problem,
which a ${\rm QC}_{4}$ circuit can solve but no ${\rm CQ}_{d'}$
circuit for $d'\le d$. However, we show this using a non-standard
oracle model. We first describe this model.}

\subsection{Intrinsically Stochastic Oracle\label{subsec:Intrinsically-Stochastic-Oracle}}

\branchcolor{black}{We define an\emph{ Intrinsically Stochastic Oracle }to be a standard
oracle except that it is allowed to sample (and use) new instances
of a random variable, each time it is queried. We begin with an example.
Suppose $\mathcal{O}$ takes no input and produces a uniformly random
bit $b$. Two identical copies $\mathcal{O}_{1}$ and $\mathcal{O}_{2}$
of $\mathcal{O}$, when queried, have a probability $1/2$ of agreeing.
To contrast with the standard oracle model, let $c$ be a random bit
sampled uniformly at random. Let $\mathcal{Q}$ be a (standard) oracle
which outputs $c$. Multiple copies of $\mathcal{Q}$ would output
the same $c$.

Classically, the latter (i.e. the standard oracle) makes sense, because
one can pull out the randomness from all reasonable models of computations
into a random string specified at the beginning. Thus one can make
any implementation of an oracle repeatable, making the aforesaid notion
of intrinsically stochastic oracles questionable at best. Quantumly,
however, it is conceivable that a process produces superpositions
and a part of this is concealed from the circuit making the query;
for instance, perhaps it is somehow scrambled so that it is effectively
unavailable to the circuit. In particular, this can be used to simulate
the measurement of a quantum superposition which results in a random
outcome, each time the measurement is performed; even if the process
is repeated identically.

We should note that non-standard oracles have been used before to
prove separations between complexity classes. An example is the so-called
quantum oracle used to prove a separation between the classes $\textsf{{QMA}}$
and $\textsf{{QCMA}}$ \cite{aaronson2007quantum}.}

\begin{defn}[Intrinsically Stochastic Oracle]
 Let $X,Y$ be finite sets and let $\mathbb{F}_{Y}$ be some distribution
over $Y$. Let $g(x,y):X\times Y\to Z$ be a function. An \emph{intrinsically
stochastic oracle (ISO) $\mathcal{O}$ with respect to $\mathbb{F}_{Y}$},
corresponding to $g$ is defined by its action at each query: $\mathcal{O}$
samples $y\sim\mathbb{F}_{Y}$ and on input $\left|x\right\rangle \left|z\right\rangle $,
produces $\left|x\right\rangle \left|z\oplus g(x,y)\right\rangle $.
Its action on a superposition query is defined by linearity as in
the standard oracle model.
\end{defn}

\begin{rem}
Note that for superposition queries, the same $y$ is used for each
part constituting the superposition. A possible quantum realisation
of $\mathcal{O}$ could be the following: $\mathcal{O}\left|x\right\rangle _{Q}\left|z\right\rangle _{R}\left|0\right\rangle _{R'}$
produces $\sum_{y}\frac{1}{\sqrt{\Pr[y]}}\left|x\right\rangle _{Q}\left|z\oplus g(x,y)\right\rangle _{R}\left|y\right\rangle _{R'}$
and scrambling the last register, $R'$, effectively tracing it out
and producing a mixed state. Or, one could imagine that $\mathcal{O}\left|x\right\rangle _{Q}\left|z\right\rangle _{R}$
produces the same state except that the oracle holds $R'$. This can
arise naturally in an interactive setting. 
\end{rem}

\subsection{Oracles and Distributions | The $d$-SCS Problem}

\branchcolor{black}{When access to the Simon's function is restricted via a $d$-Shuffler,
we saw that the problem is hard for both ${\rm CQ}_{d}$ and ${\rm QC}_{d}$.
Here, we consider a variant of this restriction. The basic idea is
that one is given access to a $2$-to-$1$ function, $f$ (not necessarily
a Simon's function). For each colliding pair, one can associate exactly
one colliding pair of some Simon's function $g$. The objective is
to find the period of $g$ but access to $g$ is not provided directly.
Instead, one is given restricted access to a mapping which takes colliding
pairs of $f$ to colliding pairs of $g$. The details of this restriction,
which use the $d$-Shuffler, determine the hardness of the problem.
In one case, we essentially reduce to $d$-SS while in another, we
obtain $d$-Shuffled Collisions-to-Simon's ($d$-SCS).

We proceed more precisely now, beginning with a few definitions. 

}
\begin{defn}[Distribution for $2\to1$ functions]
 Let $F$ be a set of all functions $f:\{0,1\}^{n}\to\{0,1\}^{n}$
such that for each $x_{0}\in\{0,1\}^{n}$, there is exactly one $x_{1}\in\{0,1\}^{n}$
(distinct from $x_{0}$) such that $f(x_{0})=f(x_{1})$. Define the
\emph{uniform distribution over $2\to1$ functions}, $\mathbb{F}_{2\to1}(n)$,
to be the uniformly random distribution over $F$ and denote the corresponding
oracle distribution by $\mathbb{O}_{2\to1}(n)$. \label{def:TwoToOne}
\end{defn}

\begin{defn}[Collisions-to-Simons (CS) map]
 Consider any $2$-to-$1$ function $f:\{0,1\}^{n}\to\{0,1\}^{n}$,
and any Simon's function $g:\{0,1\}^{n}\to\{0,1\}^{n}$. Suppose by
$f^{-1}(y)$ we denote pre-images under $f$ listed in the ascending
order (and similarly for $g^{-1}(z)$). We call a bijective map $p:\{0,1\}^{n}\to\{0,1\}^{n}$
a\emph{ Collisions to Simons map} from $f$ to $g$, if for all $k\in\{1,\dots,2^{n}/2\}$,
it maps the pre-images $f^{-1}(y)$ to the pre-images $g^{-1}(z)$
where $y$ is the $k$th largest value of $f$ and $z$ the $k$th
largest value of $g$. Denote the inverse map by $p_{{\rm inv}}$.
\end{defn}

\begin{rem}
Note that $f(x_{0})$ may not equal $g(p(x_{0}))$.
\end{rem}

\branchcolor{black}{If one were to give access to $p$ and $p^{-1}$ only via a $d$-Shuffler,
this problem would essentially reduce to $d$-SS because without access
to $p$, the Simon's function $g$ is effectively hidden and finding
its period with non-vanishing probability is impossible using both
${\rm QC}_{d}$ and ${\rm CQ}_{d}$ circuits.

We therefore consider the following restriction. One is given access
to a $2$-to-$1$ function $f:\{0,1\}^{n}\to\{0,1\}^{n}$ via a stochastic
oracle which allows a quantum circuit to hold two colliding pre-images
in superposition. Further, instead of the CS map $p$ (and $p_{{\rm inv}}$),
one is given access to an ``encrypted'' CS map $p'$ (and $p_{{\rm inv}}^{\prime}$).
Further, one can access a random permutation $h:\{0,1\}^{n}\to\{0,1\}^{n}$
via a $d$-Shuffler. The role of $p'$ is to allow evaluation of $p(x)$
only when $h(f(x))$ is also accompanied with the input. The idea
is that while holding a superposition of images, one needs a $(d+1)$-depth
computation to evaluate $h$ which a ${\rm QC}_{d}$ circuit can but
a ${\rm CQ}_{d}$ circuit cannot. We now describe the problem formally.}
\begin{defn}[$d$-SCS distribution]
\label{def:dSCS_distr}Define the \emph{$d$-Shuffled Collisions-to-Simons
Function} \emph{distribution} $\mathbb{F}_{{\rm SCS}}(n)$ by its
sampling procedure.
\begin{itemize}
\item \emph{The two-to-one function, Simons function and the CS map:} Sample
a random $2\to1$ function, $f\sim\mathbb{F}_{2\to1}(n)$ (see \Defref{TwoToOne})
and a random Simon's function, $(g,s)\sim\mathbb{F}_{{\rm Simons}}(n)$
(see \Defref{SimonsFunctionDistr}). Let $p$ be the Collisions-to-Simons
map for $f$ and $g$.
\item \emph{Stochastic Oracle for $f$:} Let $Y=\{y:f(x)=y,x\in\{0,1\}^{n}\}$
and denote by $\mathbb{F}_{Y}$ the uniform distribution over $Y$.
Let $b\in\{0,1\}$. Define $\mathcal{S}$ to be a stochastic oracle
wrt $\mathbb{F}_{Y}$ corresponding to the function $g(b)=(f^{-1}(y)[b],y)$,
i.e. $\mathcal{S}\left(\left|0\right\rangle _{Q}+\left|1\right\rangle _{Q}\right)\left|0\right\rangle _{RR'}=\left(\left|0\right\rangle _{Q}\left|x_{0}\right\rangle _{R}+\left|1\right\rangle _{Q}\left|x_{1}\right\rangle _{R}\right)\left|y\right\rangle _{R'}$
where $f^{-1}(y)=(x_{0},x_{1})$ and $y$ is stochastic. 
\item \emph{Random permutation hidden in a $d$-Shuffler:} Sample $h\sim\mathbb{F}_{{\rm R}}(n)$
from the uniformly random distribution over one-to-one functions (see
\Defref{RandomOneOneFunctions}) and let $\Xi\sim\mathbb{F}_{{\rm Shuff}}(d,n,h)$
be a $d$-Shuffler (see \Defref{dShuffler}) encoding $h$.
\item \emph{Encrypted CS map:} Define $p':\{0,1\}^{2n}\to\{0,1\}^{n}\cup\{\perp\}$
to be $p'(h(f(x)),x)=p(x)$ when the input is of the form $(h(f(x)),x)$
and $\perp$ otherwise. Similarly, define $p_{{\rm inv}}^{\prime}:\{0,1\}^{2n}\to\{0,1\}^{n}\cup\{\perp\}$
to be $p'_{{\rm inv}}(h(f(x)),p(x))=p_{{\rm inv}}(p(x))=x$ when the
input is of the form $(h(f(x)),p(x))$ and $\perp$ otherwise.
\end{itemize}
Return $(\mathcal{S},\Xi,p',p'_{{\rm inv}},s)$ when $\mathbb{F}_{{\rm SCS}}(n)$
is sampled.
\end{defn}

\begin{defn}[The $d$-SCS problem]
 \label{def:dSCS_problem}The $d$-Shuffled Collisions to Simons
problem is defined as follows. Let $(\mathcal{S},\Xi,p',p'_{{\rm inv}},s)\sim\mathbb{F}_{{\rm SCS}}(n)$.
Given oracle access to $\mathcal{S},\Xi,p'$ and $p'_{{\rm inv}}$,
find $s$.
\end{defn}

\subsection{Depth Upper Bounds for $d$-SCS}

\subsubsection{${\rm QC}_{4}$ can solve the $d$-SCS problem}

\branchcolor{black}{As we asserted, a ${\rm QC}_{d}$ circuit can solve $d$-SCS. We prove
a stronger statement below.}
\begin{prop}
The $d$-Shuffled Collisions-to-Simon's problem can be solved using
${\rm QC}_{4}$. \label{prop:dSCS_usingQC-4}
\end{prop}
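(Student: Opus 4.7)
The plan is to exhibit an explicit ${\rm QC}_{4}$ algorithm that, run $n+O(\log n)$ times in parallel, recovers $s$ by Simon-style linear algebra. The feature of the ${\rm QC}_{d}$ model I exploit is that the polynomial-time classical subroutines interleaved between quantum layers can evaluate $h$ via the $d$-Shuffler $\Xi$ (which requires $d$ sequential classical queries) without consuming any quantum depth, so four quantum layers suffice regardless of $d$.

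The four quantum layers perform the following. (i) Hadamard the $B$ register and query $\mathcal{S}$; by \eqref{eq:scosoracle1}--\eqref{eq:scosoracle2} this produces $\tfrac{1}{\sqrt{2}}(\ket{0}_{B}\ket{x_{0}}_{X} + \ket{1}_{B}\ket{x_{1}}_{X})\ket{y}_{Y}$ with $y=f(x_{0})=f(x_{1})$, after which we measure $Y$ to record $y$ classically. The intervening classical circuit $\mathcal{A}_{c,2}$ then computes $h(y)$ by $d$ sequential calls to $\Xi$. (ii) Load $h(y)$ into a quantum ancilla $H$ and query $p'$ with query register $(H,X)$ and response register $X'$; this writes $p(x_{b})$ into $X'$. (iii) Query $p'_{\mathrm{inv}}$ with query register $(H,X')$ and response register $X$; the XOR action of the oracle clears $X$, since $X$ initially held $x_{b}$ and the oracle returns $x_{b}$. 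Using the fact that $p$ maps colliding pairs of $f$ to colliding pairs of $g$ (so $p(x_{1})=p(x_{0})\oplus s$), the state simplifies to $\tfrac{1}{\sqrt{2}}(\ket{0}_{B}\ket{p(x_{0})}_{X'} + \ket{1}_{B}\ket{p(x_{0})\oplus s}_{X'})$. (iv) Hadamard $B$ and every qubit of $X'$ and measure; a textbook Simon-algorithm calculation shows that the joint outcome $(b,w)\in\{0,1\}\times\{0,1\}^{n}$ is distributed as a uniformly random $w$ together with $b=w\cdot s$.

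Running this subroutine on $n+O(\log n)$ parallel branches yields independent samples $(b_{j},w_{j})$ with $w_{j}\cdot s = b_{j}$ --- the independence coming from the stochasticity of $\mathcal{S}$, which samples a fresh $y$ for each parallel query. With overwhelming probability these equations uniquely determine $s$, which the final classical circuit $\mathcal{A}_{c,5}$ extracts by Gaussian elimination. There is no serious technical obstacle; the argument is essentially a bookkeeping verification that four quantum layers are sufficient, the key point being that the $d$-sequential evaluation of $\Xi$ is safely offloaded to classical intermediate processing.
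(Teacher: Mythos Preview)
Your proof is correct and follows essentially the same route as the paper: query $\mathcal{S}$ in superposition and measure $y$, classically evaluate $h(y)$ through the $d$-Shuffler, apply $p'$ and then $p'_{\mathrm{inv}}$ to convert the $f$-collision superposition into a $g$-collision superposition, and finish with Hadamards. Your treatment is in fact slightly more explicit than the paper's in two respects---you spell out how the residual $B$ register is handled (yielding the inhomogeneous constraint $b=w\cdot s$ rather than just $w\cdot s=0$), and you note that the stochasticity of $\mathcal{S}$ is what makes the parallel branches independent---but these are elaborations of the same argument, not a different approach.
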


\begin{proof}
Run polynomially many copies of the following parallelly.
\begin{enumerate}
\item Quantumly, query $\mathcal{S}$ on a superposition $\left(\left|0\right\rangle +\left|1\right\rangle \right)_{Q}$
(neglecting normalisation) to obtain $(\left|0\right\rangle _{Q}\left|x_{0}\right\rangle _{R}+\left|1\right\rangle _{Q}\left|x_{1}\right\rangle _{R})\left|y=f(x_{0})=f(x_{1})\right\rangle _{R'}$
for some randomly chosen $y$ in the range of $f$. Measure $R'$
to learn $y$.
\item Classically, compute, using the $d$-Shuffler $\Xi$, the function
$h(y)$.
\item Quantumly, use the encrypted CS map $p'$ with $h(y)$ to get
\[
\left|0\right\rangle \left|x_{0}\right\rangle \left|p(x_{0})\right\rangle +\left|1\right\rangle \left|x_{1}\right\rangle \left|p(x_{1})\right\rangle 
\]
where note that $\{p(x_{0}),p(x_{1})\}=g^{-1}(y')$ for some $y'$
(or equivalently, it holds that $g(p(x_{0}))=g(p(x_{1}))$).
\item Use $p'_{{\rm inv}}$ with $h(y)$ to erase $x_{0}$ and $x_{1}$.
We are left with $\left|0\right\rangle \left|p(x_{0})\right\rangle +\left|1\right\rangle \left|p(x_{1})\right\rangle $.
\end{enumerate}
Proceed as in Simon's algorithm (Hadamard and measure; solve the equations
to obtain $p(x_{0})\oplus p(x_{1})=s$ (since these are preimages
of a collision in $g$ which is a Simon's function)).
\end{proof}

\subsubsection{${\rm CQ}_{d+6}$ can also solve the $d$-SCS problem}

\branchcolor{black}{It is easy to see that a ${\rm CQ}_{d+6}$ circuit can also solve
$d$-SCS by running the $d$-Shuffler in the algorithm above using
$d+1$ quantum depth. Note that the upper bound for $d$-SCS, i.e.
$d+6$, is tighter than that for $d$-SS, i.e. $2d+1$. It remains
to establish a lower bound, i.e. no ${\rm CQ}_{d}$ circuit can solve
$d$-SCS with non-vanishing probability.}

\subsection{Depth Lower Bounds for $d$-SCS}

\subsubsection{$d$-SCS is hard for ${\rm QNC}_{d}$}
\begin{thm}
\label{thm:d-SCS-is-hard-QNC-d}Let $(\mathcal{S},\Xi,p',p'_{{\rm inv}},s)\sim\mathbb{F}_{{\rm SCS}}(n)$
and let $\mathcal{O}$ denote the oracles $\mathcal{S},\Xi,p',p'_{{\rm inv}}$
collectively. Denote an arbitrary ${\rm QNC}_{d}$ circuit with oracle
access to $\mathcal{O}$, (followed by classical post-processing)
by $\mathcal{A}^{\mathcal{O}}$. Then, $\Pr[s\leftarrow\mathcal{A}^{\mathcal{O}}(\rho_{0})]\le\negl$
where $\rho_{0}$ is some fixed (wrt $\mathcal{O}$) initial state. 
\end{thm}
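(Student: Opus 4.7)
The plan is to combine the shadow/O2H machinery of the $d$-SS lower bound with a sequence of reductions that successively eliminate each of the four oracles $\Xi$, $p'$, $p'_{\mathrm{inv}}$ and $\mathcal{S}$ as a source of information about the period $s$. Since this is the ${\rm QNC}_d$ case, no classical interleaving appears and a clean hybrid argument suffices; the conditioning tricks from \Thmref{QCd_hardness_dSerialSimons} will not be needed.

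First I would replace $\Xi$ by its $d$-Shuffled shadow $\tilde{\Xi}$ constructed exactly as in \Algref{SforQNCd_using_dSS}, so that $\tilde{\Xi}$ exposes only the first layer of the shuffler and outputs $\perp$ on every subsequent layer. The bound on the finding probability proceeds exactly as in the proof of \Thmref{d-SS-is-hard-for-QNC-d}: by the hybrid argument and \Lemref{O2H}, the deviation caused by this replacement is at most $\sum_{i=1}^{d}\sqrt{2\Pr[\mathrm{find}:U_{i}^{\Xi\backslash\bar S_{i}},\rho_{i-1}]}$, and each finding probability is bounded by $\poly\cdot 2^{-n}$ using \Propref{x_in_S_dSS_shadow} (applied in its uniform-distribution incarnation, with the proper independence between the $\bar S_{i}$ and $\rho_{i-1}$ given by \Propref{Shadow_dSS}). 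The key point is that the three other oracles $\mathcal{S}, p', p'_{\mathrm{inv}}$ are independent of $\Xi$ conditional on $f$, so they do not help the circuit locate $\bar{S}_i$.

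Next I would replace $p'$ and $p'_{\mathrm{inv}}$ by their shadows $\tilde p'$, $\tilde p'_{\mathrm{inv}}$ that output $\perp$ on \emph{every} input, and argue via one more application of O2H. A query to $p'$ produces a non-$\perp$ answer only on inputs of the form $(h(f(x)),x)$; the set of such inputs has size $2^n$ inside a query domain of size $2^{2n}$. Because $h$ is a uniformly random one-to-one function and, after the previous step, the circuit interacts only with $\tilde\Xi$ (which contains no information about $h$), the set of valid keys $\{h(y)\}_y$ is uniformly distributed in $\{0,1\}^{2n}$ and independent of the circuit's state and unitaries. An analogue of \Lemref{boundPfind} then gives $\Pr[\mathrm{find}] \le \poly\cdot 2^{-n}$ per layer; the argument for $p'_{\mathrm{inv}}$ is identical.

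Finally, once $\Xi, p', p'_{\mathrm{inv}}$ have all been replaced by shadows that never return a useful answer, the only surviving oracle is the intrinsically stochastic $\mathcal{S}$. But $\mathcal{S}$ encodes $f$ (a uniformly random $2$-to-$1$ function) together with a freshly sampled collision index, and neither $f$ nor the way $\mathcal{S}$ samples its image $y$ depends on the period $s$ of $g$ (the dependence on $s$ enters only through the CS-map $p$, which is now inaccessible). Hence the joint distribution of the circuit's final state and $s$ is a product, and $\Pr[s\leftarrow \mathcal{A}^{\mathcal{O}}(\rho_0)] \le 2^{-n+1} + \negl$ follows by the triangle inequality over the three hybrids. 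The main obstacle I anticipate is formalising the independence claim in the second step: one must verify that the shadow of $\Xi$ really does leak nothing about $h$ that would bias a query into the valid-key set of $p'$, which requires tracking that the random variables $h$, $f$, $g$, $s$ and $\Xi$ have the correct conditional independence structure after each hybrid step. This is essentially the statement that the sampling-argument machinery of \Secref{Technical-Results-II} degenerates to its trivial case when no classical advice is being extracted, so conceptually it is easier than the ${\rm CQ}_d$ analysis that will come next.
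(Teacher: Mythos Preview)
Your proposal is correct and follows essentially the same three-stage hybrid as the paper: first neutralise $\Xi$ via the $d$-Shuffler shadows (the paper simply writes ``we assume this part of the analysis has already been performed'' and drops $\Xi$ from $\mathcal{M}$, with a footnote acknowledging that strictly one should keep the shadow as you do), then replace $p',p'_{\mathrm{inv}}$ by the all-$\perp$ oracles using O2H and the fact that the hidden $h$ makes the valid-input set unpredictable, and finally observe that $\mathcal{S}$ alone carries no information about $s$. Two small slips worth fixing: the valid keys $h(y)$ live in $\{0,1\}^{n}$, not $\{0,1\}^{2n}$ (it is the full input $(h(f(x)),x)$ to $p'$ that sits in $\{0,1\}^{2n}$); and ``independent of $\Xi$ conditional on $f$'' should read ``conditional on $h$'', since $p'$ depends on $\Xi$ only through $h$, while $\bar S_i$ depends only on the internal permutations $f'_0,\dots,f'_{d-1}$ of the shuffler, which are sampled independently of $h$.
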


\branchcolor{black}{\begin{proof}
As we saw, for instance in the proof of \Thmref{d-SS-is-hard-for-QNC-d},
that the function $h$ hidden by a $d$-Shuffler is essentially inaccessible
to ${\rm QNC}_{d}$ circuits (i.e. in any ${\rm QNC}_{d}$ circuit,
one can replace the oracle with a shadow oracle (which contains no
information about $h$) and the output distributions change at most
negligibly). We assume this part of the analysis has already been
performed. More precisely, we only show\footnote{To be exact, we should use shadows for $\Xi$, which means the circuit
may learn something about the ``paths'' inside the $d$-Shuffler
(but not the value of $h$). As will be evident, these do not influence
the analysis for ${\rm QNC}_{d}$. For ${\rm QC}_{d}$, we take this
into account.} 
\[
\left|\Pr[s\leftarrow\mathcal{A}^{\mathcal{M}}(\rho_{0})]\right|\le\negl
\]
where $\mathcal{M}$ denotes the oracles $\mathcal{S},p',p'_{{\rm inv}}$.
We write the circuits more explicitly as $\mathcal{A}^{\mathcal{M}}:=\Pi\circ U_{d+1}\circ\mathcal{M}\circ U_{d}\dots\mathcal{M}\circ U_{1}$
and $\mathcal{A}^{\mathcal{N}}:=\Pi\circ U_{d+1}\circ\mathcal{N}\circ U_{d}\dots\mathcal{N}\circ U_{1}$
where $\mathcal{N}$ denotes the oracles $\mathcal{S},p'',p''_{{\rm inv}}$
where $p'',p''_{{\rm inv}}$ are defined as $p',p'_{{\rm inv}}$ except
that they always output $\perp$. One can apply the hybrid argument
as before to obtain 
\begin{align*}
\left|\Pr[q\leftarrow\mathcal{A}^{\mathcal{M}}(\rho_{0})]-\Pr[q\leftarrow\mathcal{A}^{\mathcal{N}}(\rho_{0})]\right| & \le B(\mathcal{M}U_{d}\dots\mathcal{M}U_{1}(\rho_{0}),\mathcal{N}U_{d}\dots\mathcal{N}U_{1}(\rho_{0}))\\
 & \le\sum_{i=1}^{d}B(\mathcal{M}U_{i}(\rho_{i-1}),\mathcal{N}U_{i}(\rho_{i-1}))\\
 & \le\sum_{i=1}^{d}\sqrt{2\Pr[{\rm find}:U_{i}^{\mathcal{M}\backslash\bar{X}},\rho_{i-1}]}
\end{align*}
where $\rho_{i-1}=\mathcal{N}U_{i-1}\dots\mathcal{N}U_{1}(\rho_{0})$
and $\bar{X}=(\emptyset,X,X_{{\rm inv}})$ where $X,X_{{\rm inv}}\subseteq\{0,1\}^{2n}$
is the non-trivial domain of $p',p'_{{\rm inv}}$ respectively, i.e.
for all $x\in X$, $p'(x)\neq\perp$ and similarly for all $x\in X_{{\rm inv}}$,
$p'_{{\rm inv}}(x)\neq\perp$. Since $\rho_{i-1}$ contains no information
about $h$, one can bound $\Pr[{\rm find}:U_{i}^{\mathcal{M}\backslash\bar{X}},\rho_{i-1}]$
using \Lemref{boundPfind} by $\poly\cdot2^{-n}$. Further, $\Pr[s\leftarrow\mathcal{A}^{\mathcal{N}}(\rho_{0})]\le\negl$
since $\mathcal{N}$ contains no information about the period $s$,
so one cannot do better than guessing. Together, these yield the asserted
result. 
\end{proof}
}

\subsubsection{$d$-SCS is hard for ${\rm CQ}_{d}$}

\branchcolor{black}{We now show that ${\rm CQ}_{d}$ circuits solve $d$-SCS with at most
negligible probability. }

\begin{thm}
\label{thm:dSCS_is_CQ_d_hard}Let $(\mathcal{S},\Xi,p',p'_{{\rm inv}},s)\sim\mathbb{F}_{{\rm SCS}}(n)$
and let $\mathcal{O}$ denote the oracles $\mathcal{S},\Xi,p',p'_{{\rm inv}}$.
Denote an arbitrary ${\rm CQ}_{d}$ circuit with oracle access to
$\mathcal{O}$ by $\mathcal{C}^{\mathcal{O}}$. Then, $\Pr[s\leftarrow\mathcal{C}^{\mathcal{O}}(\rho_{0})]\le\negl$
where $\rho_{0}$ is some fixed (wrt $\mathcal{O}$) initial state. 
\end{thm}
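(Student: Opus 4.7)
The plan is to adapt the two-step template of the proof of \Thmref{d-SS-is-hard-for-CQ-d} to the richer oracle structure of $d$-SCS. Writing any ${\rm CQ}_d$ circuit as $\mathcal{C} = \mathcal{A}_{\tilde n+1} \circ \vec U_{\tilde n} \circ \mathcal{A}_{\tilde n} \circ \cdots \circ \vec U_1 \circ \mathcal{A}_1$ with $\tilde n \le \poly$, the first step is to replace each occurrence of the real oracles by suitable shadow oracles and bound the resulting change in output via a hybrid argument; the second step is to argue that a ${\rm CQ}_d$ circuit with shadow access can do no better than guess $s$. After each classical round $\mathcal{A}_i$, I maintain a transcript $T(\sigma_i)$ that records: the paths $Y_i$ of $\Xi$ it uncovered, the excluded queries $\vec I_i$ (inputs to $\Xi$ returning $\perp$), the list of pairs $(x,y)$ returned by classical queries to $\mathcal{S}$, and the polynomially many non-$\perp$ evaluations of $p', p'_{\rm inv}$. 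The last are controlled by the previous entries because a non-$\perp$ evaluation of $p'$ requires supplying a correct $h(y)$ key, which forces $\mathcal{A}_i$ to have accessed $h$ via $\Xi$ and hence to have recorded the relevant paths in $Y_i$. I also include the additional paths $S_i$ revealed by the sampling argument applied to the classical output $s_i$ of the preceding quantum block (as in \Propref{main_delta_non_uniform_d_shuffler} and \Exaref{conditioningPlusGivingAccess}).

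Four shadow oracles are needed. The shadow of $\Xi$ is the usual one (\Defref{d-ShuffledSimonShadow}) exposing only the paths in $Y_1 \cup S_1 \cup \cdots \cup Y_i$. The shadow $p'_{\rm sh}$ (and analogously $p'_{{\rm inv},{\rm sh}}$) responds identically to $p'$ on inputs of the form $(h(y), x)$ where $h(y)$ has already been revealed by the current transcript, and outputs $\perp$ elsewhere. The shadow of $\mathcal{S}$ samples from $\mathbb{F}_Y$ conditioned on avoiding all $y$-values that have previously appeared in the transcript; by a birthday-style union bound, $\mathcal{S}$ and its shadow agree throughout any $\poly$-length run with probability at least $1 - \poly/2^n$.

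Step one proceeds by peeling off one quantum layer at a time. For the $i$th layer $\vec U_i$, applying \Lemref{O2H} together with \Corref{Conditionals} bounds its contribution to the Bures distance by $\sqrt{\Pr[{\rm find}]}$. The find event decomposes into (a) a query to $\Xi$ hitting an exposed-but-unknown path, (b) a query to $p'$ or $p'_{\rm inv}$ carrying a valid $h(y)$ for a $y$ not yet in the transcript, or (c) a query to $\mathcal{S}$ returning a previously seen $y$. Event (a) is bounded via \Lemref{delta_beta_output_non_perp} applied to the $(p,\delta)$-non-$\beta$-uniform distribution over $\Xi$ induced by prior conditioning, exactly as in the proof of \Thmref{d-SS-is-hard-for-CQ-d}, with $\delta = \Delta/\tilde n$ and $p = 2m/\delta$. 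Event (b) reduces to event (a): evaluating $p'$ nontrivially on a fresh $y$ requires guessing $h(y)$, and since $h$ is accessible only through $\Xi$, any query hitting the non-trivial domain of $p'_{\rm sh}$ within a single $d$-depth block must be preceded by a query hitting the non-trivial domain of the shadow of $\Xi$, which has vanishing probability. Event (c) is bounded by $\poly/2^n$ directly. Summing $\poly$ many terms, the total Bures distance is $\negl$.

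Step two shows that with shadow access the period is information-theoretically hidden. The shadow of $\mathcal{S}$ returns polynomially many uniformly random non-colliding preimages of $f$, and thus no collisions of $f$; without collisions there are no further valid $(h(f(x)), x)$ keys with which to evaluate $p'_{\rm sh}$ or $p'_{{\rm inv},{\rm sh}}$ beyond what is already in the transcript; and the polynomially many exposed $(x, p(x))$ pairs induced by the transcript correspond to uniformly random non-colliding preimages of $g$, which carry no information about $s$. Hence every quantum layer operates on oracles containing no more information about $s$ than the classical algorithm already possesses, so the overall success probability collapses to that of a ${\rm BPP}$ machine solving Simon's problem from non-colliding samples---at most $\negl$. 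The main obstacle, and the chief source of subtlety, is composing the stochastic oracle $\mathcal{S}$ with the round-by-round sampling argument for $\Xi$: one must verify that conditioning on the ``no repeated $y$'' event preserves the independence required for \Corref{Conditionals} at each quantum layer, and that the growing $\beta$ (now encoding both the exposed $\Xi$-paths and the classically learned $(y,h(y))$-pairs) remains proper in the sense of \Notaref{ProperBeta} as it is incremented by each round, so that \Lemref{delta_beta_output_non_perp} can be reapplied with $\delta \to (i+1)\delta$ at the $(i+1)$st iteration.
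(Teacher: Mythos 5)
Your proposal follows the paper's own two-step template closely: the same hybrid/shadow decomposition, the same transcript-and-conditioning machinery driven by \Propref{main_delta_non_uniform_d_shuffler} and \Corref{Conditionals}, shadows of $p',p'_{\rm inv}$ exposed exactly at the keys $h(y)$ already present in the transcript, and your ``shadow of $\mathcal{S}$'' is just a repackaging of the paper's conditioning on the $y$-distinct event. Up to minor phrasing (your claim that a find on $p'_{\rm sh}$ ``must be preceded by a query hitting the non-trivial domain of the shadow of $\Xi$'' should really be stated as: a valid key $h(y)$ for a fresh $y$ can only be guessed with probability $\mathcal{O}(\poly\cdot 2^{-n})$, since neither the state nor the shadows carry information about $h$ there), step one is sound and matches the paper.

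There is, however, a genuine gap in your step two. You treat the (shadowed) access to $\mathcal{S}$ as if it only ever yields ``uniformly random non-colliding preimages of $f$.'' That is true of the classical queries, but a \emph{quantum} query to $\mathcal{S}$ returns the coherent state $(\ket{0}\ket{x_0}+\ket{1}\ket{x_1})\ket{y}$ over a genuine colliding pair --- this is precisely why $\mathsf{QC_4}$ \emph{can} solve the problem. Your induction ``no collision revealed so far $\Rightarrow$ the next block reveals no collision'' therefore does not close for the quantum blocks: you must rule out that a $d$-depth quantum block, holding such superpositions, measures and outputs a classical colliding pair $(x_0,x_1)$ with $f(x_0)=f(x_1)=y$. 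If it could, the next classical round would compute $h(y)$ through $\Xi$ in polynomial time, evaluate $p'(h(y),x_0)$ and $p'(h(y),x_1)$, and recover $s=p(x_0)\oplus p(x_1)$. The paper closes this hole by invoking the quantum query lower bound for finding collisions in a random $2$-to-$1$ function (Aaronson--Shi), noting that the access here is even more restricted than direct oracle access to $f$. Your argument that ``the oracles contain no more information about $s$'' does not substitute for this: a collision of $f$ carries no information about $s$ by itself, yet it is exactly the resource that, combined with later classical access to $h$ and $p'$, yields $s$. Adding this collision-hardness step at each quantum block completes your induction.
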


We use the proof of \Thmref{d-SS-is-hard-for-CQ-d} as a template.
\begin{itemize}
\item We again write the ${\rm CQ}_{d}$ circuit as $\mathcal{C}=\mathcal{C}_{\tilde{n}}\circ\dots\circ\mathcal{C}_{1}$
where $\tilde{n}\le\poly$ and $\mathcal{C}_{i}:=\vec{U}_{i}\circ\mathcal{A}_{i}$
where $\vec{U}_{i}$ contains $d$ layers of unitaries, followed by
a measurement and $\mathcal{A}_{i}$ denotes a poly time classical
computation.
\item Denote\footnote{Just notation; to preserve similarity with \Thmref{d-SS-is-hard-for-CQ-d}.}
by $\mathcal{F}$ the $d$-Shuffler $\Xi$. Denote by $\mathcal{M}$
the oracles corresponding to $p',p'_{{\rm inv}}$. We write $\mathcal{C}^{\mathcal{F},\mathcal{M}}$
to denote a ${\rm CQ}_{d}$ circuit with oracle access to both and
to $\mathcal{S}$ which we do not write explicitly in this section. 
\item For each $\mathcal{C}_{i}$ (which together constitute $\mathcal{C}$),
the shadow oracles are defined differently.
\begin{itemize}
\item Denote by $\mathcal{N}_{i}$ the oracle corresponding to the shadows
of $p'$ and $p'_{{\rm inv}}$, with respect to $\bar{X}_{i}$ which
is defined later.
\item Denote by $\vec{\mathcal{G}}_{i}=(\mathcal{G}_{d,i},\dots\mathcal{G}_{1,i})$
a tuple of shadow oracles of $\mathcal{F}$ with respect to $\bar{S}_{i}$
which is also defined later.
\item Write $\vec{U}_{i}^{\vec{\mathcal{G}}_{i},\mathcal{N}_{i}}$ to denote
$\Pi_{i}\circ U_{d+1,i}\circ(\mathcal{G}_{d,i},\mathcal{N}_{i})\circ U_{d,i}\circ\dots(\mathcal{G}_{1,i},\mathcal{N}_{i})\circ U_{1,i}$
where $(\mathcal{G}_{k,i},\mathcal{N}_{i})$ may be viewed as a single
oracle constituted by $\mathcal{G}_{k,i}$ and $\mathcal{N}_{i}$
(much like the sub-oracles in the $d$-Shuffler or $d$-Serial Simons\footnote{More explicitly, we mean that one may parallelly apply polynomially
many copies of $\mathcal{G}_{k,i}$ and parallelly, polynomially many
copies of $\mathcal{N}_{i}$.}). 
\item The shadows $\vec{\mathcal{G}}_{i}$ are constructed, as before, using
\Algref{SforCQd_using_dSS}, but the way the paths are specified is
slightly different. The details appear in the proof.
\end{itemize}
\item The parameters, $\delta,\Delta,m,\tilde{m}$ are the same as before.
\item When we write $\Pr[{\rm find}:U^{\mathcal{F};\mathcal{M}\backslash X},\rho]$
or $\Pr[{\rm find}:U^{\mathcal{M};\mathcal{F}\backslash\bar{S}},\rho]$,
we allow $U$ unrestricted access to the oracles before the semi-colon
``;'' and understand the rest in the usual $\Pr[{\rm find}:\dots]$
notation (see \Defref{prFind}).
\end{itemize}
\begin{proof}
We first show that replacing $\mathcal{F}$ and $\mathcal{M}$ with
their shadows makes almost no difference, i.e.
\begin{equation}
{\rm B}\left[\mathcal{A}_{\tilde{n}+1}^{\mathcal{F},\mathcal{M}}\vec{U}_{\tilde{n}}^{\mathcal{F},\mathcal{M}}\dots\vec{U}_{1}^{\mathcal{F},\mathcal{M}}\mathcal{A}_{1}^{\mathcal{F},\mathcal{M}}(\rho_{0}),\quad\mathcal{A}_{\tilde{n}+1}^{\mathcal{F},\mathcal{M}}\vec{U}_{\tilde{n}}^{\mathcal{\vec{G}}_{\tilde{n}},\mathcal{N}_{\tilde{n}}}\dots\vec{U}_{1}^{\vec{\mathcal{G}}_{1},\mathcal{N}_{1}}\mathcal{A}_{1}^{\mathcal{F},\mathcal{M}}(\rho_{0})\right]\le\negl\label{eq:StepOne_again-1}
\end{equation}
given $d\le\poly$ and then we show that with the shadows, no ${\rm CQ}_{d}$
circuit can solve the problem with non-negligible probability, i.e.
\begin{equation}
\Pr\left[s\leftarrow\mathcal{A}_{\tilde{n}+1}^{\mathcal{F},\mathcal{M}}\vec{U}_{\tilde{n}}^{\mathcal{\vec{G}}_{\tilde{n}},\mathcal{N}_{\tilde{n}}}\dots\vec{U}_{1}^{\vec{\mathcal{G}}_{1},\mathcal{N}_{1}}\mathcal{A}_{1}^{\mathcal{F},\mathcal{M}}(\rho_{0})\right]\le\negl.\label{eq:StepTwo_again-1}
\end{equation}

\textbf{Step one}

Let $\sigma_{i}:=\mathcal{A}_{i}^{\mathcal{F},\mathcal{M}}\vec{U}_{i-1}^{\vec{\mathcal{G}}_{i-1},\mathcal{N}_{i-1}}\dots\mathcal{A}_{2}^{\mathcal{F},\mathcal{M}}\vec{U}_{1}^{\vec{\mathcal{G}}_{1},\mathcal{N}_{1}}\mathcal{A}_{1}^{\mathcal{F},\mathcal{M}}(\sigma_{0})$
and the transcript 
\[
T(\sigma_{i}):=(\vec{I}_{i},H_{i,}I_{i},R_{i},\ \ S_{i-1},s_{i-1},\vec{I}_{i-1},H_{i-1},I_{i-1},R_{i-1}\quad\dots\quad S_{1},s_{1},\vec{I}_{1},H_{1},I_{1},R_{1})
\]
 where $R_{j}$ encodes the tuple $(x,p(x),p_{{\rm inv}}(x))$ obtained
by $\mathcal{A}_{j}$, $I_{j}$ encodes the locations of $p$ and
$p_{{\rm inv}}$ which output $\perp$ when queried by $\mathcal{A}_{j}$
(given the transcript before that), $H_{j}$ encodes the paths of
$\mathcal{F}$ uncovered by $\mathcal{A}_{j}$ (given the transcript
before), $\vec{I}_{j}$ encodes the locations in sub-oracles of $\mathcal{F}$
which yielded $\perp$ when queried by $\mathcal{A}_{j}$ (given the
transcript before), $s_{j}$ denotes the output of the $j$th quantum
circuit (given the transcript before) and $S_{j}$ denotes the paths
uncovered by the ``sampling argument'' and allowing the circuit
to access it (see \Propref{main_delta_non_uniform_d_shuffler} and
\Exaref{conditioningPlusGivingAccess}).\footnote{We take conditionals only to make the later argument easier; it doesn't
reflect or require that certain variables are known before the other.} 

We make the following assumptions about the transcript, without loss
of generality because these only make it easier for the circuit to
solve the problem. We assume $s_{j}$ contains all the stochastic
values $y$ obtained by the $j$th quantum part of the circuit\footnote{We also assume that the quantum circuit learns all tuples $(x,f(x)=y)$
learnt by the classical circuit preceding it.} by querying $\mathcal{S}$. We assume that $\mathcal{A}_{j}$ learns
the paths, $H_{j}$, in $\mathcal{F}$ corresponding to all the stochastic
values $y$ returned to it by $\mathcal{S}$ and those contained in
$s_{j-1}$ (it only possibly makes $\mathcal{A}$ better at solving
the problem, so this is without loss of generality). Finally, we assume
that if for some $j$, $R_{j}$ contains $x_{0},x_{1}$ which are
collisions in $f$ (i.e. $y=f(x_{0})=f(x_{1})$) and the path corresponding
to $y$ is in $H_{j}$, then for all $j'\ge j$, $R_{j'}$ is ``collision-complete''---i.e.
for each $x$ in $R_{j'}$ such that the path corresponding to $y=f(x)$
is $H_{j}$, it also contains $x'$ such that $f(x')=y$. Again, this
only makes it easier for the circuit to solve the problem.\footnote{We do this so that if the classical algorithm reveals information
about the period $s$, then for values of $p'$ it can access (because
it learns $h(y)$), images of both $x_{0}$ and $x_{1}$ are exposed
in the shadows of $p'$ where $x_{0}$ and $x_{1}$ are such that
$f(x_{0})=f(x_{1})=y$). Otherwise the subsequent quantum part would
be able to distinguish between a shadow of $p'$ and $p'$ by trying
$x_{0}\oplus s$ for instance with the same $y$. Note that the classical
algorithm can also not evaluate $p$ (and similarly $p_{{\rm inv}}$)
at arbitrary $x$s directly because it needs $h(f(x))$ and for this
it needs $f(x)$ which can only be accessed via the stochastic oracle
$\mathcal{S}$ but that in turn does not take $x$ as an input; $\mathcal{S}$
simply outputs a stochastic $y$ and $x\in f^{-1}(y)$. Thus the probability
that a classical algorithm learns a colliding pair $x_{0}$, $x_{1}$
is anyway negligible. We account for this possibility here, regardless. } 

Let $\rho'_{j-1,i}:=(\mathcal{G}_{j-1},\mathcal{M})\circ U_{j-1,i}\dots(\mathcal{G}_{1,i},\mathcal{M})\circ U_{1,i}(\sigma_{i})$
and $\rho_{j-1,i}:=(\mathcal{G}_{j-1,i},\mathcal{N}_{i})\circ U_{j-1,i}\dots(\mathcal{G}_{1,i},\mathcal{N}_{i})\circ U_{1,i}(\sigma_{i})$.
Following the proof of \Thmref{d-SS-is-hard-for-CQ-d}, we note that
to bound the LHS of \Eqref{StepOne_again-1}, it suffices to bound
the analogue of \Eqref{CQ_d_FU_i}, i.e. 
\begin{align}
{\rm B}(\vec{U}_{i}^{\mathcal{F},\mathcal{M}}(\sigma_{i}),\vec{U}_{i}^{\vec{\mathcal{G}}_{i},\mathcal{N}_{i}}(\sigma_{i})) & \le{\rm B}(\vec{U}_{i}^{\mathcal{F},\mathcal{M}}(\sigma_{i}),\vec{U}_{i}^{\vec{\mathcal{G}}_{i},\mathcal{M}}(\sigma_{i}))+{\rm B}(\vec{U}_{i}^{\mathcal{\vec{G}}_{i},\mathcal{M}}(\sigma_{i}),\vec{U}_{i}^{\vec{\mathcal{G}}_{i},\mathcal{N}_{i}}(\sigma_{i}))\label{eq:CQ_d_FU_i_dSCS}\\
 & \le\sum_{j=1}^{d}\Pr\left[{\rm find}:U^{\mathcal{M};\mathcal{F}\backslash\bar{S}_{j,i}},\rho_{j-1,i}'\right]+\sum_{j=1}^{d}\Pr\left[{\rm find}:U^{\vec{\mathcal{G}}_{i};\mathcal{M}\backslash\bar{X}_{i}},\rho_{j-1,i}\right]\label{eq:CQ_d_FU_i_dSCS-1}
\end{align}
where $\mathcal{G}_{j,i}$ is the shadow oracle with respect to $\bar{S}_{j,i}$
which in turn is defined the output of \Algref{SforCQd_using_dSS}
with $j$, $H_{1}\cup S_{1}\cup\dots H_{i-1}\cup S_{i-1}\cup H_{i}$
and $((f_{i})_{i},s)$ as inputs (recall that $\Xi$ was just an abstract
symbol for a $d$-Shuffler which can be used to specify $(f_{i})_{i}$
explicitly). We define $\bar{X}_{i}$ so that the shadows of $\mathcal{M}$
output $\perp$ everywhere except for the locations in $H$, where
they behave as $\mathcal{M}$. Let $X=\{(x,p(x),h(f(x)):x\in\{0,1\}^{n}\}$
and $X_{{\rm inv}}=\{(p(x),x,h(f(x)):x\in\{0,1\}^{n}\}$. Define $\bar{X}_{i}=(X\backslash\{(x,p(x),h(f(x)):(x,p(x),p_{{\rm inv}}(x))\in R_{j}\},X_{{\rm inv}}\backslash\{(p(x),x,h(f(x)):(x,p(x),p_{{\rm inv}}(x))\in R_{j}\})$.
Recall $\mathcal{N}$ is the shadow of $\mathcal{M}$ with respect
to $\bar{X}_{i}$. One can now proceed as we did in the proof of \Thmref{d-SS-is-hard-for-CQ-d}
and use \Eqref{BoundPr_find_U_ji_negl} to conclude that the first
term in \Eqref{CQ_d_FU_i_dSCS-1} is negligible. The second term is
also negligible using \Eqref{BoundPr_find_U_ji_negl} and noting that
one needs $h(f(x))$ in the third argument of $p'$ and $p'_{{\rm inv}}$
to obtain a non-$\perp$ response. This, for any $y$ which is not
already known, can be at best guessed with negligible probability
(for $y$s which are already known, the value is already exposed in
the shadow). That is because neither $\vec{\mathcal{G}}_{i}$ nor
$\rho_{j-1,i}$ contain any information about $h$ at the new $y$
values (other than the set of locations where $h$ outputs $\perp$s
which rule out polynomially many locations; one can proceed as in
the proof \Thmref{QCd_hardness_dSerialSimons}). Therefore the probability
of ``find'' is negligible. 

\textbf{Step Two}

To see that \Eqref{StepTwo_again-1} holds, we first define the event
$y$-distinct as follows: 
\begin{itemize}
\item all $y$s returned by $\mathcal{S}$ are distinct and
\item if $\mathcal{S}$ is queried after $h$ has been evaluated\footnote{This will include, and therefore account for, the values in $S_{i}$
(as defined in the transcript) exposed by the sampling argument.} for values in $Y$, then the $y$ returned by $\mathcal{S}$ is not
in $Y$.
\end{itemize}
This event happens with overwhelming (i.e. $1-\negl$) probability
because $y$ is stochastically chosen (so doesn't depend on any other
random variable) and all excluded $y$s constitute a set of size at
most polynomial. Conditioned on this event, we make the following
observations about the classical and quantum parts of an execution
of the ${\rm CQ}_{d}$ circuit $\mathcal{A}_{\tilde{n}+1}^{\mathcal{F},\mathcal{M}}\vec{U}_{\tilde{n}}^{\mathcal{\vec{G}}_{\tilde{n}},\mathcal{N}_{\tilde{n}}}\dots\vec{U}_{1}^{\vec{\mathcal{G}}_{1},\mathcal{N}_{1}}\mathcal{A}_{1}^{\mathcal{F},\mathcal{M}}(\rho_{0})$.
\begin{itemize}
\item Classical part.
\begin{itemize}
\item Observe that classical queries to $\mathcal{S}$, yield non-colliding
$x$s. Assume that the quantum advice also yields non-colliding $x$s
and $y=f(x)$ it learnt from $\mathcal{S}$. Only for these $x$s
can the classical algorithm learn $p(x)$ with non-negligible probability.
Since these are non-colliding, $p(x)$ contains no information about
the period $s$. Thus, finding collisions are necessary for finding
$s$ with non-negligible probability.
\item Note also that the output, $c$, of the classical part cannot contain
any collisions (assuming the quantum output contained no collisions)
with non-negligible probability. This is because the quantum advice
contains no information about $f$ at points other than those $x$s
(because the quantum algorithm only learnt $f(x)$ at stochastically
chosen locations). 
\end{itemize}
\item Quantum part.
\begin{itemize}
\item Suppose the classical input revealed no collision. The only way to
learn a collision is by querying $\mathcal{S}$. Since we conditioned
on the $y$-distinct event, the $y$s that the quantum algorithm learns,
at those values $\mathcal{N}$ contains $\perp$s. Thus it could not
have learnt $s$ from these $y$s. It can give some output $q$ to
the next classical circuit but since finding collisions for a random
$2\to1$ function is hard for an efficient quantum circuit (with direct
oracle access \cite{Aaronson2001,Shi2001}; here the access is even
more restricted), the output, $q$, of the quantum part cannot contain
any collisions (with non-negligible probability). 
\end{itemize}
\end{itemize}
One can apply the two steps above repeatedly, starting with $\mathcal{A}_{1}^{\mathcal{F},\mathcal{M}}$
which receives no quantum input, to conclude that $s$ can be learnt
by $\mathcal{A}_{\tilde{n}+1}^{\mathcal{F},\mathcal{M}}\vec{U}_{\tilde{n}}^{\mathcal{\vec{G}}_{\tilde{n}},\mathcal{N}_{\tilde{n}}}\dots\vec{U}_{1}^{\vec{\mathcal{G}}_{1},\mathcal{N}_{1}}\mathcal{A}_{1}^{\mathcal{F},\mathcal{M}}(\rho_{0})$
with at most negligible probability. More formally, denote the output
of $\mathcal{A}_{j}^{\mathcal{F},\mathcal{M}}$ by a random variable
$c_{j}$ and that of $\vec{U}_{j}^{\vec{\mathcal{G}}_{j},\mathcal{N}_{j}}$
by $q_{j}$. Assume (it only potentially makes the algorithms better)
that $q_{j}$ contains $c_{j-1}$ and $c_{j}$ contains $q_{j-1}$
(i.e. all information is propagated to the very last algorithm). Let
$\mathcal{E}$ denote any arbitrary algorithm which has no access
to any oracles. Let ${\rm coll}_{f}:=\{(x_{0},x_{1}):f(x_{0})=f(x_{1})\land x_{0}\neq x_{1}\}$
be the set of all colliding pre-images of $f$. Denote the $y$-distinct
event by $E$ and recall that $\Pr[\neg E]=\negl$. The two observations
may be stated as the following.
\begin{itemize}
\item If $\max_{\mathcal{E}}\Pr[\mathcal{E}(q_{j-1})\in{\rm coll}_{f}|E]\le\negl$
then, $\max_{\mathcal{E}}\Pr[\mathcal{E}(c_{j})\in{\rm coll}_{f}|E]\le\negl$
and $\max_{\mathcal{E}}\Pr[s\leftarrow\mathcal{E}(c_{j})|E]\le\negl$. 
\item If $\max_{\mathcal{E}}\Pr[\mathcal{E}(c_{j-1})\in{\rm coll}_{f}|E]\le\negl$
then, $\max_{\mathcal{E}}\Pr[\mathcal{E}(q_{j})\in{\rm coll}_{f}|E]\le\negl$
and $\max_{\mathcal{E}}\Pr[s\leftarrow\mathcal{E}(q_{j})|E]\le\negl$. 
\end{itemize}
Since $\mathcal{A}_{1}^{\mathcal{F},\mathcal{M}}$ receives no inputs,
one can apply these repeatedly to conclude 
\[
\Pr[s\leftarrow\mathcal{A}_{\tilde{n}+1}^{\mathcal{F},\mathcal{M}}\vec{U}_{\tilde{n}}^{\mathcal{\vec{G}}_{\tilde{n}},\mathcal{N}_{\tilde{n}}}\dots\vec{U}_{1}^{\vec{\mathcal{G}}_{1},\mathcal{N}_{1}}\mathcal{A}_{1}^{\mathcal{F},\mathcal{M}}(\rho_{0})]\le\Pr(E)\Pr[s\leftarrow\mathcal{A}_{\tilde{n}+1}^{\mathcal{F},\mathcal{M}}(q_{\tilde{n}})|E]+\Pr[\neg E]\le(1-\negl)\cdot\negl+\negl
\]
yielding \Eqref{StepTwo_again-1}.

\end{proof}

\printbibliography

\newpage{}

\appendix

\section{Permutations and Combinations}
\begin{fact}
One has 
\[
\frac{^{a}P_{b}}{^{a+1}P_{b+1}}=\frac{1}{a+1}\quad\text{and}\quad\frac{^{a}C_{b}}{^{a+1}C_{b+1}}=\frac{b}{a+1}.
\]
\end{fact}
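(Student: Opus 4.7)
The plan is to verify both identities by direct substitution into the standard factorial definitions $^{a}P_{b} = a!/(a-b)!$ and $^{a}C_{b} = a!/(b!\,(a-b)!)$, followed by cancellation. Neither claim requires any combinatorial argument—both are algebraic identities of factorials.

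For the permutation identity, the key observation is that $(a+1)-(b+1) = a-b$, so $^{a+1}P_{b+1} = (a+1)!/(a-b)!$ shares the same $(a-b)!$ denominator with $^{a}P_{b}$. These factorials cancel from the ratio immediately, leaving $a!/(a+1)! = 1/(a+1)$. This is a single line.

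For the combinatorial identity, the same cancellation of $(a-b)!$ occurs. After collecting the remaining factors one gets $\bigl(a!/(a+1)!\bigr) \cdot \bigl((b+1)!/b!\bigr) = (b+1)/(a+1)$. Note that this disagrees with the printed right-hand side $b/(a+1)$ (a quick sanity check with $a=2,\,b=1$ gives $^{2}C_{1}/{^{3}C_{2}} = 2/3$, which matches $(b+1)/(a+1)$ but not $b/(a+1)$), so I would flag the $b$ in the statement as a likely typo for $b+1$ and record the correct identity accordingly.

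The main (and only) obstacle is thus bookkeeping; there is no hidden difficulty. The one substantive thing my plan produces is the observation that the combinatorial half of the fact, as stated, should read $(b+1)/(a+1)$.
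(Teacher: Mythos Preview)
Your approach is correct and is the only reasonable one here; the paper states this as a \emph{Fact} without proof, so there is nothing to compare against beyond the bare factorial computation you carry out.

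Your observation about the typo is also correct and worth recording. With $^{a}C_{b}=a!/(b!\,(a-b)!)$ and $^{a+1}C_{b+1}=(a+1)!/((b+1)!\,(a-b)!)$ one indeed gets $(b+1)/(a+1)$, not $b/(a+1)$. The paper's own use of the identity confirms this: in the Remark immediately following, the computation $\comb{M-1}{N-1}/\comb{M}{N}=N/M$ corresponds to $a=M-1$, $b=N-1$, and $N/M=(b+1)/(a+1)$, not $b/(a+1)$. So the Fact as printed has a typo in the combinatorial half, and your corrected version is what the paper actually relies on.
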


\begin{rem}
\label{rem:Pr_x_in_X_or_t}Let $M\ge N$ be an integer and fix some
element $x\in\{1,2\dots M\}$. Suppose $t$ is a \emph{tuple} of size
$N$, sampled uniformly from the collection of all size $N$ \emph{tuples}
containing distinct elements from $\{1,2\dots M\}$. Then 
\[
\Pr(x\in t)=\frac{\perm{M-1}{N-1}\cdot N}{\perm MN}=\frac{N}{M}.
\]
Similarly, suppose $X$ is a \emph{set} of size $N$, sampled uniformly
from the collection of all size $N$ \emph{subsets} of $\{1\dots M\}$.
Then, again, 
\[
\Pr(x\in X)=\frac{\comb{M-1}{N-1}}{\comb MN}=\frac{N}{M}.
\]
\end{rem}

\section{Deferred Proofs}

\subsection{Hardness of $d$-Shuffled Simon's Problem for ${\rm QNC}_{d}$\label{subsec:Hardness-of-dSS-for-QNCd}}

\branchcolor{black}{\begin{proof}[Proof of \Thmref{d-SS-is-hard-for-QNC-d}]
Suppose $((f_{i})_{i=0}^{d},s)\sim\mathbb{F}_{{\rm SS}}(d,n)$ (see
\Defref{dShuffledSimonsDistr}) and let $\mathcal{F}$ be the oracle
associated with $(f_{i})_{i=0}^{d}$. Define $f:\{0,1\}^{n}\to\{0,1\}^{n}$
to be the Simon's function encoded in the $d$-Shuffler, i.e. $f(x):=f_{d}\circ\dots\circ f_{0}(x)$.
Denote an arbitrary ${\rm QNC}_{d}^{\mathcal{F}}$ circuit, $\mathcal{A}^{\mathcal{F}}$,
by 
\[
\mathcal{A}^{\mathcal{F}}:=\Pi\circ U_{d+1}\circ\mathcal{F}\circ U_{d}\dots\mathcal{F}\circ U_{2}\circ\mathcal{F}\circ U_{1}
\]
and suppose $\Pi$ corresponds to the algorithm outputting the string
$s$. For each $i\in\{1,\dots d\}$, construct the tuples $\bar{S}_{i}$
using \Algref{SforQNCd_using_dSS}. Let $\mathcal{G}_{i}$ be the
shadow of $\mathcal{F}$ with respect to $\bar{S}_{i}$ (see \Defref{d-ShuffledSimonShadow}
and \Figref{Shadows-for-d-ShuffledSimons}). Define 
\[
\mathcal{A}^{\mathcal{G}}:=\Pi\circ U_{d+1}\circ\mathcal{G}_{d}\circ U_{d}\dots\mathcal{G}_{2}\circ U_{2}\circ\mathcal{G}_{1}\circ U_{1}.
\]
Note that $\Pr[s\leftarrow\mathcal{A}^{\mathcal{G}}]\le\frac{1}{2^{n}}$
because no $\mathcal{G}_{i}$ contains any information about $f$
as $f_{d}$ is completely blocked (see \Figref{Shadows-for-d-ShuffledSimons}).
Thus, no algorithm can do better than making a random guess. We now
show that the output distributions of $\mathcal{A}^{\mathcal{F}}$
and $\mathcal{A}^{\mathcal{G}}$ cannot be noticeably different using
the O2H lemma (see \Lemref{O2H}).

To apply the lemma, one can use the hybrid method as before to obtain
(we drop the $\circ$ symbol for brevity):

\begin{align*}
 & \left|\Pr[s\leftarrow\mathcal{A}^{\mathcal{F}}]-\Pr[s\leftarrow\mathcal{A}^{\mathcal{G}}]\right|\\
= & \left|\tr[\Pi U_{d+1}\mathcal{F}U_{d}\dots\mathcal{F}U_{2}\mathcal{F}U_{1}\rho_{0}-\Pi U_{d+1}\mathcal{G}_{d}U_{d}\dots\mathcal{G}_{2}U_{2}\mathcal{G}_{1}U_{1}\rho_{0}]\right|\\
\le & \sum_{i=1}^{d}\sqrt{2\Pr[{\rm find}:U_{i}^{\mathcal{F}\backslash\bar{S}_{i}},\rho_{i-1}]}
\end{align*}
where $\rho_{0}=\left|0\dots0\right\rangle \left\langle 0\dots0\right|$
and $\rho_{i}=\mathcal{G}_{i}\circ U_{i}\circ\dots\mathcal{G}_{1}\circ U_{1}(\rho_{0})$
for $i>0$. To bound the aforesaid, we apply \Lemref{boundPfind}.
To this end, we must ensure the following. (1) The subset of queries
at which $\mathcal{F}$ and $\mathcal{G}_{i}$ differ, i.e. $\bar{S}_{i}=(\emptyset,\dots\emptyset,X_{i},X_{i+1}\dots X_{d})$
(where $X_{i}=f_{i}\circ\dots\circ f_{0}(\{0,1\}^{n})$; see \Algref{SforQNCd_using_dSS}),
is uncorrelated to $U_{i}$ and $\rho_{i-1}$. (2) The probability
that a fixed query lands in $\bar{S}_{i}$ is at most $\mathcal{O}(2^{-n})$.
Granted these hold, since $U$ acts on $\poly$ many qubits, $q$
in the lemma can be set to $\poly$. Thus, one can bound the last
inequality by $d\cdot\poly/2^{n}$. Using the triangle inequality,
one gets 
\[
\Pr[s\leftarrow\mathcal{A}^{\mathcal{F}}]\le\frac{\poly}{2^{n}}.
\]

The following complete the proof.

(1) This readily follows from \Propref{Shadow_dSS} and the observation
that $\rho_{i-1}$ has access to only $\mathcal{G}_{1},\dots\mathcal{G}_{i-1}$.
As for $U_{i}$, that is uncorrelated to all $X_{i}$s by construction.

(2) Follows directly from \Propref{x_in_S_dSS_shadow}.
\end{proof}
}

\subsection{Technical results for $\delta$ non-uniform distributions\label{subsec:tech_res_non-uniform}}

\branchcolor{black}{\begin{proof}[Proof of \Claimref{S_k_bound_general}]
 To see this for $S_{1}$, we proceed as before and recall the lower
bound $\Pr[S_{1}\subseteq\parts(t'_{1})]>2^{\delta|S_{1}|}\Pr[S_{1}\subseteq\parts(u)]$.
The upper bound may be evaluated as 
\begin{align*}
\Pr[S_{1}\subseteq\parts(t_{1}')] & =\Pr[S_{1}\subseteq\parts(t)|S\nsubseteq\parts(t)]\\
 & =\frac{\Pr[S_{1}\subseteq\parts(t)\land S\nsubseteq\parts(t)]}{\Pr[S\nsubseteq\parts(t)]}\\
 & =\frac{\Pr[S_{1}\subseteq\parts(u)\land S\nsubseteq\parts(u)\land g(u)=r']}{\Pr[S\nsubseteq\parts(t)]\Pr[g(u)=r']}\\
 & \le\Pr[S_{1}\subseteq\parts(u)]\cdot\gamma^{-2}
\end{align*}
where we used $\alpha'_{1}=1-\Pr[S\subseteq\parts(t)]=\Pr[S\nsubseteq\parts(t)]\ge\gamma$,
and $\Pr[g(u)=r']\ge\gamma$. In the general case, suppose $t'_{i}$s,
$t_{i}$s and $S_{i}$s are as described in the proof of \Propref{sumOfDeltaNonUni_perm}.
Then, one would have 
\begin{align}
\Pr[S_{i}\subseteq\parts(t_{i}')] & =\frac{\Pr[S_{i}\subseteq\parts(u)\land S_{i-1}\nsubseteq\parts(u)\land\dots S\nsubseteq\parts(u)\land g(u)=r']}{\Pr[S_{i-1}\nsubseteq\parts(t)\land\dots S\nsubseteq\parts(t)]\Pr[g(u)=r']}\label{eq:S_i_upperbound}\\
 & \le\Pr[S_{i}\subseteq\parts(u)]\cdot\gamma^{-2}\nonumber 
\end{align}
where $\alpha'_{i}=\Pr[S_{i-1}\nsubseteq\parts(t)\land\dots S\nsubseteq\parts(t)]>\gamma$
is assumed (else there is nothing to prove).
\end{proof}
}

\begin{prop*}[\Propref{composableP_Delta_non_beta_uniform} restated with slightly
different parameters]
 Let $t\sim\mathbb{F}^{\delta'|\beta}(N)$ be sampled from a $\delta'$
non-$\beta$-uniform distribution with $N=2^{n}$. Fix any $\delta>\delta'$
and let $\gamma=2^{-m}$ be some function of $n$. Let $s=t|(h(t)=r')$
and suppose $\Pr[h(t)=r']\ge\gamma$ where $h$ is an arbitrary function
and $r'$ some string in its range. Then $s$ is ``$\gamma$-close''
to a convex combination of finitely many $(p,\delta)$ non-$\beta$-uniform
distributions, i.e. 
\[
s\equiv\sum_{i}\alpha_{i}s_{i}+\gamma's'
\]
where $s_{i}\sim\mathbb{F}_{i}^{p,\delta|\beta}$ with $p=2m/(\delta-\delta')$.
The permutation $s'$ may have an arbitrary distribution (over $\Omega(2^{n})$)
but $\gamma'\le\gamma$.
\end{prop*}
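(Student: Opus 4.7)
The plan is to imitate the proof of Proposition~\ref{prop:sumOfDeltaNonUni_perm} almost verbatim, but to track carefully the extra factor of $2^{\delta'|S|}$ that can appear because the starting distribution is only $\delta'$ non-$\beta$-uniform rather than truly $\beta$-uniform. Let $u\sim \mathbb{F}^{|\beta}$ be $\beta$-uniform and let $t\sim \mathbb{F}^{\delta'|\beta}$. Assume first that $s = t\,|\,(h(t)=r')$ is not itself $\delta$ non-$\beta$-uniform (otherwise the claim is trivial with $\alpha_1=1$). Let $S$ be a \emph{maximal} part such that
\[
\Pr[S\subseteq\parts(s)] \;>\; 2^{\delta|S|}\cdot\Pr[S\subseteq\parts(u)],
\]
and let $s_1 := s\,|\,(S\subseteq\parts(s))$. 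The first claim to establish, exactly as in the uniform case, is that $s_1$ is $\delta$ non-$\beta$-uniform on $\Omega_{\parts}(N,S)$: if some $S'$ violated this, then $S\cup S'$ would violate the maximality of $S$ because $S$ and $S'$ are distinct and $\Pr[\cdot\subseteq\parts(\cdot)]$ splits multiplicatively on distinct parts under conditioning. This argument only uses the product structure on distinct parts, which still holds, and crucially does \emph{not} reference the starting distribution, so it goes through verbatim.

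Next I would bound $|S|$, which is where the starting $\delta'$ non-uniformity enters. Following the derivation of (\ref{eq:S_upperBound}) I write
\[
\Pr[S\subseteq\parts(s)] \;=\; \frac{\Pr[S\subseteq\parts(t)\wedge h(t)=r']}{\Pr[h(t)=r']} \;\le\; \Pr[S\subseteq\parts(t)]\cdot\gamma^{-1} \;\le\; 2^{\delta'|S|}\cdot\Pr[S\subseteq\parts(u)]\cdot\gamma^{-1},
\]
using $\delta'$ non-$\beta$-uniformity in the last step. Combined with the lower bound $2^{\delta|S|}\Pr[S\subseteq\parts(u)]$ coming from the definition of $S$, one gets $2^{(\delta-\delta')|S|} < 2^{m}$, so $|S| < m/(\delta-\delta')$. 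For the subsequent iterations, where one replaces $s$ by $s'_{j}:=s\,|\,(S\nsubseteq\parts(s)\wedge\cdots\wedge S_{j-1}\nsubseteq\parts(s))$ and identifies the next maximal violating part $S_j$, the analogue of (\ref{eq:S_i_upperbound}) acquires an additional $\gamma^{-1}$ factor from $\alpha'_j>\gamma$, yielding the bound
\[
|S_j| \;<\; \frac{2m}{\delta-\delta'} \;=\; p.
\]

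Finally, I would run the same termination argument: the sequence $\alpha'_1>\alpha'_2>\cdots$ is strictly decreasing (since each $S_j$ is a new constraint and $\Omega_{\parts}(N)$ is finite), and we stop as soon as some $\alpha'_k\le\gamma$, absorbing the residual term into $\gamma' s'$. The resulting decomposition $s\equiv\sum_i \alpha_i s_i+\gamma' s'$ then has each $s_i$ being $(p,\delta)$ non-$\beta$-uniform, as required. The only genuinely new ingredient compared to Proposition~\ref{prop:sumOfDeltaNonUni_perm} is the single inequality $\Pr[S\subseteq\parts(t)]\le 2^{\delta'|S|}\Pr[S\subseteq\parts(u)]$, which is precisely the definition of $\delta'$ non-$\beta$-uniformity; I expect no real obstacle, since the maximality-of-$S$ argument and the Bayes-rule decomposition are insensitive to whether the background distribution is $\beta$-uniform or a $\delta'$-perturbation thereof. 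The mild subtlety to watch is that ``non-$\beta$-uniform'' is defined relative to a common reference distribution $\mathbb{F}^{|\beta}$, so I would verify at the outset that after conditioning on $S\subseteq\parts(\cdot)$ and on $h=r'$, the reference distribution itself is still the same $\beta$-uniform baseline and not a further-conditioned version.
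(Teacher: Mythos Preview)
Your proposal is correct and follows essentially the same approach as the paper: both run the maximality-of-$S$ argument from Proposition~\ref{prop:sumOfDeltaNonUni_perm} unchanged, then in the size bound replace $\Pr[S\subseteq\parts(u)]$ by $2^{\delta'|S|}\Pr[S\subseteq\parts(u)]$ via the $\delta'$ non-$\beta$-uniformity of $t$, obtaining $|S|<m/(\delta-\delta')$ at the first step and $|S_j|<2m/(\delta-\delta')=p$ at subsequent steps from the extra $\gamma^{-1}$ coming from $\alpha_j'>\gamma$. The paper's proof is organized into the same two steps (lower bound via maximality, upper bound via the $\delta'$ assumption plus $\Pr[h(t)=r']\ge\gamma$) and the termination argument is identical.
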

\branchcolor{black}{\begin{proof}
While redundant, we follow the proof of \Propref{sumOfDeltaNonUni_perm}
adapting it to this general setting and omitting full details this
time. 

(For comparison: We replace $t$ with $s$ and $u$ with $b$)

\textbf{Step A:} Lower bound on $\Pr[S\subseteq\parts(s)]$.

Let $b\sim\mathbb{F}^{|\beta}(N)$. Suppose $s$ is not $\delta$
non-$\beta$-uniform. Then consider the largest $S\in\Omega_{\parts}(N)$
such that

\begin{equation}
\Pr[S\subseteq\parts(s)]>2^{\delta.|S|}\cdot\Pr[S\subseteq\parts(b)].\label{eq:_S_not_delta}
\end{equation}
 
\begin{claim}
Let $S$ and $s$ be as described. The random variable $s$ conditioned
on being consistent with the paths in $S\in\Omega_{\parts}(N)$, i.e.
$s_{S}:=s|(S\subseteq\parts(s))$, is $\delta$ non-$\beta$-uniformly
distributed. 
\end{claim}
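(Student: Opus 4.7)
My plan is to mirror the contradiction argument used to prove \Claimref{easyConditionDeltaUniform} in the proof of \Propref{sumOfDeltaNonUni_perm}, with the only change being that the uniform reference $u$ is replaced by a $\beta$-uniform reference $b\sim\mathbb{F}^{|\beta}(N)$, and the starting bound $\Pr[S\subseteq\parts(s)]\le\gamma^{-1}\Pr[S\subseteq\parts(b)]$ implicitly uses that $t$ is $\delta'$ non-$\beta$-uniform rather than exactly $\beta$-uniform (this is where the gap $\delta-\delta'$ will come in later when bounding $|S|$, but is not needed for the non-uniformity claim itself).

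Concretely, I would suppose for contradiction that $s_S=s\mid(S\subseteq\parts(s))$ fails to be $\delta$ non-$\beta$-uniform. Then there exists some $S'\in\Omega_{\parts}(N,S)$, i.e.\ a part distinct from $S$, such that
\[
\Pr[S'\subseteq\parts(s)\mid S\subseteq\parts(s)]\;>\;2^{\delta|S'|}\cdot\Pr[S'\subseteq\parts(b)\mid S\subseteq\parts(b)].
\]
Combining with \Eqref{_S_not_delta} (the maximality of $S$ in Step A of the proof), I would multiply the two inequalities and use the basic fact---valid for any distribution over permutations, as recorded in the note following \Notaref{DeltaUniformlyDistributed}---that for distinct parts $S,S'$ one has $\Pr[S\subseteq\parts(\cdot)\land S'\subseteq\parts(\cdot)]=\Pr[S\cup S'\subseteq\parts(\cdot)]$. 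This gives
\begin{align*}
\Pr[S\cup S'\subseteq\parts(s)] &= \Pr[S\subseteq\parts(s)]\cdot\Pr[S'\subseteq\parts(s)\mid S\subseteq\parts(s)]\\
&> 2^{\delta(|S|+|S'|)}\cdot\Pr[S\subseteq\parts(b)]\cdot\Pr[S'\subseteq\parts(b)\mid S\subseteq\parts(b)]\\
&= 2^{\delta|S\cup S'|}\cdot\Pr[S\cup S'\subseteq\parts(b)],
\end{align*}
where the last equality uses distinctness again. This contradicts the maximality of $S$, since $S\cup S'$ is a strictly larger part violating the $\delta$ non-$\beta$-uniformity bound against $b$.

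I expect no serious obstacle here: the contradiction structure is identical to the uniform case, and the only check is that the definition of $\delta$ non-$\beta$-uniform (\Defref{beta-uniform}, via \Defref{nonG}) indeed refers probabilities to the $\beta$-uniform reference $b$, so that all appearances of $u$ in the original proof may be substituted by $b$ without further changes. The genuine difference between the restated proposition and \Propref{sumOfDeltaNonUni_perm} enters only in the subsequent bound on $|S|$ (the analogue of \Claimref{sizeS}), where the upper bound on $\Pr[S\subseteq\parts(s)]$ picks up an extra factor $2^{\delta'|S|}$ because $t$ is merely $\delta'$ non-$\beta$-uniform rather than $\beta$-uniform; combining with the $2^{\delta|S|}$ lower bound yields $|S|\le m/(\delta-\delta')$, explaining the parameter choice $p=2m/(\delta-\delta')$. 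That bookkeeping is not part of the present claim, which is purely the non-$\beta$-uniformity of $s_S$.
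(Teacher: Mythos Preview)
Your proposal is correct and follows essentially the same contradiction argument as the paper: assume $s_S$ is not $\delta$ non-$\beta$-uniform, take a witnessing $S'\in\Omega_{\parts}(N,S)$, multiply its inequality with the maximality inequality \Eqref{_S_not_delta} for $S$, and use distinctness to conclude that $S\cup S'$ is a strictly larger violator, contradicting maximality. Your additional remark that the $\delta'$ factor only enters in the subsequent size bound on $|S|$ (not in this claim) is also accurate and matches the paper's treatment.
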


We give a proof by contradiction. Suppose $s_{S}$ is ``more than''
$\delta$ non-$\beta$-uniform. Then there exist some $S'\in\Omega_{\parts}(N,S)$
such that 
\[
\Pr[S'\subseteq\parts(s)|S\subseteq\parts(s)]>2^{\delta\cdot|S'|}\Pr[S'\subseteq\parts(b)|S\subseteq\parts(b)].
\]
Then 
\begin{align*}
\Pr[S\cup S'\subseteq\parts(s)] & =\Pr[S\subseteq\parts(s)]\Pr[S'\subseteq\parts(s)|S\subseteq\parts(s)]\\
 & >2^{\delta\cdot|S\cup S'|}\cdot\Pr[S\cup S'\subseteq\parts(b)]
\end{align*}
using \Eqref{_S_not_delta} and \Eqref{_S'_not_delta}. That's a contradiction
to $S$ being maximal.

\textbf{Step B:} Upper bound on $\Pr[S\subseteq\parts(s)]$.
\begin{claim}
One has $\left|S\right|<m/(\delta-\delta')$. 
\end{claim}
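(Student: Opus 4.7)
The plan is to prove the upper bound on $|S|$ by combining the lower bound from Step A with a straightforward upper bound obtained by unravelling the conditioning $s = t \mid (h(t) = r')$. The key idea is that the maximality condition on $S$ gives a lower bound on $\Pr[S\subseteq\parts(s)]$ (essentially by definition of $s$ failing to be $\delta$ non-$\beta$-uniform), while conditioning on a non-negligible event can only inflate the underlying probability $\Pr[S\subseteq\parts(t)]$ by a factor of $\gamma^{-1}$. Since $t$ is itself $\delta'$ non-$\beta$-uniform, this yields an upper bound of $2^{\delta'|S|}\Pr[S\subseteq\parts(b)]\cdot\gamma^{-1}$. Comparing the two bounds forces $|S|$ to be small.

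First I would write, using that $S$ is the maximal violator from Step A,
\[
\Pr[S\subseteq\parts(s)] > 2^{\delta|S|}\cdot\Pr[S\subseteq\parts(b)],
\]
where $b\sim\mathbb{F}^{|\beta}(N)$. Next I would compute the upper bound by expanding the conditioning:
\begin{align*}
\Pr[S\subseteq\parts(s)]
 &= \Pr[S\subseteq\parts(t)\mid h(t)=r'] \\
 &= \frac{\Pr[S\subseteq\parts(t)\land h(t)=r']}{\Pr[h(t)=r']} \\
 &\le \Pr[S\subseteq\parts(t)] \cdot \gamma^{-1} \\
 &\le 2^{\delta'|S|}\cdot\Pr[S\subseteq\parts(b)] \cdot \gamma^{-1},
\end{align*}
where the first inequality uses the hypothesis $\Pr[h(t)=r']\ge\gamma$ and the second uses that $t\sim\mathbb{F}^{\delta'|\beta}$ is $\delta'$ non-$\beta$-uniform (applied with the reference distribution $b$).

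Combining the two bounds and cancelling the common factor $\Pr[S\subseteq\parts(b)]$ (which is strictly positive since $S$ is a valid part of some $\beta$-uniform permutation, else the lower bound would already give a contradiction) gives $2^{\delta|S|} < 2^{\delta'|S|}\cdot\gamma^{-1} = 2^{\delta'|S|+m}$, whence $(\delta-\delta')|S| < m$ and finally $|S| < m/(\delta-\delta')$.

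I expect no real obstacle here: the argument is a direct adaptation of \Claimref{sizeS} from \Propref{sumOfDeltaNonUni_perm}, with the one modification that the underlying distribution is $\delta'$ non-$\beta$-uniform rather than uniform, which is exactly what produces the denominator $\delta-\delta'$ in place of $\delta$. The only mild subtlety to check is that the inequality $\Pr[S\subseteq\parts(t)] \le 2^{\delta'|S|}\cdot\Pr[S\subseteq\parts(b)]$ is applied to $S\in\Omega_{\parts}(N)$ which is compatible with the conditioning encoded in $\beta$; this is ensured by the way $S$ was produced in Step A (it is a part of the support of $s$ and hence of $t$, so it is consistent with $\beta_{\mathrm{incl}}$). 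Once this is noted, the iteration argument for subsequent $S_i$ in the full proof proceeds exactly as in \Claimref{S_k_bound_general}, with the bound $2m/(\delta-\delta')$ replacing $2m/\delta$.
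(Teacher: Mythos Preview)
Your proof is correct and follows essentially the same route as the paper's: derive the upper bound $\Pr[S\subseteq\parts(s)]\le 2^{\delta'|S|}\Pr[S\subseteq\parts(b)]\cdot\gamma^{-1}$ by unravelling the conditioning and applying the $\delta'$ non-$\beta$-uniformity of $t$, then compare with the Step~A lower bound. Your write-up is in fact slightly cleaner than the paper's, which has a typo in the first displayed line (it writes a product where a quotient is meant).
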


To see this, observe that

\begin{align*}
\Pr[S\subseteq\parts(s)] & =\Pr[S\subseteq\parts(t)\land h(t)=r']\cdot\Pr[h(t)=r']\\
 & \le\Pr[S\subseteq\parts(t)]\cdot\gamma^{-1}\\
 & \le2^{\delta'|S|}\Pr[S\subseteq\parts(b)]\cdot\gamma^{-1}
\end{align*}
and comparing this with the lower bound, one obtains $\left|S\right|<m/(\delta-\delta')$. 

The remaining proof \Propref{sumOfDeltaNonUni_perm} similarly generalises
by proceeding in the same vein. More concretely, suppose $S_{i}$,
$s_{i}$, $s_{i}'$ are defined analogously. Then the lower bound
goes through almost unchanged while for the upper bound, the analogue
of \Eqref{S_i_upperbound} becomes 
\begin{align*}
\Pr[S_{i}\subseteq\parts(s'_{i})] & =\frac{\Pr[S_{i}\subseteq\parts(s)\land S_{i-1}\nsubseteq\parts(s)\land\dots S\nsubseteq\parts(s)]}{\Pr[S_{i-1}\nsubseteq\parts(s)\land\dots S\nsubseteq\parts(s)]}\\
 & \le\Pr[S_{i}\subseteq\parts(t)\land S_{i-1}\nsubseteq\parts(t)\land\dots S\nsubseteq\parts(t)|h(t)=r']\cdot\gamma^{-1}\\
 & \le\frac{\Pr[S_{i}\subseteq\parts(t)]}{\Pr[h(t)=r']}\cdot\gamma^{-1}\le2^{\delta'|S_{i}|}\Pr[S_{i}\subseteq\parts(b)]\cdot\gamma^{-2}.
\end{align*}
\end{proof}
}

\section{Discussions}

\subsection{Why $\delta$ non-$\beta$-uniform doesn't work for Simon's functions\label{subsec:Delta_with_Simons_not_smart}}

We briefly discuss why the concept of $\delta$ non-$\beta$-uniform
distribution does not prove useful when applied to the distribution
over Simon's functions. Suppose an algorithm takes an oracle for a
function $f$ (sampled from an arbitrary $\beta$-uniform distribution)
and an advice as an input. We want to argue that the algorithm would
behave essentially the same if it were not given the advice. That
is clearly not true in this case where $f$ is a Simon's function.
The proposition still holds, i.e. the distribution conditioned on
the advice is still $(p,\delta)$ non $\beta$ uniform but this conditioned
distribution reveals too much information already. 

To be more concrete, suppose the algorithm is supposed to verify if
the period is $s$ and output $r=1$ if it succeeds at verifying.
We can write $\Pr[\mathcal{A}(s,u|s)=r]=\sum_{t_{i}}\Pr(t_{i})\Pr[\mathcal{A}(s,t_{i})=r]$
using the main proposition where we know $t_{i}$ are $\delta$ far
from $\mathbb{F}_{{\rm Simon}}$. Naively, one might see a contradiction.
Suppose the algorithm simply checks if $u(s)=u(0)$ and outputs $r=1$
if it is. For the LHS, the probability of outputting $r=1$ is $1$.
For the RHS, it appears that because $t_{i}$ are $\delta$ far from
$\mathbb{F}_{{\rm Simon}}^{p}$, $t_{i}(s)=u(0)$ will be much smaller
than $1$ because $\mathbb{F}_{{\rm Simon}}$ is uniformly distributed
over all Simon's functions and so a distribution $\delta$ far from
$\mathbb{F}_{{\rm Simon}}^{p}$ would also behave similarly because
$p$ fixes at most polynomially many paths. However, this reasoning
is flawed because once even a single colliding path is specified,
$\mathbb{F}_{{\rm Simon}}^{p}$ can only contain functions with period
$s$. Thus, each term in the RHS also outputs $1$ with certainty.

\end{document}